\theoremstyle{plain}
\newtheorem{thm}{Theorem}
\newtheorem{prop}[thm]{Proposition}
\newtheorem{lemma}[thm]{Lemma}
\newtheorem{cor}[thm]{Corollary}
\theoremstyle{definition}
\newtheorem{definition}[thm]{Definition}
\newtheorem{remark}[thm]{Remark}
\newcommand{\tn}[1]{\ensuremath{\mathbb{T}^{#1}}}
\newcommand{\rn}[1]{\ensuremath{\mathbb{R}^{#1}}}
\newcommand{\zn}[1]{\ensuremath{\mathbb{Z}^{#1}}}
\newcommand{\sn}[1]{\ensuremath{\mathbb{S}^{#1}}}
\newcommand{\Spe}{\mathrm{Sp}}
\newcommand{\whsu}{\mathrm{wh}}
\newcommand{\que}{\mathrm{qu}}
\newcommand{\rosob}{\mathrm{Sob}}
\newcommand{\rohy}{\mathrm{hy}}
\newcommand{\rond}{\mathrm{nd}}
\newcommand{\rowo}{\mathrm{wo}}
\newcommand{\mrod}{\mathrm{od}}
\newcommand{\mrI}{\mathrm{I}}
\newcommand{\mrII}{\mathrm{II}}
\newcommand{\mrIII}{\mathrm{III}}
\newcommand{\mrIV}{\mathrm{IV}}
\newcommand{\mrV}{\mathrm{V}}
\newcommand{\bRic}{\overline{\mathrm{Ric}}}
\newcommand{\zo}{\mathbb{Z}}
\newcommand{\ro}{\mathbb{R}}
\newcommand{\rovar}{\mathrm{var}}
\newcommand{\rodiv}{\mathrm{div}}
\newcommand{\rorel}{\mathrm{rel}}
\newcommand{\so}{\mathbb{S}^{1}}
\newcommand{\sfK}{\mathsf{K}}
\newcommand{\chsfX}{\check{\mathsf{X}}}
\newcommand{\ONSF}{\mathsf{X}}
\newcommand{\ONCS}{\mathsf{\bar{\mathsf{\Gamma}}}}
\newcommand{\msL}{\mathscr{L}}
\newcommand{\msK}{\mathscr{K}}
\newcommand{\msA}{\mathscr{A}}
\newcommand{\mfP}{\mathfrak{P}}
\newcommand{\refer}{\mathrm{ref}}
\newcommand{\rotot}{\mathrm{tot}}
\newcommand{\robas}{\mathrm{bas}}
\newcommand{\rocon}{\mathrm{con}}
\newcommand{\rodiff}{\mathrm{diff}}
\newcommand{\tr}{\mathrm{tr}}
\newcommand{\pre}{\mathrm{pre}}
\newcommand{\Id}{\mathrm{Id}}
\newcommand{\be}{\bar{e}}
\newcommand{\ba}{\bar{a}}
\newcommand{\tM}{\tilde{M}}
\newcommand{\tx}{\tilde{x}}
\newcommand{\tg}{\tilde{g}}
\newcommand{\bk}{\bar{k}}
\newcommand{\bK}{\bar{K}}
\newcommand{\hE}{\hat{E}}
\newcommand{\chth}{\check{\theta}}
\newcommand{\hmu}{\hat{\mu}}
\newcommand{\hml}{\hat{\ml}}
\newcommand{\dotchi}{\dot{\chi}}
\newcommand{\bge}{\bar{g}}
\newcommand{\bS}{\bar{S}}
\newcommand{\bD}{\bar{D}}
\newcommand{\bR}{\bar{R}}
\newcommand{\bM}{\bar{M}}
\newcommand{\bmu}{\bar{\mu}}
\newcommand{\bnabla}{\overline{\nabla}}
\newcommand{\bfI}{\mathbf{I}}
\newcommand{\bfJ}{\mathbf{J}}
\newcommand{\bfK}{\mathbf{K}}
\newcommand{\bfL}{\mathbf{L}}
\newcommand{\bbE}{\mathbb{E}}
\newcommand{\bfE}{\mathbf{E}}
\newcommand{\bfz}{\mathbf{0}}
\newcommand{\bga}{\bar{\gamma}}
\newcommand{\tsigma}{\tilde{\sigma}}
\newcommand{\tSigma}{\tilde{\Sigma}}
\newcommand{\bG}{\bar{\Gamma}}
\renewcommand{\a}{\alpha}
\newcommand{\e}{\epsilon}
\newcommand{\vare}{\varepsilon}
\newcommand{\de}{\delta}
\renewcommand{\b}{\beta}
\newcommand{\teta}{\tilde{\eta}}
\newcommand{\g}{\gamma}
\renewcommand{\d}{\partial}
\newcommand{\me}{\mathcal{E}}
\newcommand{\mK}{\mathcal{K}}
\newcommand{\eSpe}{\varepsilon_{\Spe}}
\newcommand{\emK}{\varepsilon_{\mK}}
\newcommand{\mKsup}{C_{\mK}}
\newcommand{\bDlnhNsup}{C_{\rorel}}
\newcommand{\ma}{\mathcal{A}}
\newcommand{\ml}{\mathcal{L}}
\newcommand{\mt}{\mathcal{T}}
\newcommand{\mU}{\mathcal{U}}
\newcommand{\mS}{\mathcal{S}}
\newcommand{\mSb}{\bar{\mathfrak{S}}}
\newcommand{\mN}{\mathcal{N}}
\newcommand{\mW}{\mathcal{W}}
\newcommand{\mfv}{\mathfrak{v}}
\newcommand{\Weight}{\mathfrak{V}}
\newcommand{\Index}{\mathfrak{I}}
\newcommand{\bfl}{\mathbf{l}}
\newcommand{\weight}{\mathfrak{v}}
\newcommand{\cweight}{\mathfrak{u}}
\newcommand{\bcweight}{\bar{\mathfrak{u}}}
\newcommand{\mH}{\mathcal{H}}
\newcommand{\mP}{\mathcal{P}}
\newcommand{\mR}{\mathcal{R}}
\newcommand{\bmR}{\bar{\mathcal{R}}}
\newcommand{\bmS}{\bar{\mathcal{S}}}
\newcommand{\bmN}{\bar{\mathcal{N}}}
\newcommand{\mQ}{\mathcal{Q}}
\newcommand{\mc}{\mathcal{C}}
\newcommand{\mcY}{\mathcal{Y}}
\newcommand{\hmcY}{\hat{\mcY}}
\newcommand{\md}{\mathcal{D}}
\newcommand{\mfp}{\mathfrak{p}}
\newcommand{\mcP}{\mathcal{P}}
\newcommand{\mfg}{\mathfrak{g}}
\newcommand{\hN}{\hat{N}}
\newcommand{\hGe}{\hat{G}}
\newcommand{\hU}{\hat{U}}
\newcommand{\bGe}{\bar{G}}
\newcommand{\hg}{\hat{g}}
\newcommand{\chg}{\check{g}}
\newcommand{\che}{\check{e}}
\newcommand{\chga}{\check{\gamma}}
\newcommand{\chk}{\check{k}}
\newcommand{\chK}{\check{K}}
\newcommand{\cha}{\check{a}}
\newcommand{\bx}{\bar{x}}
\newcommand{\bp}{\bar{p}}
\newcommand{\tvarphi}{\tilde{\varphi}}
\newcommand{\ldr}[1]{\langle #1\rangle}
\newcommand{\mutgc}{\mu_{\tilde{g};c}}
\begin{document}

\author{Hans Ringstr\"{o}m}
\title{On the geometry of silent and anisotropic big bang singularities}
\begin{abstract}
  The purpose of this and a companion article is to develop a framework for analysing silent and anisotropic big bang singularities. In the companion
  article, we derive asymptotics of solutions to linear systems of wave equations. In the present article, we focus on the geometry. The setting we
  are interested in is that of crushing singularities,
  and a central object of interest is the \textit{expansion normalised Weingarten map} $\mK$, defined as the Weingarten map of the leaves of the
  foliation, divided by the corresponding mean curvature. A remarkable fact concerning big bang singularities is that, as noted and justified in the
  companion article, $\mK$ and its expansion normalised normal derivative are quite generally bounded, and this is our main assumption (though we impose
  several additional conditions). The bounds can be expressed with respect to both weighted $C^{k}$ and Sobolev spaces, but the perspective in this article
  is to think of the bounds as part of a bootstrap assumption and to see what can be deduced from it.

  We are here interested in highly anisotropic settings, where isotropy is understood to mean that the eigenvalues of $\mK$ coincide. In such situations,
  it can be expected to be very important to keep track of the directions in which the contraction is largest/smallest. In short, it can be expected to be
  essential to have a clear geometric
  picture of the asymptotics. This is achieved in the framework we develop. Moreover, we are able to improve some of the assumptions of the framework
  by combining them with Einstein's equations: appealing to the Hamiltonian constraint, the sum of the squares of the
  eigenvalues of $\mK$ is asymptotically bounded from above by $1$, even though no such
  bound was initially assumed; and under certain circumstances, the expansion normalised normal derivative of $\mK$ can be demonstrated to
  decay exponentially and $\mK$ can be demonstrated to converge exponentially, even though we initially only imposed bounds on these quantities. In fact,
  in spite of being general enough that many of the existing results are consistent with the assumptions, the framework is strong enough to reproduce
  the Kasner map, conjectured in the physics literature to constitute the essence of the asymptotic dynamics. Interestingly, the framework suggests a
  way to understand the asymptotics of solutions quite generally. We illustrate this here by discussing some examples in which the asymptotics are known.
  However, an even clearer picture of the quiescent setting is obtained when
  combining the perspective developed in this article with a recent result by Fournodavlos and Luk. This combination leads to a notion of initial data
  on big bang singularities which can be used to give a unified framework for understanding previous results and, potentially, be used as a starting point
  for understanding spatially inhomogeneous and oscillatory big bang singularities. We discuss this topic in a separate article. 

  Finally, the hope is that the results of the present article can lead to the resolution of outstanding problems by providing the bootstrap assumptions
  and arguments needed to obtain a clear picture of the asymptotic geometry. The current state of the art concerning quiescent singularities is the recent
  result by Fournodavlos, Rodnianski and Speck \cite{GIJ} (see also \cite{GPR} and references stated therein). In $3+1$-dimensions, this result states
  that Bianchi type I solutions to the Einstein-scalar field equations with positive definite $\mK$ are past globally non-linearly stable. In the
  present article, we identify a geometric regime in which quiescent behaviour is to be expected but which does not entail proximity to a background
  solution. The article \cite{GPR} (based, in part, on ideas from the present article, \cite{GIJ} and \cite{RinQC}) justifies this expectation. However,
  both \cite{GIJ} and \cite{GPR} are incomplete in the sense that
  they do not provide sufficient information concerning the asymptotics to yield data on the singularity in the sense of \cite{RinQC}. The conclusions
  of the present article are better suited when it comes to achieving this goal. Looking further ahead, the goal is to understand singularities
  which are not necessarily quiescent, a topic on which there is, as yet, only results in the spatially homogeneous setting. Nevertheless, since the
  improvements of the assumptions we obtain here are associated with a loss of derivatives, it is clear that the framework will have to be combined with
  an appropriate gauge choice and energy estimates in order to obtain rough estimates without a loss of derivatives. In order to facilitate this step,
  our framework is general enough that it should be consistent with large classes of gauge choices. 
\end{abstract}

\maketitle

\section{Introduction}

The subject of this article is the asymptotic behaviour of solutions to Einstein's equations in the direction of big bang singularities. The solutions we
have in mind here are, to begin with, maximal globally hyperbolic developments of initial data to Einstein's equations, where the initial manifold is
assumed to be closed (the assumption of compactness is not essential in the context of interest here, but we impose it for the sake of convenience).
Moreover, we assume the existence of a foliation such that the mean curvature of the leaves diverges uniformly to plus or minus infinity to the
future or the past; i.e., there is a crushing singularity. The asymptotic regimes of interest then correspond to the directions of diverging mean
curvature. In addition to these requirements, we here restrict our attention to silent and anisotropic big bang singularities. By silence, we,
heuristically, mean the property that observers going into the singularity typically lose the ability to communicate asymptotically (see
Definition~\ref{def:silenceandnondegeneracy} below for a formal definition). By anisotropy, we mean that the eigenvalues of the expansion
normalised Weingarten map $\mK$ are distinct. The main purpose of this article is twofold. First, we isolate, in a geometric way, the
essential mechanisms causing chaotic dynamics in, e.g., the $3+1$-dimensional vacuum setting. Second, we formulate partial bootstrap
assumptions and deduce improvements of these bootstrap assumptions. The second result should be thought of as a first step to a complete non-linear
result which we hope to obtain in the future. 

\subsection{Background}

In order to put the topic of this article into context, it is useful to recall some of the fundamental questions in the mathematical study of
solutions to Einstein's equations. The most fundamental problem is the \textit{strong cosmic censorship conjecture}, stating that for
generic initial data (in the asymptotically flat or spatially compact setting), the maximal globally hyperbolic development thereof is inextendible.
This conjecture corresponds to the expectation that, generically, Einstein's equations are deterministic. Another fundamental problem is that of
\textit{global non-linear stability}; the standard models of the universe are, e.g., highly symmetric, and it is of interest to know if perturbing
the corresponding initial data yields developments that are globally similar. In particular, the standard models have big bang singularities, and it
is natural to ask if these persist upon perturbations of the initial data. In order to answer questions of this type, it is necessary to analyse the
asymptotic behaviour. Due to Hawking's singularity theorem, see e.g. \cite[Theorem~55A, p.~431]{oneill}, static and stationary solutions with closed Cauchy
hypersurfaces are unnatural,
and the asymptotics are typically in either an ``expanding'' or in a ``contracting'' direction. We shall not attempt to formally define these
notions here, but attributes associated with an expanding/contracting direction are causal geodesic completeness/incompleteness, as well unbounded
growth/decay of the logarithm of the volume of the spatial hypersurfaces of an appropriate foliation. 

Since it is difficult to analyse the asymptotics of solutions in all generality, it is natural to focus on a restricted setting. One way of doing
so is to impose the existence of a group of isometries of the initial data, and this perspective has led to many interesting results. Another
perspective is to study the future/past global non-linear stability of solutions. The topic of interest in this article is singularities, but it is
of interest to put the questions considered here into the bigger context of asymptotic behaviour. Considering the results, it is natural to divide
the different types of asymptotic behaviour into the following categories:

\textit{Solutions that become isotropic asymptotically} (in expanding directions). Solutions undergoing accelerated expansion exemplify the phenomenon of
asymptotic isotropisation; see, e.g., \cite{f} for a proof of global non-linear stability of de Sitter space; \cite{wald2} for an illustration of
how the presence of a positive cosmological constant quite generally causes isotropisation in the spatially homogeneous setting; \cite{stab} for a
future global non-linear stability result of the currently preferred models of the universe; \cite{heinzle,HRPL,LAI} for examples in which the
expansion is of power law type; and \cite{AAR} for an example of isotropisation in a spatially inhomogeneous and anisotropic setting. Other important
examples are provided by the results concerning the future global non-linear stability of the Milne and similar models; see, e.g., \cite{aam,aam2}. See
also, e.g., \cite{garfinkleetal} for numerical work. 

\textit{Quiescent solutions.} Assume the spacetime to be endowed with a foliation by spacelike hypersurfaces and let $\bK$ be
  the Weingarten map of the leaves of the foliation (i.e., the second fundamental form with one index raised). Letting $\theta=\mathrm{tr}\bK$
  denote the mean curvature, the \textit{expansion normalised Weingarten map} is defined by $\mK:=\bK/\theta$, assuming $\theta$ does not vanish.
  For the purposes of the present discussion, we associate quiescence with convergence of the eigenvalues of $\mK$ along causal curves (note that
  this notion is meaningful both to the future and to the past). Note that
  in the case of isotropisation, $\mK$ converges (not only along causal curves, but globally) to $\mathrm{Id}/n$. Asymptotically isotropic solutions
  are therefore, in particular, quiescent. However, quiescence appears much more generally. For example, large classes of symmetric
  solutions have quiescent big bang singularities; see, e.g.,
  \cite{whsu,Wellis,BianchiIXattr,asvelGowdy,SCCGowdy,RadermacherNonStiff,RadermacherStiff}. Another class
  of important results are the ones obtained by specifying the asymptotics and then proving that there are solutions with the corresponding
  asymptotics. There are large numbers of results in this category, starting, to the best of our knowledge, with \cite{kar}. Most of the results
  have been obtained in the presence of symmetries. However, there are also results in the absence of symmetries; see, e.g., \cite{aarendall,daetal,fal}.
  Finally, there are past global non-linear stability results. In \cite{rasql,rasq,specks3}, the authors demonstrate stable big bang formation close
  to isotropy in the Einstein-scalar field and Einstein-stiff fluid settings; see also \cite{bao,bao2,fau}. In \cite{rsh}, the authors demonstrate
  stable big bang formation for a class of moderately anisotropic solutions in the high dimensional (spatial dimension $\geq 38$) vacuum setting.
  In addition, in \cite{GIJ}, the authors prove stability of Bianchi type I solutions in the full expected regime for the Einstein-scalar field system
  (in $n+1$-dimensions for $n\geq 3$) and for the Einstein vacuum equations in $n+1$-dimensions for $n\geq 10$. Finally, in \cite{GPR}, the authors
  identify a stable regime for quiescent big bang formation.

\textit{Oscillatory solutions.} In the direction of a big bang singularity, quiescence is expected under certain symmetry assumptions, for
  suitable matter models (such as stiff fluids and scalar fields) and for high dimensions (spatial dimension at least $10$); see
  Section~\ref{section:resultsformal} and Subsection~\ref{ssection:quiescentregimes} below for a justification. However, in general,
  solutions are expected to be oscillatory. In other words, the eigenvalues of $\mK$ are not expected to converge along causal curves going into
  the singularity (in $3+1$-dimensions, there is a more detailed picture, essentially saying that the eigenvalues of $\mK$ should evolve
  according to a $1$-dimensional chaotic dynamical system called the BKL-map; see Figure~\ref{fig:TheKasnerMap} below). 
  There are much fewer results concerning this situation, presumably partly due to the difficulty of analysing the behaviour in
  this setting. In fact, to the best of our knowledge, all of the results in this setting concern spatially homogeneous solutions in $3+1$-dimensions.
  The two first
  results were obtained in \cite{cbu} (concerning vacuum Bianchi type VIII and IX solutions) and in \cite{wea} (concerning magnetic Bianchi type
  VI${}_{0}$ solutions). These results demonstrate that generic Bianchi type VIII and IX solutions, as well as suitable magnetic Bianchi type
  VI${}_{0}$ solutions have oscillatory singularities. Following this initial progress, there were results demonstrating that generic Bianchi type
  IX orthogonal perfect fluid solutions with a linear equation of state $p=(\g-1)\rho$ (where $p$ is the pressure, $\rho$ is the energy density
  and $\g$ is a constant) with $2/3<\g<2$ converge to an attractor; see \cite{BianchiIXattr}. This is significant due to the fact the dynamics on
  the attractor are described by the so-called Kasner (or BKL) map; going back to the work of Belinski\v{\i}, Khalatnikov and Lifschitz (BKL), the
  Kasner map has been suggested as a model for the asymptotic behaviour. If this expectation is correct, the asymptotics would be described by a
  one-dimensional chaotic dynamical system. In the spatially homogeneous setting, support for this expectation has been obtained in, e.g.,
  \cite{lea,beguin,du}. For numerical work in the oscillatory setting, see, e.g., \cite{Betal1,Betal2,gap} and references cited therein. 

\textit{Silent solutions.} Another important aspect of the asymptotic behaviour is the causal structure. The main distinction of interest
  here is between silent and non-silent asymptotics. There are many ways of imposing silence, but in practice, it is often
  formulated in terms of a foliation. Given that a spacetime $(M,g)$ is foliated according to $M=\bM\times I$, where $I=(t_{-},t_{+})$ and
  $\bM\times\{t\}$ are spacelike hypersurfaces, one version is the requirement that the behaviour localises in the following sense: Fix a subset
  $U$ of $\bM$ diffeomorphic to a ball $B_{2r}(0)$ (with center $0$ and radius $2r>0$) in $\rn{n}$. Let $V\subset U$ correspond to $B_{r}(0)$ under
  the same diffeomorphism. Then silence
  in the future (and past) direction typically means that there is a $T\in I$ such that $J^{+}(V\times \{T\})\subset U\times [T,t_{+})$
  (and $J^{-}(V\times \{T\})\subset U\times (t_{-},T]$) (if $S\subset M$, $J^+(S)$ denotes the set of points to the causal future of $S$ and
  $J^-(S)$ denotes the set of points to the causal past of $S$; see \cite[Chapter~14]{oneill} for a formal definition and basic properties of
  $J^{\pm}(S)$). In particular, the global topology of the spatial hypersurfaces is asymptotically not visible to
  observers. For the above mentioned solutions undergoing accelerated expansion, silence is one feature of the asymptotics; see, e.g.,
  \cite{f,HRINV,stab,HRPL}. Similarly, the spatially homogeneous and inhomogeneous solutions mentioned above with quiescent singularities also
  have asymptotics with this property; see, e.g.,
  \cite{whsu,Wellis,BianchiIXattr,RadermacherNonStiff,RadermacherStiff,aarendall,daetal,fal,rasql,rasq,specks3}. We justify this statement in greater
  detail in \cite[Appendix~C]{RinWave}. On the other hand, in the BKL picture, generic singularities are expected to be oscillatory and silent, and the
  model for the (spatially) local behaviour of solutions is Bianchi type VIII and IX vacuum solutions. For this reason, it is of course of central
  importance to analyse the asymptotic causal structure of Bianchi type VIII and IX vacuum solutions. In spite of the fact that the importance of
  this question has been clear since 1969 (see the work \cite{misner} of Misner), it was only resolved in 2016; see \cite{brehm}. Moreover, even though
  \cite{brehm} ensures silence for solutions corresponding to a set of initial data with full measure, there are also indications; see \cite{brehm};
  that there is a Baire generic set of initial data such that the corresponding solutions do not have silent asymptotics. Finally, note that in the case
  of \cite{aam,aam2}, the causal structure is (as opposed to the spacetimes undergoing accelerated expansion) not silent. 

Considering the above division, it is clear that there is at least a partial hierarchy as far as the level of difficulty is concerned. A situation in
which the solutions isotropise is of course simpler than a setting with general quiescent asymptotics. Moreover, quiescent asymptotics are of course
easier to analyse than oscillatory asymptotics. Turning to the causal structure, silent behaviour simplifies the analysis; in a non-silent situation,
the global spatial topology could potentially come into play. Finally, one could impose symmetry assumptions leading to a separate hierarchy of
difficulty. 

\subsection{A hierarchy of problems in cosmology}

Due to the above observations, it is natural to begin by considering de Sitter space, or, more generally, solutions with accelerated
expansion. The reason for this is that then the solutions isotropise and the causal structure is asymptotically silent. This is the most favourable
situation. It is therefore not surprising that the first global non-linear stability result was obtained in the case of de Sitter; see \cite{f}. 
Next, it is natural to relax either the condition of isotropisation or the condition of silence. Relaxing the condition of silence, one is led to the
study of the future global non-linear stability of the Milne model (i.e., stability in the expanding direction, but not in the direction of the big bang -
due to strong cosmic censorship, the Milne model is expected to be unstable to the past); see \cite{aam,aam2}. Note also that the Milne model is locally
isometric to
Minkowski space (the universal covering space of a Milne model is isometric to the timelike future of a point in Minkowski space), so that \cite{aam}
can, roughly speaking, be thought of as a cosmological companion of the proof of the global non-linear stability of Minkowski space; see
\cite{cak}; and of the proof of future global non-linear stability of a hyperbolic foliation of Minkowski space; see \cite{f}. In fact, the latter
reference is more relevant due to the fact that it is based on a hyperbolic foliation.

The natural next step is to consider a situation which is anisotropic but quiescent. Moreover, it is favourable if the corresponding asymptotics are
silent. This naturally leads to the study of big bang singularities for scalar field matter or for vacuum solutions in case the spatial dimension $n$
satisfies $n\geq 10$. As mentioned above, the first results obtained concerning this setting (in the absence of symmetries) were based on specifying
the asymptotics; see \cite{aarendall,daetal}. A more recent result in this direction
is given by \cite{fal}. However, due to the fact that the asymptotic behaviour is specified a priori in this case, two issues that arise if one starts
with initial data on a Cauchy hypersurface inside the spacetime are avoided. The first issue is the synchronisation of the big bang; in
\cite{aarendall,daetal,fal}, the big bang corresponds to $t=0$ by construction. The second issue is the asymptotic geometry, more specifically the
directions corresponding to the maximal/minimal contraction. Again, in \cite{aarendall,daetal,fal}, what these directions are is known a priori.
When starting with initial data on a Cauchy hypersurface inside the spacetime, what these directions are has to be deduced in the framework of a
bootstrap argument. In \cite{rasql,rasq,rsh,specks3}, the authors address the issue of synchronisation by using a constant mean curvature (CMC)
foliation (a different perspective, which only involves local gauges, is developed in \cite{bao}). However, the conclusions of \cite{rasql,rasq,specks3}
are obtained in the near isotropic setting, and the conclusions of \cite{rsh} are obtained in the moderately anisotropic setting. On the other hand,
there is a natural range in which one expects to obtain stable big bang formation both in the case of the Einstein-scalar field equations and in the
case of Einstein's vacuum equations. This range is not exhausted by \cite{rasql,rasq,rsh,specks3,bao}. However, it is exhausted by the breakthrough
result \cite{GIJ}. One might expect it to be necessary to have a detailed understanding of the geometry in order to control the full quiescent
regime. However, the bootstrap assumptions of \cite{GIJ}, remarkably, only entail quite mild assumptions concerning the geometry. In the oscillatory
setting, on the other hand, a detailed control of the geometry can be expected to be necessary. In the context of our framework, we justify this
statement here by identifying the mechanism causing the oscillations; the mechanism can be expressed in terms of the expansion
normalised Weingarten map. Moreover, in the oscillatory setting, solutions are expected to generically be highly anisotropic most of the time. For
the above reasons, it is clear that it is of interest to develop a geometric framework for understanding highly anisotropic solutions.

\subsection{A framework for understanding anisotropic solutions}\label{ssection:Aframeworkforanisotropicsolns}

For reasons mentioned above, we expect it to be important to have a clear picture of the geometry in order to be able to analyse oscillatory
big bang singularities. In order to be more precise concerning how this is to be achieved, we need to introduce some terminology. We do so
in the present subsection, and at the same time give a rough idea of the main assumptions; see \cite[Chapter~3]{RinWave} of the companion article
\cite{RinWave} or Section~\ref{section:Assumptions} below for a formal statement. 

\textit{The expansion normalised Weingarten map.} Let $(M,g)$ be a spacetime (i.e., a time oriented Lorentz manifold) with a foliation $M=\bM\times I$,
where $I=(t_{-},t_{+})$ is an open interval, such that the mean curvature, say $\theta$, of the leaves $\bM_{t}:=\bM\times \{t\}$ is never zero. Here we
tacitly assume the leaves of the foliation to be closed and spacelike, and $\d_{t}$ to be future pointing timelike. In fact, we are mainly interested in
crushing singularities; i.e., the situation that $\theta\rightarrow\infty$ as $t\rightarrow t_{-}$, see Definition~\ref{def:crushingsingularity} below. A
central object in this article is the \textit{expansion normalised Weingarten map} $\mK:=\bK/\theta$, where $\bK$ is the Weingarten map of the leaves of
the foliation. Clearly, $\mK$ is symmetric with respect to the metric $\bge$ induced on $\bM_{t}$. It therefore has real eigenvalues
$\ell_{A}$, $A=1,\dots,n$, which we here assume to be distinct. In this setting, there are, at least locally, eigenvectors corresponding to the $\ell_{A}$,
say $\{X_{A}\}$, which are orthogonal with respect to $\bge$. For the purposes of the present discussion, we assume the $X_{A}$ to be smooth and globally
defined, we normalise them with respect to a fixed reference metric $\bge_\refer$, and we use the notation $\chg(X_{A},X_{A})=e^{2\mu_{A}}$, where
$\chg:=\theta^{2}\bge$. Here $\bge_{\refer}$ is the metric induced by $g$ on $\bM_{t_{0}}$ for some $t_{0}\in I$ (and we think of $\bge_{\refer}$ as being defined on
$\bM$).

\textit{Boundedness and non-degeneracy.} It is a remarkable fact that $\mK$ is bounded with respect to a fixed Riemannian reference metric on $\bM$
for quite a large class of big
bang singularities; see \cite[Appendix~C]{RinWave} for a justification of this statement. For this reason, the starting point of this article is the
assumption that $\mK$ is bounded in suitable weighted $C^{k}$ and Sobolev spaces. As mentioned above, we are interested in the highly anisotropic setting.
For that reason, we here assume that there is an $\e_{\rond}>0$ such that $|\ell_{A}-\ell_{B}|\geq\e_{\rond}$
for $A\neq B$. We refer to this condition as \textit{non-degeneracy}. In the quiescent setting, it is reasonable to make this assumption globally. However,
in an oscillatory setting, this assumption should be thought of as being valid in a restricted region (in both space and time). Nevertheless, the hope is
that the analysis will clarify how one leaves this region (how the eigenvalues $\ell_{A}$ change). From now on, we assume the eigenvalues to be ordered so
that $\ell_{1}<\cdots<\ell_{n}$. 

\textit{Silence.} As mentioned above, we assume the asymptotics to be silent; see \cite{EUW,PastAttr} for the origin of the terminology. The precise
assumption we make here is that the Weingarten map $\chK$ of the conformally rescaled metric $\hg=\theta^{2}g$ is negative definite. In fact we assume the
eigenvalues of $\chK$ to be bounded above by $-\e_{\Spe}$ for an $\e_{\Spe}\in (0,\infty)$. Roughly speaking, the consequence of this assumption is that $\hg$
exhibits exponential expansion in the direction of the singularity with respect to an appropriate time coordinate (namely $\tau$ introduced in
(\ref{eq:taudefinition}) below). This means that we obtain silence in the above sense of the term. In particular, particle horizons form. Justifying these
statements in the degree of generality in which we are interested in here is the subject of \cite[Chapter~7]{RinWave}. However, as an example, in the
case of the Kasner solutions; see (\ref{eq:gKasnergenn}) below; this condition is satisfied for all but the flat Kasner solutions (i.e., the solutions
for which one $p_A$ equals $1$ and all the others equal $0$). In other words, the condition $\chK\leq -\e_{\Spe}$ distinguishes the Kasner solutions that
form particle horizons from those that do not. 

\textit{Logarithmic volume density.} One typical feature of a big bang singularity is that the volume of the leaves of the foliation converges to zero.
For this reason, it is natural to introduce the \textit{volume density} $\varphi$, defined by 
\begin{equation}\label{eq:varphidefitobmubgedp}
\mu_{\bge}=\varphi\mu_{\bge_{\refer}}.
\end{equation}
Here $\mu_{\bge}$ and $\mu_{\bge_{\refer}}$ are the volume forms with respect to $\bge$ and $\bge_{\refer}$ respectively, where we think of $\bge$ and $\bge_{\refer}$
as both being defined on $\bM$. Thus $\varphi(\bx,t_{0})=1$ for all $\bx\in\bM$. It is also convenient to introduce the \textit{logarithmic volume density}:
\begin{equation}\label{eq:varrhodefdp}
\varrho:=\ln\varphi. 
\end{equation}
Since we expect $\varphi$ to converge to zero as $t\rightarrow t_{-}$ (in which case $\varrho\rightarrow -\infty$ as $t\rightarrow t_{-}$), it is
clear that $\varrho$ is a measure of the distance to the singularity, and we use powers of $\ldr{\varrho}:=(\varrho^{2}+1)^{1/2}$ as a weight in
the definition of the weighted $C^{k}$ and Sobolev spaces.

\textit{Mean curvature, lapse and shift.} We also need to impose bounds on the relative spatial variation of the mean curvature, as well as on
the \textit{lapse function} $N$ and \textit{shift vector field} $\chi$, defined by $\d_{t}=NU+\chi$, where $U$ is the future pointing unit normal
of the leaves of the foliation and $\chi$ is tangential to the leaves. We do not impose a specific gauge choice in this article, and therefore we
need to impose conditions on $N$ and $\chi$ directly. In the case of a CMC foliation, the bounds on the relative spatial variation of $\theta$ are
of course automatically satisfied.

\textit{Normal derivatives.} It is not sufficient to only impose conditions on $\mK$, $\ln\theta$, $N$ and $\chi$. We also need to impose conditions
on normal derivatives of these quantities. In the case of $\mK$, we impose conditions on $\hml_{U}\mK$, a quantity defined in
\cite[Appendix~A.2]{RinWave}; see also (\ref{eq:hmlUmtinfixedspatialcoord}) below; and which we refer to as the expansion normalised normal derivative
of $\mK$ (it is essentially an expansion normalised Lie derivative of $\mK$). In the case of $\ln\theta$, we impose conditions on the
\textit{deceleration parameter} $q$, defined by
\begin{equation}\label{eq:hUnlnthetamomqbas}
  \hU(n\ln\theta)=-1-q,
\end{equation}
where $\hU$ is the future pointing unit normal with respect to the conformally rescaled metric $\hg:=\theta^{2}g$.
In case Einstein's equations are satisfied, $\hml_{U}\mK$ and $q$ can be calculated in terms of the stress energy tensor and geometric quantities.
However, we also need to impose conditions on expansion normalised versions of the Lie derivatives of $N$ and $\chi$ with respect to $U$. In practice,
the assumptions concerning the lapse function are imposed on the lapse function $\hN=\theta N$ of the conformally rescaled metric $\hg=\theta^{2}g$.

\textit{Assumptions, rough formulation.} Beyond all the assumptions stated above, we also impose a smallness assumption on the shift vector field and
conditions on some of the components of $\hml_{U}\mK$ with respect to the frame $\{X_{A}\}$. However, the main assumptions are formulated as bounds on
weighted $C^{k}$ and Sobolev norms of $\mK$, $\hml_{U}\mK$, $\ln\theta$, $q$, $\ln\hN$, $\hU(\ln\hN)$, $\chi$ and $\dotchi$. Here $\dotchi$ denotes
the spatial components of the expansion normalised normal Lie derivative of $\chi$. 

\subsection{Results}\label{ssection:results}

Given the assumptions roughly listed in Subsection~\ref{ssection:Aframeworkforanisotropicsolns}, and described in greater detail in
Section~\ref{section:Assumptions} below, we here state some of the conclusions. In \cite{RinWave}, we deduce the asymptotic behaviour of solutions
to linear systems of wave equations on the corresponding backgrounds. However, in the present article, we focus on the geometry. In what follows,
and in contrast to \cite{RinWave}, we assume Einstein's equations
\begin{equation}\label{eq:EE}
  G+\Lambda g=T
\end{equation}
to hold. Here $G=\mathrm{Ric}-Sg/2$, where $\mathrm{Ric}$ and $S$ are the Ricci and scalar curvature of $(M,g)$ respectively; $\Lambda$ is the
cosmological constant (the sign of which is irrelevant in the main results); and $T$ is the stress energy tensor. For most of the results, we
only impose general conditions on the stress energy tensor, but some of the results concern scalar field matter specifically.

\textit{Spatial scalar curvature.} The first observation concerns the spatial scalar curvature $\bS$. Given assumptions as above, including bounds of
weighted $C^{k}$-norms up to a suitable order,  
\begin{equation}\label{eq:renormalisedbSestintro}
  \big|\theta^{-2}\bS+\textstyle{\frac{1}{4}\sum}_{A,C=2}^{n}\sum_{B=1}^{\min\{A,C\}-1}e^{2\mu_{B}-2\mu_{A}-2\mu_{C}}(\g^{B}_{AC})^{2}\big|
  \leq C_{\bS}\ldr{\varrho}^{2(2\cweight+1)}e^{2\e_{\Spe}\varrho}
\end{equation}
for all $t$ in the interval $(t_{-},t_{0}]$; note that if $\mK X_A=\ell_AX_A$ (no summation), then $|X_A|_{\chg}=e^{\mu_A}$, so that, since the $\ell_A$ are
ordered, the relative sizes of $A$, $B$ and $C$ appearing in the sum in (\ref{eq:renormalisedbSestintro}) are important. This estimate follows from
(\ref{eq:renormalisedbSest}) below. Here $\g^{A}_{BC}$ are the structure coefficients associated with the frame $\{X_{A}\}$. In other words,
\begin{equation}\label{eq:structure coeffients}
  [X_{A},X_{B}]=\g^{C}_{AB}X_{C}.
\end{equation}
Moreover, $C_{\bS}$ is a constant and $0\leq\cweight\in\ro$ is a constant appearing in the definition of the weights. Since $\e_{\Spe}>0$ and
$\varrho\rightarrow-\infty$ in the direction of the singularity, the right hand side of (\ref{eq:renormalisedbSestintro}) converges to zero exponentially
in the direction of the singularity. The first, very important, conclusion to draw from (\ref{eq:renormalisedbSestintro}) is that, up to exponentially
decaying terms, $\theta^{-2}\bS$ is negative. Note, however, that (\ref{eq:renormalisedbSestintro}) does not prevent $\theta^{-2}\bS$ from tending to
$-\infty$. On the other hand, a bound on $\theta^{-2}\bS$ arises by combining (\ref{eq:renormalisedbSestintro}) with the Hamiltonian constraint. In the
case of (\ref{eq:EE}), the Hamiltonian constraint reads
\begin{equation}\label{eq:Hamiltonianconstraintbasicversion}
  \bS-\bk^{ij}\bk_{ij}+\theta^{2}=2\rho+2\Lambda,
\end{equation}
where $\rho$ is the energy density, given by $\rho=T(U,U)$, and $\bk$ is the second fundamental form of the
leaves of the foliation. Dividing (\ref{eq:Hamiltonianconstraintbasicversion}) with $\theta^{2}$ and reformulating the terms slightly yields
\begin{equation}\label{eq:reformulatedandrenormalisedHamcon}
1=2\Omega+2\Omega_{\Lambda}+\tr(\mK^{2})-\theta^{-2}\bS,
\end{equation}
where $\Omega:=\theta^{-2}\rho$ and $\Omega_{\Lambda}:=\theta^{-2}\Lambda$. We are mainly interested in matter models such that $\rho\geq 0$, so 
that $\Omega\geq 0$. In the cosmological setting, $\Lambda\geq 0$ is also a natural assumption. When $\rho\geq 0$ and $\Lambda\geq 0$, the fact
that $\theta^{-2}\bS$ is asymptotically non-positive; see (\ref{eq:renormalisedbSestintro}); combined with the equality
(\ref{eq:reformulatedandrenormalisedHamcon}) yields asymptotic bounds on all the quantities on the right hand side of
(\ref{eq:reformulatedandrenormalisedHamcon}). However, irrespective of the sign of $\Lambda$, the assumptions of the previous subsection,
including bounds of weighted $C^{k}$-norms up to a suitable order, yield
\begin{equation}\label{eq:OmegaplussumellAsqbdpnegscintro}
  \left|2\Omega+\textstyle{\sum}_{A}\ell_{A}^{2}+\mSb_{b}-1\right| \leq C_{\rocon}\ldr{\varrho}^{2(2\cweight+1)}e^{2\e_{\Spe}\varrho}
\end{equation}
on $M_{-}$; see Proposition~\ref{prop:effectiveHamconstraint}; where $C_{\rocon}$ is a constant and
\begin{equation}\label{eq:mSbbdef}
  \mSb_{b}:=\textstyle{\frac{1}{4}\sum}_{A,C=2}^{n}\sum_{B=1}^{\min\{A,C\}-1}e^{2\mu_{B}-2\mu_{A}-2\mu_{C}}(\g^{B}_{AC})^{2}.
\end{equation}
Due to (\ref{eq:OmegaplussumellAsqbdpnegscintro}), it is clear that $\textstyle{\sum}_{A}\ell_{A}^{2}\leq 1$ asymptotically, and 
$\textstyle{\sum}_{A}\ell_{A}^{2}\approx 1$ unless $\mSb_{b}$ or $\Omega$ are non-negligible. That the vector $\ell=(\ell_{1},\dots,\ell_{n})$
is contained in the closed unit ball asymptotically is remarkable if we keep in mind that we initially only impose the condition that
$\ell$ be contained in some ball. It is additionally of interest to note that for
$B<A$, $B<C$, 
\begin{equation}\label{eq:emuBmmuAmmuCgaBACchgarel}
  e^{\mu_{B}-\mu_{A}-\mu_{C}}\g^{B}_{AC}=\chga^{B}_{AC},
\end{equation}
where $\chsfX_{A}:=e^{-\mu_{A}}X_{A}$ and $[\chsfX_{A},\chsfX_{C}]=\chga^{B}_{AC}\chsfX_{B}$. In particular,
\begin{equation}\label{eq:mSbbchgaversion}
  \mSb_{b}=\textstyle{\frac{1}{4}\sum}_{A,C=2}^{n}\sum_{B=1}^{\min\{A,C\}-1}(\chga^{B}_{AC})^{2}.
\end{equation}
This formula is remarkable in the sense that the right hand side can be computed by a geometrically quite natural sequence of steps: First, compute
the eigenvector fields $\{X_{A}\}$ of $\mK$. Next, normalise them so that they are orthonormal with respect to the conformally rescaled metric
$\chg:=\theta^{2}\bge$. Finally, compute the associated structure coefficients $\chga^{A}_{BC}$. Then $\mSb_{b}$ is given by (\ref{eq:mSbbchgaversion}). Moreover,
this formula is particularly informative in the case of $3+1$-dimensions; in that case, there are only two non-zero terms in
the sum, namely the terms corresponding to $(B,A,C)=(1,2,3)$ and $(B,A,C)=(1,3,2)$, and since the terms coincide, 
\begin{equation}\label{eq:mSbbthreeplusone}
  \mSb_{b}=\textstyle{\frac{1}{2}}(\chga^{1}_{23})^{2}.
\end{equation}
In what follows, it will become clear that, in $3+1$-dimensions, the function $(\chga^{1}_{23})^{2}$ is of central importance, e.g. when it comes to
distinguishing between quiescent and oscillatory behaviour. It is therefore convenient to give it a name, and we refer to $(\chga^{1}_{23})^{2}$ as the
\textit{bounce function}; due to the above, this function is uniquely determined by the foliation (in regions such that $\mK$ is well defined and
its eigenvalues are distinct). The function $(\g^{1}_{23})^{2}$ is not uniquely defined. However, it is uniquely defined up to a strictly positive
multiplicative function which is independent of $t$ (what the multiplicative function is depends on the choice of reference metric). We refer to
$(\g^{1}_{23})^{2}$ as \textit{a pre-bounce function}. 

It is also of interest to draw conclusions concerning the deceleration parameter $q$ introduced in (\ref{eq:hUnlnthetamomqbas}), as well as concerning
the eigenvalues of $\chK$. In order to be able to do so, we need to make some assumptions concerning the matter. Letting $\bp:=\bge^{ij}T_{ij}/n$ (where
the Latin indices range from $1$ to $n$ and correspond to directions tangential to the leaves of the foliation), we assume $\theta^{-2}(\rho-\bp)$ to
decay exponentially in the direction of the singularity. In vacuum, this assumption is clearly satisfied. However, under suitable circumstances, it is
also satisfied by other matter models such as scalar fields; see Section~\ref{section:scalarfieldmatter} below for a more detailed discussion. Combining
the requirement that $\theta^{-2}(\rho-\bp)$ decay exponentially with the above assumptions, including bounds of weighted $C^{k}$-norms up to a suitable
order,  yields
\begin{equation}\label{eq:qminusnminusoneetc}
  |q-(n-1)+n\mSb_{b}|+|\lambda_{A}-(\ell_{A}-1)-\mSb_{b}|\leq C_{a}\ldr{\varrho}^{2(2\cweight+1)}e^{2\e_{q}\varrho}
\end{equation}
for some $\e_{q}>0$,
where $C_{a}$ is a constant and $\lambda_{A}$ are the eigenvalues of $\chK$ corresponding to the eigenvectors $X_{A}$; see
Proposition~\ref{prop:qandlambdaAestimates} below. In particular, the only obstruction to $q$ equalling $n-1$ is if $\mSb_{b}$ is non-zero. However, this
estimate also makes it clear that $\ell_{A}=1$ is asymptotically not consistent with the assumptions; the requirement of silence corresponds to the
inequality $\lambda_{A}\leq -\e_{\Spe}$ for all $A\in \{1,\dots,n\}$ (note also that if $\ell_{A}=1$, then (\ref{eq:OmegaplussumellAsqbdpnegscintro})
implies that $\mSb_{b}$ is exponentially small). 

\textbf{$3+1$-dimensions, dynamics.} It is of interest to specialise the above observations to the case of $3+1$-dimensions, and to deduce how the
$\ell_{A}$ evolve. Since $\ell_{1}+\ell_{2}+\ell_{3}=1$, we can summarise the information concerning the $\ell_{A}$ by two functions. In fact, it is
convenient to introduce
\begin{subequations}\label{seq:ellpm}
  \begin{align}
    \ell_{+} := & \tfrac{3}{2}\big(\ell_{2}+\ell_{3}-\tfrac{2}{3}\big)=\tfrac{3}{2}\left(\tfrac{1}{3}-\ell_{1}\right),\label{eq:ellplus}\\
    \ell_{-} := & \tfrac{\sqrt{3}}{2}(\ell_{2}-\ell_{3}).\label{eq:ellminus}
  \end{align}
\end{subequations}
Then
\begin{equation}\label{eq:ellpellminuscircconv}
  \ell_{+}^{2}+\ell_{-}^{2}=\tfrac{3}{2}(\ell_{1}^{2}+\ell_{2}^{2}+\ell_{3}^{2})-\tfrac{1}{2}
\end{equation}
and (\ref{eq:OmegaplussumellAsqbdpnegscintro}) reads
\[
\left|\ell_{+}^{2}+\ell_{-}^{2}+3\Omega+\tfrac{3}{4}(\chga^{1}_{23})^{2}-1\right| \leq C_{a}\ldr{\varrho}^{2(2\cweight+1)}e^{2\e_{\Spe}\varrho}.
\]
Next, it is of interest to analyse the evolution of $\ell_{\pm}$. In order to obtain conclusions, we need to make additional assumptions concerning
the matter, in particular concerning
\begin{equation}\label{eq:mfpmcPdef}
  \mfp^{i}_{\phantom{i}j}:=\bge^{il}T_{lj},\ \ \
  \mcP^{i}_{\phantom{i}j}:=\mfp^{i}_{\phantom{i}j}-\bp\de^{i}_{j}.
\end{equation}
With this notation, we assume $\theta^{-2}(\rho-\bp)$ and $\mcP^{A}_{\phantom{A}A}$ (no summation) to decay exponentially. Again, this condition is
satisfied in the case of vacuum, but also, under suitable circumstances, for scalar fields. Then we obtain the conclusion that
\begin{subequations}\label{seq:hUellpmintro}
  \begin{align}
    \big|\hU(\ell_{+})+\tfrac{1}{2}(\chga^{1}_{23})^{2}(\ell_{+}-2)\big| \leq & K_{+}\ldr{\varrho}^{2(2\cweight+1)}e^{2\e_{q}\varrho},\label{eq:hUellplusintro}\\
    \big|\hU(\ell_{-})+\tfrac{1}{2}(\chga^{1}_{23})^{2}\ell_{-}\big| \leq & K_{-}\ldr{\varrho}^{2(2\cweight+1)}e^{2\e_{q}\varrho}\label{eq:hUellminusintro}
  \end{align}
\end{subequations}
on $M_{-}$, where $K_{\pm}$ are constants and $0<\e_{q}\in\ro$. In order to obtain this conclusion, we appeal to (\ref{eq:renormalisedbSestintro}),
(\ref{eq:mSbbthreeplusone}) and (\ref{subeq:hUellpm}). The estimates (\ref{seq:hUellpmintro}) are quite remarkable for several reasons. First, up to
normalisation, the main ingredients of the left hand side are defined solely in terms of $\mK$. Second, the time evolution for $\ell_{\pm}$ implied by
(\ref{seq:hUellpmintro}) is that if the $\ell_\pm$ evolve, they do so according to the Kasner map; see Remark~\ref{remark:BKLmap} below for a more detailed
justification of this statement. Third, in the $3+1$-dimensional vacuum setting, if the $\ell_\pm$ converge along an integral curve of $\hU$
(in the direction of the singularity), then $\g^1_{23}$ has to converge to zero exponentially along this integral curve; see Remark~\ref{remark:g123 to zero}.
Finally, if $(\chga^{1}_{23})^{2}$ vanishes or decays to zero exponentially, then convergence of $\ell_{\pm}$ follows.
The last three observations constitute the justification for calling $(\chga^{1}_{23})^{2}$ the bounce function; evolution by means of the Kasner map is
often referred to as a bounce. The last observation is also
of interest when considering symmetry classes of solutions to Einstein's equations. In fact, the vanishing of $(\chga^{1}_{23})^{2}$
can be a consequence of special symmetry assumptions. In particular, it is then consistent for $\ell_{\pm}$ to converge, even though this is not to be
expected more generally. We refer the interested reader to Subsections~\ref{ssection:NGQuiescentResults} and
\ref{ssection:revisitsphom}--\ref{ssection:revisitnosymm} below for more details.  

\textit{Quiescent singularities.} Considering estimates and formulae such as (\ref{eq:renormalisedbSestintro}), (\ref{eq:emuBmmuAmmuCgaBACchgarel})
and (\ref{seq:hUellpmintro}), it is clear that terms including a factor of the form $e^{2\mu_{A}-2\mu_{B}-2\mu_{C}}$, where $B\neq C$,
are of particular importance. The reason for this is that such terms could, potentially, grow exponentially in the direction of the singularity. Moreover,
they could prevent the convergence of the $\ell_{A}$. Due to the ordering of the eigenvalues of $\mK$, the worst (in the sense that it is the largest in the
direction of the singularity) combination that can occur is $e^{2\mu_{1}-2\mu_{n-1}-2\mu_{n}}$; see (\ref{eq:muAmmuBmmuCottest}) below. However,
$\mu_{1}-\mu_{n-1}-\mu_{n}$ can be expressed in terms of the $\ell_{A}$ and $q$. In fact, if $\g$ is an integral curve of $\hU$ with
$\g(0)\in\bM\times\{t_{0}\}$, then there is a constant $K$ such that
\[
\big|(\mu_{1}-\mu_{n-1}-\mu_{n})\circ\g(s)+\textstyle{\int}_{s}^{0}\left[\ell_{1}-\ell_{n-1}-\ell_{n}+\tfrac{1}{n}(1+q)\right]\circ\g(u)du\big|\leq K
\]
for all $s\leq 0$ in the domain of definition of $\g$ (note that since $\hU$ is future oriented, $s\leq 0$ corresponds to the direction of the
singularity); see Remark~\ref{remark:muAalonggamma} below. 

Consider now the quiescent setting. This means, in particular, that the $\ell_{A}$ converge. For this to be consistent under generic
circumstances, the expression $e^{2\mu_{1}-2\mu_{n-1}-2\mu_{n}}$ has to converge to zero exponentially. If this happens, it can be argued that $\mSb_{b}$ converges
to zero exponentially. Combining this observation with (\ref{eq:qminusnminusoneetc}) leads to the conclusion that $q-(n-1)$ converges to zero exponentially.
The natural (bootstrap) assumption to make is thus that $q-(n-1)$ converges to zero exponentially and that $\ell_{1}-\ell_{n-1}-\ell_{n}+1$ is strictly
positive. To be more precise, assume, therefore, that there are constants $K_{q}$, $C_{\rho}$, $\e_{\rho}$ and $0<\e_{\que},\e_{p}<1$ such that
\begin{equation}\label{eq:qminus nmo ell alg rho m p}
  |q-(n-1)|\leq K_{q}e^{\e_{\que}\varrho},\ \ \
  \ell_{1}-\ell_{n-1}-\ell_{n}+1\geq\e_{p},\ \ \
  \theta^{-2}|\rho-\bp|\leq C_{\rho}e^{2\e_{\rho}}.
\end{equation}
Given these assumptions, it follows that $q-(n-1)$ decays exponentially; see (\ref{eq:qexpconvimprove}) below. Moreover, the rate of decay is determined by
$\e_{\Spe}$, $\e_{\rho}$ and $\e_{p}$. In particular, if $\e_{\que}$ is small enough, the estimate (\ref{eq:qexpconvimprove}) represents an improvement of the
initial assumption concerning $q$.

Combining the above assumptions with bounds on weighted $C^{k}$ and Sobolev norms of $\mK$, $\hml_{U}\mK$, $\ln\theta$, $q$, $\ln\hN$,
$\hU(\ln\hN)$, $\chi$ and $\dotchi$ leads to an improvement of the assumptions. In fact, in the vacuum setting, it can be demonstrated that
$\hml_{U}\mK$ converges to zero exponentially and that $\mK$ converges exponentially, even though we initially only assume these quantities
to be bounded with respect to weighted norms. On the other hand, the conclusion involves a loss of derivatives. Since the precise statements
are somewhat technical, we refer the reader to Section~\ref{section:resultsformal} below for further details. There are similar results if the
matter consists of a scalar field; see Section~\ref{section:resultsformal} below.

It is of interest to ask under what situations the assumption that $\ell_{1}-\ell_{n-1}-\ell_{n}+1>0$ is consistent. In case $n=3$, this condition
reads $\ell_{1}>0$. In other words, it corresponds to the requirement that $\mK$ be positive definite. This requirement is not consistent with
vacuum (since, in vacuum, the sum of the $\ell_{A}$ and the sum of their squares have to equal $1$ in the limit). However, it is consistent if the
matter content consists of a scalar field; see, e.g., \cite{aarendall}. In the case of vacuum, $n$ has to satisfy $n\geq 10$ in order for quiescent
behaviour to be allowed. This observation goes back to \cite{Henneauxetal}, and the consistency is illustrated by \cite{daetal,GIJ}.

\textit{Improvement of the assumptions.} One interesting consequence of the arguments is that, both in the general and in the quiescent setting,
the assumptions imply improvements of some of the assumptions. In that sense, there is reason to hope that the assumptions formulated here can
constitute part of a bootstrap argument. A precise statement of the results requires a precise statement of the assumptions, and we therefore
postpone a more detailed discussion of this topic to Section~\ref{section:resultsformal} below.

\textit{Unified perspective/data on the singularity.} Finally, it is of interest to note that the framework developed here gives a unified perspective on
existing results. This is important for two reasons. First of all, knowledge of existing results are of crucial importance when formulating conjectures.
However, it is not easy, especially for someone new to the field, to get an overview due to the details such as gauge choices, Lie algebra classification,
frame choices etc. To have a general guiding principle, depending only on the foliation and the geometry, is therefore very valuable. The framework
developed here yields such a guiding principle, based on the expansion normalised Weingarten map $\mK$ and the bounce function. Under the assumptions of
his article, these quantities can be used to explain the dynamics in $3+1$-dimensions. However, it is natural to conjecture that they  determine the
dynamics more generally. This leads to the second reason for why it is important to revisit existing results; if the framework is supposed to be a useful
guiding principle when formulating conjectures, it of course needs to reproduce existing results. For this reason, we devote
Subsections~\ref{ssection:revisitsphom}--\ref{ssection:revisitnosymm} and Appendix~\ref{section:Bianchi class B} below to verifying this. 

We wish to emphasise here that formulating conjectures concerning the asymptotics is particularly important when trying to understand the global dynamics
(potentially in a given symmetry class), rather than when proving stability results (since stability results are based on starting close to a specific solution).

Another consequence of the above focus on $\mK$ and the bounce function is that it naturally leads to a notion of initial data on the singularity; see
\cite[Definition~2]{RinQC} for the $3+1$-dimensional vacuum setting and \cite[Definition~10]{RinQC} for the $n+1$-dimensional Einstein-scalar field
setting. It is natural to conjecture that if a solution is convergent, then
it should give rise to data on the singularity in the sense of \cite{RinQC}; see \cite{RinQC} for a conditional justification of this statement. Again, it
is useful to compare this expectation with results such as \cite[Theorem~1.1]{fal}, where the authors introduce a certain type of initial data on the
singularity and prove that there is a corresponding solution. This is an important result. However, the conditions defining the initial data are very
technical. Due to the unified perspective developed here and in \cite{RinQC}, one would, however, expect the conditions in \cite[Theorem~1.1]{fal} to be a
special case of \cite[Definition~2]{RinQC}. This turns out to be the case; see \cite[Propositions~5 and 6]{RinQC}. 

The notion of initial data on the singularity can also be used to provide conjectures concerning the global dynamics. By a straightforward algebraic
calculation, one can verify that there are no non-degenerate vacuum initial data on the singularity in the case of Bianchi types VIII and IX; see
\cite{RinID}. One is then naturally led to the conjecture that Bianchi type VIII and IX vacuum solutions are oscillatory, which turns out to be true.
Similarly, for Bianchi VIII and IX solutions to the Einstein-scalar field equations, there are initial data on the singularity. This leads to the
conjecture that these data on the singularity should describe the asymptotics. This turns out to be true (in spite of the fact that the dynamics in the
case of Bianchi types VIII and IX can be very complicated - starting close enough to vacuum, solutions can shadow a chaotic vacuum solution along any
number of BKL type bounces). However, these are just two simple examples of conjectures that arise from the framework developed here. We hope that this
convinces the reader that the perspective developed here is useful both in succinctly summarising known results and in formulating new conjectures.

\subsection{Previous work, outline}

Several frameworks for understanding big bang singularities have been developed over the years. To begin with, there is the very influential work of
Belinski\v{\i}, Khalatnikov and Lifschitz (BKL); see, e.g., \cite{bkl1} and \cite{bkl2}. This work has been refined since then, and alternative approaches
have been developed; see, e.g., \cite{dhan,dah,huar,HUL}. When comparing with the present article, the scale invariant perspective developed in \cite{huar}
is of particular importance; it is closest to the framework developed here. The above results are due to physicists, and the conclusions are far reaching.
However, the assumptions rarely take the form of estimates, but rather involve heuristic arguments justifying why certain terms in the equations can be
neglected. A more mathematical approach is taken in \cite{Lott1,Lott2}. In these articles, the author makes a priori assumptions (in the
form of, e.g., scale invariant estimates for the curvature) in the direction of the singularity, and draws conclusions concerning the asymptotics. The
results are very interesting. However, it is not so clear to us how the arguments would fit into a derivation of asymptotics without prior assumptions.
The purpose of the present article and its predecessor is to develop methods that will hopefully allow us to construct solutions with oscillatory
singularities. In the end, we wish to prove the existence of initial data leading to semi-global existence and detailed asymptotics. In order to be able
to do so, we need to understand the dynamics (in particular, to identify the main mechanisms) and to deduce how solutions to systems of wave equations
behave (in particular, with regards to asymptotics and regularity). We achieve some of these goals in this
article and its predecessor. However, there is one piece missing: our arguments yield a clear picture of the geometry and to some extent of the dynamics,
but they are associated with a loss of derivatives. In order to complete the argument, it is necessary to make an appropriate gauge choice and to use
energy methods to derive rough estimates without a loss of derivatives. Combining such arguments with our framework will hopefully result in the
desired conclusions.

The outline of the article is the following. We begin, in Section~\ref{section:Assumptions}, by formulating the assumptions more precisely. This is
followed, in Section~\ref{section:resultsformal}, by a description of the results. In particular, we discuss conditions that yield quiescent asymptotics
(both generically and in the presence of symmetries). In the context of the Einstein-scalar field equations, we  also derive asymptotics for the scalar
field in the quiescent setting. This discussion is followed by a description of partial improvements of the bootstrap assumptions; see
Subsection~\ref{ssection:partialbootstraparg}. In Subsections~\ref{ssection:revisitsphom}--\ref{ssection:revisitnosymm}, we demonstrate that many
previous results fit into the framework developed here. See also Appendix~\ref{section:Bianchi class B}. In Section~\ref{section:basiccurvatureconcl},
we estimate the expansion normalised spatial scalar curvature. This estimate is of central importance and has several consequences,
especially when combined with the Hamiltonian constraint. We record some of these consequences. In Section~\ref{section:dynamics}, we illustrate how
the Kasner map arises in our setting. We also derive improvements of some of the assumptions. The remainder of the article is devoted to the
quiescent setting. As is clear from (\ref{eq:renormalisedbSestintro}), expressions such as $2\mu_{B}-2\mu_{A}-2\mu_{C}$ play an important role in
the analysis. In order to make progress, we need to have some understanding for how this quantity evolves. In particular, the most favourable
situation is that this quantity tends to $-\infty$ linearly. We begin by recording such conditions in Subsections~\ref{ssection:asbehmuAetc}
and \ref{ssection:quiescentregimes}. We then prove that the expansion normalised normal derivative of $\mK$ decays exponentially and that $\mK$
converges exponentially in the case of Einstein's vacuum equations, improving the original assumptions (but with a loss of derivatives). In
Section~\ref{section:quiescent}, we derive conclusions of this nature, given $C^{k}$-assumptions, and assuming Einstein's vacuum equations to
hold. In Section~\ref{section:Sobestquiescentsetting}, we derive Sobolev estimates based on Sobolev assumptions. Finally, in
Section~\ref{section:scalarfieldmatter}, we consider quiescent behaviour in the context of the Einstein-scalar field equations. We again prove
that the expansion normalised normal derivative of $\mK$ decays exponentially and that $\mK$ converges exponentially. We also derive asymptotics
for the scalar field. In the appendices, we record material from \cite{RinWave} we need here, derive formulae we use in the main text, and prove
technical statements we need but whose proofs are not very illuminating. 

\subsection{Auxiliary material}

We use many of the conclusions of \cite{RinWave} in this article. Since the results in \cite{RinWave} we need are spread over more than 200 pages, it
is useful to collect the material we use in one place. We here also need several expressions for curvature components with respect to special frames. Deriving
the corresponding expressions is straightforward but lengthy. In other words, this article depends on extensive material which either exists in the literature
or can be derived in a straightforward manner. Including this material in a published article is inappropriate. Not including it at all is also
inappropriate. For this reason, we have chosen to include this material as appendices to the present arXiv version of the article.

\subsection{Notation}

For the benefit of the reader, we here summarise the notation used in the article:

\textit{Notation from Definition~\ref{def:basicnotions}}: The manifolds $M$ and $\bM$; the interval $I$; $t_0$, $t_-$ and $t_+$; the \textit{metrics} $g$,
$\bge$, $\hg$, $\chg$, $\bge_\refer$; the \textit{second fundamental forms/Weingarten maps} associated with the Lorentz
metrics (including traces and rescalings), $\bk$, $\theta$, $\chk$, $\chth$, $\bK$, $\mK$, $\chK$; the \textit{unit normals} $U$ and $\hU$; the
\textit{lapse functions} $N$ and $\hN$; the \textit{shift vector field} $\chi$; the \textit{Levi-Civita connection} $\bD$ associated with $\bge_\refer$;
and the \textit{volume and logarithmic volume densisties} $\varphi$ and $\varrho$ are introduced in Definition~\ref{def:basicnotions}.
The \textit{deceleration parameter} $q$ is defined by (\ref{eq:hUnlnthetamomqbas}).

Next, $\{E_i\}$ is a global orthonormal frame of $(\bM,\bge_\refer)$ with dual frame $\{\omega^i\}$; see Remark~\ref{remark:globalframe}.
The expansion normalised normal derivative of $\mK$ is given by $\hml_{U}\mK:=\theta^{-1}\ml_{U}\mK$; see (\ref{eq:hmlUmtinfixedspatialcoord}) and
\cite[Appendix~A.2]{RinWave}. Moreover, $\dotchi$ denotes the spatial part of the expansion normalised normal Lie
derivative of $\chi$. The \textit{time coordinate} $\tau$ is defined by (\ref{eq:taudefinition}). The interval $I_-$ is introduced in (\ref{eq:Iminusdef}).
The manifold $M_-$ is defined by $M_-:=\bM\times I_-$. Moreover, $\bM_t:=\bM\times\{t\}$.

$G$, $\mathrm{Ric}$, $T$, $S$, $\bS$ and $\Lambda$ denote the \textit{Einstein tensor} of $g$, the \textit{Ricci tensor of} $g$, the
\textit{stress energy tensor}, the \textit{scalar curvature} of $g$, the
scalar curvature of $\bge$ and the \textit{cosmological constant} respectively. The \textit{energy density} is denoted by $\rho$; $\Omega:=\rho/\theta^2$;
and $\Omega_\Lambda:=\Lambda/\theta^2$. The average pressure is denoted by $\bp:=\bge^{ij}T_{ij}/n$. Next, $\mSb_b$ is defined by (\ref{eq:mSbbdef}).
The \textit{eigenvalues} $\ell_A$ of $\mK$ and the frames $\{X_A\}$ and $\{Y^A\}$ are introduced in Definition~\ref{def:XAellA}; $\mu_A$ and $\bmu_A$ are
introduced in Remark~\ref{remark:globalframe}. The eigenvalues of $\chK$ are denoted $\lambda_A$. The \textit{structure coefficients} associated with
$\{X_A\}$ are denoted $\g_{BC}^A$; see (\ref{eq:structure coeffients}).
We also use the notation $a_A:=\g_{AB}^{B}/2$. Next, $\chsfX_{A}:=e^{-\mu_{A}}X_{A}$ and $\chga^{B}_{AC}$ denotes the associated structure coefficients.
The functions $\ell_\pm$ are introduced in (\ref{seq:ellpm}); $\mfp^{i}_{\phantom{i}j}$ and $\mcP^{i}_{\phantom{i}j}$ are introduced in (\ref{eq:mfpmcPdef}).
The family $\bmN$ of $(1,1)$-tensor fields is introduced in (\ref{eq:bmNAA def}). When we write $\bmN^{A}_{\phantom{A}B}$, this means the components of
$\bmN$ with respect to the frame $\{X_A\}$ and co-frame $\{Y^A\}$. Similarly, $\bge^{AB}$ are the components of the inverse of $\bge$ with respect
to the frame $\{X_A\}$ and co-frame $\{Y^A\}$.

The \textit{parameters} $\e_\Spe$ and $\e_\rond$ are introduced in Definition~\ref{def:silenceandnondegeneracy}. The parameters $C_{\mK,\mrod}$,
$G_{\mK,\mrod}$, $M_{\mK,\mrod}$ and $\e_\mK$ are introduced in
Definition~\ref{def:offdiagonalexpdec}. The parameters $\cweight$, $\weight_0$, $K_\cweight$, $\bDlnhNsup$, $c_\robas$ and $\mKsup$ are introduced
in Definition~\ref{def:basicassumptions}. The parameters $\theta_{0,\pm}$ are introduced in (\ref{eq:thetazdef}). Moreover, $\bfl$, $\bfl_0$,
$\bfl_1$, $\weight$ and $s_{\cweight,l}$ are introduced in Definition~\ref{def:sobklassumptions}; and $c_{\cweight,l}$ is introduced in
Definition~\ref{def:supmfulassumptions}. The constant $K_{\rovar}$ is introduced in Lemma~\ref{lemma:smallnessshiftconsequences} below. 

The \textit{weight space} $\Weight$ and \textit{index space} $\Index$ are introduced in (\ref{eq:Weight Index def}). For $\xi\in\rn{n}$, the Japanese
bracket is defined by $\ldr{\xi}:=(1+|\xi|^2)^{1/2}$. If one of the \textit{parameters} $K_{q}$, $C_{\rho}$, $\e_{\rho}$ and $0<\e_{\que},\e_{p}<1$ appear, it
means that we are making the corresponding assumption in (\ref{eq:qminus nmo ell alg rho m p}). The parameters $\e_{R}$, $\vare_{R}$ and $\vare_{\Spe}$ are
introduced in Theorem~\ref{thm:SobestimatesQuiescentVacuum}. The parameters $\kappa_1$, $\mfv_1$ and $\vare_\phi$ are introduced in
Theorem~\ref{thm:SobEstimatesQuiescentMatter}. The parameter $d_q$ is introduced in Theorem~\ref{thm:SobEstimatesQuiescentMatterNG}.
The parameters $C_\rho$ and $C_{\rho,0}$ are introduced in Proposition~\ref{prop:qandlambdaAestimates}. 
The parameters $C_\mP$, $C_{\mP,0}$ and $\e_q$ are introduced in Proposition~\ref{prop:hUellA}. 

The \textit{norms} $\|\cdot\|_{C^\bfl_\weight(\bM)}$, $\|\cdot\|_{H^\bfl_\weight(\bM)}$, $\|\cdot\|_{H^{\bfl,\weight}_\rohy(\bM)}$ and $\|\cdot\|_{C^{\bfl,\weight}_\rohy(\bM)}$
are introduced in (\ref{eq:mtClbS}), (\ref{eq:mtHlbS}), (\ref{eq:Hlrohydef}) and (\ref{eq:Clrohydef}) and respectively.

\subsection*{Acknowledgments}

This research was funded by the Swedish Research Council (Vetenskapsr\aa det), dnr. 2017-03863 and 2022-03053. It was also supported by the Swedish
Research Council under grant no. 2016-06596 while the author was in residence at Institut Mittag-Leffler in Djursholm, Sweden during the fall of 2019.
Finally, the author would like to thank the referees for their careful reading of the article and their numerous suggestions for improvements.

\section{Assumptions}\label{section:Assumptions}

The framework of this article is the same as that of \cite{RinWave}. We here therefore briefly recall the terminology and assumptions of
\cite{RinWave}, starting with the notion of crushing singularities as introduced in \cite[Definition~2.1]{RinWave}:

\begin{definition}\label{def:crushingsingularity}
  A spacetime $(M,g)$ is said to have a \textit{crushing singularity} if the following conditions are satisfied. First, $(M,g)$ can be foliated by spacelike
  Cauchy hypersurfaces in the sense that $M=\bM\times I$, where $\bM$ is an $n$-dimensional manifold, $I=(t_{-},t_{+})$ is an interval, the metric $\bge$
  induced on the leaves $\bM_{t}:=\bM\times\{t\}$ of the foliation is Riemannian, and $\bM_{t}$ is a Cauchy hypersurface in $(M,g)$ for all $t\in I$
  (i.e., $\bM_t$ is intersected exactly once by every inextendible timelike curve). Second,
  the mean curvature, say $\theta$, of the leaves of the foliation tends to infinity as $t\rightarrow t_{-}+$.
\end{definition}
\begin{remark}
  A spacetime is a time oriented Lorentz manifold. And given a foliation as in the statement of the definition, $\d_{t}$ is always assumed to be future
  pointing. 
\end{remark}
More generally, we use the following terminology and consider foliations of the following type (\cite[Definition~3.1]{RinWave}): 
\begin{definition}\label{def:basicnotions}
  Let $(M,g)$ be a spacetime. A \textit{partial pointed foliation} of $(M,g)$ is a triple $\bM$, $I$ and $t_{0}\in I$, where $\bM$
  is a closed $n$-dimensional manifold; $I$ is an interval with left end point $t_{-}$ and right end point $t_{+}$; and there is an open interval $J$
  containing $I$ and a diffeomorphism from $\bM\times J$ to an open subset of $M$. Moreover, the hypersurfaces $\bM_{t}:=\bM\times \{t\}$ are required
  to be spacelike Cauchy hypersurfaces (in $(\bM\times J,g)$) and $\d_{t}$ is required to be timelike with respect to $g$ (where $\d_{t}$ represents
  differentiation with respect to the variable on $I$). Given a partial pointed foliation,
  the \textit{associated induced metric, second fundamental form, mean curvature and future pointing unit normal} are denoted $\bge$, $\bk$, $\theta$ and
  $U$ respectively;
  the \textit{associated Weingarten map} $\bK$ is the family of $(1,1)$ tensor fields on $\bM$ obtained by raising one of the indices of $\bk$ with
  $\bge$;
  the \textit{associated reference metric} is the metric induced on $\bM_{t_{0}}$ by $g$ (it is denoted by $\bge_{\refer}$ with associated Levi-Civita
  connection $\bD$);
  and the \textit{volume density $\varphi$ and logarithmic volume density $\varrho$ associated with the pointed foliation} are defined by
  (\ref{eq:varphidefitobmubgedp}) and (\ref{eq:varrhodefdp}) respectively.

  An \textit{expanding partial pointed foliation} is a partial pointed foliation such that the mean curvature $\theta$ of the leaves of the foliation is
  always strictly positive. Given an expanding partial pointed foliation,
  the \textit{associated expansion normalised Weingarten map} $\mK$ is the family of $(1,1)$ tensor fields on $\bM$ given by $\mK:=\bK/\theta$;
  the \textit{associated conformal metric} is $\hg:=\theta^{2}g$;
  the \textit{associated induced conformal metric, second fundamental form, mean curvature and future pointing unit normal} are denoted $\chg$,
  $\chk$, $\chth$ and $\hU$ respectively, and they are the objects induced on the hypersurfaces $\bM_{t}$ by the conformal metric $\hg$;
  and the \textit{associated conformal Weingarten map} $\chK$ is the family of $(1,1)$ tensor fields on $\bM$ obtained by raising one of the indices of
  $\chk$ with $\chg$.
\end{definition}
\begin{remark}
  As is clear from the definition, the reference metric $\bge_{\refer}$ is determined by specifying the time $t_0\in I$. Moreover, due to this choice,
  the logarithmic volume density $\varrho$ introduced in (\ref{eq:varrhodefdp}) has the property that $\varrho(\bx,t_0)=0$ for all $\bx\in\bM$. This is
  a convenient property and it means that one can think of $t_0$ as a reference time in the definition of the logarithmic volume density. 
\end{remark}
\begin{remark}
  Note that for each $t\in I$, there is a corresponding $\bge$, $\bk$, $\bK$, $\mK$ etc. induced on $\bM_t$. In this sense, it is natural to think
  of $\bge$ as a family of Riemannian metrics on $\bM$ for $t\in I$ etc.
\end{remark}
\begin{remark}\label{remark:chitoregvar}
  Here $\chg=\theta^{2}\bge$, $\mK=\chK+\hU(\ln\theta)\mathrm{Id}$ and $\chth=-q$, where $q$ is defined by (\ref{eq:hUnlnthetamomqbas}).
\end{remark}

In what follows, we also speak of the \textit{deceleration parameter} $q$, defined by (\ref{eq:hUnlnthetamomqbas}); note that $q$ converging to
$n-1$ is naturally associated with the quiescent setting; see (\ref{eq:ninvoneplusqfirstestimate}) below; and that $q-(n-1)$ is the perhaps most
important term when it comes to determining the asymptotic behaviour of solutions to the wave equation $\Box_g\phi=0$; see (\ref{eq:waveequforphi})
below. Moreover, the \textit{lapse function}
$N$ and \textit{shift vector field} $\chi$ are defined by $\d_{t}=NU+\chi$. The lapse function associated with the conformally rescaled metric $\hg$ is
denoted $\hN=\theta N$. 

Non-degeneracy and silence are two important notions in what follows; see \cite[Definition~3.10]{RinWave}:
\begin{definition}\label{def:silenceandnondegeneracy}
  Let $(M,g)$ be a spacetime. Assume that it has an expanding partial pointed foliation. If there is a constant $\e_{\Spe}>0$
  such that
  \begin{equation}\label{eq:chKeSpe}
    \chK\leq -\e_{\Spe}\Id
  \end{equation}
  (i.e., if $\chK$ is negative definite) on $\bM\times I$, then $\chK$ is said to have a \textit{silent upper bound} on $I$. In what follows, $\e_{\Spe}$
  is assumed to satisfy $\e_{\Spe}\leq 2$. If the eigenvalues of $\mK$ are distinct and there is an $\e_{\rond}>0$ such that the distance between different
  eigenvalues is bounded from below by $\e_{\rond}$ on $I$, then $\mK$ is said to be \textit{non-degenerate} on $I$. 
\end{definition}
It is of interest to note that if $\lambda_A$ are the eigenvalues of $\chK$, then, due to \cite[(7.32)]{RinWave}, 
\begin{equation}\label{eq:lambdaArelqellA}
\lambda_{A}=\ell_{A}+\hU(\ln\theta)=\ell_{A}-(q+1)/n.
\end{equation}
Combining (\ref{eq:lambdaArelqellA}) with (\ref{eq:chKeSpe}) yields the conclusion that $q\geq n\e_{\Spe}$, so that $\hU(n\ln\theta)\leq -1-n\e_{\Spe}$ due to 
(\ref{eq:hUnlnthetamomqbas}). The latter estimate can be used to deduce that the singularity is crushing (if $\varrho$ diverges uniformly to
$-\infty$); see (\ref{eq:lnthetalowbd}), the justification of which can be found in the proof of \cite[Corollary~7.7]{RinWave}.

Next, due to the non-degeneracy, it is possible to define a frame; see \cite[Definition~3.13]{RinWave}:
\begin{definition}\label{def:XAellA}
  Let $(M,g)$ be a spacetime. Assume it to have an expanding partial pointed foliation and $\mK$ to be non-degenerate on
  $I$. By assumption, the eigenvalues, say $\ell_{1}<\cdots<\ell_{n}$, of $\mK$ are distinct. Locally, there is, for each $A\in \{1,\dots,n\}$
  an eigenvector $X_{A}$ of $\mK$ corresponding to $\ell_{A}$ such that
  \begin{equation}\label{eq:XAbgenormcond}
    |X_{A}|_{\bge_{\refer}}=1.
  \end{equation}
  If there is a global smooth frame with this property, say $\{X_{A}\}$, then $\mK$ is said to have a \textit{global frame} and $\{Y^{A}\}$ denotes the
  frame dual to $\{X_{A}\}$.  
\end{definition}
\begin{remark}
  Given the assumptions we make in this article, $|Y^{A}|_{\bge_{\refer}}$ is bounded; see Lemma~\ref{lemma:frameinvest} and
  Definition~\ref{def:basicassumptions} below.
\end{remark}
\begin{remark}\label{remark:globalframe}
  Due to Lemma~\ref{lemma:Lemma A.1}, we can (under the assumptions of Definition~\ref{def:XAellA}), by taking a finite covering space of $\bM$, if
  necessary, assume $\mK$ to have a global frame. For this reason, we assume $\mK$ to have a global frame in what follows. In this setting, we can
  introduce $\bmu_{A}$ and $\mu_{A}=\bmu_{A}+\ln\theta$ by demanding that (note that $\{X_{A}\}$ is orthogonal with respect to $\bge$ and $\chg$)
  \begin{equation}\label{eq:muAbmuAdefinition}
    |X_A|_{\bge}=e^{\bmu_A},\ \ \ |X_A|_{\chg}=e^{\mu_A}.
  \end{equation}
  Moreover, it is convenient to introduce a frame $\{E_{i}\}$, which is orthonormal with respect to $\bge_{\refer}$, with dual frame $\{\omega^{i}\}$.
\end{remark}

In what follows, we need to control the expansion normalised normal derivative of $\mK$: $\hml_{U}\mK:=\theta^{-1}\ml_{U}\mK$, 
given by (\ref{eq:hmlUmtinfixedspatialcoord}); see also \cite[Appendix~A.2]{RinWave}. Einstein's equations can be used to calculate $\hml_{U}\mK$, and
we will use this fact to estimate
$\hml_{U}\mK$ in some situations. However, in order to even get started, we need to make some assumptions concerning this quantity; see
\cite[Definition~3.19]{RinWave}:
\begin{definition}\label{def:offdiagonalexpdec}
  Let $(M,g)$ be a spacetime. Assume it to have an expanding partial pointed foliation and $\mK$ to be non-degenerate on
  $I$ and to have a global frame. Then $\hml_{U}\mK$ is said to satisfy an \textit{off-diagonal exponential bound} if there are constants
  $C_{\mK,\mrod}\geq 0$, $G_{\mK,\mrod}\geq 0$, $M_{\mK,\mrod}\geq 0$ and $0<\e_{\mK}\leq 2$ such that
  \begin{equation}\label{eq:hmlUhmlUsqmKoffdiagonalexpbd}
    |(\hml_{U}\mK)(Y^{A},X_{B})|\leq C_{\mK,\mrod}e^{\e_{\mK}\varrho}+G_{\mK,\mrod}e^{-\e_{\mK}\varrho}
  \end{equation}
  on $\bM\times I$ for all $A\neq B$, where
  \begin{equation}\label{eq:expgrowthwithupperbound}
    G_{\mK,\mrod}e^{-\e_{\mK}\varrho}\leq M_{\mK,\mrod}
  \end{equation}
  on $\bM\times I$. If there are constants $C_{\mK,\mrod}\geq 0$, $G_{\mK,\mrod}\geq 0$, $M_{\mK,\mrod}\geq 0$ and $0<\e_{\mK}\leq 2$ such that
  (\ref{eq:hmlUhmlUsqmKoffdiagonalexpbd}) and (\ref{eq:expgrowthwithupperbound}) hold on $\bM\times I$ for all $A,B$ such that $A\neq B$ and $B>1$,
  then $\hml_{U}\mK$ is said to satisfy a \textit{weak off-diagonal exponential bound}. 
\end{definition}
\begin{remark}
  The requirement that $\hml_{U}\mK$ satisfy a weak off-diagonal exponential bound is perhaps the most unnatural assumption of the framework. However,
  it is consistent with the presence of oscillations; see the remarks following \cite[Definition~3.19]{RinWave}. More specifically, this condition
  is satisfied by Bianchi type VIII and IX solutions, for which the dynamics are asymptotically well approximated by the BKL map (on the other hand,
  $(\hml_{U}\mK)(Y^{A},X_{A})$ does not converge to zero in this setting). Moreover, under
  suitable circumstances, the framework leads to an improvement of this assumption; see, e.g., Corollary~\ref{cor:improvedoffdiagexpbd} below and the
  results in the quiescent setting. In such a situation, we can think of the weak off-diagonal exponential bound as part of a bootstrap argument. 
\end{remark}

Next, we introduce the norms with respect to which we express the main assumptions. Let, to begin with, 
\begin{equation}\label{eq:Weight Index def}
  \Weight:=\{(\weight_{a},\weight_{b})\in\rn{2}:\weight_{a}\geq 0, \weight_{b}\geq 0\},\ \ \
  \Index:=\{(l_{0},l_{1})\in\zn{2}:0\leq l_{0}\leq l_{1}\}.
\end{equation}
Then, if $\weight\in\Weight$, $(l_{0},l_{1})=\bfl\in\Index$ and $\mt$ is a family of tensor fields (of arbitrary type) on $\bM$ for $t\in I$, 
\begin{align}
\|\mt(\cdot,t)\|_{C^{\bfl}_{\weight}(\bM)} := & 
\textstyle{\sup}_{\bx\in\bM}\big(\textstyle{\sum}_{j=l_{0}}^{l_{1}}\ldr{\varrho(\bx,t)}^{-2\weight_{a}-2j\weight_{b}}
|\bD^{j}\mt(\bx,t)|_{\bge_{\refer}}^{2}\big)^{1/2},\label{eq:mtClbS}\\
\|\mt(\cdot,t)\|_{H^{\bfl}_{\weight}(\bM)} := & 
\big(\textstyle{\int}_{\bM}
\textstyle{\sum}_{j=l_{0}}^{l_{1}}\ldr{\varrho(\cdot,t)}^{-2\weight_{a}-2j\weight_{b}}|\bD^{j}\mt(\cdot,t)|_{\bge_{\refer}}^{2}\mu_{\bge_{\refer}}\big)^{1/2}.
\label{eq:mtHlbS}
\end{align}
Here $\ldr{\xi}:=(1+|\xi|^{2})^{1/2}$ and $\bD$ is the Levi-Civita connection associated with $\bge_\refer$. In case $\weight=0$, we write $C^{\bfl}(\bM)$ and
$H^{\bfl}(\bM)$ for the spaces and correspondingly for the norms. When
$\bfl=(0,l)$, we replace $\bfl$ with $l$. Given a vector field $\xi$ on $M$ which is tangential to the leaves of the foliation; $\weight\in\Weight$; and
$(l_{0},l_{1})=\bfl\in\Index$, we also use the notation
\begin{align}
  |\bD^{k}\xi|_{\rohy} := & N^{-1}\big(\bge_{\refer}^{i_{1}j_{1}}\cdots \bge_{\refer}^{i_{k}j_{k}}\bge_{lm}\bD_{i_{1}}\cdots\bD_{i_{k}}\xi^{l}
  \bD_{j_{1}}\cdots\bD_{j_{k}}\xi^{m}\big)^{1/2},\label{eq:bDkchirohydef}\\
  \|\xi(\cdot,t)\|_{H^{\bfl,\weight}_{\rohy}(\bM)} := & \big(\textstyle{\int}_{\bM}\textstyle{\sum}_{k=l_{0}}^{l_{1}}
\ldr{\varrho(\cdot,t)}^{-2\weight_{a}-2k\weight_{b}}|\bD^{k}\xi(\cdot,t)|_{\rohy}^{2}\mu_{\bge_{\refer}}\big)^{1/2},
\label{eq:Hlrohydef}\\
\|\xi(\cdot,t)\|_{C^{\bfl,\weight}_{\rohy}(\bM)} := & \textstyle{\sup}_{\bx\in\bM}\textstyle{\sum}_{k=l_{0}}^{l_{1}}
\ldr{\varrho(\cdot,t)}^{-\weight_{a}-k\weight_{b}}|\bD^{k}\xi(\bx,t)|_{\rohy}. \label{eq:Clrohydef}
\end{align}
Here, again, $\bD$ is the Levi-Civita connection associated with $\bge_\refer$. Note also that there is one $\bge_{lm}$ factor appearing in (\ref{eq:bDkchirohydef}),
whereas the remaining metric factors are of the form $\bge_{\refer}^{ij}$. The reason for this is that we wish to measure the length of $\chi$ with respect to the
actual metric, but the length associated with the derivatives with respect to the background metric.

\subsection{Basic assumptions}

Given the above terminology, we are in a position to formulate the basic assumptions; see \cite[Definition~3.27]{RinWave}:
\begin{definition}\label{def:basicassumptions}
  Let $(M,g)$ be a spacetime. Assume it to have an expanding partial pointed foliation, $\mK$ to be non-degenerate on $I$, $\mK$ to have a global
  frame and $\chK$ to have a silent upper bound on $I$; see Definition~\ref{def:silenceandnondegeneracy}. Assume, moreover, $\hml_{U}\mK$ to satisfy a
  weak off-diagonal exponential bound; see Definition~\ref{def:offdiagonalexpdec}. Next, let $\weight_{0}=(0,\cweight)\in\Weight$ and assume
  that there are constants $K_{\cweight}$ and $\bDlnhNsup$ such that
  \begin{equation}\label{eq:mKbDlnNchicombest}
    \|\mK(\cdot,t)\|_{C^{1}_{\weight_{0}}(\bM)}\leq K_{\cweight},\ \ \
    \|(\bD \ln \hN)(\cdot,t)\|_{C^{0}(\bM)}\leq \bDlnhNsup,\ \ \
    \|\chi(\cdot,t)\|_{C^{0}_{\rohy}(\bM)}\leq \tfrac{1}{2}
  \end{equation}
  for all $t\in I_{-}$, where
  \begin{equation}\label{eq:Iminusdef}
    I_{-}:=\{t\in I:t\leq t_{0}\}.
  \end{equation}
  Then the \textit{basic assumptions} are said to be fulfilled. The associated constants are denoted by
  \[
  c_{\robas}:=(n,\e_{\Spe},\e_{\mK},\e_{\rond},C_{\mK,\mrod},M_{\mK,\mrod},\cweight,K_{\cweight},\bDlnhNsup).
  \]
\end{definition}
Note, in particular, that if the basic assumptions; see Definition~\ref{def:basicassumptions}; are satisfied, there is a constant $\mKsup$ such that
\begin{equation}\label{eq:mKsupbasest}
  \|\mK(\cdot,t)\|_{C^{0}(\bM)}\leq \mKsup
\end{equation}
holds for all $t\in I_{-}$.

In practice, we also need to impose a smallness assumption on $\chi$. The particular form of this assumption is that the conditions of
Lemma~\ref{lemma:smallnessshiftconsequences} are satisfied. This leads to the following terminology.
\begin{definition}\label{def:standardassumptions}
  Assume that the conditions of Definition~\ref{def:basicassumptions} and of Lemma~\ref{lemma:smallnessshiftconsequences} are satisfied. Then the
  \textit{standard assumptions} are said to be satisfied. 
\end{definition}
\begin{remark}
  Due to these assumptions, it is meaningful to introduce a new time coordinate
  \begin{equation}\label{eq:taudefinition}
    \tau(t):=\varrho(\bx_{0},t),
  \end{equation}
  where $\bx_{0}\in\bM$ is a fixed reference point. Some of the basic properties of this time coordinate are given by
  Lemmas~\ref{lemma:smallnessshiftconsequences} and \ref{lemma:epsilonlowdefEi}. In particular, there is a constant $c_0>1$ such that
  \begin{equation}\label{eq:ldrrho ldr tau equiv}
    c_0^{-1}\ldr{\varrho(\bx,t)}\leq \ldr{\tau(t)}\leq c_0\ldr{\varrho(\bx,t)}
  \end{equation}
  for all $\bx\in\bM$ and all $t\in I_-$, where $\varrho$ is the logarithmic volume densisty, our main measure of distance to the singularity.
  Note also that $\varrho(\bx,t_0)=0$ by the definition of $\bge_\refer$ and $\varrho$, so that, in particular, $\tau(t_0)=0$.
  Since the singularity corresponds to $\varrho\rightarrow-\infty$, we also have $\tau(t)\rightarrow-\infty$ in the direction of the singularity.
  Using $\tau$ instead of $\varrho$ is convenient in many contexts, for instance when appealing to Gagliardo-Nirenberg estimates for weighted
  spaces and when quantifying the growth/decay of a norm.
\end{remark}

\subsection{Higher order Sobolev assumptions}\label{subsection:higherordersobolevassumptions}

In the formulation of the Sobolev assumptions (see \cite[Definition~3.28]{RinWave}), it is convenient to use the following notation
\begin{equation}\label{eq:thetazdef}
  \theta_{0,-}:=\textstyle{\inf}_{\bx\in \bM}\theta(\bx,t_{0}),\ \ \
  \theta_{0,+}:=\textstyle{\sup}_{\bx\in \bM}\theta(\bx,t_{0}). 
\end{equation}

\begin{definition}\label{def:sobklassumptions}
  Given that the basic assumptions; see Definition~\ref{def:basicassumptions}; are satisfied, let $1\leq l\in\zo$, $\bfl_{0}:=(1,1)$,
  $\bfl:=(1,l)$ and $\bfl_{1}:=(1,l+1)$. Let $\cweight$ and $\weight_{0}$ be defined as in the statement of Definition~\ref{def:basicassumptions}.
  Let, moreover, $\weight:=(\cweight,\cweight)$. Then the $(\cweight,l)$-\textit{Sobolev assumptions} are said to be satisfied if there are constants
  $S_{\rorel,\chi,l}$, $S_{\mK,\theta,l}$, $C_{\rorel,\chi,1}$ and $C_{\mK,\theta,1}$ such that
  \begin{align*}
    \|\ln\hN\|_{H^{\bfl_{1}}_{\weight_{0}}(\bM)}+\|\hU(\ln\hN)\|_{H^{l}_{\weight}(\bM)}
    +\theta_{0,-}^{-1}\|\chi\|_{H^{l+2,\weight_{0}}_{\rohy}(\bM)}+\theta_{0,-}^{-1}\|\dotchi\|_{H^{l,\weight}_{\rohy}(\bM)}\leq & S_{\rorel,\chi,l},\\
    \|\mK\|_{H^{l+1}_{\weight_{0}}(\bM)}+\|\hml_{U}\mK\|_{H^{l+1}_{\weight}(\bM)}+\|\ln\theta\|_{H^{\bfl_{1}}_{\weight_{0}}(\bM)}+\|q\|_{H^{l}_{\weight_{0}}(\bM)} \leq & S_{\mK,\theta,l}
  \end{align*}
  for all $t\in I_{-}$, where $I_{-}$ is defined by (\ref{eq:Iminusdef}), and
  \begin{align*}
    \|\ln\hN\|_{C^{\bfl_{0}}_{\weight_{0}}(\bM)}+\|\hU(\ln\hN)\|_{C^{0}_{\weight}(\bM)}
    +\theta_{0,-}^{-1}\|\chi\|_{C^{2,\weight_{0}}_{\rohy}(\bM)}+\theta_{0,-}^{-1}\|\dotchi\|_{C^{1,\weight}_{\rohy}(\bM)}\leq & C_{\rorel,\chi,1},\\
    \|\mK\|_{C^{1}_{\weight_{0}}(\bM)}+\|\hml_{U}\mK\|_{C^{0}_{\weight}(\bM)} +\|\ln\theta\|_{C^{\bfl_{0}}_{\weight_{0}}(\bM)}+\|q\|_{C^{0}_{\weight_{0}}(\bM)}\leq & C_{\mK,\theta,1}
  \end{align*}
  for all $t\in I_{-}$. Given that the $(\cweight,l)$-Sobolev assumptions hold, let
  \[
  s_{\cweight,l}:=(c_{\robas},l,S_{\rorel,\chi,l},S_{\mK,\theta,l},C_{\rorel,\chi,1},C_{\mK,\theta,1}).
  \]
\end{definition}

\subsection{Higher order $C^{k}$-assumptions}\label{subsection:higherorderckassumptions}

Next, we introduce the $C^{k}$-terminology analogous to Definition~\ref{def:sobklassumptions}; see \cite[Definition~3.31]{RinWave}: 

\begin{definition}\label{def:supmfulassumptions}
  Given that the basic assumptions; see Definition~\ref{def:basicassumptions}; are satisfied, let $0\leq l\in\zo$ and $\bfl_{1}:=(1,l+1)$. Let $\cweight$
  and $\weight_{0}$ be defined as in the statement of Definition~\ref{def:basicassumptions}.
  Let, moreover, $\weight:=(\cweight,\cweight)$. Then the $(\cweight,l)$-\textit{supremum assumptions} are said to be satisfied if there are constants
  $C_{\rorel,\chi,l}$ and $C_{\mK,\theta,l}$ such that
  \begin{align*}
    \|\ln\hN\|_{C^{\bfl_{1}}_{\weight_{0}}(\bM)}+\|\hU(\ln\hN)\|_{C^{l}_{\weight}(\bM)}
    +\theta_{0,-}^{-1}\|\chi\|_{C^{l+2,\weight_{0}}_{\rohy}(\bM)}+\theta_{0,-}^{-1}\|\dotchi\|_{C^{l,\weight}_{\rohy}(\bM)}\leq & C_{\rorel,\chi,l},\\
    \|\mK\|_{C^{l+1}_{\weight_{0}}(\bM)}+\|\hml_{U}\mK\|_{C^{l+1}_{\weight}(\bM)}+\|\ln\theta\|_{C^{\bfl_{1}}_{\weight_{0}}(\bM)}+\|q\|_{C^{l}_{\weight_{0}}(\bM)} \leq & C_{\mK,\theta,l}
  \end{align*}
  for all $t\in I_{-}$. Given that the $(\cweight,l)$-supremum assumptions hold, let
  \[
  c_{\cweight,l}:=(c_{\robas},l,C_{\rorel,\chi,l},C_{\mK,\theta,l}).
  \]
\end{definition}

\subsection{Summary, discussion and examples}\label{ssection:summary discussion ex}

The basic assumptions; see Definition~\ref{def:basicassumptions}; can roughly speaking be summarised as saying that the spacetime is \textit{expanding}
(the mean curvature is strictly positive);
the geometry is \textit{anisotropic} ($\mK$ has a global frame); the causal structure is \textit{silent} (that silence, as introduced in
Definition~\ref{def:silenceandnondegeneracy}, has the consequences for the causal structure discussed in the introduction, is clarified in detail in
\cite[Subsection~15.1]{RinWave}); and the geometry is \textit{bounded} (in the sense that the various estimates listed in
Definition~\ref{def:basicassumptions} are satisfied). In order for the standard assumptions; see Definition~\ref{def:standardassumptions};
to hold, we additionally need to impose a smallness
condition on the shift vector field. Since the size of the shift vector field can in part be controlled by the gauge choice, ensuring
this condition is partly a matter of choice. The Sobolev and $C^{k}$-assumptions in Definitions~\ref{def:sobklassumptions} and
\ref{def:supmfulassumptions} simply consitute more refined boundedness assumptions.

In order to deduce results that only depend on assumptions concerning initial data (and are not conditional on assumptions concerning the spacetime),
it is expected to be necessary to fix the gauge. The choice of gauge has a significant impact on the lapse function and
the shift vector field; e.g., in a Gaussian foliation, $N=1$ and $\chi=0$, so that the conditions on $\chi$ are automatically
satisfied. Another gauge which has been used frequently in recent stability proofs is constant mean curvature (CMC) and vanishing shift. In this
setting, one expects $N$ to converge to $1$. By these choices, some of the conditions on $N$, $\theta$ and $\chi$ are automatically satisfied, and
some are expected to be satisfied asymptotically. Moreover, in the CMC setting, $N$ satisfies a linear elliptic equation, given the remaining
fields. This means that bootstrap assumptions concerning the remaining fields typically yield conclusions concerning $N$. To summarise: in the context
of a bootstrap argument, the conditions on $N$ and $\chi$ will depend strongly on the choice of gauge. For this reason we will not discuss the
improvement of bootstrap assumptions for these quantities.

In the context of a bootstrap argument, it would be natural to include the conditions listed in Definitions~\ref{def:standardassumptions},
\ref{def:sobklassumptions} and \ref{def:supmfulassumptions} in the bootstrap assumptions. However, in the general quiescent setting, we also impose
the bounds
\begin{equation}\label{eq:q n m o ell bds}
  |q-(n-1)|\leq K_{q}e^{\e_{\que}\varrho},\ \ \
  \ell_{1}-\ell_{n-1}-\ell_{n}+1\geq\e_{p}.
\end{equation}
That the solution satisfies these bounds means that it is in a quiescent regime. In order to justify this statement in the $3+1$-dimensional context,
recall that quiescence means
that the $\ell_{\pm}$ converge. Combining this observation with (\ref{seq:hUellpmintro}), it is clear that we want $\chga^{1}_{23}$ to be small,
preferably exponentially decaying, so that it is integrable in the direction of the singularity. Combining this observation with
(\ref{eq:mSbbthreeplusone}) and (\ref{eq:qminusnminusoneetc}), it is natural to make the assumption corresponding to the first estimate in
(\ref{eq:q n m o ell bds}). The second estimate in (\ref{eq:q n m o ell bds}) is then there to ensure the consistency of the decay of 
$\chga^{1}_{23}$. Turning to corresponding improvements, we refer the reader to Subsection~\ref{ssection:partialbootstraparg} below. Note also
that even though we do not make any bootstrap assumptions concerning the scalar field, we obtain detailed conclusions. Generically, you would
expect $\chga^1_{23}$ to be non-vanishing. However, if you impose the condition that it vanish, then quiescence occurs without imposing
(\ref{eq:q n m o ell bds}).

In this article, we do not make a specific choice of gauge, since we expect the discussion to be
relevant for large classes of gauges. In \cite[Chapter~C]{RinWave} we justify this expectation by verifying
that the assumptions we make here are satisfied for solutions with, e.g.,
CMC foliations, Gaussian foliations and foliations which are neither CMC nor Gaussian (the areal foliation in the $\tn{3}$-Gowdy symmetric setting). 
Finally, it is of interest to address the pieces that are missing in order to complete the picture. First, it is necessary to
make a gauge choice. Second, it is likely to be necessary to make separate bootstrap assumptions concerning
a low number of derivatives and concerning a high number of derivatives. Concerning the low number of derivatives, it is likely necessary to make
detailed assumptions, similar to the ones made in the present article. However, the assumptions concerning the high number of derivatives can be
expected to be much rougher. The idea would then be to obtain conclusions concerning lower derivatives from the higher derivatives by successively
integrating the equations for a high number of derivatives. This is a line of reasoning that appears in many contexts. One example is provided by
the proof of \cite[Theorem~16.1, p.~174]{RinWave}.

\section{Results}\label{section:resultsformal}

We already formulated some of the conclusions of the present article in Subsection~\ref{ssection:results}. However, we here provide more precise statements
of the results in the quiescent setting. Moreover, we discuss the extent to which the assumptions, combined with Einstein's equations, imply improvements
of the assumptions. Finally, we demonstrate that many previous results can be interpreted in terms of our framework. We begin with a discussion of the
generic quiescent setting.

\subsection{Generic quiescent setting}\label{ssection:QuiescentResults}

Quiescent behaviour can occur in several different settings. It can be due to the spacetime dimension being high enough, it can be due to the presence
of a particular matter model, and it can be due to certain symmetry assumptions being fulfilled. In the present subsection, we consider quiescent
solutions to Einstein's vacuum equations in higher dimensions and to the Einstein-scalar field equations. The relevant statement in the case of
Einstein's vacuum equations is the following.

\begin{thm}\label{thm:SobestimatesQuiescentVacuum}
  Let $0\leq\cweight\in\ro$, $1\leq l\in\zo$, $n\geq 10$ and assume that the standard assumptions; see Definition~\ref{def:standardassumptions}; the
  $(\cweight,l)$-Sobolev assumptions; and the $(\cweight,1)$-supremum assumptions are fulfilled. In particular, the spacetime is $n+1$-dimensional.
  Assume, moreover, that $\tau(t)\rightarrow -\infty$ as $t\rightarrow t_{-}$ and that there are constants $K_{q}$, $0<\e_{\que}<1$ and $0<\e_{p}<1$ such
  that
  \begin{equation}\label{eq:qexpconvergencequiescentregimeintro}
    |q-(n-1)|\leq K_{q}e^{\e_{\que}\varrho},\ \ \
    \ell_{1}-\ell_{n-1}-\ell_{n}+1\geq\e_{p}
  \end{equation}
  hold on $M_{-}:=\bM\times I_{-}$. Assume, finally, that Einstein's vacuum equations with a cosmological constant $\Lambda$ are satisfied; i.e.,
  (\ref{eq:EE}) holds with $T=0$. Then there is a constant $K_{\mK,l}$ such that
  \begin{align}
    \|\hml_{U}\mK(\cdot,\tau)\|_{H^{l-1}_{\weight_{1}}(\bM)}+\|[q-(n-1)](\cdot,\tau)\|_{H^{l-1}_{\weight_{1}}(\bM)}
    \leq & K_{\mK,l}\ldr{\tau}^{(l+1)(\cweight+1)}e^{2\vare_{R}\tau}\label{eq:SobestofhmlUmKqugeointro}
  \end{align}
  for all $\tau\leq 0$, where $\weight_{1}:=(2\cweight,\cweight)$, $K_{\mK,l}=K_{\mK,l,0}\theta_{0,-}^{-2}$ and $K_{\mK,l,0}$ only depends on $s_{\cweight,l}$,
  $c_{\cweight,1}$, $\e_{\que}$, $K_{q}$, $\Lambda$ and $(\bM,\bge_{\refer})$. Here $\e_{R}:=\min\{\e_{p},\e_{\Spe}\}$ and $\vare_{R}:=\e_{R}/(3K_{\rovar})$, where
  $K_{\rovar}$ is the constant appearing in Lemma~\ref{lemma:smallnessshiftconsequences}.

  Finally, there is a continuous $(1,1)$-tensor field $\mK_{\infty}$ and continuous functions $\ell_{A,\infty}$ on
  $\bM$, which also belong to $H^{l-1}(\bM)$, and constants $K_{\infty,i,l}$, $i=0,1$, such that
  \begin{equation}\label{eq:mKconvtomKinfHlintro}
    \begin{split}
      & \|\mK(\cdot,t)-\mK_{\infty}\|_{H^{l-1}(\bM)}+\textstyle{\sum}_{A}\|\ell_{A}(\cdot,t)-\ell_{A,\infty}\|_{H^{l-1}(\bM)}\\
      \leq & K_{\infty,0,l}\theta_{0,-}^{-2}\ldr{\tau}^{(l+1)(2\cweight+1)}e^{2\vare_{R}\tau}+K_{\infty,1,l}\ldr{\tau}^{l\cweight}e^{\vare_{\Spe}\tau}
    \end{split}
  \end{equation}
  for all $\tau\leq 0$, where $K_{\infty,0,l}$ only depends on $s_{\cweight,l}$, $c_{\cweight,1}$, $\e_{p}$, $\e_{\que}$, $K_{q}$, $\Lambda$ and
  $(\bM,\bge_{\refer})$; and $K_{\infty,1,l}$ only depends on $s_{\cweight,l}$, $c_{\cweight,1}$ and $(\bM,\bge_{\refer})$. Here $\vare_{\Spe}:=\e_{\Spe}/(3K_{\rovar})$.
\end{thm}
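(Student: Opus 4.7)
The plan is to establish the theorem in two stages: first to derive the improved exponential decay (\ref{eq:SobestofhmlUmKqugeointro}) of $\hml_{U}\mK$ and $q-(n-1)$, and then to integrate $\hml_{U}\mK$ along integral curves of $\hU$ to obtain the convergence (\ref{eq:mKconvtomKinfHlintro}). The pointwise mechanism for the decay is already sketched in Subsection~\ref{ssection:results}: the hypothesis $\ell_{1}-\ell_{n-1}-\ell_{n}+1\geq \e_{p}$, inserted into the identity in Remark~\ref{remark:muAalonggamma} together with the assumed exponential bound $|q-(n-1)|\leq K_{q}e^{\e_{\que}\varrho}$, forces $\mu_{1}-\mu_{n-1}-\mu_{n}$ to tend to $-\infty$ at least linearly in $\tau$ with slope at least $\e_{p}$. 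Since the frame norms are ordered compatibly with the ordering of the eigenvalues, this extremal combination dominates every other ``bad'' exponential $e^{\mu_{B}-\mu_{A}-\mu_{C}}$ (with $B<\min\{A,C\}$) entering (\ref{eq:mSbbdef}). Consequently $\mSb_{b}$ decays exponentially at effective rate $\e_{R}=\min\{\e_{p},\e_{\Spe}\}$; the further factor $1/(3K_{\rovar})$ in $\vare_{R}$ reflects the need to translate between $\tau$ at the reference point and $\varrho$ at generic points of the leaves, cf. \cite[(7.73)]{RinWave}.

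With $\mSb_{b}$ exponentially small, the estimate (\ref{eq:renormalisedbSestintro}) (and its higher-derivative analogue) gives exponential smallness of $\theta^{-2}\bS$ in weighted $C^{k}$ and Sobolev norms; the vacuum Hamiltonian constraint $1=2\Omega_{\Lambda}+\tr\mK^{2}-\theta^{-2}\bS$ then pins $\tr\mK^{2}$ near $1$ with the same rate, and (\ref{eq:qminusnminusoneetc}) (specialised to vacuum) upgrades the bound on $q-(n-1)$. Einstein's vacuum equations provide explicit evolution equations for $\hml_{U}\mK$ in terms of $\mK^{2}$, the expansion-normalised spatial Ricci tensor, derivatives of $\ln\hN$, and shift contributions; inserting the exponential decay of the spatial curvature contributions, together with the standard bounds on $\chi$, $\dotchi$, $\ln\hN$ and $\hU(\ln\hN)$, yields the pointwise exponential decay of $\hml_{U}\mK$ at rate $2\vare_{R}$. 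To upgrade to $H^{l-1}_{\weight_{1}}$, I would differentiate these expressions $l-1$ times and apply weighted Moser product estimates, absorbing top-order derivatives of $\mK$, $\hml_{U}\mK$, $\ln\hN$, $\hU(\ln\hN)$, $\chi$, $\dotchi$, $\ln\theta$ and $q$ by the $(\cweight,l)$-Sobolev assumption, and controlling the nonlinear factors (frame components $e^{\pm\mu_{A}}$, connection coefficients, polynomial expressions in $\mK$) by the $(\cweight,1)$-supremum assumption. The polynomial prefactor $\ldr{\tau}^{(l+1)(\cweight+1)}$ is accounted for by the weight $\ldr{\varrho}^{-2\weight_{a}-2k\weight_{b}}$ in $H^{l-1}_{\weight_{1}}$ and by derivatives of this weight arising when $\bD$ is commuted through the defining expressions.

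For the convergence (\ref{eq:mKconvtomKinfHlintro}) I integrate $\ml_{U}\mK=\theta\,\hml_{U}\mK$ along integral curves of $\hU$, pulled back to $\bM$. The contribution from the sharp bound (\ref{eq:SobestofhmlUmKqugeointro}) yields the first summand on the right of (\ref{eq:mKconvtomKinfHlintro}), the factor $\theta_{0,-}^{-2}$ arising from the two appearances of $\theta$ (one from $\ml_{U}=\theta\hml_{U}$ and one from converting $dt$ to a decay factor via silence); the contribution from the crude a priori $(\cweight,l)$-Sobolev bound on $\hml_{U}\mK$, integrated against the silence-driven decay, produces the slower second summand at rate $\vare_{\Spe}$. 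Completeness of $H^{l-1}(\bM)$ furnishes $\mK_{\infty}$, and the $\ell_{A,\infty}$ are then obtained by spectral perturbation, whose $H^{l-1}$-continuity is justified by the non-degeneracy $|\ell_{A}-\ell_{B}|\geq\e_{\rond}$. The main obstacle, and the source of the loss of derivatives (the conclusion is at level $l-1$ from an assumption at level $l$), is the transition between the frame/ODE analysis along $\hU$-curves, where the exponentials $e^{\mu_{B}-\mu_{A}-\mu_{C}}$ are most naturally controlled, and the integral Sobolev norms on the leaves $\bM_{t}$: normalising eigenvectors and handling reciprocals $e^{-\mu_{A}}$ in a Sobolev framework costs one derivative through product/Moser estimates, and a subsidiary issue is feeding the improved bound for $q-(n-1)$ back into the evolution so that the initially assumed rate $\e_{\que}$ can be replaced by the sharper $2\vare_{R}$.
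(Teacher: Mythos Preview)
Your overall architecture matches the paper's: the decay estimate (\ref{eq:SobestofhmlUmKqugeointro}) is indeed obtained by first showing that $\mu_{1}-\mu_{n-1}-\mu_{n}\to-\infty$ linearly (Lemma~\ref{lemma:muAmmuBmmuCrelquiescent}), then deducing exponential decay of the normalised spatial Ricci tensor and of the lapse contribution $\bmN$ in $H^{l-1}_{\weight_{1}}$ (Lemmas~\ref{lemma:SobRicciestimates} and \ref{lemma:SobbmNestimates}), and finally reading off decay of $\hml_{U}\mK$ and $q-(n-1)$ from the vacuum evolution equation (\ref{eq:hmlUvacuumsetting}) and from (\ref{eq:qmainformula}); this is exactly the content of Theorem~\ref{thm:SobhmlUmKestimates}. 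Your account of this part is correct.

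Your treatment of the convergence step, however, misidentifies the origin of the second summand $K_{\infty,1,l}\ldr{\tau}^{l\cweight}e^{\vare_{\Spe}\tau}$ in (\ref{eq:mKconvtomKinfHlintro}). You propose to integrate $\ml_{U}\mK$ along integral curves of $\hU$ and attribute the slower $\vare_{\Spe}$ term to ``the crude a priori $(\cweight,l)$-Sobolev bound on $\hml_{U}\mK$ integrated against the silence-driven decay''. That is not what produces it. Integral curves of $\hU$ do not, in general, project to fixed points of $\bM$, so integrating along them does not directly yield a limit tensor field on $\bM$. The paper instead differentiates at fixed spatial points with respect to $\tau$, obtaining (cf.\ (\ref{eq:dtaumKijform}))
\[
\d_{\tau}\mK^{i}_{\phantom{i}j}=\xi(\hml_{U}\mK)^{i}_{\phantom{i}j}+\xi\hN^{-1}(\ml_{\chi}\mK)^{i}_{\phantom{i}j},\qquad \xi:=(\d_{t}\tau)^{-1}\hN.
\]
The first term is controlled by the sharp estimate (\ref{eq:SobestofhmlUmKqugeointro}) and integrates to the $2\vare_{R}$ contribution, whose $\theta_{0,-}^{-2}$ factor is inherited from the curvature and lapse estimates. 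The second term is a \emph{shift correction}, estimated via the assumed smallness of $\chi$ in $H^{l,\weight_{0}}_{\rohy}$ together with the Sobolev bound on $\mK$; this is what yields the $\vare_{\Spe}$ term, and is why $K_{\infty,1,l}$ depends only on $s_{\cweight,l}$, $c_{\cweight,1}$ and $(\bM,\bge_{\refer})$ and not on $K_{q}$, $\e_{\que}$, $\e_{p}$ or $\Lambda$. The convergence of the $\ell_{A}$ is handled the same way via (\ref{eq:dtauellA}), directly, rather than through spectral perturbation of $\mK_{\infty}$ (though the latter would also work given non-degeneracy).
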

\begin{remark}
  Note that if $\e_{\que}<2\vare_{R}$ and $l>n/2+1$, then (\ref{eq:SobestofhmlUmKqugeointro}) represents an improvement of the first estimate in
  (\ref{eq:qexpconvergencequiescentregimeintro}). 
\end{remark}
\begin{remark}\label{remark:pointwise vacuum Kasner}
  Since $\Omega=0$ and $\mSb_{b}$ converges to zero in this setting (see the proof of Theorem~\ref{thm:SobhmlUmKestimates} below), we conclude, by
  appealing to (\ref{eq:OmegaplussumellAsqbdpnegscintro}), that
  \begin{equation}\label{eq:ellAinf cond}
  \ell_{1,\infty}<\dots<\ell_{n,\infty},\ \ \
  \textstyle{\sum}_{A}\ell_{A,\infty}=1,\ \ \
  \textstyle{\sum}_{A}\ell_{A,\infty}^{2}=1,\ \ \
  \ell_{1,\infty}-\ell_{n-1,\infty}-\ell_{n,\infty}+1\geq\e_{p}
  \end{equation}
  (in order to obtain the last estimate, we use the fact that the second estimate in (\ref{eq:qexpconvergencequiescentregimeintro}) is satisfied).
  Only for $n\geq 10$ is there a solution satisfying these relations with $\e_{p}>0$; see \cite{Henneauxetal} or Lemma~\ref{lemma:vacuumquiescent} below.
  If, on the other hand, $p_{A}$ are constants satisfying (\ref{eq:ellAinf cond}) with $l_{A,\infty}=p_A$,
  \begin{equation}\label{eq:gKasnergenn}
    g=-dt\otimes dt+\textstyle{\sum}_{A}t^{2p_{A}}dx^{A}\otimes dx^{A}
  \end{equation}
  is a solution on $\tn{n}\times (0,\infty)$ (or $\rn{n}\times (0,\infty)$) to Einstein's vacuum equation, satisfying $\ell_{A}=p_{A}$ (solutions with
  metrics of this form will in what follows be referred to as \textit{Kasner solutions}). It is reasonable to expect this solution to be stable in the
  direction of the singularity. In fact, this is demonstrated in \cite{GIJ}. Results demonstrating that one can specify data on the singularity in this
  setting have been obtained in \cite{daetal}.
\end{remark}
\begin{remark}\label{remark:AsInfoID}
  Assuming that $\chi=0$ and combining (\ref{eq:ellAinf cond}), \cite[Theorem~49, p.~16]{RinQC} and the assumptions and conclusions of the
  theorem, it follows that the solution induces initial data on the singularity; see \cite[Subsection~1.7]{RinQC} for details. There is a natural
  connection between this conclusion and the idea, contained in the BKL proposal, that the spatial derivatives should become negligible in the
  direction of the singularity; i.e., that actual solutions should be well approximated by solutions to the \textit{velocity term dominated system} obtained
  by dropping the spatial derivatives. In the context of the BKL proposal, this idea is somewhat vague. However, it is made precise in \cite{isanmo},
  a paper in which it is also demonstrated that polarized $\tn{3}$-Gowdy symmetric solutions asymptote to solutions to the velocity term dominated system.
  Moreover, in \cite{aarendall,daetal}, the authors use the ideas of \cite{isanmo} to give a precise definition of the velocity term dominated system in the
  context of Gaussian foliations; it is obtained by dropping most
  (\textit{but not all}) spatial derivatives. Moreover, given a real analytic solution to the velocity term dominated
  system, say $x$, there is a unique real analytic solution to the actual equations asymptoting to $x$; see \cite{aarendall,daetal}. On the other
  hand, due to the results in \cite[Subsection~1.5]{RinQC}, initial data on the singularity constitute information equivalent to a solution to the
  velocity term dominated system. In this sense, initial data on the singularity can be considered to be a generalised version (they are meaningful in a
  more general context than Gaussian foliations) of the notion of a solution to the velocity term dominated system. Convergence to initial data on the
  singularity can thus be thought of as a precise, general and geometric reformulation of the idea, contained in the BKL proposal, that spatial
  derivatives become negligible asymptotically. 
\end{remark}
\begin{proof}
  The statement is an immediate consequence of Theorems~\ref{thm:SobhmlUmKestimates} and \ref{thm:SobconvofmK} below, keeping in mind that
  (\ref{eq:DeltavarrhorelvariationEiintro}) holds. 
\end{proof}

The next result concerns the Einstein-scalar field setting. 

\begin{thm}\label{thm:SobEstimatesQuiescentMatter}
  Let $0\leq\cweight\in\ro$, $n\geq 3$, $\kappa_{1}$ be the smallest integer strictly larger than $n/2+1$, $\kappa_{1}\leq l\in\zo$ and assume the
  standard assumptions; see Definition~\ref{def:standardassumptions}; the $(\cweight,l)$-Sobolev assumptions; and the $(\cweight,\kappa_{1})$-supremum
  assumptions to be fulfilled. In particular, the spacetime is $n+1$-dimensional. Assume, moreover, that $\tau(t)\rightarrow -\infty$ as
  $t\rightarrow t_{-}$ and that there are constants $K_{q}$, $0<\e_{\que}<1$ and $0<\e_{p}<1$ such that 
  \begin{equation}\label{eq:qexpconvergencequiescentregimeintroMatter}
    |q-(n-1)|\leq K_{q}e^{\e_{\que}\varrho},\ \ \
    \ell_{1}-\ell_{n-1}-\ell_{n}+1\geq\e_{p}
  \end{equation}
  on $M_{-}$. Assume, finally, that $(M,g,\phi)$ satisfy the Einstein-scalar field equations with a cosmological constant $\Lambda$; i.e.
  (\ref{eq:EE}) and $\Box_g\phi=0$ hold, where $T$ is given by (\ref{eq:setnlsf}). Then there is a continuous
  $(1,1)$-tensor field $\mK_{\infty}$ and continuous functions $\Psi_{\infty}$ and $\ell_{A,\infty}$ on $\bM$, which also belong to $H^{l-1}(\bM)$, and
  constants $a_{l,n}$, $b_{l,n}$, $\sfK_{\mK,l}$, $\sfK_{\infty,l}$ and $\sfK_{\phi,1,l}$ such that
  \begin{align}
    \|[q-(n-1)](\cdot,\tau)\|_{H^{l-1}_{\weight_{1}}(\bM)}+\|\hml_{U}\mK(\cdot,\tau)\|_{H^{l-1}_{\weight_{1}}(\bM)}
    \leq & \sfK_{\mK,l}\ldr{\tau}^{a_{l,n}\cweight+b_{l,n}}e^{2\vare_{R}\tau},\label{eq:SobestofhmlUmKqugeomatterintro}\\    
    \|\mK(\cdot,\tau)-\mK_{\infty}\|_{H^{l-1}(\bM)}+\textstyle{\sum}_{A}\|\ell_{A}(\cdot,\tau)-\ell_{A,\infty}\|_{H^{l-1}(\bM)}
    \leq & \sfK_{\infty,l}\ldr{\tau}^{a_{l,n}\cweight+b_{l,n}}e^{\vare_{\phi}\tau},\label{eq:mKconvtomKinfHlmatterintro}\\
    \|[\hU(\phi)](\cdot,\tau)-\Psi_{\infty}\|_{H^{l-1}(\bM)}\leq & \sfK_{\phi,1,l}\ldr{\tau}^{a_{l,n}\cweight+b_{l,n}}e^{\vare_{\phi}\tau}\label{eq:asofhUphiintro}
  \end{align}
  for all $\tau\leq 0$, where $\weight_{1}:=(2\cweight,\cweight)$ and $\sfK_{\mK,l}$, $\sfK_{\infty,l}$ and $\sfK_{\phi,1,l}$ only depend on $s_{\cweight,l}$,
  $c_{\cweight,\kappa_{1}}$, $\e_{p}$, $\e_{\que}$, $K_{q}$, $\Lambda$, $\hGe_{l}(0)$, $(\bM,\bge_{\refer})$ and a lower bound on $\theta_{0,-}$. Moreover,
  $\vare_{\phi}:=\min\{2\vare_{R},\vare_{\Spe}\}$ and $\vare_{R}$ and $\vare_{\Spe}$ are defined as in the statement of
  Theorem~\ref{thm:SobestimatesQuiescentVacuum}. Finally, $a_{l,n}$ and $b_{l,n}$ only depend on $n$ and $l$. If, in addition to the above,
  $l\geq \kappa_{1}+1$, then there is a function $\Phi_{\infty}\in C^{0}(\bM)\cap H^{l-2}(\bM)$ and constants $a_{l,n}$, $b_{l,n}$ and $\sfK_{\phi,0,l}$ such that
  \begin{equation}\label{eq:asofhphiintro}
    \|\phi(\cdot,\tau)-\Psi_{\infty}\varrho(\cdot,\tau)-\Phi_{\infty}\|_{H^{l-2}(\bM)}\leq \sfK_{\phi,0,l}\ldr{\tau}^{a_{l,n}\cweight+b_{l,n}}e^{\vare_{\phi}\tau}
  \end{equation}  
  for all $\tau\leq 0$, where $a_{l,n}$, $b_{l,n}$ and $\sfK_{\phi,0,l}$ have the same dependence as $a_{l,n}$, $b_{l,n}$ and $\sfK_{\phi,1,l}$ appearing in
  (\ref{eq:asofhUphiintro}).
\end{thm}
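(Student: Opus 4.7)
The plan is to follow the vacuum proof of Theorem~\ref{thm:SobestimatesQuiescentVacuum} and then add the new input coming from the Klein--Gordon equation. For a scalar field with a sufficiently regular potential $V$, the stress--energy tensor satisfies $\rho-\bp=-2V(\phi)+(\text{tangential gradient terms})$, so that both $\theta^{-2}(\rho-\bp)$ and the traceless spatial components $\mcP^{A}_{\phantom{A}A}$ decay exponentially in $\varrho$: $\theta\to\infty$ in the direction of the singularity, and $\phi$ will be shown to grow at most linearly in $\varrho$, so that $V'(\phi)$ is polynomially bounded in $\varrho$. This places us in the matter regime in which (\ref{eq:qminusnminusoneetc}) and its refinements apply.

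First I would improve the decay of $\mSb_{b}$. Combining the ordering $\ell_{1}<\cdots<\ell_{n}$ with the hypothesis $\ell_{1}-\ell_{n-1}-\ell_{n}+1\geq\e_{p}$, the growth formula for $\mu_{1}-\mu_{n-1}-\mu_{n}$ along integral curves of $\hU$ recorded in Remark~\ref{remark:muAalonggamma}, and the first estimate in (\ref{eq:qexpconvergencequiescentregimeintroMatter}), one obtains exponential decay of $e^{2\mu_{1}-2\mu_{n-1}-2\mu_{n}}$, hence of $\mSb_{b}$. Inserting this into (\ref{eq:renormalisedbSestintro}) and (\ref{eq:qminusnminusoneetc}), and using the exponential decay of $\theta^{-2}(\rho-\bp)$ and $\mcP^{A}_{\phantom{A}A}$, yields pointwise exponential decay of $q-(n-1)$ and of the relevant components of $\hml_{U}\mK$. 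Propagating the bounds to weighted $H^{l-1}$-norms proceeds by commuting $\bD^{j}$ through the evolution equations for $q$ and $\mK$ and running a Gr\"onwall-type estimate in the $\tau$-variable: each commutator produces lower-order terms tamed by the $(\cweight,l)$-Sobolev and $(\cweight,\kappa_{1})$-supremum hypotheses, with the condition $\kappa_{1}>n/2+1$ supplying the Moser-type product estimates needed for the nonlinear structures. This produces (\ref{eq:SobestofhmlUmKqugeomatterintro}); integration of $\hml_{U}\mK$ along integral curves of $\hU$ from $\tau=0$ towards $\tau=-\infty$ then delivers (\ref{eq:mKconvtomKinfHlmatterintro}).

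For the scalar-field asymptotics, set $\psi:=\hU(\phi)$ and rewrite $\Box_{g}\phi=V'(\phi)$ in expansion-normalised ADM variables. Using $\hU(n\ln\theta)=-1-q$ and $U(\phi)=\theta\psi$, this yields an equation of the schematic form
\begin{equation*}
\hU(\psi)=\tfrac{1}{n}(q-(n-1))\psi+\theta^{-2}\Delta_{\bge}\phi-\hU(\ln\hN)\,\psi+\hN^{-1}\chi(\psi)-\theta^{-2}V'(\phi).
\end{equation*}
By the improved bound on $q$ from the previous paragraph, the lapse/shift smallness in the standard assumptions, and the pointwise bounds on $\theta^{-2}V'(\phi)$ and $\theta^{-2}\Delta_{\bge}\phi$ (valid once one knows $\phi$ grows at most polynomially in $\varrho$, which can be verified by a preliminary low-regularity bootstrap on $\psi$), the right-hand side is bounded in $H^{l-1}$ by $K\ldr{\tau}^{a_{l,n}\cweight+b_{l,n}}e^{\vare_{\phi}\tau}$. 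Integrating from $\tau=0$ inward produces $\Psi_{\infty}\in C^{0}(\bM)\cap H^{l-1}(\bM)$ satisfying (\ref{eq:asofhUphiintro}). For (\ref{eq:asofhphiintro}), I would write $\hU(\phi-\Psi_{\infty}\varrho)=\psi-\Psi_{\infty}\hU(\varrho)$ and note that $\hU(\varrho)$ differs from $1$ by a quantity controlled by $q-(n-1)$, the lapse relative derivative and $\chi$; one spatial derivative is lost in $\Delta_{\bge}\phi$ when passing to $H^{l-2}$, explaining the hypothesis $l\geq\kappa_{1}+1$.

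The principal obstacle is the careful derivative-loss accounting in the middle paragraph: improving the pointwise bound on $q-(n-1)$ to an $H^{l-1}$ bound costs one derivative, which is why the Sobolev hypotheses must be imposed at order $l$ while the supremum hypotheses are needed only at the fixed order $\kappa_{1}$ independent of $l$. A secondary subtlety is a priori control of the growth of $\phi$ used at the start of the scalar-field step; this is handled by first running the argument for $\psi$ at the lowest regularity level where the basic assumptions alone suffice, and then bootstrapping up.
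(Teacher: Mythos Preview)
Your overall strategy is correct and matches the paper's route: control the scalar-field matter contributions, reduce to the vacuum argument for $q$, $\hml_{U}\mK$ and $\mK$, then extract asymptotics for $\hU(\phi)$ and $\phi$ from the wave equation. The paper's proof invokes Theorems~\ref{thm:SobconvofmKdecofhmlUmKmatter} and \ref{thm:Sobasymptoticsscalarfieldmatter}, whose proofs follow exactly this pattern. Two points deserve correction, however.

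First, the Einstein--scalar field system here carries \emph{no potential}: the stress--energy tensor is (\ref{eq:setnlsf}) and the field equation is $\Box_g\phi=0$. By Remark~\ref{remark:formulaerhominusbpmcP}, $\theta^{-2}(\rho-\bp)=\tfrac{n-1}{n}\sum_{A}e^{-2\mu_A}|X_A(\phi)|^2$ and $\theta^{-2}\mcP^{A}_{\phantom{A}B}$ consist \emph{only} of tangential gradient terms. Your discussion of $V$ and $V'$ is therefore extraneous (and, incidentally, carries a sign: with a potential one gets $+2V$ in $\rho-\bp$, not $-2V$).

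Second, and more importantly: exponential decay of these tangential terms requires knowing that $|E_i(\phi)|$ grows at most polynomially in $\tau$, and in the paper's logic this is obtained \emph{first}, before any improvement of the geometry, via Proposition~\ref{prop:asvelocityKGlikeeq} (which is \cite[Proposition~14.24]{RinWave}). That proposition supplies the polynomial energy bounds (\ref{eq:TheEnergyEstimateKG})--(\ref{eq:meksupestimateKG}) using only the standing assumptions and the mild bound (\ref{eq:qminusnminusoneest}), itself implied by the $q$-decay hypothesis in (\ref{eq:qexpconvergencequiescentregimeintroMatter}). Your ``preliminary low-regularity bootstrap on $\psi$'' gestures at this step but understates it: this is a substantial, self-contained energy argument for the wave equation on a silent background, and it is precisely what severs the apparent circularity between matter decay and geometric improvement. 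Once this input is in place, the remainder of your outline---vacuum-style control of $\bmR$, $\bmN$, $q-(n-1)$, $\hml_U\mK$; integration to $\mK_\infty$, $\ell_{A,\infty}$; rewriting $\Box_g\phi=0$ as (\ref{eq:waveequforphi}), showing $\hU^2(\phi)$ decays in $H^{l-1}$, and integrating for $\Psi_\infty$ and then $\Phi_\infty$---agrees with the paper.
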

\begin{remark}\label{remark:hGelzdefintro}
  The expression $\hGe_{l}(0)$ denotes the energy of $\phi$ at $\tau=0$; see (\ref{eq:mekdef}) and (\ref{eq:hGeldef}) below for a precise definition.
\end{remark}
\begin{remark}\label{remark:pointwise non-vacuum Kasner}
  Due to Remark~\ref{remark:asHamConSF} below,
  \begin{equation}\label{eq:asHamConSFintro}
    \Psi_{\infty}^{2}+\textstyle{\sum}_{A}\ell_{A,\infty}^{2}=1.
  \end{equation}
  Note, moreover, that if $p_{A}$, $A=1,\dots,n$, is an ordered set of constants such that $p_{1}-p_{n-1}-p_{n}+1>0$, $p_{1}+\dots+p_{n}=1$ and
  $p_{1}^{2}+\dots+p_{n}^{2}<1$, then
  \begin{equation}\label{eq:Kasnermetricandphithreed}
    g=-dt\otimes dt+\textstyle{\sum}_{A=1}^{n}t^{2p_{A}}dx^{i}\otimes dx^{i},\ \ \
    \phi=p_{\phi}\ln t+\phi_{0}
  \end{equation}  
  is a solution to the Einstein-scalar field equations with $\ell_{A}=p_{A}$, where $\phi_{0}$ is any constant and
  \[
  p_{\phi}=\left(1-\textstyle{\sum}_{A}p_{A}^{2}\right)^{1/2}.
  \]
  That this solution is stable in the direction of the singularity is demonstrated in \cite{GIJ}. It is also possible to specify data on the
  singularity in the setting discussed in the theorem; see \cite{aarendall,daetal}. 
\end{remark}
\begin{remark}
  Remark~\ref{remark:AsInfoID} is equally relevant in the present setting, keeping (\ref{eq:asHamConSFintro}) and
  (\ref{eq:hUsqphiHlminusonestimate}) in mind.
\end{remark}
\begin{proof}
  The statement is an immediate consequence of Theorems~\ref{thm:SobconvofmKdecofhmlUmKmatter} and \ref{thm:Sobasymptoticsscalarfieldmatter},
  keeping in mind that (\ref{eq:DeltavarrhorelvariationEiintro}) holds. 
\end{proof}

\subsection{Non-generic quiescent setting}\label{ssection:NGQuiescentResults}

In the previous subsection, we record results that are not dependent on any symmetry assumptions. However, in order to obtain them, we have
to impose conditions such as $\ell_{1}-\ell_{n}-\ell_{n-1}+1>0$ (and the dimension and matter model have to be consistent with such a condition being
satisfied). Considering the arguments in greater detail, it is clear that an important obstruction to quiescence occurring more generally is the
non-vanishing of $\g^{A}_{BC}$ for $A<\min\{B,C\}$. On the other hand, in some situations, these structure coefficients could vanish due to symmetry
assumptions. Alternately, they could converge to zero (corresponding to a, presumably positive co-dimension, submanifold of the state space). This
leads to quiescence in a setting which can be expected to be non-generic. In particular, we have the following results.

\begin{thm}\label{thm:SobestimatesQuiescentVacuumNG}
  Let $0\leq\cweight\in\ro$, $1\leq l\in\zo$, $n\geq 3$ and assume that the standard assumptions; see Definition~\ref{def:standardassumptions}; the
  $(\cweight,l)$-Sobolev assumptions; and the $(\cweight,1)$-supremum assumptions are fulfilled. In particular, the spacetime is $n+1$-dimensional.
  Assume, moreover, that $\g^{A}_{BC}=0$ for $A<\min\{B,C\}$. Assume, finally, that Einstein's vacuum equations with a cosmological constant $\Lambda$
  are satisfied; i.e., (\ref{eq:EE}) holds with $T=0$; and that $\tau(t)\rightarrow -\infty$ as $t\rightarrow t_{-}$. Then there is a constant $K_{\mK,l}$
  such that
  \begin{align}
    \|[q-(n-1)](\cdot,\tau)\|_{H^{l-1}_{\weight_{1}}(\bM)}+\|\hml_{U}\mK(\cdot,\tau)\|_{H^{l-1}_{\weight_{1}}(\bM)}
    \leq & K_{\mK,l}\ldr{\tau}^{(l+1)(\cweight+1)}e^{2\vare_{\Spe}\tau}\label{eq:SobestofhmlUmKqugeointroNG}
  \end{align}
  for all $\tau\leq 0$, where $\weight_{1}:=(2\cweight,\cweight)$, $K_{\mK,l}=K_{\mK,l,0}\theta_{0,-}^{-2}$ and $K_{\mK,l,0}$ only depends on $s_{\cweight,l}$,
  $c_{\cweight,1}$, $\Lambda$ and $(\bM,\bge_{\refer})$. Here $\vare_{\Spe}:=\e_{\Spe}/(3K_{\rovar})$, where $K_{\rovar}$ is introduced in the statement of
  Lemma~\ref{lemma:smallnessshiftconsequences}. 

  Finally, there is a continuous $(1,1)$-tensor field $\mK_{\infty}$ and continuous functions $\ell_{A,\infty}$ on
  $\bM$, which also belong to $H^{l-1}(\bM)$, and constants $K_{\infty,i,l}$, $i=0,1$, such that
  \begin{equation}\label{eq:mKconvtomKinfHlintroNG}
    \begin{split}
      & \|\mK(\cdot,\tau)-\mK_{\infty}\|_{H^{l-1}(\bM)}+\textstyle{\sum}_{A}\|\ell_{A}(\cdot,\tau)-\ell_{A,\infty}\|_{H^{l-1}(\bM)}\\
      \leq & K_{\infty,0,l}\theta_{0,-}^{-2}\ldr{\tau}^{(l+1)(2\cweight+1)}e^{2\vare_{\Spe}\tau}+K_{\infty,1,l}\ldr{\tau}^{l\cweight}e^{\vare_{\Spe}\tau}
    \end{split}
  \end{equation}
  for all $\tau\leq 0$, where $K_{\infty,0,l}$ only depends on $s_{\cweight,l}$, $c_{\cweight,1}$, $\Lambda$ and $(\bM,\bge_{\refer})$; and $K_{\infty,1,l}$ only
  depends on $s_{\cweight,l}$, $c_{\cweight,1}$ and $(\bM,\bge_{\refer})$.
\end{thm}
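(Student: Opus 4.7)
The strategy is to replay the proof of Theorem~\ref{thm:SobestimatesQuiescentVacuum}, substituting the hypothesis $\g^{A}_{BC}=0$ for $A<\min\{B,C\}$ for the quiescence hypothesis (\ref{eq:qexpconvergencequiescentregimeintro}). The key observation is that in the sum defining $\mSb_{b}$, cf. (\ref{eq:mSbbdef}), every summand contains a factor $\g^{B}_{AC}$ whose upper index $B$ is strictly smaller than both lower indices $A$ and $C$; the present hypothesis therefore forces $\mSb_{b}\equiv 0$. Inserting this into (\ref{eq:renormalisedbSestintro}) gives immediately
\[
|\theta^{-2}\bS|\leq C_{\bS}\ldr{\varrho}^{2(2\cweight+1)}e^{2\e_{\Spe}\varrho}
\]
on $M_{-}$. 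Moreover, in vacuum, $\rho=\bp=0$, so the auxiliary assumption $\theta^{-2}|\rho-\bp|\leq C_{\rho}e^{2\e_{\rho}\varrho}$ needed for (\ref{eq:qminusnminusoneetc}) is trivially satisfied with arbitrary $\e_{\rho}$. Together with $\mSb_{b}\equiv 0$, (\ref{eq:qminusnminusoneetc}) then yields the $C^{0}$-bound
\[
|q-(n-1)|\leq C_{a}\ldr{\varrho}^{2(2\cweight+1)}e^{2\e_{\Spe}\varrho},
\]
which is the pointwise version of the $q$-part of (\ref{eq:SobestofhmlUmKqugeointroNG}).

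Next, I would invoke the Sobolev estimate for $\hml_{U}\mK$ analogous to Theorem~\ref{thm:SobhmlUmKestimates}. In the proof of that theorem, the only place where the lower bound $\ell_{1}-\ell_{n-1}-\ell_{n}+1\geq\e_{p}$ enters is in controlling the exponential factors $e^{2\mu_{B}-2\mu_{A}-2\mu_{C}}$ that multiply $(\g^{B}_{AC})^{2}$ inside the spatial scalar- and Ricci-curvature contributions to the transport equation for $\hml_{U}\mK$. Under the present symmetry hypothesis, every such term carries a factor $\g^{B}_{AC}$ with $B<\min\{A,C\}$, hence is identically zero, and likewise for its spatial derivatives up to order $l$. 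Consequently, the Gronwall-type argument that in the generic setting produces the rate $e^{2\vare_{R}\tau}=e^{2\min\{\e_{p},\e_{\Spe}\}\tau/(3K_{\rovar})}$ now closes with rate $e^{2\vare_{\Spe}\tau}$, since every obstruction to using the silent upper bound $\e_{\Spe}$ by itself has been eliminated. This produces (\ref{eq:SobestofhmlUmKqugeointroNG}); the constant $K_{\mK,l}$ depends only on $s_{\cweight,l}$, $c_{\cweight,1}$, $\Lambda$ and $(\bM,\bge_{\refer})$, since neither $\e_{p}$ nor $\e_{\que}$ plays any role.

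Having established the exponential decay of $\hml_{U}\mK$ in $H^{l-1}_{\weight_{1}}(\bM)$, the convergence statement (\ref{eq:mKconvtomKinfHlintroNG}) follows by integrating along the integral curves of $\hU$ from $\tau$ to $0$, exactly as in the proof of Theorem~\ref{thm:SobconvofmK}. The splitting on the right-hand side of (\ref{eq:mKconvtomKinfHlintroNG}) reflects, respectively, the contribution from the improved Sobolev estimate on $\hml_{U}\mK$ (with rate $2\vare_{\Spe}$ and polynomial weight $\ldr{\tau}^{(l+1)(2\cweight+1)}$) and the contribution from the a priori $C^{0}$-control of $\hml_{U}\mK$ combined with the measure-change factor, which produces the lone $e^{\vare_{\Spe}\tau}$ rate. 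The main obstacle in the whole argument is the careful bookkeeping required to verify that \emph{every} occurrence in the evolution equations for $\mK$ and $\hml_{U}\mK$ of the dangerous combinations $e^{2\mu_{B}-2\mu_{A}-2\mu_{C}}\g^{B}_{AC}$, together with their derivatives up to order $l$, is eliminated by the symmetry hypothesis, so that no residual term with growth worse than $e^{2\e_{\Spe}\varrho}\ldr{\varrho}^{N}$ survives; once this is confirmed, the Sobolev and convergence estimates are routine adaptations of those proved in the generic quiescent theorems, and the same final step relating $\ell_{A}$ to the eigenvalues of $\mK_{\infty}$ goes through verbatim.
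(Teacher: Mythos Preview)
Your approach is correct and essentially coincides with the paper's: the paper proves this theorem by citing Remarks~\ref{remark:SobestofhmlUmKsymmetry} and~\ref{remark:SobconvofmKsymmetry}, which in turn say exactly that the proofs of Theorems~\ref{thm:SobhmlUmKestimates} and~\ref{thm:SobconvofmK} go through under the hypothesis $\g^{A}_{BC}=0$ for $A<\min\{B,C\}$, with $\vare_{R}$ replaced by $\vare_{\Spe}$ and with the constants no longer depending on $K_{q}$, $\e_{\que}$, $\e_{p}$. Your identification of the mechanism---that every potentially growing exponential $e^{2\mu_{B}-2\mu_{A}-2\mu_{C}}$ in the Ricci and lapse formulae (cf.\ (\ref{eq:mSIABdef})--(\ref{eq:mSVABdef}) and (\ref{eq:bnablaABnormNrII})) is accompanied by a factor $\g^{B}_{AC}$ with $B<\min\{A,C\}$---is exactly the content of Remarks~\ref{remark:Ckscalarcurvatureestimatesymmetry}, \ref{remark:Cklapseestimatessymmetry}, \ref{remark:Sobscalarcurvatureestimatesymmetry} and~\ref{remark:SobbmNestimatesymmetry}.

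Two small inaccuracies in your write-up, neither affecting correctness: first, the estimate for $\hml_{U}\mK$ in Theorem~\ref{thm:SobhmlUmKestimates} is not a Gronwall argument but a direct bound obtained by inserting the Ricci and lapse estimates into the algebraic identity (\ref{eq:hmlUvacuumsetting}). Second, the term with rate $e^{\vare_{\Spe}\tau}$ in (\ref{eq:mKconvtomKinfHlintroNG}) does not come from $C^{0}$-control of $\hml_{U}\mK$; inspecting the proof of Theorem~\ref{thm:SobconvofmK}, it arises from the shift contribution $\xi\hN^{-1}(\ml_{\chi}\mK)^{i}_{\phantom{i}j}$ in (\ref{eq:dtaumKijform}), which is bounded via the $(\cweight,l)$-Sobolev assumptions on $\chi$.
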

\begin{remark}\label{remark:eigenvaluesQuiescentVacuumNG}
  In this setting, $\Omega$ and $\mSb_{b}$ vanish. Combining these observations with (\ref{eq:OmegaplussumellAsqbdpnegscintro}) yields
  \[
  \ell_{1,\infty}<\dots<\ell_{n,\infty},\ \ \
  \textstyle{\sum}_{A}\ell_{A,\infty}=1,\ \ \
  \textstyle{\sum}_{A}\ell_{A,\infty}^{2}=1.
  \]
  On the other hand, fixing constants $p_{A}$ satisfying these relations yields a vacuum solution to Einstein's equations of the form
  (\ref{eq:gKasnergenn}). We expect this solution to be stable in symmetry classes such that $\g^{A}_{BC}=0$ for $A<\min\{B,C\}$.
\end{remark}
\begin{remark}
  In the case of $3+1$-dimensions, it is sufficient to assume that $\g^{1}_{23}=0$. More optimistically, one could hope that it would be sufficient
  to assume that $\g^{1}_{23}$ converges to zero. 
\end{remark}
\begin{remark}
  Remark~\ref{remark:AsInfoID} is equally relevant in the present setting.
\end{remark}
\begin{proof}
  The statement is an immediate consequence of Remarks~\ref{remark:SobestofhmlUmKsymmetry} and \ref{remark:SobconvofmKsymmetry},
  keeping in mind that (\ref{eq:DeltavarrhorelvariationEiintro}) holds. 
\end{proof}

The next result concerns the Einstein-scalar field equations.

\begin{thm}\label{thm:SobEstimatesQuiescentMatterNG}
  Let $0\leq\cweight\in\ro$, $n\geq 3$, $\kappa_{1}$ be the smallest integer strictly larger than $n/2+1$, $\kappa_{1}\leq l\in\zo$ and assume the
  standard assumptions; see Definition~\ref{def:standardassumptions}; the $(\cweight,l)$-Sobolev assumptions; and the $(\cweight,\kappa_{1})$-supremum
  assumptions to be fulfilled. In particular, the spacetime is $n+1$-dimensional. Assume, moreover, that $\g^{A}_{BC}=0$ for $A<\min\{B,C\}$ and that
  there is a constant $d_{q}$ such that
  \begin{equation}\label{eq:qminusnminusoneestintro}
    \|\ldr{\varrho(\cdot,t)}^{3/2}[q(\cdot,t)-(n-1)]\|_{C^{0}(\bM)}\leq d_{q}
  \end{equation}
  for all $t\leq t_{0}$. Assume, finally, that the Einstein-scalar field equations with a cosmological constant $\Lambda$ are satisfied;
  i.e. (\ref{eq:EE}) and $\Box_g\phi=0$ hold, where $T$ is given by (\ref{eq:setnlsf}); and that
  $\tau(t)\rightarrow -\infty$ as $t\rightarrow t_{-}$. Then there is a continuous $(1,1)$-tensor field $\mK_{\infty}$ and continuous functions
  $\Psi_{\infty}$ and $\ell_{A,\infty}$ on $\bM$, which also belong to $H^{l-1}(\bM)$, and constants $a_{l,n}$, $b_{l,n}$, $\sfK_{\mK,l}$, $\sfK_{\infty,l}$ and
  $\sfK_{\phi,1,l}$ such that
  \begin{align}
    \|[q-(n-1)](\cdot,\tau)\|_{H^{l-1}_{\weight_{1}}(\bM)}+\|\hml_{U}\mK(\cdot,\tau)\|_{H^{l-1}_{\weight_{1}}(\bM)}
    \leq & \sfK_{\mK,l}\ldr{\tau}^{a_{l,n}\cweight+b_{l,n}}e^{2\vare_{\Spe}\tau},\label{eq:SobestofhmlUmKqugeomatterintroNG}\\    
    \|\mK(\cdot,\tau)-\mK_{\infty}\|_{H^{l-1}(\bM)}+\textstyle{\sum}_{A}\|\ell_{A}(\cdot,\tau)-\ell_{A,\infty}\|_{H^{l-1}(\bM)}
    \leq & \sfK_{\infty,l}\ldr{\tau}^{a_{l,n}\cweight+b_{l,n}}e^{\vare_{\Spe}\tau},\label{eq:mKconvtomKinfHlmatterintroNG}\\
    \|[\hU(\phi)](\cdot,\tau)-\Psi_{\infty}\|_{H^{l-1}(\bM)}\leq \sfK_{\phi,1,l}\ldr{\tau}^{a_{l,n}\cweight+b_{l,n}}e^{\vare_{\Spe}\tau}\label{eq:asofhUphiintroNG}
  \end{align}
  for all $\tau\leq 0$, where $\sfK_{\mK,l}$, $\sfK_{\infty,l}$ and $\sfK_{\phi,1,l}$ only depend on $s_{\cweight,l}$, $c_{\cweight,\kappa_{1}}$, $d_{q}$, $\Lambda$,
  $\hGe_{l}(0)$, $(\bM,\bge_{\refer})$ and a lower bound on $\theta_{0,-}$. Finally, $a_{l,n}$ and $b_{l,n}$ only depend on $n$ and $l$. If, in addition to
  the above, $l\geq \kappa_{1}+1$, then there is a function $\Phi_{\infty}\in C^{0}(\bM)\cap H^{l-2}(\bM)$ and constants $a_{l,n}$, $b_{l,n}$ and
  $\sfK_{\phi,0,l}$ such that
  \begin{equation}\label{eq:asofhphiintroNG}
    \|\phi(\cdot,\tau)-\Psi_{\infty}\varrho(\cdot,\tau)-\Phi_{\infty}\|_{H^{l-2}(\bM)}\leq \sfK_{\phi,0,l}\ldr{\tau}^{a_{l,n}\cweight+b_{l,n}}e^{\vare_{\Spe}\tau}
  \end{equation}  
  for all $\tau\leq 0$, where $a_{l,n}$, $b_{l,n}$ and $\sfK_{\phi,0,l}$ have the same dependence as the corresponding constants in (\ref{eq:asofhUphiintroNG}).
\end{thm}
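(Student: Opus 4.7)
The plan is to reduce the statement to Theorems~\ref{thm:SobconvofmKdecofhmlUmKmatter} and \ref{thm:Sobasymptoticsscalarfieldmatter} (as in the proof of Theorem~\ref{thm:SobEstimatesQuiescentMatter}) by verifying that the commutator-vanishing hypothesis $\g^A_{BC}=0$ for $A<\min\{B,C\}$ substitutes for the two quiescence hypotheses of the generic setting, namely the lower bound $\ell_1-\ell_{n-1}-\ell_n+1\geq\e_p$ and the exponential decay of $q-(n-1)$.

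The key initial observation is that the assumption $\g^A_{BC}=0$ for $A<\min\{B,C\}$ makes the sum $\mSb_b$ in (\ref{eq:mSbbchgaversion}) vanish identically, since every term of that sum has $B<\min\{A,C\}$. By (\ref{eq:renormalisedbSestintro}), $\theta^{-2}\bS$ then decays exponentially at rate $\e_{\Spe}$ (up to controlled polynomial weights), and the effective Hamiltonian identity (\ref{eq:OmegaplussumellAsqbdpnegscintro}) gives the same exponential control on $2\Omega+\sum_A\ell_A^2-1$. Combined with the improvement (\ref{eq:qminusnminusoneetc}), this would upgrade the weak polynomial bound (\ref{eq:qminusnminusoneestintro}) on $q-(n-1)$ to an exponential bound, which is exactly the first inequality of (\ref{eq:qexpconvergencequiescentregimeintroMatter}). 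The role of the a priori bound (\ref{eq:qminusnminusoneestintro}) is then precisely to serve as the mild bootstrap hypothesis needed to justify applying (\ref{eq:qminusnminusoneetc}) in the scalar field setting, where the auxiliary hypothesis that $\theta^{-2}(\rho-\bp)$ decays exponentially must first be checked using $\rho-\bp=(\hU\phi)^2-(n-1)V/n-\tfrac{1}{n}\mathrm{tr}(\text{spatial gradient terms})$ together with the silence of $\chK$ and the boundedness part of the standard assumptions.

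The main obstacle, and the content of the verification, is that the rate $\e_R=\min\{\e_p,\e_{\Spe}\}$ from the generic theorem can be replaced by $\e_{\Spe}$ under the present hypothesis. In the generic argument, the bound $\ell_1-\ell_{n-1}-\ell_n+1\geq\e_p$ is used to control the products $e^{2\mu_B-2\mu_A-2\mu_C}$ appearing in the curvature expressions along integral curves of $\hU$; cf.\ the identity preceding (\ref{eq:emuBmmuAmmuCgaBACchgarel}) and Remark~\ref{remark:muAalonggamma}. Under the non-generic hypothesis, every such dangerous product is multiplied by some $\g^B_{AC}$ with $B<\min\{A,C\}$, which vanishes by assumption; thus the eigenvalue ordering is never invoked and only the silence rate $\e_{\Spe}$ enters. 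Substituting $\e_R\mapsto\e_{\Spe}$ throughout, the proofs of Theorems~\ref{thm:SobconvofmKdecofhmlUmKmatter} and \ref{thm:Sobasymptoticsscalarfieldmatter} go through and yield (\ref{eq:SobestofhmlUmKqugeomatterintroNG}), (\ref{eq:mKconvtomKinfHlmatterintroNG}) and (\ref{eq:asofhUphiintroNG}).

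Finally, (\ref{eq:asofhphiintroNG}) is obtained by integrating $\hU(\phi)-\Psi_\infty$ along integral curves of $\hU$ from $\tau=0$ to $\tau\to-\infty$, using (\ref{eq:asofhUphiintroNG}) together with the relation between the $\hU$-flow and $\varrho$ provided by the standard assumptions on the shift and on $\hN$; the one additional degree of regularity ($l\geq\kappa_1+1$) is needed so that Sobolev embedding makes the pointwise integration meaningful and so that the resulting remainder lies in $H^{l-2}(\bM)$, giving $\Phi_\infty$ as a limit. The parameter dependencies of $\sfK_{\mK,l},\sfK_{\infty,l},\sfK_{\phi,0,l},\sfK_{\phi,1,l}$ track those of Theorems~\ref{thm:SobconvofmKdecofhmlUmKmatter} and \ref{thm:Sobasymptoticsscalarfieldmatter}, with $\e_p$ and $K_q$ replaced by the single constant $d_q$ appearing in (\ref{eq:qminusnminusoneestintro}).
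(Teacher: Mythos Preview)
Your overall strategy is correct and matches the paper's: the statement follows from the symmetry variants of Theorems~\ref{thm:SobconvofmKdecofhmlUmKmatter} and \ref{thm:Sobasymptoticsscalarfieldmatter} (recorded in Remarks~\ref{remark:SobconvofmKdecofhmlUmKmattersymmetry} and \ref{remark:Sobasymptoticsscalarfieldmattersymmetry}), once one observes that the commutator hypothesis kills $\mSb_b$ identically and hence allows $\e_R$ to be replaced by $\e_{\Spe}$ throughout.

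Two points need correction, however. First, your formula for $\rho-\bp$ is wrong: by Remark~\ref{remark:formulaerhominusbpmcP},
\[
\theta^{-2}(\rho-\bp)=\tfrac{n-1}{n}\textstyle\sum_A e^{-2\mu_A}|X_A(\phi)|^2,
\]
with no potential $V$ and no $(\hU\phi)^2$ term (the time-derivative terms in $\rho$ and $\bp$ cancel exactly). Second, you misidentify the role of the bound (\ref{eq:qminusnminusoneestintro}). It is not primarily a bootstrap input for (\ref{eq:qminusnminusoneetc}); rather, it is exactly the hypothesis (\ref{eq:qminusnminusoneest}) of Proposition~\ref{prop:asvelocityKGlikeeq}, which is what furnishes the energy estimates (\ref{eq:TheEnergyEstimateKG})--(\ref{eq:meksupestimateKG}) for $\phi$. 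Only these energy estimates guarantee that $|X_A(\phi)|$ grows at most polynomially, after which silence gives the exponential decay of $\theta^{-2}(\rho-\bp)$ via the displayed formula. The standard assumptions alone say nothing about $\phi$, so ``the boundedness part of the standard assumptions'' cannot close this step.
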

\begin{remark}
  In Theorem~\ref{thm:SobEstimatesQuiescentMatter} we impose conditions on the eigenvalues $\ell_A$, but not on the structure coefficients; this
  situation corresponds to behaviour that is expected to be robust. In Theorem~\ref{thm:SobEstimatesQuiescentMatterNG}, we impose conditions on the
  structure coefficients, but not on the $\ell_A$; this situation can occur, e.g., in situations with symmetry. 
\end{remark}
\begin{remark}
  Remark~\ref{remark:hGelzdefintro} is equally relevant here. 
\end{remark}
\begin{remark}
  Remark~\ref{remark:AsInfoID} is equally relevant in the present setting, keeping (\ref{eq:asHamConSFintro}) and
  (\ref{eq:hUsqphiHlminusonestimate}) in mind.
\end{remark}
\begin{remark}\label{remark:NG Kasner SF}
  Due to Remarks~\ref{remark:asHamConSF} and \ref{remark:Sobasymptoticsscalarfieldmattersymmetry}, (\ref{eq:asHamConSFintro}) holds in this case as
  well. Note, moreover, that if $p_{A}$, $A=1,\dots,n$, is an ordered set of constants and $p_{\phi}$ is a constant such that and $p_{1}+\dots+p_{n}=1$ and
  \[
  p_{\phi}^{2}+\textstyle{\sum}_{A}p_{A}^{2}=1,
  \]
  then $(g,\phi)$ given by (\ref{eq:Kasnermetricandphithreed}) is a solution to the Einstein-scalar field equations with $\ell_{A}=p_{A}$. We expect
  this solution to be stable in symmetry classes such that $\g^{A}_{BC}=0$ for $A<\min\{B,C\}$. 
\end{remark}
\begin{proof}
  The statement is an immediate consequence of Remarks~\ref{remark:SobconvofmKdecofhmlUmKmattersymmetry} and
  \ref{remark:Sobasymptoticsscalarfieldmattersymmetry}, keeping in mind that (\ref{eq:DeltavarrhorelvariationEiintro}) holds. 
\end{proof}

\subsection{Partial bootstrap arguments}\label{ssection:partialbootstraparg}

The purpose of the present subsection is to compare the conclusions with the assumptions. In particular, we wish to point out some partial bootstrap
arguments, yielding improvements of the initial assumptions. To begin with, even though we only assume $\mK$ to be bounded, the assumptions combined
with the Hamiltonian constraint yield the conclusion that the eigenvalues of $\mK$, say $\ell=(\ell_{1},\dots,\ell_{n})$, have to belong to the unit ball
asymptotically. In the case of $3+1$-dimensions, this means that $\ell_{+}^{2}+\ell_{-}^{2}\leq 1$ asymptotically; see
(\ref{seq:ellpm})--(\ref{eq:ellpellminuscircconv}). This set is referred to as the Kasner disc, and the arguments make it clear that this set is of
importance under quite general circumstances. Another important conclusion is that $\theta^{-2}\bS$ is negative, up to exponentially decaying terms.
Combining this observation with the Hamiltonian constraint yields the conclusion that $\theta^{-2}\bS$ is bounded, even though we did not assume this
to be the case initially. Moreover, we conclude that $e^{2\mu_{B}-2\mu_{A}-2\mu_{C}}(\g^{B}_{AC})^{2}$ is bounded for $B<A,C$ (for $B\geq\min\{A,C\}$, this holds
automatically). This is remarkable, since we would, in general, expect this expression to tend to infinity exponentially. At a generic spacetime point,
we would also expect $(\g^{B}_{AC})^{2}$ to be bounded away from zero. This would then imply the boundedness of $e^{2\mu_{B}-2\mu_{A}-2\mu_{C}}$. This boundedness
can then, in its turn, be combined with formulae for the expansion normalised Ricci tensor; see 
(\ref{eq:rescaledRiccicurvatureformwoderoflntheta}) below; in order to conclude that $\theta^{-2}\bR^{i}_{\phantom{i}j}$ does not grow faster than polynomially.
Comparing this observation with
\begin{equation}\label{eq:mlUmKwithEinstein}
  \begin{split}
    (\hml_{U}\mK)^{i}_{\phantom{i}j} =  & -\left(\tfrac{\Delta_{\bge}N}{\theta^{2}N}+\tfrac{n}{n-1}\tfrac{\rho-\bp}{\theta^{2}}
    +\tfrac{2n}{n-1}\tfrac{\Lambda}{\theta^{2}}-\tfrac{\bS}{\theta^{2}}\right)\mK^{i}_{\phantom{i}j}+\tfrac{1}{n-1}\tfrac{\rho-\bp}{\theta^{2}}\de^{i}_{j}\\
    & +\tfrac{2}{n-1}\tfrac{\Lambda}{\theta^{2}}\de^{i}_{j}
    +\tfrac{1}{N\theta^{2}}\bnabla^{i}\bnabla_{j}N+\tfrac{\mcP^{i}_{\phantom{i}j}}{\theta^{2}}-\tfrac{\bR^{i}_{\phantom{i}j}}{\theta^{2}};
  \end{split}
\end{equation}
see (\ref{eq:mlUmKwithEinstein appendix}); it then follows that polynomial growth of the expansion normalised normal derivative of
$\mK$ is consistent. The latter arguments are of course heuristic in nature. However, they point to the importance of distinguishing between the points
(both in spacetime and in phase space) where $\g^{1}_{23}$ vanishes (in the case of $3+1$-dimensions) and the points where it does not; see
also Theorem~\ref{thm:SobEstimatesQuiescentMatterNG} above. 

Turning more specifically to the assumptions formulated in Section~\ref{section:Assumptions}, the perhaps least natural of them is the requirement that
$\hml_{U}\mK$ satisfy a weak off-diagonal exponential bound. However, under certain circumstances, the assumptions actually imply an improvement of
this assumption. In fact, in the case of Einstein's vacuum equations in $3+1$-dimensions, Corollary~\ref{cor:improvedoffdiagexpbd} implies an improvement
of the weak off-diagonal exponential bounds, assuming $\e_{\mK}<2\e_{\Spe}$; see Definition~\ref{def:offdiagonalexpdec}. Making suitable assumptions concerning
the matter; see Corollary~\ref{cor:improvedoffdiagexpbd}; a similar conclusion holds for the Einstein-matter equations in $3+1$-dimensions. Turning to
the assumed bounds concerning the norms, a more detailed analysis is necessary. This can, in general, be expected to be quite difficult. However, in
the quiescent setting, discussed in Subsections~\ref{ssection:QuiescentResults} and \ref{ssection:NGQuiescentResults}, we do obtain improvements; by
quiescent setting we here mean a situation in which either (\ref{eq:qexpconvergencequiescentregimeintro}) is satisfied or $\g^{A}_{BC}=0$ for
$A<\min\{B,C\}$ is satisfied (see Subsection~\ref{ssection:summary discussion ex} for a discussion of these conditions).

\textit{Quiescent setting.} Due to the results of Subsections~\ref{ssection:QuiescentResults} and \ref{ssection:NGQuiescentResults}, it is clear that
several of the bounds required to hold in Section~\ref{section:Assumptions} can be improved in the quiescent setting, albeit with a loss of regularity.
More specifically, the assumptions concerning $q$ and $\mK$ can
be improved. Turning to the lapse function and the shift vector field (and, to some extent, also the mean curvature), it is necessary to impose gauge
conditions in order to obtain improvements. However, considering, e.g., CMC transported spatial coordinates; see \cite[Section~3]{rasq}; $\chi=0$,
$\theta$ is independent of the spatial coordinates and $N$ satisfies an elliptic equation. In that setting, the conditions concerning $\chi$ and $\theta$
are thus automatically satisfied. Moreover, under the circumstances considered here, we expect $N$ to converge to $1$ exponentially (though the
corresponding estimate might involve a loss of derivatives) and the expansion normalised normal derivative of $N$ to decay exponentially (again, with a
loss of derivatives). However, we do not write down the corresponding arguments here.

In the quiescent setting, it is also of interest to note that the $\ell_{A}$ converge exponentially and that $q$ converges to $n-1$, so that, starting
close enough to the asymptotic regime, conditions concerning $\ell_{A}$ and the eigenvalues of $\chK$ satisfied initially (with a slight margin) should be
satisfied to the past. Note also that in the quiescent setting, the condition of weak off-diagonal exponential bounds is automatically satisfied, since
$\hml_{U}\mK$ converges to zero exponentially in that setting. 

\subsection{Revisiting results obtained in the spatially homogeneous setting}\label{ssection:revisitsphom}

With the above observations in mind, it is of interest to revisit the results that have been obtained concerning Einstein's equations under additional
symmetry assumptions; see the end of Subsection~\ref{ssection:results} for an explanation of the importance in doing so. Let us begin by considering
spatially homogeneous solutions to the Einstein orthogonal perfect fluid equations in $3+1$-dimensions; see \cite{Bclass} for a description of the
origin of the Bianchi classification and \cite{Wellis} and references therein for a description of the state of the art in this setting in 1997. To
clarify what this means, recall that the stress energy tensor associated with a perfect fluid is of the form
\begin{equation}\label{eq:T perfect fluid}
  T=(\rho+p)u^{\flat}\otimes u^{\flat}+pg,
\end{equation}
where $u$ is a future directed unit timelike vector field, $u^{\flat}(X)=g(u,X)$, $\rho$ denotes the energy density and $p$ the pressure. In the spatially
homogeneous setting, an orthogonal perfect fluid is one for which $u$ is perpendicular to the hypersurfaces of spatial homogeneity. In our setting, $M=G\times I$,
where $G$ is a Lie group and $I$ is an interval. Moreover, $\d_t$ is a future directed unit vector field perpendicular to the hypersurfaces of spatial homogeneity.
This means that $u^{\flat}=dt$. In order to obtain a closed system of equations, we need to impose an equation of state, giving $p$ in terms of $\rho$.
There are many possible choices, leading to many different types of behaviour. For that reason, we will here only focus on a linear equation of
state: $p=(\g-1)\rho$ with $\g\in (2/3,2]$ (even in this class there are many types of different behaviour). In this setting, Einstein's equations
coupled to the equation for $\rho$, resulting from the fact that $T$ is divergence free, yield the equations in the spatially homogeneous Einstein
orthogonal perfect fluid setting. 

\textbf{Spatially homogeneous solutions.} In $3+1$-dimensions, there are two types of spatially homogeneous solutions: maximal globally hyperbolic
developments of left invariant initial data on a $3$-dimensional Lie group $G$ (the corresponding solutions are referred to as \textit{Bianchi spacetimes});
and maximal globally hyperbolic developments of initial data on $\sn{2}\times\ro$ invariant under the isometry group of the standard Riemannian metric on
$\sn{2}\times\ro$ (the corresponding spacetimes are said to be of \textit{Kantowski-Sachs} type). The Bianchi spacetimes are, additionally, divided into
\textit{Bianchi class A} and \textit{Bianchi class B}, according to whether the relevant Lie group is unimodular or non-unimodular respectively.

\textbf{Bianchi class A.} In the case of Bianchi class A orthogonal perfect fluids, the metric can be written
\[
g=-dt\otimes dt+\textstyle{\sum}_{i=1}^{3}a_{i}^{2}(t)\xi^{i}\otimes\xi^{i}
\]
on $M:=G\times I$, where $I$ is an open interval, $G$ is a $3$-dimensional unimodular Lie group, $\{e_{i}\}$ is a basis of the Lie algebra $\mfg$ of $G$
and $\{\xi^{i}\}$ is the dual basis. Moreover, the structure constants $\g^{i}_{jk}$ associated with $\{e_{i}\}$, defined by $[e_{j},e_{k}]=\g^{i}_{jk}e_{i}$,
satisfy $\g^{i}_{jk}=\e_{jkl}\nu^{li}$, where $\e_{ijk}$ is totally antisymmetric and satisfies
$\e_{123}=1$. Finally, $\nu$ is a diagonal matrix whose diagonal elements belong to $\{-1,0,1\}$. We refer the interested reader to \cite{BianchiIXattr}, in
particular \cite[Appendix~21]{BianchiIXattr} for a justification of these statements. Moreover, the $3$-dimensional unimodular Lie groups can be classified
according to the diagonal elements of $\nu$: if $\nu=0$, the group is said to be of Bianchi type I; if one of the diagonal elements of $\nu$ is non-zero, the
group is said to be of Bianchi type II; if two of the diagonal elements are non-zero, and they have different sign, the group is said to be of Bianchi type
VI${}_{0}$; if two of the diagonal elements are non-zero, and they have the same sign, the group is said to be of Bianchi type VII${}_{0}$; if all of the
diagonal elements are non-zero, and they do not have the same sign, the group is said to be of Bianchi type VIII; and if all of the diagonal elements are
non-zero, and they have the same sign, the group is said to be of Bianchi type IX. The spacetimes are labeled accordingly. In the present setting,
$T$ takes the form (\ref{eq:T perfect fluid}) with $u^{\flat}=dt$, and we here assume $p=(\g-1)\rho$, where $\g\in (2/3,2]$. 

In this setting, there is a natural set of variables, introduced by Wainwright and Hsu in \cite{whsu}. In order to describe them, note first that $\mK$ is
diagonal with respect to the basis $\{e_{i}\}$. We can therefore think of the basis $\{e_{i}\}$ as corresponding to the basis $\{X_{A}\}$ in the present article.
However, here we do not insist that the eigenvectors are ordered according to the eigenvalues. In other words, if we let $l_{i}$ denote the eigenvalue of
$\mK$ corresponding to $e_{i}$, \textit{we do not assume} $l_{1}<l_{2}<l_{3}$. Two of the variables used in \cite{whsu} are $\Sigma_{\pm}$, which are
defined as $\ell_{\pm}$ in (\ref{seq:ellpm}), but with $\ell_{i}$ replaced by $l_{i}$. Next, let $\be_{i}:=a_{i}^{-1}e_{i}$ (no
summation). Then $\d_{t}$ combined with $\{\be_{i}\}$ is an orthonormal frame. Moreover, the structure constants, say $\bga^{i}_{jk}$ associated with
$\{\be_{i}\}$ satisfy $\bga^{i}_{jk}=\e_{jkl}n^{li}$, where $n$ is a diagonal matrix. Denoting the diagonal components of $n$ by $n_{i}$, $i=1,2,3$, Wainwright
and Hsu introduce the variables $N_{i}:=n_{i}/\theta$ and $\Omega_{\whsu}:=3\rho/\theta^{2}$. In addition, they change the time coordinate to $\tau$,
which is essentially the logarithmic volume density (so that $\d_\tau$ and $\hU$ are essentially the same). 

The variables introduced by Wainwright and Hsu are partially motivated by the algebraic structure of unimodular Lie algebras; see the definition of the
$N_{i}$, the vanishing and signs of which correspond to a classification of the corresponding Lie algebras. However, if we let $\che_{i}:=\theta^{-1}\be_{i}$
and $\chga^{i}_{jk}$ be the associated structure constants, then
\begin{equation}\label{eq:Nirelations}
  N_{1}=\chga^{1}_{23},\ \ \
  N_{2}=\chga^{2}_{31},\ \ \
  N_{3}=\chga^{3}_{12}.
\end{equation}
Note that all of these quantities are of importance here. The reason for this is that we have here (as opposed to in the remainder of this article) not ordered
the eigenvectors according to the eigenvalues; which of the $l_{i}$ is smallest depends on the position in the phase space. Note also that the objects
(\ref{eq:Nirelations}) are geometric; see the discussion following (\ref{eq:mSbbchgaversion}). To conclude: the Wainwright-Hsu variables in the Bianchi
class A orthogonal perfect fluid setting are particularly favourable in that they, on the one hand, consist of the geometric objects $\Sigma_{\pm}$,
$\chga^{1}_{23}$, $\chga^{2}_{31}$ and $\chga^{3}_{12}$; and, on the other hand, the vanishing and non-vanishing of the last three quantities is tied to the
Lie algebra and is therefore invariant under the dynamics. Finally, the Einstein-orthogonal perfect fluid equations can be formulated as a dynamical system
in terms of $\Sigma_{\pm}$, the $N_{i}$ and $\Omega_{\whsu}$. This dynamical system is referred to as the \textit{Wainwright-Hsu equations}; see, e.g.,
\cite{whsu} or \cite{BianchiIXattr} for details. 

It is of interest to express the estimates contained in Theorems~\ref{thm:SobestimatesQuiescentVacuum}, \ref{thm:SobEstimatesQuiescentMatter},
\ref{thm:SobestimatesQuiescentVacuumNG} and \ref{thm:SobEstimatesQuiescentMatterNG} in terms of the Wainwright-Hsu variables. Note, first of all,
that the generic vacuum setting is only consistent for $11$ spacetime dimensions and above; see Remark~\ref{remark:pointwise vacuum Kasner}. In
other words, Theorem~\ref{thm:SobestimatesQuiescentVacuum} is not relevant in the $3+1$-dimensional setting. This is also the reason one does not
expect quiescence in the case of vacuum Bianchi type VIII and IX; $\chga^{1}_{23}$, $\chga^{2}_{31}$ and $\chga^{3}_{12}$ are all non-zero in the
case of Bianchi types VIII and IX, so that one is in the generic $3+1$-dimensional setting. That quiescence does not hold in these cases follows
from \cite{cbu}. On the other hand, Theorem~\ref{thm:SobestimatesQuiescentVacuumNG} is relevant in the Bianchi class A vacuum setting, if the Bianchi
type is different from VIII and IX. Similarly, Theorem~\ref{thm:SobEstimatesQuiescentMatter} is relevant for all Bianchi class A types and
Theorem~\ref{thm:SobEstimatesQuiescentMatterNG} is relevant for all Bianchi class A types different from VIII and IX. The estimates of the
second terms on the left hand sides of (\ref{eq:SobestofhmlUmKqugeomatterintro}), (\ref{eq:SobestofhmlUmKqugeointroNG}) and
(\ref{eq:SobestofhmlUmKqugeomatterintroNG}) correspond to the conclusion that the $\d_\tau\Sigma_{\pm}$ decay to zero exponentially in the direction
of the singularity. The estimates of the first terms on the left hand sides of these equations correspond to the conclusion that the expansion
normalised spatial scalar curvature (which is an explicit polynomial in the $N_i$) converges to zero exponentially. The estimates
(\ref{eq:mKconvtomKinfHlmatterintro}), (\ref{eq:mKconvtomKinfHlintroNG}) and (\ref{eq:mKconvtomKinfHlmatterintroNG}) correspond to the
convergence of the $\Sigma_{\pm}$. Finally, (\ref{eq:asofhUphiintro}) and (\ref{eq:asofhUphiintroNG}) correspond to the convergence of
$\Omega_{\whsu}$.

\begin{figure}
  \begin{center}
    \includegraphics{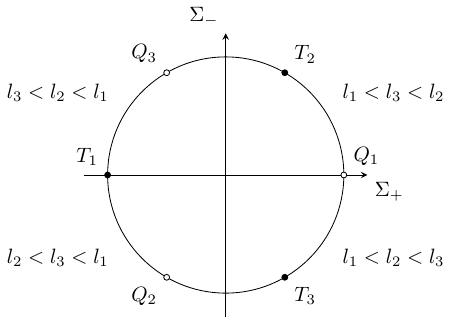}
  \end{center}
  \caption{The Kasner circle with the special points $T_{i}$, antipodal points $Q_{i}$ and segments $\mathscr{K}_i$, $i=1,2,3$, indicated. At each of the points
    $T_{i}$, $Q_{i}$, $i=1,2,3$, two of the eigenvalues coincide. In between these points, the eigenvalues are all distinct. Moreover, on $\mathscr{K}_i$,
    $l_i$ is the smallest eigenvalue, $i=1,2,3$. We have also indicated that between $Q_{1}$ and $T_{2}$, $l_{1}<l_{3}<l_{2}$ etc. Note also that between $T_{2}$ and
    $Q_{3}$, $l_{3}<l_{1}<l_{2}$ and that between $Q_{2}$ and $T_{3}$, $l_{2}<l_{1}<l_{3}$.}\label{fig:Kasnercircle}
\end{figure}
\textit{Bianchi type I.} Considering the results concerning quiescent singularities in Subsections~\ref{ssection:QuiescentResults} and
\ref{ssection:NGQuiescentResults}, it is clear that the Bianchi type I solutions play a special role; pointwise, they are the natural asymptotic limits
(this is a
consequence of Remarks~\ref{remark:pointwise vacuum Kasner}, \ref{remark:pointwise non-vacuum Kasner}, \ref{remark:eigenvaluesQuiescentVacuumNG}
and \ref{remark:NG Kasner SF}, keeping in mind that Bianchi type I solutions take the form (\ref{eq:gKasnergenn}) and (\ref{eq:Kasnermetricandphithreed})
in the vacuum and scalar field settings respectively, and that orthogonal stiff fluids in the spatially homogeneous setting are equivalent to scalar
fields; see below). 
In the case of vacuum, the Bianchi type I solutions are all fixed points to the Wainwright-Hsu equations, defined by $\Sigma_{+}^{2}+\Sigma_{-}^{2}=1$;
see Figure~\ref{fig:Kasnercircle}. The corresponding set is referred to as the \textit{Kasner circle}. Since $\g^{A}_{BC}=0$ for all $A$, $B$ and $C$ 
in the case of Bianchi type I vacuum solutions, these solutions fall into the framework of Theorem~\ref{thm:SobestimatesQuiescentVacuumNG}. On the Kasner
circle, there are three so called \textit{special points}: $T_{1}=(-1,0)$, $T_{2}=(1/2,\sqrt{3}/2)$ and $T_{3}=(1/2,-\sqrt{3}/2)$. They correspond to the
\textit{flat Kasner solutions} which are subsets (or quotients of subsets) of Minkowski space. As is clear from the discussions of the previous sections
and subsections, the order of the eigenvalues in the different segments of the Kasner circle are important. We have therefore indicated the order in
Figure~\ref{fig:Kasnercircle}. In what follows, we also use the following notation: if $\{i,j,k\}$ is a permutation of $\{1,2,3\}$, then the segment
between $T_{j}$ and $T_{k}$ is denoted $\msK_{i}$. 

\begin{figure}
  \begin{center}
    \includegraphics{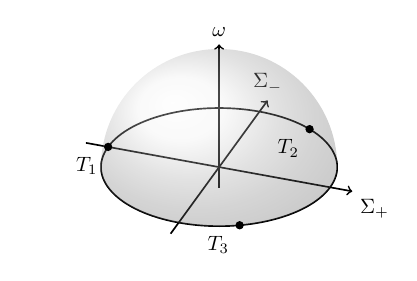}
    \includegraphics{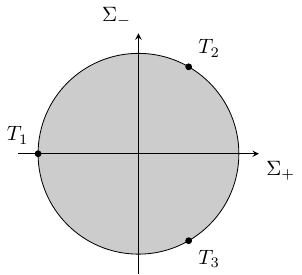}    
  \end{center}
  \caption{The state space of Bianchi type I orthogonal stiff fluid solutions (i.e., orthogonal perfect fluids with equation of state $p=\rho$) 
    is depicted on the left, where $\omega^{2}:=\Omega_{\whsu}$, $\omega\geq 0$. Since the Hamiltonian constraint in this case reads
    $\omega^{2}+\Sigma_{+}^{2}+\Sigma_{-}^{2}=1$, where $\omega\geq 0$, it is sometimes more convenient to represent the state space by the
    \textit{Kasner disc}; see the image on the right.}\label{fig:Bianchi-I-Fluid-full}
\end{figure}
Next, let us consider orthogonal stiff fluids. In this case, the equation of state is given by $p=\rho$; see, e.g., \cite[Subsection~2.1.2]{RinWave} for
a discussion of the non-stiff case. Note that solutions to the Einstein-scalar field equations can be thought of as orthogonal stiff fluid solutions with
$\Omega_{\whsu}=3(\hU\phi)^{2}/2$. Similarly, orthogonal stiff fluid solutions can be thought of as solutions to the Einstein-scalar field equations. 
The results in the stiff fluid setting discussed below therefore belong to the framework of
Theorems~\ref{thm:SobEstimatesQuiescentMatter} and \ref{thm:SobEstimatesQuiescentMatterNG}. In the case of Bianchi type I stiff fluids, the state space
is given by $\Omega_{\whsu}+\Sigma_{+}^{2}+\Sigma_{-}^{2}=1$; see Figure~\ref{fig:Bianchi-I-Fluid-full}. Moreover, the state space consists solely of
fixed points. It is convenient to depict the state space in terms of the Kasner disc; see Figure~\ref{fig:Bianchi-I-Fluid-full}. Again, the order of
the eigenvalues is important. However, this can be deduced from Figure~\ref{fig:Kasnercircle}. For example, the interior of the sector of the Kasner disc
defined by the origin, $Q_{1}$ and $T_{2}$ is such that $l_{1}<l_{3}<l_{2}$. Keeping Theorem~\ref{thm:SobEstimatesQuiescentMatter} in mind, it is also of
interest to determine where $\mK$ is positive definite. The relevant region is depicted on the right in Figure~\ref{fig:QuiescentConvergentRegime}.

Given the above observations, we are in a position to predict the asymptotics of Bianchi class A solutions in the vacuum and orthogonal stiff fluid
settings. Moreover, these predictions are based solely on the order of the eigenvalues $l_{i}$ in different parts of the phase space, and whether the
corresponding $\g^{i}_{jk}$ vanishes or not (where $\{i,j,k\}$ is the appropriate permutation of $\{1,2,3\}$). As a preliminary observation, note
that there are vacuum solutions such that the $\Sigma_{+}\Sigma_{-}$-coordinates converge to one of the special points (this can not happen in the stiff
fluid setting, assuming that there is matter present). However, the corresponding solutions are very special in that they are locally rotationally symmetric
(in other words, they have an additional degree of symmetry); see \cite[Proposition~1, p.~721]{cbu}. In what follows, we therefore exclude them from the
discussion.

\textit{Bianchi type II.} In the case of Bianchi type II, we can, without loss of generality, assume that $N_{1}\neq 0$ and that $N_{2}=N_{3}=0$. Considering
Figure~\ref{fig:Kasnercircle}, keeping (\ref{eq:Nirelations}) in mind, we see that convergence to $\msK_{1}$ should be prevented by the fact that
$\g^{1}_{23}\neq 0$. On the other hand, convergence to $\msK_{2}\cup\msK_{3}$ is consistent with Theorem~\ref{thm:SobestimatesQuiescentVacuumNG}. In the
vacuum setting, this is exactly what happens: generic Bianchi type II vacuum solutions with $N_{1}\neq 0$ converge to a point in $\msK_{2}\cup\msK_{3}$; see,
e.g., \cite[Proposition~22.15, p.~238]{minbok}.

Turning to the orthogonal stiff fluid setting, the sector defined by $T_{1}$, $T_{2}$ and the center, as well as the sector defined by $T_{1}$, $T_{3}$ and
the center are consistent with Theorem~\ref{thm:SobEstimatesQuiescentMatterNG}. Moreover, the region in which $\mK$ is positive definite (i.e., the shaded
region depicted on the right in Figure~\ref{fig:QuiescentConvergentRegime}) is consistent with Theorem~\ref{thm:SobEstimatesQuiescentMatter}. We thus expect
the union of these regions to be consistent with quiescent behaviour, but in the complement of this region (in the Kasner disc), we do not expect quiescence.
This is exactly what happens. In fact, Bianchi type II orthogonal stiff fluid solutions with $N_{1}\neq 0$ converge to a point in the shaded region in the
figure on the left in Figure~\ref{fig:QuiescentConvergentRegime}; see, e.g., \cite[Theorem~19.1, p.~478]{BianchiIXattr}.

\begin{figure}
  \begin{center}
    \includegraphics{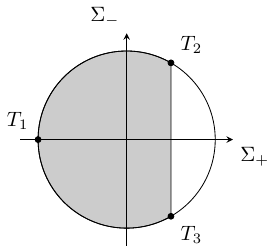}
    \includegraphics{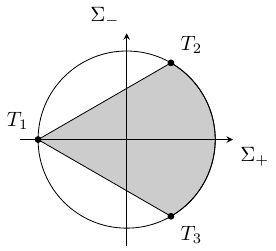}
    \includegraphics{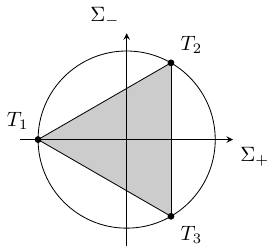}
  \end{center}
  \caption{The gray area in the figure on the right indicates the subset of the Kasner disc in which $\mK$ is positive definite.
    The gray area in the figure on the left indicates the subset of the Kasner disc to which Bianchi type II solutions converge in
    case $N_{1}\neq 0$. As expected, the complement of this area corresponds to the part of the phase space where $l_{1}\leq 0$
    (since $\g^{1}_{23}\neq 0$, $\g^{2}_{31}=0$ and $\g^{3}_{12}=0$). The gray area in the figure in the center indicates the subset
    of the Kasner disc to which Bianchi type VI${}_{0}$ and VII${}_{0}$ solutions converge in case $N_{2}\neq 0$ and $N_{3}\neq 0$.
    As expected, the complement of this area corresponds to the part of the phase space where either $l_{2}\leq 0$ or $l_{3}\leq 0$
    (since $\g^{1}_{23}=0$, $\g^{2}_{31}\neq 0$ and $\g^{3}_{12}\neq 0$). In the case of Bianchi type VIII and IX, $\g^{1}_{23}\neq 0$,
    $\g^{2}_{31}\neq 0$ and $\g^{3}_{12}\neq 0$. Moreover, Bianchi type VIII and IX orthogonal stiff fluid solutions converge, as
    expected, to a point inside the triangle on the right.} 
  \label{fig:QuiescentConvergentRegime}
\end{figure}

\textit{Bianchi type VI${}_{0}$ and VII${}_{0}$.} In the case of Bianchi type VI${}_{0}$ and VII${}_{0}$, we can, without loss of generality, assume that
$N_{1}=0$, that $N_{2}\neq 0$ and that $N_{3}\neq 0$. Considering Figure~\ref{fig:Kasnercircle}, keeping (\ref{eq:Nirelations}) in mind, we see that
convergence to $\msK_{2}\cup\msK_{3}$ should be prevented by the fact that $\g^{2}_{31}\neq 0$ and $\g^{3}_{12}\neq 0$. On the other hand, convergence to
$\msK_{1}$ is consistent with Theorem~\ref{thm:SobestimatesQuiescentVacuumNG}. In the vacuum setting, this is exactly what happens: generic Bianchi type
VI${}_{0}$ and VII${}_{0}$ vacuum solutions with $N_{1}=0$ converge to a point in $\msK_{1}$; see, e.g., \cite[Proposition~22.16, p.~239]{minbok} and
\cite[Proposition~22.18, p.~240]{minbok}. For further information concerning the asymptotics of Bianchi type VII${}_{0}$, we refer the interested reader
to \cite[Section~3.3, pp.~9--13]{hau}.

Turning to the orthogonal stiff fluid setting, the sector defined by $T_{2}$, $T_{3}$ and the center is consistent with
Theorem~\ref{thm:SobEstimatesQuiescentMatterNG}. Moreover, the region in which $\mK$ is positive definite is consistent with
Theorem~\ref{thm:SobEstimatesQuiescentMatter}. We thus expect the
union of these regions to be consistent with quiescent behaviour, but in the complement of this region (in the Kasner disc), we do not expect quiescence.
This is exactly what happens. In fact, Bianchi type VI${}_{0}$ and VII${}_{0}$ orthogonal stiff fluid solutions with $N_{1}=0$ converge to a point in the
shaded region in the figure in the center in Figure~\ref{fig:QuiescentConvergentRegime}; see \cite[Theorem~19.1, p.~478]{BianchiIXattr}.

\textit{Bianchi type VIII and IX.} In the case of Bianchi type VIII and IX, $\g^{1}_{23}$, $\g^{2}_{31}$ and $\g^{3}_{12}$ are all non-zero. For this reason,
we do not expect quiescent behaviour in the vacuum setting. This is, in fact, what happens: for generic Bianchi type VIII and IX vacuum solutions, the
eigenvalues of $\mK$ do not converge in the direction of the singularity; see \cite[Theorem~5, p.~727]{cbu}. More optimistically, one could hope to prove
that the only variables that are relevant asymptotically are $\Sigma_{\pm}$ and one of $\chga^{1}_{23}$, $\chga^{2}_{31}$ and $\chga^{3}_{12}$ at a time. In
other words, one would expect the solution to converge to the subset of the state space defined by the union of the type I and II subspaces (we refer to
this union as the attractor, denoted $\msA$). Again, this is what happens; see \cite[Theorem~19.2, p.~479]{BianchiIXattr} and \cite[Theorem~5, p.~46]{brehm}. 
Finally, one could hope to prove that the asymptotic dynamics are governed by the evolution equations obtained by equating the right hand sides of
(\ref{seq:hUellpmintro}) to zero; see Figure~\ref{fig:TheKasnerMap} for an illustration. In fact, there are results
of this type; see, e.g., \cite{lea,beguin,du}. The idea of these results is to take a heteroclinic orbit of the Kasner map (obtained by equating the right
hand sides of (\ref{seq:hUellpmintro}) to zero) and to prove that there is a stable manifold of solutions converging to
this orbit. In \cite{lea,beguin}, this is done for non-generic heteroclinic orbits. However, the union of the stable manifolds corresponding to the orbits
considered in \cite{du} has positive Lebesgue measure in the full state space.

In the Bianchi type VIII and IX orthogonal stiff fluid setting, the region in which $\mK$ is positive definite is consistent with
Theorem~\ref{thm:SobEstimatesQuiescentMatter}. However, there is no region in which Theorem~\ref{thm:SobEstimatesQuiescentMatterNG} applies.
We thus expect solutions to converge to points in the shaded region depicted on the right in Figure~\ref{fig:QuiescentConvergentRegime}. 
This is what happens; see \cite[Theorem~19.1, p.~478]{BianchiIXattr}.

\begin{figure}
  \begin{center}
    \includegraphics{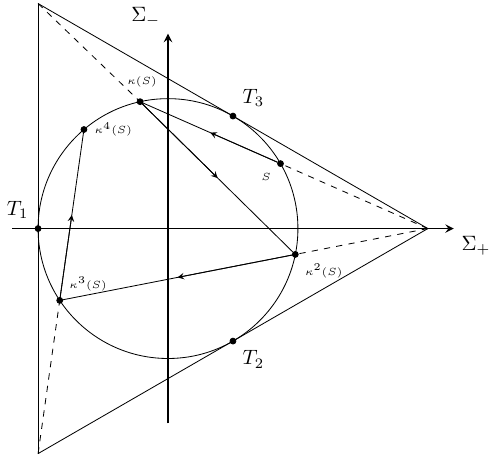}    
  \end{center}
  \caption{Equating the right hand sides of (\ref{seq:hUellpmintro}) with zero yields a map from the Kasner circle
    to itself, called the Kasner map (the straight line connecting $S$ and $\kappa(S)$ is the projection of a Bianchi type II solution to the Einstein
    vacuum equations). Given a point $S$ on the circle, $\kappa(S)$ is obtained by taking the nearest corner of the triangle, drawing
    a straight line from this corner to $S$, and then continuing this straight line to the next intersection with the circle. This next intersection
    defines $\kappa(S)$.}\label{fig:TheKasnerMap}
\end{figure}

\subsection{Revisiting results obtained in the $\tn{3}$-Gowdy symmetric setting}\label{ssection: t3 Gowdy}

Going beyond spatial homogeneity, it is natural to consider $\tn{3}$-Gowdy symmetric solutions \cite{Gowdy,Chr}. In this case, the metric takes the form
\begin{equation}\label{eq:Gowdymetricareal}
  g=t^{-1/2}e^{\lambda/2}(-dt^{2}+d\vartheta^{2})+te^{P}(dx+Qdy)^{2}+te^{-P}dy^{2}
\end{equation}
on $\tn{3}\times (0,\infty)$. Here the functions $P$, $Q$ and $\lambda$ only depend on $t$ and $\vartheta$, so that the metric is invariant under the
action of $\tn{2}$ corresponding to translations in $x$ and $y$. In what follows, it is also convenient to use the time coordinate $\tau=-\ln t$. With
this choice, the big bang singularity corresponds to $\tau\rightarrow\infty$. Moreover, it is convenient to divide the analysis into the
\textit{polarised} ($Q=0$) and \textit{non-polarised} ($Q\neq 0$) settings. 

\textbf{Polarised $\tn{3}$-Gowdy symmetric setting.} In the polarised setting, $\mK$ is diagonal in the $(\vartheta,x,y)$-coordinates;
see \cite[Lemma~C.2]{RinWave}. In particular, $\d_{\vartheta}$, $\d_{x}$ and $\d_{y}$ are all eigenvector fields of $\mK$ and $\g^{i}_{jk}=0$ for
$\{i,j,k\}=\{1,2,3\}$, regardless of the choice of
eigenvector fields $\{e_{i}\}$ for $\mK$. In other words, we would expect the polarised $\tn{3}$-Gowdy symmetric vacuum solutions to always be quiescent. 
This is exactly what happens. In the polarised setting, $P_{\tau\tau}-e^{-2\tau}P_{\vartheta\vartheta}=0$. Appealing to standard energy
estimates and Sobolev embedding, it can be demonstrated that $P_{\tau}$ is bounded in any $C^{k}$ norm to the future. This means that $P$ does not grow
faster than linearly in any $C^{k}$-norm. Combining this observation with the wave equation for $P$ yields the conclusion that $P_{\tau\tau}$ decays
exponentially in any
$C^{k}$ norm. In particular, $P_{\tau}$ converges exponentially to a smooth limit function, say $v$ on $\so$. Integrating this estimate yields a smooth
function $\psi$ on $\so$. To summarise,
\[
P_{\tau}(\cdot,\tau)-v,\ \ \
P(\cdot,\tau)-v\tau-\psi
\]
converge to zero exponentially in any $C^{k}$ norm. Combining this information with \cite[Lemma~C.2]{RinWave} and \cite[(C.11)]{RinWave} yields the
conclusion that $\mK$ converges exponentially to a smooth tensor field. The reader interested in results concerning polarised Gowdy for other topologies
is referred to \cite{chisamo}.

\textbf{Non-polarised $\tn{3}$-Gowdy symmetric setting.} In the non-polarised setting, $\d_{\vartheta}$ is still an eigenvector of $\mK$; see
\cite[Lemma~C.2]{RinWave}. However, $\d_{x}$ and $\d_{y}$ are typically not. In fact, even with respect to the orthonormal frame $\{e_{i}\}$ introduced
in \cite[(1.7), p.~1571]{AAR}, $\mK$ is not diagonal. On the other hand, it is clear that $E_{1}:=\d_{\vartheta}$ is an eigenvector field and that if $E_{A}$,
$A=2,3$, are appropriately chosen eigenvector fields of $\mK$ perpendicular to $E_{1}$, then $[E_{2},E_{3}]=0$. In particular, if $\g^{i}_{jk}$ arises from
the structure coefficients
of the $\{E_{i}\}$, then $\g^{1}_{23}=0$. On the other hand, $\g^{2}_{31}$ and $\g^{3}_{12}$ cannot in general be expected to vanish. From this perspective,
it is clear that the most favourable situation (when it comes to obtaining quiescence) is if the eigenvalue corresponding to $E_{1}$ (i.e. corresponding
to $\d_{\vartheta}$) is the smallest one. On the other hand, since we, in the quiescent setting, expect the eigenvalues to asymptotically satisfy the
Kasner relations, see Remark~\ref{remark:eigenvaluesQuiescentVacuumNG}, we expect the smallest eigenvalue to asymptotically be negative. Combining this
observation with \cite[Remark~C.4]{RinWave}, in particular \cite[(C.8)]{RinWave}, leads to the conclusion that $t\lambda_{t}<1$ is a
natural asymptotic regime. In the end, it turns out that generic non-polarised solutions are such that, except for a finite number of points on the
singularity, the limit of $t\lambda_{t}$ belongs to $(0,1)$. Moreover, the behaviour is quiescent. For a justification of this statement, the interested
reader is referred to \cite{SCCGowdy}; see also \cite[Section~C.4]{RinWave}. 

\subsection{Revisiting results obtained in the absence of symmetries}\label{ssection:revisitnosymm}

Next, consider the results obtained in \cite{aarendall,daetal,fal}. These results are concerned with settings in which there are no symmetries. We revisit
them in \cite{RinQC}; combining the results of the present article with those of \cite{RinQC} yields even more detailed information concerning the
asymptotics and a more complete framework for interpreting previous results in the quiescent setting. In \cite{RinID}, we also relate the results in the
Bianchi class A setting and in the $\tn{3}$-Gowdy symmetric setting to the notion of initial data on the singularity.

\subsection{Conclusions}

As is clear from the above, many of the known results concerning big bang formation in general relativity can be interpreted in terms of the framework
developed in this article. Moreover, the results provide partial bootstrap arguments that could be useful in the context of stability proofs or more
general studies of singularity formation. However, one of the most remarkable conclusions of the framework is that, in the case of $3+1$-dimensions, the
essential information is reduced to three scalar functions: $\ell_{\pm}$ and $\chga^{1}_{23}$. Moreover, these functions are foliation dependent but
otherwise geometric in nature, and they are all essentially derived from the expansion normalised Weingarten map. In addition, we naturally obtain criteria
that ensure quiescence. The criteria are associated with the eigenvalues $\ell_{A}$, the particular matter fields present and the symmetry assumptions
(or, more generally, the asymptotic vanishing of $\g^{1}_{23}$, where $\g^{i}_{jk}$ is calculated with respect to a frame normalised with respect to a fixed
metric). Finally, we see that oscillations are associated with $\chga^{1}_{23}$ not converging to zero and that when the eigenvalues $\ell_{A}$ change,
they do so by means of the Kasner map.

In spite of the advantages of the framework, it should be mentioned that it certainly does not cover all the situations of interest. In particular,
there are situations when the matter fields can give rise to oscillatory behaviour; see, e.g., \cite{wea,cahgeneral,cahIX}. The latter results are
concerned with situations in which the matter is anisotropic. This is an important setting which has not yet received as much attention as it deserves.
Concerning the quiescent setting, the article \cite{luw} provides many additional examples of possible behaviour. 

\section{Estimating the spatial scalar curvature}\label{section:basiccurvatureconcl} 

The spatial scalar curvature and the diagonal components of the spatial Ricci curvature (with respect to the frame $\{X_A\}$ and co-frame $\{Y^A\}$) play
a central role in the analysis. We therefore estimate them in the present section. Since the formulae for these expressions (and their derivation) are
lengthy, and since the derivations consist of standard calculations, the corresponding material is to be found in Appendices~\ref{section:sp sc curv} and
\ref{section:sp ricci cu} below. The most important constituents of the leading order terms are the $\mu_A$ introduced in
Remark~\ref{remark:globalframe} and the structure coefficients $\g_{AB}^{C}$, defined by (\ref{eq:structure coeffients}).

\begin{prop}\label{prop:normalisedbSbd}
  Let $0\leq\cweight\in\ro$ and assume that the standard assumptions; see Definition~\ref{def:standardassumptions}; as well as the $(\cweight,1)$-supremum
  assumptions are fulfilled. Then there is a constant $C_{\bS}$ such that
  \begin{equation}\label{eq:renormalisedbSest}
    \left|\theta^{-2}\bS+\textstyle{\frac{1}{4}\sum}_{A,C=2}^{n}\sum_{B=1}^{\min\{A,C\}-1}e^{2\mu_{B}-2\mu_{A}-2\mu_{C}}(\g^{B}_{AC})^{2}\right|
    \leq C_{\bS}\ldr{\varrho}^{2(2\cweight+1)}e^{2\e_{\Spe}\varrho}
  \end{equation}
  on $M_{-}$, where $C_{\bS}=C_{\bS,0}\theta_{0,-}^{-2}$ and $C_{\bS,0}$ only depends on $c_{\cweight,1}$ and $(\bM,\bge_{\refer})$. Moreover,
  \begin{equation}\label{eq:renormalisedRicAAest}
    \begin{split}
      & \left|\theta^{-2}\bR^{A}_{\phantom{A}A}+\textstyle{\frac{1}{2}\sum}_{C,D}e^{2\mu_{D}-2\mu_{A}-2\mu_{C}}(\g^{D}_{AC})^{2}
      -\textstyle{\frac{1}{4}\sum}_{C,D}e^{2\mu_{A}-2\mu_{C}-2\mu_{D}}(\g^{A}_{CD})^{2}\right|\\
      \leq &  C_{\bR}\ldr{\varrho}^{2(2\cweight+1)}e^{2\e_{\Spe}\varrho}
    \end{split}
  \end{equation}
  (no summation on $A$) on $M_{-}$, where $C_{\bR}=C_{\bR,0}\theta_{0,-}^{-2}$ and $C_{\bR,0}$ only depends on $c_{\cweight,1}$ and $(\bM,\bge_{\refer})$.
\end{prop}
\begin{remark}
  The main reason for proving this proposition is that it leads to (\ref{subeq:hUellpm}). Note also that the conceptual consequences of
  (\ref{eq:renormalisedbSest}) and (\ref{subeq:hUellpm}) are discussed in Subsection~\ref{ssection:results}. Let us,
  however, here point out that (\ref{eq:renormalisedbSest}) and (\ref{eq:renormalisedRicAAest}) illustrate that the dominant contribution to
  $\theta^{-2}\bS$ and $\theta^{-2}\bR^{A}_{\phantom{A}A}$ comes from the expressions that contribute in the spatially homogeneous setting; only terms
  involving the squares of structure coefficients contribute. There is no contribution from terms involving either the spatial derivatives of structure
  coefficients or the spatial derivatives of the quantities $\mu_A$.
  Moreover, one particular consequence of the proposition is that the positive part of $\theta^{-2}\bS$ converges to zero exponentially as
  $\varrho\rightarrow-\infty$. When analysing the asymptotics of solutions to the Klein-Gordon equation in the spatially homogeneous setting, this
  property turns out to be of central importance; see \cite{KGCos}.
\end{remark}
\begin{proof}
  Consider (\ref{eq:bSexpression}). First, we wish to prove that all the terms on the right hand side but the first one decay exponentially. Note, to this
  end, that
  \begin{equation}\label{eq:emtwomuAscsqest}
    \begin{split}
      e^{-2\mu_{A}}|\g^{B}_{AC}\g^{C}_{AB}| \leq & C\theta_{0,-}^{-2}\ldr{\varrho}^{2\cweight}e^{2\e_{\Spe}\varrho}
    \end{split}
  \end{equation}
  on $M_{-}$, where we appealed to (\ref{eq:mKbDlnNchicombest}), (\ref{eq:gaCABbasest}) and (\ref{eq:muminmainlowerbound}), and the constant $C$ only
  depends on $c_{\robas}$. Next, note that
  \begin{equation}\label{eq:emmuAXAbmuBest}
    e^{-2\mu_{A}}|X_{A}(\bmu_{B})|^{2}\leq C\theta_{0,-}^{-2}\ldr{\varrho}^{2(2\cweight+1)}e^{2\e_{\Spe}\varrho}
  \end{equation}
  on $M_{-}$, where we appealed to (\ref{eq:muminmainlowerbound}); (\ref{eq:bmuAmclmfwestEi}); (\ref{eq:ldrrho ldr tau equiv}); and the fact that
  $|X_{A}|_{\bge_{\refer}}=1$. Here $C$ only depends on $c_{\cweight,1}$ and $(\bM,\bge_{\refer})$.
  Most of the remaining terms appearing on the right hand side of (\ref{eq:bSexpression}) can be estimated similarly. However, there is one exception:
  \begin{equation}\label{eq:exceptionaltermscalarcurvature}
    \begin{split}
      e^{-2\mu_{A}}|X_{A}(\cha_{A})| \leq & e^{-2\mu_{A}}|X_{A}^{2}(\hmu_{A})|+e^{-2\mu_{A}}|X_{A}(\g^{B}_{AB})|.
    \end{split}
  \end{equation}
  In order to estimate the terms on the right hand side, note, to begin with, that
  \begin{equation}\label{eq:gammaCABLeviCivita}
    \g^{C}_{AB}=Y^{C}(\bD_{X_{A}}X_{B}-\bD_{X_{B}}X_{A})=\omega^{i}(X_{A})Y^{C}(\bD_{E_{i}}X_{B})-\omega^{i}(X_{B})Y^{C}(\bD_{E_{i}}X_{A}),
  \end{equation}
  where $\{E_{i}\}$ and $\{\omega^{i}\}$ are the frames introduced in Remark~\ref{remark:globalframe}. Applying, say, $X_{D}$ to the first term on
  the right hand side yields
  \begin{equation}\label{eq:XDgammaCABreformulatedterm}
    \begin{split}
      & (\bD_{X_{D}}\omega^{i})(X_{A})Y^{C}(\bD_{E_{i}}X_{B})+\omega^{i}(\bD_{X_{D}}X_{A})Y^{C}(\bD_{E_{i}}X_{B})\\
      & +\omega^{i}(X_{A})(\bD_{X_{D}}Y^{C})(\bD_{E_{i}}X_{B})+\omega^{i}(X_{A})Y^{C}(\bD_{X_{D}}\bD_{E_{i}}X_{B}).
    \end{split}
  \end{equation}
  Combining this observation with a similar observation concerning the second term on the right hand side of (\ref{eq:gammaCABLeviCivita}), the
  assumptions
  (see Definition~\ref{def:supmfulassumptions}), properties of the frames (such as $|X_{A}|_{\bge_{\refer}}=1$ and the fact that $|Y^{A}|_{\bge_{\refer}}$ is
  bounded; see Lemma~\ref{lemma:frameinvest}) and (\ref{eq:bDbfAellAetcpteststmtEi}), it follows that
  \[
  |X_{D}(\g^{C}_{AB})|\leq C(|\bD\mK|_{\bge_{\refer}}+|\bD\mK|_{\bge_{\refer}}^{2}+|\bD^{2}\mK|_{\bge_{\refer}})\leq C\ldr{\varrho}^{2\cweight}
  \]
  on $M_{-}$, where $C$ only depends on $c_{\cweight,1}$ and $(\bM,\bge_{\refer})$. Combining this estimate with (\ref{eq:muminmainlowerbound})
  and the assumptions yields the conclusion that
  \begin{equation}\label{eq:XDgaCABest}
    e^{-2\mu_{E}}|X_{D}(\g^{C}_{AB})|\leq C_{a}\theta_{0,-}^{-2}\ldr{\varrho}^{2\cweight}e^{2\e_{\Spe}\varrho}
  \end{equation}
  on $M_{-}$, where $C_{a}$ only depends on $c_{\cweight,1}$ and $(\bM,\bge_{\refer})$. Next, consider
  \begin{equation}\label{eq:XAsquaredexpansion}
    X_{A}^{2}(\bmu_{B})=(\bD_{X_{A}}\omega^{i})(X_{A})E_{i}(\bmu_{B})+\omega^{i}(\bD_{X_{A}}X_{A})E_{i}(\bmu_{B})+\omega^{i}(X_{A})X_{A}E_{i}(\bmu_{B}).
  \end{equation}
  Combining this identity with the assumptions (see Definition~\ref{def:supmfulassumptions}); properties of the frames;
  Lemma~\ref{lemma:bDbfabDlmjchKest}; (\ref{eq:muminmainlowerbound}); (\ref{eq:bmuAmclmfwestEi}); and (\ref{eq:ldrrho ldr tau equiv})
  yields
  \begin{equation}\label{eq:XAsqbmuBsccurvest}
    e^{-2\mu_{C}}|X_{A}^{2}(\bmu_{B})|\leq C\theta_{0,-}^{-2}\ldr{\varrho}^{3\cweight+1}e^{2\e_{\Spe}\varrho}    
  \end{equation}
  on $M_{-}$, 
  where $C$ only depends on $c_{\cweight,1}$ and $(\bM,\bge_{\refer})$. Combining (\ref{eq:exceptionaltermscalarcurvature}), (\ref{eq:XDgaCABest})
  and (\ref{eq:XAsqbmuBsccurvest}) with the previous observations, in particular (\ref{eq:emtwomuAscsqest}) and (\ref{eq:emmuAXAbmuBest}), yields
  the conclusion that
  \begin{equation}\label{eq:renormalisedbSestinterm}
    \left|\theta^{-2}\bS+\textstyle{\frac{1}{4}\sum}_{A,B,C}e^{2\mu_{B}-2\mu_{A}-2\mu_{C}}(\g^{B}_{AC})^{2}\right|
    \leq C_{\bS,\pre}\ldr{\varrho}^{2(2\cweight+1)}e^{2\e_{\Spe}\varrho}
  \end{equation}
  holds on $M_{-}$, where $C_{\bS,\pre}=C_{\bS,\pre,0}\theta_{0,-}^{-2}$ and $C_{\bS,\pre,0}$ only depends on $c_{\cweight,1}$ and $(\bM,\bge_{\refer})$. What remains
  is to justify the elimination of the terms in the sum on the left hand side with $B\geq A$ and $B\geq C$. Say, for the sake of argument, that
  $B\geq A$. Then
  \[
    2\mu_{B}-2\mu_{A}-2\mu_{C}\leq 2(B-A)\e_{\rond}\varrho+2M_{\rodiff}-2\mu_{C},
  \]
  where we appealed to (\ref{eq:bmuAmbmuBlowbd}) and $M_{\rodiff}$ only depends on $c_{\robas}$. In particular, if $B\geq A$, then
  \[
  e^{2\mu_{B}-2\mu_{A}-2\mu_{C}}(\g^{B}_{AC})^{2}\leq C_{a}e^{-2\mu_{C}}(\g^{B}_{AC})^{2}
  \]
  on $M_{-}$. Since the right hand side can be bounded by the right hand side of (\ref{eq:emtwomuAscsqest}), the corresponding term can be eliminated.
  The argument in case $B\geq C$ is identical. 

  Keeping Corollary~\ref{cor:rescaledRiccicurvatureformwoderoflntheta} in mind, the argument to prove (\ref{eq:renormalisedRicAAest}) is similar. 
\end{proof}

\subsection{Consequences of the scalar curvature estimate}

The estimate (\ref{eq:renormalisedbSest}) is of particular interest when combined with the reformulated version
(\ref{eq:reformulatedandrenormalisedHamcon}) of the Hamiltonian constraint. This leads to the following result. 

\begin{prop}\label{prop:effectiveHamconstraint}
  Let $0\leq\cweight\in\ro$ and assume that the standard assumptions; see Definition~\ref{def:standardassumptions}; as well as the $(\cweight,1)$-supremum
  assumptions are fulfilled. Assume, in addition, that (\ref{eq:reformulatedandrenormalisedHamcon}) holds. Then
  \begin{equation}\label{eq:OmegaplussumellAsqbdpnegsc}
    \begin{split}
      \left|2\Omega+\textstyle{\sum}_{A}\ell_{A}^{2}+\mSb_{b}-1\right| \leq C_{\Lambda}e^{2\e_{\Spe}\varrho}+C_{\bS}\ldr{\varrho}^{2(2\cweight+1)}e^{2\e_{\Spe}\varrho}
    \end{split}
  \end{equation}
  on $M_{-}$, where $C_{\bS}$ is the constant appearing in Proposition~\ref{prop:normalisedbSbd}, $C_{\Lambda}=C_{\Lambda,0}\theta_{0,-}^{-2}$ and $C_{\Lambda,0}$
  only depends on $\Lambda$. Moreover, $\mSb_{b}$ is defined by (\ref{eq:mSbbdef}). 
\end{prop}
\begin{remark}
  In particular, up to exponentially decaying terms, the sum of the $\ell_{A}^{2}$ is bounded from above by $1$ and the sum of the
  $\ell_{A}$ equals $1$. 
\end{remark}
\begin{proof}
  Note that $\tr(\mK^2)=\sum_A\ell_A^2$ and that (\ref{eq:lnthetalowbd}) yields
  \begin{equation}\label{eq:OmegaLambdaestimate}
    |\Omega_{\Lambda}|\leq C_{\Lambda}e^{2\e_{\Spe}\varrho}
  \end{equation}
  on $M_{-}$, where $C_{\Lambda}=C_{\Lambda,0}\theta_{0,-}^{-2}$ and $C_{\Lambda,0}$ only depends on $\Lambda$. Combining these observations with
  (\ref{eq:renormalisedbSest}) and (\ref{eq:reformulatedandrenormalisedHamcon}) yields the conclusion of the proposition.
\end{proof}

Next, we wish to use the estimate (\ref{eq:OmegaplussumellAsqbdpnegsc}) to draw conclusions concerning the deceleration parameter introduced in 
(\ref{eq:hUnlnthetamomqbas}). Given that Einstein's equations are satisfied; see (\ref{eq:EE}); the deceleration parameter is given by 
\begin{equation}\label{eq:qmainformula}
  q=n-1-n\tfrac{\Delta_{\bge}N}{\theta^{2}N}-\tfrac{n^{2}}{n-1}\tfrac{\rho-\bp}{\theta^{2}}-\tfrac{2n^{2}}{n-1}\tfrac{\Lambda}{\theta^{2}}
  +n\tfrac{\bS}{\theta^{2}},
\end{equation}
see (\ref{eq:qmainformula appendix}). For this reason, it is clearly of interest to estimate $\theta^{-2}\Delta_{\bge}N/N$. Note that this expression is 
calculated in (\ref{eq:normalisedDeltaNthroughN}) below. This yields the following conclusion. 

\begin{lemma}\label{lemma:normalisedDeltabgelnN}
  Let $0\leq\cweight\in\ro$ and assume that the standard assumptions; see Definition~\ref{def:standardassumptions}; as well as the $(\cweight,1)$-supremum
  assumptions are fulfilled. Then
  \begin{equation}\label{eq:normalisedDeltabgelnN}
    \theta^{-2}N^{-1}|\Delta_{\bge}N|\leq C_{N}\ldr{\varrho}^{3\cweight+1}e^{2\e_{\Spe}\varrho}
  \end{equation}
  on $M_{-}$, where $C_{N}=C_{N,0}\theta_{0,-}^{-2}$ and $C_{N,0}$ only depends on $c_{\cweight,1}$ and $(\bM,\bge_{\refer})$. 
\end{lemma}
\begin{proof}
  Consider (\ref{eq:normalisedDeltaNthroughN}). In this expression, some terms contain factors of the form $e^{-\mu_{C}}a_{C}$ and $e^{-\mu_{A}}X_{A}(\bmu_{B})$.
  These factors can be estimated as in (\ref{eq:emtwomuAscsqest}) and (\ref{eq:emmuAXAbmuBest}). We also need to estimate
  \begin{equation}\label{eq:emonemuAXalnN}
    e^{-\mu_{A}}|X_{A}(\ln N)|\leq C\theta_{0,-}^{-1}\ldr{\varrho}^{\cweight}e^{\e_{\Spe}\varrho},
  \end{equation}
  where we appealed to the assumptions and (\ref{eq:muminmainlowerbound}) and $C$ only depends on $c_{\cweight,0}$. Consider $e^{-2\mu_{A}}X_{A}^{2}(\ln N)$.
  Rewriting $X_{A}^{2}(\ln N)$ in analogy with (\ref{eq:XAsquaredexpansion}) and arguing as in the proof of Proposition~\ref{prop:normalisedbSbd},
  \begin{equation}\label{eq:emtwomuAXasqlnN}
    e^{-2\mu_{A}}|X_{A}^{2}\ln N|\leq C\theta_{0,-}^{-2}\ldr{\varrho}^{2\cweight}e^{2\e_{\Spe}\varrho},
  \end{equation}
  where $C$ only depends on $c_{\cweight,1}$ and $(\bM,\bge_{\refer})$. Combining the above estimates yields the conclusion of the lemma. 
\end{proof}

\begin{cor}
  Let $0\leq\cweight\in\ro$ and assume that the standard assumptions; see Definition~\ref{def:standardassumptions}; as well as the $(\cweight,1)$-supremum
  assumptions are fulfilled. Assume, moreover, that Einstein's equations with a cosmological constant $\Lambda$ (\ref{eq:EE}) are satisfied and that $T$
  satisfies the dominant energy condition, so that $\bp\leq\rho$. Then
  \begin{equation}\label{eq:qupperbound}
    q+n\mSb_{b}\leq n-1+C_{q}\ldr{\varrho}^{2(2\cweight+1)}e^{2\e_{\Spe}\varrho}
  \end{equation}
  on $M_{-}$, where $C_{q}=C_{q,0}\theta_{0,-}^{-2}$ and $C_{q,0}$ only depends on $c_{\cweight,1}$, $\Lambda$ and $(\bM,\bge_{\refer})$. In particular,
  if $\lambda_{A}$, $A=1,\dots,n$, are the eigenvalues of $\chK$,
  \begin{equation}\label{eq:lambdaAlowerbound}
    \lambda_{A}\geq\ell_{A}-1+\mSb_{b}-n^{-1}C_{q}\ldr{\varrho}^{3(2\cweight+1)}e^{2\e_{\Spe}\varrho}
  \end{equation}
  on $M_{-}$. 
\end{cor}
\begin{remark}\label{remark:decparameterandsilence}
  Due to (\ref{eq:OmegaplussumellAsqbdpnegsc}), it is clear that, up to exponentially decaying terms, the sum of the $\ell_{A}^{2}$ is bounded from above
  by $1$. However, the situation that $\ell_{n}=1$ and $\ell_{i}=0$ for $i<n$ is not excluded by this estimate (though it is excluded by the non-degeneracy
  condition). Due to (\ref{eq:lambdaAlowerbound}) it follows that in that
  case, $\lambda_{n}\geq 0$ (up to exponentially decaying terms). However, since $\ell_{n}=1$, the estimate (\ref{eq:OmegaplussumellAsqbdpnegsc}) implies
  that $\Omega$ and $\mSb_{b}$ are exponentially small (so that, if $\Omega$ is known to have a positive lower bound, $\ell_{n}=1$ is asymptotically
  prohibited). On the other hand, in order to have silence, we need to have $\lambda_{n}\leq -\e_{\Spe}$, an
  estimate which implies that $\ell_{A}\leq 1-\e_{\Spe}$, up to exponentially decaying terms. From this point of view, the case that $\ell_{n}=1$ and
  $\ell_{i}=0$ for $i<n$ is excluded. Note, finally, that there are solutions to Einstein's vacuum equations satisfying this condition,
  namely the flat Kasner solutions; see, e.g., \cite[Example~2.8]{RinWave}. 
\end{remark}
\begin{proof}
  First note that the dominant energy condition states that $T(v,w)\geq 0$ for all future directed causal vectors $v,w$ based at the same point.
  Given a $p=(\bx,t)\in M$, let
  $\{\zeta_A\}$, $A=1,\dots,n$, be an orthonormal basis for $T_{\bx}\bM$ with respect to $\bge_{\bx}$ such that $T(\zeta_A,\zeta_B)=0$ for $A\neq B$.
  Then $U\pm\zeta_A$ are future directed causal vectors. This means that
  \[
    0\leq T(U_p+\zeta_A,U_p-\zeta_A)=\rho(p)-T(\zeta_A,\zeta_A).
  \]
  Summing over $A$ yields $\rho\geq\bp$. 

  By assumption, both the conditions yielding (\ref{eq:renormalisedbSest}) and (\ref{eq:normalisedDeltabgelnN}) are satisfied.
  Combining the conclusions with (\ref{eq:OmegaLambdaestimate}); the fact that (\ref{eq:EE}) is satisfied; the fact that $\bp\leq\rho$; and the
  fact that (\ref{eq:qmainformula}) holds yields (\ref{eq:qupperbound}). Combining (\ref{eq:qupperbound}) and (\ref{eq:lambdaArelqellA}) yields
  (\ref{eq:lambdaAlowerbound}). 
\end{proof}

The inequality (\ref{eq:lambdaAlowerbound}) only gives a lower bound on $\lambda_{A}$. However, it is of interest to estimate $\lambda_{A}$ in
terms of $\ell_{A}$ and the spatial scalar curvature. In order to obtain such an estimate, we need to make assumptions concerning the matter.
Considering (\ref{eq:qmainformula}), it is clear that the contribution from the matter is of the form $\theta^{-2}(\rho-\bp)$. In the case of
Einstein's vacuum equations, this expression vanishes. The same is true in the case of a stiff fluid (defined by the condition that the pressure
equals the energy density). Next, consider spatially homogeneous solutions with orthogonal perfect fluid matter with a linear equation of state:
$p=(\g-1)\rho$. If $\g<2$, the expectation is then that for a generic solution, $\Omega$ should decay exponentially in the direction of the singularity;
see, e.g., \cite{BianchiIXattr} for examples of results.

Here, we do not wish to specialise to a particular matter model. In order to obtain conclusion, we therefore need to impose a general
assumption. Due to the above observations, we assume $\theta^{-2}(\rho-\bp)$ to decay exponentially. 

\begin{prop}\label{prop:qandlambdaAestimates}
  Let $0\leq\cweight\in\ro$ and assume that the standard assumptions; see Definition~\ref{def:standardassumptions}; as well as the $(\cweight,1)$-supremum
  assumptions are fulfilled. Assume, moreover, that Einstein's equations with a cosmological constant $\Lambda$ (\ref{eq:EE}) are satisfied. Finally,
  assume that
  there are constants $C_{\rho}$ and $\e_{\rho}>0$ such that
  \begin{equation}\label{eq:closetostiff}
    \theta^{-2}|\rho-\bp|\leq C_{\rho}e^{2\e_{\rho}\varrho}
  \end{equation}
  on $M_{-}$, where $C_{\rho}=C_{\rho,0}\theta_{0,-}^{-2}$. Then
  \begin{equation}\label{eq:ninvoneplusqfirstestimate}
    |q-(n-1)+n\mSb_{b}|\leq C_{q}\ldr{\varrho}^{2(2\cweight+1)}e^{2\e_{q}\varrho}
  \end{equation}
  on $M_{-}$, where $C_{q}=C_{q,0}\theta_{0,-}^{-2}$ and $\e_{q}:=\min\{\e_{\Spe},\e_{\rho}\}$. Moreover, $C_{q,0}$ only depends on $c_{\cweight,1}$, $C_{\rho,0}$,
  $\Lambda$ and $(\bM,\bge_{\refer})$. Finally,
  \begin{equation}\label{eq:lambdaAellAmSbest}
    |\lambda_{A}-(\ell_{A}-1)-\mSb_{b}|\leq n^{-1}C_{q}\ldr{\varrho}^{2(2\cweight+1)}e^{2\e_{q}\varrho}
  \end{equation}
  on $M_{-}$. 
\end{prop}
\begin{remark}
  Due to (\ref{subeq:hUellpm}) below, we know that in order for $\ell_\pm$ to converge, $\mSb_b$ has to be integrable. Due to
  (\ref{eq:ninvoneplusqfirstestimate}), this is equivalent to $q-(n-1)$ being integrable. In other words, the size of $q-(n-1)$
  can be considered a measure of the proximity to the quiescent regime. Turning to the eigenvalues of $\chK$, i.e. the $\lambda_A$,
  (\ref{eq:lambdaAellAmSbest}) illustrates that in the quiescent regime, i.e. when $\mSb_b$ is small, the $\lambda_A$ are determined
  by the eigenvalues of $\mK$. Note also that whether the causal structure is silent or not is determined by the $\lambda_A$. In
  the quiescent setting, whether the causal structure is silent or not is thus determined by whether the eigenvalues of $\mK$ are
  strictly less than $1$ or not. 
\end{remark}
\begin{remark}
  Due to (\ref{eq:renormalisedbSest}), $\mSb_{b}$ can be replaced by $-\theta^{-2}\bS$ in (\ref{eq:ninvoneplusqfirstestimate}) and
  (\ref{eq:lambdaAellAmSbest}).
\end{remark}
\begin{proof}
  Combining (\ref{eq:closetostiff}) with (\ref{eq:renormalisedbSest}), (\ref{eq:OmegaLambdaestimate}), (\ref{eq:normalisedDeltabgelnN}) and
  (\ref{eq:qmainformula}) yields (\ref{eq:ninvoneplusqfirstestimate}). Combining this estimate with (\ref{eq:lambdaArelqellA}) yields
  (\ref{eq:lambdaAellAmSbest}). 
\end{proof}

\section{Dynamics}\label{section:dynamics}

A basic assumption in this article is that the eigenvalues of $\mK$ are distinct. However, it is of interest to analyse the extent to which this
assumption is consistent with the dynamics and to know how the $\ell_{A}$ change. To begin with, note that (\ref{eq:hU ellA}) reads (no summation)
\begin{equation}\label{eq:hUellAnongeo}
  \hU(\ell_{A})=(\hml_{U}\mK)^{A}_{\phantom{A}A}.
\end{equation}
Appealing to (\ref{eq:mlUmKwithEinstein}), it is thus clear that we need to estimate (no summation)
\begin{equation}\label{eq:hmlUmKAAform}
  \begin{split}
    (\hml_{U}\mK)^{A}_{\phantom{A}A} =  & -\left(\tfrac{\Delta_{\bge}N}{\theta^{2}N}+\tfrac{n}{n-1}\tfrac{\rho-\bp}{\theta^{2}}
    +\tfrac{2n}{n-1}\tfrac{\Lambda}{\theta^{2}}-\tfrac{\bS}{\theta^{2}}\right)\ell_{A}+\tfrac{1}{n-1}\tfrac{\rho-\bp}{\theta^{2}}\\
    & +\tfrac{2}{n-1}\tfrac{\Lambda}{\theta^{2}}
    +\tfrac{1}{N\theta^{2}}\bnabla^{A}\bnabla_{A}N+\tfrac{\mcP^{A}_{\phantom{A}A}}{\theta^{2}}-\tfrac{\bR^{A}_{\phantom{A}A}}{\theta^{2}}.
  \end{split}
\end{equation}

\begin{prop}\label{prop:hUellA}
  Let $0\leq\cweight\in\ro$ and assume that the standard assumptions; see Definition~\ref{def:standardassumptions}; as well as the $(\cweight,1)$-supremum
  assumptions are fulfilled. Assume, moreover, that Einstein's equations with a cosmological constant $\Lambda$ (\ref{eq:EE}) are satisfied. Finally,
  assume that (\ref{eq:closetostiff}) is satisfied and that there is a constant $C_{\mcP}$ such that
  \begin{equation}\label{eq:anisotropicpressurebd}
    \theta^{-2}|\mcP^{A}_{\phantom{A}A}|\leq C_{\mcP}e^{2\e_{\rho}\varrho}
  \end{equation}
  on $M_{-}$ for all $A$ (no summation), where $\mcP$ is introduced in (\ref{eq:mfpmcPdef}) and $C_{\mcP}=C_{\mcP,0}\theta_{0,-}^{-2}$. Then
  \begin{equation}\label{eq:hUelldominantterms}
    |\hU(\ell_{A})-\theta^{-2}\bS\ell_{A}+\theta^{-2}\bR^{A}_{\phantom{A}A}|
    \leq C_{\ell}\ldr{\varrho}^{3\cweight+1}e^{2\e_{q}\varrho}
  \end{equation}
  on $M_{-}$, where $\e_{q}:=\min\{\e_{\Spe},\e_{\rho}\}$ and $C_{\ell}=C_{\ell,0}\theta_{0,-}^{-2}$. Moreover, $C_{\ell,0}$ only depends on $c_{\cweight,1}$,
  $C_{\rho,0}$, $C_{\mP,0}$, $\Lambda$ and $(\bM,\bge_{\refer})$.
\end{prop}
\begin{remark}
  The main reason for proving this proposition is that it leads to (\ref{subeq:hUellpm}). Note, however, that by combining (\ref{eq:hUelldominantterms})
  with (\ref{eq:renormalisedbSest}) and (\ref{eq:renormalisedRicAAest}), it follows that the dominant contribution to $\hU(\ell_A)$ comes from the
  expressions that contribute in the spatially homogeneous setting; only terms involving the squares of structure coefficients contribute. There is no
  contribution from terms involving either the spatial derivatives of the structure coefficients or the spatial derivatives of the quantities $\mu_A$.
\end{remark}
\begin{remark}
Since the sum of the $\ell_{A}$ equals $1$, 
\[
\textstyle{\sum}_{A}(\hU(\ell_{A})-\theta^{-2}\bS\ell_{A}+\theta^{-2}\bR^{A}_{\phantom{A}A})
=\hU\left(\textstyle{\sum}_{A}\ell_{A}\right)-\theta^{-2}\bS\textstyle{\sum}_{A}\ell_{A}+\theta^{-2}\bS=0.
\]
Estimating $n-1$ of the expressions on the left hand side of (\ref{eq:hUelldominantterms}) thus yields an estimate for the remaining expression. 
\end{remark}
\begin{proof}
  Due to (\ref{eq:hUellAnongeo}), it is sufficient to estimate some of the terms on the right hand side of (\ref{eq:hmlUmKAAform}). Moreover,
  most of them can be estimated by appealing to (\ref{eq:OmegaLambdaestimate}), (\ref{eq:normalisedDeltabgelnN}), (\ref{eq:closetostiff}) and
  (\ref{eq:anisotropicpressurebd}). The only exception is the fourth term on the right hand side of  (\ref{eq:hmlUmKAAform}). However, considering
  Lemma~\ref{lemma:contributionsfromlapse}, it is clear that the corresponding term can be estimated as in the proof of
  Lemma~\ref{lemma:normalisedDeltabgelnN}. Combining these observations yields (\ref{eq:hUelldominantterms}).
\end{proof}

Combining Proposition~\ref{prop:normalisedbSbd} with Proposition~\ref{prop:hUellA} yields more detailed information 
concerning the evolution of the $\ell_{A}$. Since we are mainly interested in $4$-dimensional spacetimes, we here assume $n=3$. Before 
stating the result, it is convenient to introduce $\ell_{\pm}$ according to (\ref{seq:ellpm});
since the sum of the $\ell_{A}$ equals $1$, it is sufficient to know the values of $\ell_{+}$ and $\ell_{-}$. Moreover, 
(\ref{eq:OmegaplussumellAsqbdpnegsc}) can be reformulated to 
\begin{equation}\label{eq:Hamconellpmmainparts}
  \begin{split}
    \left|\ell_{+}^{2}+\ell_{-}^{2}+3\Omega+\textstyle{\frac{3}{2}}\mSb_{b}-1\right|
    \leq & \textstyle{\frac{3}{2}}C_{\Lambda}e^{2\e_{\Spe}\varrho}+\tfrac{3}{2}C_{\bS}\ldr{\varrho}^{2(2\cweight+1)}e^{2\e_{\Spe}\varrho}
  \end{split}
\end{equation}
on $M_{-}$, where $C_{\Lambda}$ and $C_{\bS}$ are the constants appearing in (\ref{eq:OmegaplussumellAsqbdpnegsc}).
In particular, $\ell_{+}^{2}+\ell_{-}^{2}\leq 1$, up to exponentially decaying terms. 

\begin{thm}\label{thm:main bounce result}
  Given that the assumptions of Proposition~\ref{prop:hUellA} are satisfied and that $n=3$,
  \begin{subequations}\label{subeq:hUellpm}
    \begin{align}
      |\hU(\ell_{+})-\theta^{-2}\bS(\ell_{+}-2)| \leq & C_{+}\ldr{\varrho}^{2(2\cweight+1)}e^{2\e_{q}\varrho},\label{eq:hUellplus}\\
      |\hU(\ell_{-})-\theta^{-2}\bS\ell_{-}| \leq & C_{-}\ldr{\varrho}^{2(2\cweight+1)}e^{2\e_{q}\varrho}\label{eq:hUellminus}
    \end{align}
  \end{subequations}
  on $M_{-}$. Here $C_{\pm}=C_{\pm,0}\theta_{0,-}^{-2}$, where $C_{\pm,0}$ only depends on $c_{\cweight,1}$, $C_{\rho,0}$, $C_{\mP,0}$, $\Lambda$ and $(\bM,\bge_{\refer})$.
\end{thm}
\begin{remark}\label{remark:BKLmap}
  Due to (\ref{eq:Hamconellpmmainparts}), (\ref{eq:hUellplus}) and the fact that (up to exponentially decaying terms) $\theta^{-2}\bS$ is negative, it is
  clear that $\hU(\ell_{+})$ is positive, up to exponentially decaying terms. This means that $\ell_{+}$ decreases to the past. Combining this observation
  with (\ref{eq:ellplus}) yields the conclusion that $\ell_{1}$ increases to the past.
  
  Next, due to the fact that
  $\ell_{+}^{2}+\ell_{-}^{2}\leq 1$, up to exponentially decaying terms, the function $2-\ell_{+}$ should, asymptotically, be bounded strictly away from
  zero. It should therefore be meaningful to consider the quantity $\ell_{-}/(2-\ell_{+})$. Moreover, (\ref{subeq:hUellpm}) yields the conclusion that
  $\hU[\ell_{-}/(2-\ell_{+})]$ decays exponentially. This corresponds exactly to the behaviour of Bianchi type II vacuum solutions; see, e.g.,
  \cite[(22.20), p. 239]{minbok}. Moreover, it is clear that as long as the integral of $\theta^{-2}\bS$ (along integral curves of $\hU$) is small, the
  eigenvalues $\ell_{A}$ do not change much. It is only when the integral of $\theta^{-2}\bS$ becomes
  non-negligible that they change. Moreover, when the integral becomes non-negligible, this means that $(\ell_{+},\ell_{-})$ moves roughly speaking
  along a straight line described by the condition that $\ell_{-}/(2-\ell_{+})$ remain constant. In addition, the initial value of $(\ell_{+},\ell_{-})$
  determines the line. Finally, since $\ell_{+}^{2}+\ell_{-}^{2}\leq 1$, up to exponentially decaying terms, it is not possible for the solution to
  continue indefinitely along the line. At some point, the fact that (\ref{eq:Hamconellpmmainparts}) holds is going to force $\theta^{-2}\bS$ to
  tend to zero. This will fix the ``end value'' of $(\ell_{+},\ell_{-})$. Note that this process reproduces the BKL-map, taking the initial values
  of $(\ell_{+},\ell_{-})$ to the end values. At that stage $e^{\mu_{1}-\mu_{2}-\mu_{3}}$ should then generically decay to zero exponentially; $\ell_{1}$
  is no longer the smallest eigenvalue.

  It should of course be noted that in the transition from the initial state to the end state, the smallest and the second smallest of the eigenvalues
  will trade places, so that the assumption of non-degeneracy is violated. On the other hand, the transition can be expected to be rapid, so that other
  methods could potentially be used to control the evolution in this regime. Moreover, once the transition has been accomplished, one would generically
  expect non-degeneracy to be reinstated. 
\end{remark}
\begin{remark}\label{remark:g123 to zero}
  Fix an integral curve, say $\g$, of $\hU$. Note that we can parametrise $\g$ so that $s-1/2\leq\varrho\circ\g(s)\leq s+1/2$ and $\g(0)\in \bM_{t_0}$, see
  Lemma~\ref{lemma:lowerbdonmumin}. Consider the case that the integral of $(\theta^{-2}\bS)\circ\g$ over some interval, say $[s_a,s_b]$, is small.
  In what follows, we assume $s_b$ to be close to $-\infty$, since we are interested in the asymptotic behaviour. Due to
  (\ref{eq:mSbbdef}) and (\ref{eq:renormalisedbSest}), it follows that
  \[
  |(\theta^{-2}\bS)\circ\g(s)+\mSb_b\circ\g(s)|\leq C\ldr{s}^{2(2\cweight+1)}e^{2\e_{\Spe} s}
  \]
  for all $s\leq 0$. This means that the integral of $\mSb_b\circ\g$ over $[s_a,s_b]$ is small, assuming $s_b$ to be close enough to $-\infty$. Moreover,
  since $\mSb_b\circ\g$ has a sign,
  the integral of $\mSb_b\circ\g$ (and of $(\theta^{-2}\bS)\circ\g$) over any subinterval of $[s_a,s_b]$ is small. Due to (\ref{subeq:hUellpm}),
  this means that the $\ell_{\pm}$ do not change much in $[s_a,s_b]$, assuming $s_b$ is close enough to $-\infty$; note that $(\ell_+^2+\ell_-^2)\circ\g$
  is essentially bounded from above by $1$ for $s_b$ close enough to $-\infty$ due to
  (\ref{eq:Hamconellpmmainparts}). Moreover, since $\mSb_b\circ\g$ has a sign and the integral is small, $\mSb_b\circ\g$ is small on average. Combining
  this observation with (\ref{eq:OmegaplussumellAsqbdpnegsc}) yields the conclusion that $3\Omega\circ\g+\ell_+^2\circ\g+\ell_-^2\circ\g=1$ on
  average; recall (\ref{eq:ellpellminuscircconv}). Since $\ell_{\pm}\circ\g$ do not change much, this means that $\Omega\circ\g$ is constant, on average.
  Next, due to (\ref{eq:mu123 est})
  \begin{equation}\label{eq:mu123 est q two}
    \begin{split}
      & \big|\mu_{\rowo}\circ\g(s_b)-\mu_{\rowo}\circ\g(s_a)-\textstyle{\int}_{s_{a}}^{s_b}[\ell_{\rowo}\circ\g(s)+1]ds\big|\\
      \leq & C+\textstyle{\int}_{s_{a}}^{s_b}|q\circ\g(s)-2|ds/3
    \end{split}
  \end{equation}
  where we use the notation $\ell_{\rowo}:=\ell_{1}-\ell_2-\ell_3$ and $\mu_{\rowo}:=\mu_1-\mu_2-\mu_3$, and $C$ only depends on $c_\robas$. Next, due to
  (\ref{eq:ninvoneplusqfirstestimate}) and the fact that the integral of $\mSb_b\circ\g$ over $[s_a,s_b]$ is small, it is clear that the integral
  appearing on the right hand side of (\ref{eq:mu123 est q two}) is bounded (and small if $s_b$ is close to $-\infty$). Since $\ell_{\rowo}\circ\g$ does not
  change much in the interval $[s_a,s_b]$, this means that $e^{\mu_{\rowo}\circ\g}$ is either essentially exponentially growing or exponentially decaying,
  depending on the value of $\ell_{\rowo}\circ\g+1=2\ell_{1}\circ\g$. If $\ell_1<0$, we obtain exponential growth backwards in time. If $\ell_1>0$, we
  obtain exponential decay. Since $\mSb_b\circ\g$ is bounded, we conclude that $\g^{1}_{23}\circ\g$ has to decay exponentially if $\ell_1\circ\g<0$,
  but that there are no significant bounds on $\g^{1}_{23}\circ\g$ in case $\ell_1\circ\g>0$.

  Finally, let us consider the vacuum setting. Due to (\ref{eq:chKeSpe}), (\ref{eq:OmegaplussumellAsqbdpnegsc}), (\ref{eq:lambdaAellAmSbest}) and the
  fact that $\mSb_b\geq 0$, there are constants $C>0$ and $\e>0$ such that
  \begin{equation}\label{eq:intermediate estimates g123 to zero}
    \textstyle{\sum}_A \ell_A^{2}\circ\g(s)-1\geq -\mSb_b\circ\g(s)-Ce^{\e s},\ \ \
    \ell_3\circ\g(s)\leq 1-\e_{\Spe}+Ce^{\e s}
  \end{equation}
  for all $s\leq 0$; note that $\lambda_3\leq-\e_{\Spe}$ is a consequence of (\ref{eq:chKeSpe}). Combining these estimates with the fact that the $\ell_A$ sum up to $1$ yields
  \begin{equation}\label{eq:ell one lb}
    \begin{split}
      -\ell_1\circ\g(s)\cdot\ell_2\circ\g(s) = & \ell_3\circ\g(s)[1-\ell_3\circ\g(s)]+\tfrac{1}{2}(\textstyle{\sum}_A\ell_A^2\circ\g(s)-1)\\
      \geq &  \e_{\Spe}\ell_3\circ\g(s)-\mSb_b\circ\g(s)/2-Ce^{\e s}
    \end{split}    
  \end{equation}
  for all $s\leq 0$ (where the value of $C$ changes from inequality to inequality). Since $\ell_3\geq 1/3$, the first term on the far right hand side of (\ref{eq:ell one lb})
  is bounded from below by $\e_\Spe/3$. Assuming $s_b$ to be close enough to $-\infty$, it can thus be ensured that the sum of the first and the last term on the far right hand side of
  (\ref{eq:ell one lb}) is bounded from below by $\e_\Spe/4$. Divide $[s_a,s_b]$ into $\ma_+$, consisting of the $s\in [s_a,s_b]$ such that $\mSb_b\circ\g(s)\geq \e_\Spe/10$ and
  $\ma_-=[s_a,s_b]-\ma_+$. Since the integral of $\mSb_b\circ\g$ over $[s_a,s_b]$ is small and $\mSb_b$ is non-negative, it is clear that $|\ma_+|$ (the Lebesgue measure of $\ma_+$)
  should be small. On $\ma_-$, the far right hand side of (\ref{eq:ell one lb}) is bounded from below by $\e_\Spe/5$. This means that $\ell_1\circ\g<0$ and $\ell_2\circ\g>0$ on $\ma_-$.
  Assuming $s_b$ to be close enough to $-\infty$, we can assume $\ell_2\circ\g\leq 1$; see the second inequality in (\ref{eq:intermediate estimates g123 to zero}). This means that on
  $\ma_-$, $\ell_1\circ\g\leq -\e_\Spe/5$. Combining (\ref{eq:mu123 est q two}) with the above observations yields
  \[
    \mu_{\rowo}\circ\g(s_a)\geq\mu_{\rowo}\circ\g(s_b)-\textstyle{\int}_{s_{a}}^{s_b}2\ell_1\circ\g(s)ds-C
    \geq \mu_{\rowo}\circ\g(s_b)+\e_\Spe(s_b-s_a)/5-C.
  \]
  This means that $e^{\mu_{\rowo}\circ\g}$ is exponentially growing in the direction of the singularity. Due to the fact that $\mSb_b\circ\g$ is bounded,
  we conclude that $\g_{23}^1\circ\g$ has to decrease exponentially. 
\end{remark}
\begin{proof}[Proof of Theorem~\ref{thm:main bounce result}]
  Consider (\ref{eq:renormalisedbSest}) and (\ref{eq:renormalisedRicAAest}). We need to estimate terms of the form
  \begin{equation}\label{eq:termsinbSandbRtobestimated}
    e^{2\mu_{D}-2\mu_{A}-2\mu_{C}}(\g^{D}_{AC})^{2}.
  \end{equation}
  When doing so, it is convenient to recall (\ref{eq:bmuAmbmuBlowbd}). If $D>1$, then, since $A\neq C$, one of $A,C$ has to equal either $1$ or $D$.
  If one of them equals $D$, we have exponential decay as in (\ref{eq:emtwomuAscsqest}). If one of them, say $A$, equals $1$, we need to estimate
  \[
  e^{2\mu_{D}-2\mu_{A}-2\mu_{C}}(\g^{D}_{AC})^{2}=e^{2\mu_{D}-2\mu_{1}}\cdot[e^{-2\mu_{C}}(\g^{D}_{AC})^{2}].
  \]
  The first factor we can estimate by appealing to (\ref{eq:bmuAmbmuBlowbd}) and the second factor we can estimate by appealing to estimates of
  the form (\ref{eq:emtwomuAscsqest}). What remains to be considered is thus the case that $D=1$. If one of $A,C$ equals $1$, we again obtain exponential
  decay. Thus $\{A,C\}=\{2,3\}$ (since $A$ and $C$ have to be different from $1$ and different from each other), so that the only case of interest is
  $D=1$ and $\{A,C\}=\{2,3\}$. Combining observations of this type with (\ref{eq:renormalisedbSest}) and (\ref{eq:renormalisedRicAAest}) yields
  \begin{align}
    \left|\theta^{-2}\bS+\textstyle{\frac{1}{2}}e^{2\mu_{1}-2\mu_{2}-2\mu_{3}}(\g^{1}_{23})^{2}\right|
    \leq C_{\bS,3}\ldr{\varrho}^{2(2\cweight+1)}e^{2\e_{\Spe}\varrho},\label{eq:bSonetwothree}\\
    \left|\theta^{-2}\bR^{1}_{\phantom{1}1}-\textstyle{\frac{1}{2}}e^{2\mu_{1}-2\mu_{2}-2\mu_{3}}(\g^{1}_{23})^{2}\right|
    \leq C_{\bR,3}\ldr{\varrho}^{2(2\cweight+1)}e^{2\e_{\Spe}\varrho},\label{eq:bRoneone}\\
    \left|\theta^{-2}\bR^{2}_{\phantom{2}2}+\textstyle{\frac{1}{2}}e^{2\mu_{1}-2\mu_{2}-2\mu_{3}}(\g^{1}_{23})^{2}\right|
    \leq C_{\bR,3}\ldr{\varrho}^{2(2\cweight+1)}e^{2\e_{\Spe}\varrho},\label{eq:bRtwotwo}\\
    \left|\theta^{-2}\bR^{3}_{\phantom{3}3}+\textstyle{\frac{1}{2}}e^{2\mu_{1}-2\mu_{2}-2\mu_{3}}(\g^{1}_{23})^{2}\right|
    \leq C_{\bR,3}\ldr{\varrho}^{2(2\weight+1)}e^{2\e_{\Spe}\varrho}\label{eq:bRthreethree}
  \end{align}
  on $M_{-}$, where $C_{\bS,3}=C_{\bS,3,0}\theta_{0,-}^{-2}$, $C_{\bR,3}=C_{\bR,3,0}\theta_{0,-}^{-2}$ and $C_{\bS,3,0}$ and $C_{\bR,3,0}$ only depend on $c_{\cweight,1}$
  and $(\bM,\bge_{\refer})$. Summing up,
  \begin{equation*}
    \theta^{-2}|\bR^{1}_{\phantom{1}1}+\bS|+\theta^{-2}|\bR^{2}_{\phantom{2}2}-\bS|+\theta^{-2}|\bR^{3}_{\phantom{3}3}-\bS|
    \leq C_{\bR,\rotot}\ldr{\varrho}^{2(2\cweight+1)}e^{2\e_{\Spe}\varrho}
  \end{equation*}
  on $M_{-}$, where $C_{\bR,\rotot}=C_{\bR,\rotot,0}\theta_{0,-}^{-2}$ and $C_{\bR,\rotot,0}$ only depends on $c_{\cweight,1}$ and $(\bM,\bge_{\refer})$.
  Combining this estimate with (\ref{seq:ellpm}) and (\ref{eq:hUelldominantterms}) yields the conclusion of the theorem.
\end{proof}

\subsection{Refined analysis}

It is of interest to estimate all the components of the rescaled Ricci tensor in the case that $n=3$. Since we already have the estimates 
(\ref{eq:bRoneone})--(\ref{eq:bRthreethree}), we focus on the off-diagonal components. The main purpose of these estimates is to demonstrate
that the perhaps most unnatural assumption of our framework, namely the condition that $\hml_{U}\mK$ satisfy a weak off-diagonal exponential bound;
see Definition~\ref{def:offdiagonalexpdec}; can at this stage be improved. 

\begin{lemma}\label{lemma:goodoffdiagtermsbRABbnablaAB}
  Let $0\leq\cweight\in\ro$ and assume that the standard assumptions; see Definition~\ref{def:standardassumptions}; as well as the $(\cweight,1)$-supremum
  assumptions are fulfilled. Assume, moreover, that $n=3$ and let $A,B\in \{1,2,3\}$ be such that $A\neq B$ and $B\neq 1$. Then 
  \begin{equation}\label{eq:goodoffdiagtermsbRAB}
    \theta^{-2}|\bR^{A}_{\phantom{A}B}|\leq C_{\bR,\mrod}\ldr{\varrho}^{2(2\cweight+1)}e^{2\e_{\Spe}\varrho}
  \end{equation}
  on $M_{-}$, where $C_{\bR,\mrod}=C_{\bR,\mrod,0}\theta_{0,-}^{-2}$ and $C_{\bR,\mrod,0}$ only depends on $c_{\cweight,1}$ and $(\bM,\bge_{\refer})$. Moreover,
  there is a constant $C_{N,\mrod}$ such that
  \begin{equation}\label{eq:goodoffdiagtermsNsd}
    \theta^{-2}N^{-1}|\bnabla^{A}\bnabla_{B}N|\leq C_{N,\mrod}\ldr{\varrho}^{3\cweight+1}e^{2\e_{\Spe}\varrho}
  \end{equation}
  on $M_{-}$, where $C_{N,\mrod}=C_{N,\mrod,0}\theta_{0,-}^{-2}$ and $C_{N,\mrod,0}$ only depends on $c_{\cweight,1}$ and $(\bM,\bge_{\refer})$.
\end{lemma}
\begin{proof}
  In order to prove (\ref{eq:goodoffdiagtermsbRAB}), let us return to Corollary~\ref{cor:rescaledRiccicurvatureformwoderoflntheta} below, in particular
  (\ref{eq:mSIABdef})--(\ref{eq:mSVABdef}). Most of the terms can be estimated as in the proof of Proposition~\ref{prop:normalisedbSbd}. However, there are
  some exceptions. For that reason, we, from now on, focus on the terms that we do not a priori know to decay exponentially. Note, to this end, that
  \[
  2\mS_{\mrI,B}^{A} \approx -\textstyle{\sum}_{C}e^{2\mu_{B}-2\mu_{A}-2\mu_{C}}X_{C}(\g^{B}_{AC}),
  \]
  where the sign $\approx$ is used to indicate that the equality holds up to terms that tend to zero exponentially. In particular, the only possibility for
  $\mS_{\mrI,B}^{A}$ not to decay to zero exponentially is $B=1$, $A\in \{2,3\}$. In case $B=1$, potentially all of the terms appearing on the right hand
  side of (\ref{eq:mSIIABdef}) are problematic. However, if $B\in\{2,3\}$, the first and third terms on the right hand side are clearly exponentially
  decaying. In the case of the first term, this is clear since if $B\notin \{A,C\}$, then, since $A\neq C$, $1\in \{A,C\}$. The argument in the case of the
  third term is similar. Consider the second term on the right hand side of (\ref{eq:mSIIABdef}). In order for this term to be non-zero, $C$ has to be
  different from $B$. In order for it not to be exponentially decaying, $C$ has to be different from $1$. Since $B\in\{2,3\}$, we conclude that
  $\{B,C\}=\{2,3\}$. Similarly, $A$ has to be different from $1$ and different from $C$. This means that $A=B$, contradicting the assumption that $A\neq B$.
  Turning to (\ref{eq:mSIIIABdef})--(\ref{eq:mSVABdef}), all the terms appearing on the right hand side decay to zero exponentially. The estimate
  (\ref{eq:goodoffdiagtermsbRAB}) follows. 

  The proof of (\ref{eq:goodoffdiagtermsNsd}) is similar to the proof of (\ref{eq:normalisedDeltabgelnN}), keeping Lemma~\ref{lemma:contributionsfromlapse}
  in mind. 
\end{proof}

Due to this lemma, we can estimate the corresponding components of $\hml_{U}\mK$. 

\begin{cor}\label{cor:improvedoffdiagexpbd}
  Let $0\leq\cweight\in\ro$ and assume that the standard assumptions, see Definition~\ref{def:standardassumptions}, as well as the $(\cweight,1)$-supremum
  assumptions are fulfilled. Assume, moreover, that Einstein's equations with a cosmological constant $\Lambda$ (\ref{eq:EE}) are satisfied, that $n=3$,
  and let $A,B\in \{1,2,3\}$ be such that $A\neq B$ and $B\neq 1$.
  Assume, finally, that there are constants $C_{\mfp,\mrod}$, $\e_{\mfp}>0$ such that
  \begin{equation}\label{normmfpoffdiagest}
    \theta^{-2}|\mfp^{A}_{\phantom{A}B}|\leq C_{\mfp,\mrod}e^{2\e_{\mfp}\varrho}
  \end{equation}
  on $M_{-}$, where $C_{\mfp,\mrod}=C_{\mfp,\mrod,0}\theta_{0,-}^{-2}$ and $\mfp$ is introduced in (\ref{eq:mfpmcPdef}). Then
  \begin{equation}\label{lemma:goodoffdiagtermshmlUmKABbnablaAB}
    |(\hml_{U}\mK)^{A}_{\phantom{A}B}|\leq C_{D,\mrod}\ldr{\varrho}^{2(2\cweight+1)}e^{2\e_{D}\varrho}
  \end{equation}
  on $M_{-}$, where $\e_{D}:=\min\{\e_{\Spe},\e_{\mfp}\}$, $C_{D,\mrod}=C_{D,\mrod,0}\theta_{0,-}^{-2}$ and $C_{D,\mrod,0}$ only depends on $c_{\cweight,1}$,
  $C_{\mfp,\mrod,0}$ and $(\bM,\bge_{\refer})$. 
\end{cor}
\begin{remark}
The estimate (\ref{lemma:goodoffdiagtermshmlUmKABbnablaAB}) improves the assumption that $\hml_{U}\mK$ satisfy a weak off-diagonal exponential bound.
\end{remark}
\begin{proof}
The estimate (\ref{lemma:goodoffdiagtermshmlUmKABbnablaAB}) follows from (\ref{eq:goodoffdiagtermsbRAB}), (\ref{eq:goodoffdiagtermsNsd}),
(\ref{normmfpoffdiagest}) and (\ref{eq:mlUmKwithEinstein}). 
\end{proof}

\section{Quiescent setting}\label{section:quiescent}

One case which is of particular interest is when $\mK$ converges in the direction of the singularity. This is referred to as the \textit{convergent} or
\textit{quiescent} setting. 
In order for this to happen, we expect the expansion normalised normal derivative of $\mK$ to have to converge to zero. Considering
(\ref{eq:mlUmKwithEinstein}), we thus expect it to be necessary for the rescaled spatial curvature quantities, such as $\theta^{-2}\bS$, to converge to
zero. Considering (\ref{eq:renormalisedbSest}), we would thus, generically, expect it to be necessary for $\mu_{A}-\mu_{B}-\mu_{C}$, $B\neq C$, to tend to
$-\infty$ as $\varrho$ tends to $-\infty$. For this reason, we begin by deriving conditions that ensure that $\mu_{A}-\mu_{B}-\mu_{C}$, $B\neq C$, tends to
$-\infty$ linearly in $\varrho$. 

\subsection{Asymptotic behaviour of $\mu_{A}-\mu_{B}-\mu_{C}$, $B\neq C$}\label{ssection:asbehmuAetc}

Consider (\ref{eq:ninvoneplusqfirstestimate}). If $\mu_{A}-\mu_{B}-\mu_{C}$, $B\neq C$, tends to $-\infty$ linearly in $\varrho$, then $\mSb_{b}$ tends to
zero exponentially; note that $\g^{A}_{BC}$ grows at worst as a power of $\ldr{\varrho}$ - see the proof of (\ref{eq:emtwomuAscsqest}). Combining this
observation with (\ref{eq:ninvoneplusqfirstestimate}) yields the conclusion that $q-(n-1)$ converges to zero
exponentially. Similarly, if we make particular assumptions concerning the eigenvalues $\ell_{A}$ and assume that $q-(n-1)$ converges to zero exponentially,
then we can conclude that $\mu_{A}-\mu_{B}-\mu_{C}$, $B\neq C$, tends to $-\infty$ linearly in $\varrho$ (this is the purpose of
Lemma~\ref{lemma:muAmmuBmmuCrelquiescent} below). On the other hand, it is not so clear how to deduce conclusions without making any of these assumptions. 
In practice, when deriving asymptotics in the context of a proof of global non-linear stability, it is necessary to make bootstrap assumptions that are then
strengthened. The following lemma should be understood as being part of such a bootstrap argument: We assume $q$ to converge, exponentially, to $n-1$, and
find a condition on the $\ell_{A}$ that ensures that $\mu_{A}-\mu_{B}-\mu_{C}$, $B\neq C$, tends to $-\infty$ linearly in $\varrho$. Once this conclusion
has been obtained, we strengthen the estimates of $q-(n-1)$ and the conclusions concerning $\ell_{A}$. Two examples of settings in which solutions
exhibit this type of behaviour are provided by \cite{aarendall,daetal}; see also Subsection~\ref{ssection:summary discussion ex}. 

\begin{lemma}\label{lemma:muAmmuBmmuCrelquiescent}
  Let $0\leq\cweight\in\ro$ and assume that the standard assumptions are fulfilled; see Definition~\ref{def:standardassumptions}. Assume, moreover, that
  there are constants $K_{q}$ and $0<\e_{\que}<1$ such that
  \begin{equation}\label{eq:qexpconvergence}
    |q-(n-1)|\leq K_{q}e^{\e_{\que}\varrho}
  \end{equation}
  on $M_{-}$. Let $\g$ be an integral curve of $\hU$ with $\g(0)\in\bM\times \{t_{0}\}$ and let $J_{-}=\g^{-1}(M_{-})$. Then there is a constant $K_{\lambda}$
  such that
  \begin{equation}\label{eq:muAalonggammaintegralestimate}
    \big|\mu_{A}\circ\g(s)-\ln\theta\circ\g(0)+\textstyle{\int}_{s}^{0}[\ell_{A}\circ\g(u)-1]du\big|\leq K_{\lambda}
  \end{equation}
  for all $A$ and all $s\in J_{-}$, where $K_{\lambda}$ only depends on $c_{\robas}$, $\e_{\que}$ and $K_{q}$. Assume, in addition, that there is an
  $\e_{p}\in (0,1)$ such that
  \begin{equation}\label{eq:quiescentregime}
    \ell_{1}-\ell_{n-1}-\ell_{n}+1\geq\e_{p}.
  \end{equation}
  Then, assuming $A,B,C\in\{1,\dots,n\}$ and $B\neq C$,
  \begin{equation}\label{eq:muABCdiffest}
    \mu_{A}-\mu_{B}-\mu_{C}\leq \e_{p}\varrho-\ln\theta_{0,-}+K
  \end{equation}
  on $M_{-}$, where $K$ only depends on $c_{\robas}$, $\e_{\que}$ and $K_{q}$. 
\end{lemma}
\begin{remark}
  In the case that $n=3$, the condition (\ref{eq:quiescentregime}) is equivalent to $2\ell_{1}\geq\e_{p}$; i.e., to the condition that $\mK$ is
  positive definite with a uniform positive lower bound. 
\end{remark}
\begin{remark}
  Depending on dimension and presence/absence of matter fields, it is not always possible to fulfill (\ref{eq:quiescentregime}). We discuss this topic
  further in Subsection~\ref{ssection:quiescentregimes} below.
\end{remark}
\begin{remark}\label{remark:muAalonggamma}
  Due to (\ref{eq:hUnlnthetamomqbas}) and (\ref{eq:bmuAestalonggammagenA}) below, it follows that if $\g$ is an integral curve of $\hU$,
  then
  \[
  \big|\mu_{A}\circ\g(s)-\ln\theta\circ\g(0)+\textstyle{\int}_{s}^{0}\left[\ell_{A}-\tfrac{1}{n}(1+q)\right]\circ\g(u)du\big|\leq K
  \]
  for $s\in J_{-}$, where $K$ only depends on $c_{\robas}$. 
\end{remark}
\begin{proof}
  Due to (\ref{eq:bmuAeqintellA}) and (\ref{eq:bmuoneeqintellone}), we know that
  \begin{equation}\label{eq:bmuAestalonggammagenA}
    \big|\bmu_{A}\circ\g(s)+\textstyle{\int}_{s}^{0}\ell_{A}\circ\g(u)du\big|\leq K
  \end{equation}
  for all $A$ and all $s\in J_{-}$, where $K$ only depends on $c_{\robas}$. On the other hand, due to the assumption concerning $q$ and
  (\ref{eq:hUnlnthetamomqbas}), it follows that
  \[
  \big|\tfrac{d}{ds}\ln\theta\circ\g(s)+1\big|=n^{-1}|q\circ\g(s)-(n-1)|\leq n^{-1}K_{q}e^{\e_{\que}\varrho\circ\g(s)}\leq 2n^{-1}K_{q}e^{\e_{\que}s}
  \]
  for all $s\in J_{-}$, where we appealed to (\ref{eq:varrhosequivalencestmt}). Integrating this estimate and combining the result with
  (\ref{eq:bmuAestalonggammagenA}) yields (\ref{eq:muAalonggammaintegralestimate}). Next, let $A,B,C\in\{1,\dots,n\}$ and $B\neq C$. Then
  one of $B,C$ has to be $\leq n-1$. Assume, without loss of generality, that $B\leq n-1$. Then
  \begin{equation}\label{eq:muAmmuBmmuCottest}
    \begin{split}
      \mu_{A}-\mu_{B}-\mu_{C} = & \mu_{1}-\mu_{n-1}-\mu_{n}+(\mu_{A}-\mu_{1})+(\mu_{n-1}-\mu_{B})+(\mu_{n}-\mu_{C})\\
      \leq & \mu_{1}-\mu_{n-1}-\mu_{n}+3M_{\rodiff}
    \end{split}
  \end{equation}
  on $M_{-}$, where we appealed to (\ref{eq:bmuAmbmuBlowbd}) and $M_\rodiff$ is the constant appearing in (\ref{eq:bmuAmbmuBlowbd}).
  On the other hand, (\ref{eq:muAalonggammaintegralestimate}) yields
  \begin{equation}\label{eq:canmudifference}
    \begin{split}
      (\mu_{1}-\mu_{n-1}-\mu_{n})\circ\g(s)  \leq & -\textstyle{\int}_{s}^{0}[(\ell_{1}-\ell_{n-1}-\ell_{n})\circ\g(u)+1]du-\ln\theta\circ\g(0)+3K_{\lambda}\\
      \leq & \e_{p}s-\ln\theta\circ\g(0)+3K_{\lambda}
    \end{split}
  \end{equation}
  for all $s\in J_{-}$. Combining the fact that this estimate holds for all integral curves of $\hU$; (\ref{eq:varrhosequivalencestmt}); and
  (\ref{eq:muAmmuBmmuCottest}) yields the conclusion that (\ref{eq:muABCdiffest}) holds.
\end{proof}

\subsection{Quiescent regimes}\label{ssection:quiescentregimes}

It is of interest to determine the regime in which the expression 
\[
F(\ell):=\ell_{1}-\ell_{n-1}-\ell_{n}+1
\]
has a positive lower bound, where $\ell:=(\ell_{1},\dots,\ell_{n})$. This is due to the fact that if $F$ is bounded from below by a positive constant,
then Lemma~\ref{lemma:muAmmuBmmuCrelquiescent} implies that the left hand side of (\ref{eq:muABCdiffest}) diverges linearly to $-\infty$ in
the direction of the singularity, a condition expected to be necessary in order for $\mK$ to converge (i.e., in order to have quiescence);
see the discussion at the beginning of Section~\ref{section:quiescent}. Here it is understood that $\ell_{1}\leq\ell_{2}\leq\cdots\leq\ell_{n}$ 
(even though we are mainly interested in the case of strict inequalities) and that 
\[
\textstyle{\sum}_{A}\ell_{A}=1,\ \ \
\textstyle{\sum}_{A}\ell_{A}^{2}\leq 1,
\]
where the second assumption is a consequence of (\ref{eq:OmegaplussumellAsqbdpnegsc}); due to (\ref{eq:OmegaplussumellAsqbdpnegsc}), this inequality has
to hold asymptotically. In general, the choice $\ell_{i}=1/n$, $i=1,\dots,n$, ensures that these conditions are satisfied and that $F(\ell)>0$,
assuming $n\geq 2$. However, depending on the matter model, this choice may not be consistent with the Hamiltonian constraint
(\ref{eq:reformulatedandrenormalisedHamcon}). Note, in particular, that when $F(\ell)>0$, then $\theta^{-2}\bS$ converges to zero exponentially. Combining
this observation with (\ref{eq:reformulatedandrenormalisedHamcon}) yields the conclusion that 
\[
2\Omega+\textstyle{\sum}_{A}\ell_{A}^{2}-1
\]
converges to zero exponentially. In particular, when $\Omega=0$,
\begin{equation}\label{eq:vacuumcondition}
\textstyle{\sum}_{A}\ell_{A}^{2}-1
\end{equation}
converges to zero exponentially. This is not consistent with $\ell_{i}=1/n$, $i=1,\dots,n$. 

\textbf{The vacuum setting.} Due to the above, fulfilling the condition $F(\ell)>0$ is not a problem if there is a scalar field present. 
In order to illustrate that there is a problem in the vacuum setting, assume that $n=3$. In that case, $F(\ell)=2\ell_{1}$. In other words, in the case
of $n=3$, $F(\ell)$ is strictly positive if and only if $\ell_{1}>0$. However, $\ell_{1}>0$ is equivalent to $\mK$ being positive definite. Moreover,
$\ell_{1}>0$ implies that $0<\ell_{i}<1$ for $i=1,\dots,n$. Thus
\[
\textstyle{\sum}_{A}\ell_{A}^{2}<\textstyle{\sum}_{A}\ell_{A}=1.
\]
In particular, the condition that $\mK$ be positive definite is not consistent with vacuum. On the other hand, in higher dimensions,
it turns out that $F(\ell)>0$ is consistent with vacuum. In order to see this, let $K_{o}$ be the set of $\ell\in\rn{n}$ such that 
\begin{equation}\label{eq:bothKasnerrelations}
\textstyle{\sum}_{A}\ell_{A}=1,\ \ \
\textstyle{\sum}_{A}\ell_{A}^{2}=1
\end{equation}
and
\begin{equation}\label{eq:ellorder}
\ell_{1}\leq \cdots\leq \ell_{n}
\end{equation}
are satisfied. Note that $K_{o}$ is
non-empty and compact. By the above observations, there is, for $n=3$, no $\ell\in K_{o}$ such that $F(\ell)>0$. On the other hand, regardless
of the value of $n$, $\ell=(0,\dots,0,1)$ belongs to $K_{o}$ and is such that $F(\ell)=0$. It is of interest to find the minimum $n$ such that 
there is an $\ell\in K_{o}$ satisfying $F(\ell)>0$. This was done in \cite{Henneauxetal}, and the answer is $n=10$. The precise statement is the
following (and since the justification given in \cite{Henneauxetal} is somewhat terse, we prove the below lemma in
Appendix~\ref{section:condquiescentregimes}). 

\begin{lemma}\label{lemma:vacuumquiescent}
  For each $3\leq n\in\zo$, let $K_{o}$ be defined as above. If $3\leq n\leq 9$, there is no $\ell\in K_{o}$ such that $F(\ell)>0$. If
  $n\geq 10$, there is an $\ell\in K_{o}$ such that $F(\ell)>0$. 
\end{lemma}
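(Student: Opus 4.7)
The proof splits into two independent directions, and I treat them separately.

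For the existence direction ($n\geq 10$), the plan is to exhibit an explicit two-level configuration in $K_o$. I take the ansatz
\[
  \ell=(\underbrace{a,\dots,a}_{k},\underbrace{c,\dots,c}_{n-k}),\qquad 2\leq k\leq n-2,\ a\leq c,
\]
and reduce the Kasner relations $ka+(n-k)c=1$, $ka^2+(n-k)c^2=1$ to a scalar quadratic in $c$, whose admissible root is
\(c=[1+\sqrt{k(n-1)/(n-k)}]/n\), with $a=(1-(n-k)c)/k$. Since $\ell_{n-1}=\ell_n=c$, a short computation gives $kF(\ell)=k+1-(n+k)c$; squaring the inequality $c<(k+1)/(n+k)$ yields the equivalence
\[
  F(\ell)>0\iff k^{2}-(n-3)k+n<0.
\]
This quadratic in $k$ has discriminant $(n-1)(n-9)$, which is strictly positive for $n\geq 10$; its two roots bracket an open interval of length $\sqrt{(n-1)(n-9)}\geq 3$, so the interval contains an integer $k_0$. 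A quick check (using $(n-3)^2-(n-1)(n-9)=4n>0$ on the low side, and $(n-1)^2-(n-1)(n-9)>0$ on the high side) shows $1\leq(\text{smaller root})$ and $(\text{larger root})<n-2$, so $k_0\in\{2,\dots,n-2\}$ and the corresponding $\ell$ lies in $K_o$ with $F(\ell)>0$.

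For the non-existence direction ($3\leq n\leq 9$), the strategy is to show that the maximum of $F$ on the compact set $K_o$ is attained at a two-level configuration; granted this, the equivalence above together with $(n-1)(n-9)\leq 0$ gives $F\leq 0$ throughout $K_o$. To reduce to two-level I use KKT multipliers $\mu,\lambda$ for the Kasner constraints and $\kappa_i\geq 0$ for the ordering $\ell_{i+1}-\ell_i\geq 0$. Writing the stationarity conditions position by position and telescoping $\kappa_i$ through each maximal block of equal values $v_1<\dots<v_m$ with multiplicities $k_1,\dots,k_m$, complementary slackness forces
\[
  \mu+2\lambda v_1=\tfrac{1}{k_1},\qquad \mu+2\lambda v_i=0\ \text{ for any intermediate block inside }\{2,\dots,n-2\},\qquad \mu+2\lambda v_m=-\tfrac{2}{k_m}\ \text{ when }k_m\geq 2.
\]
The middle condition forces all interior intermediate values to coincide, so only the $m=2$ and $m=3$ (symmetric, with $\ell_1<\ell_2=\dots=\ell_{n-2}<\ell_{n-1}=\ell_n$) cases survive, plus edge cases in which $k_m=1$ or $k_1=1$ play a boundary role. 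The symmetric three-level critical point admits an explicit evaluation
\(F=[(n-1)-\sqrt{(n-1)(3n-1)}]/n<0\) for every $n\geq 2$, so the global max must be two-level, and we are done.

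The main obstacle will be dealing cleanly with the edge-case block structures (in particular the $k_m=1$ subcase, and sequences of two neighbouring blocks that straddle the distinguished positions $\{1,n-1,n\}$). The cleanest way to close these off is a direct perturbation argument: at any candidate maximizer that has at least three distinct values and some interior block containing a middle index, one constructs a two-parameter perturbation supported on one index of the smallest block, one index of the intermediate block, and one index of the largest block, with coefficients chosen so that both Kasner constraints are preserved to first order; a short computation (using $\ell_1<\ell_{n-1}\leq\ell_n$) shows the first-order change in $F$ is strictly positive for an appropriate sign choice, contradicting maximality. Iterating reduces the maximizer to two levels and completes the reduction.
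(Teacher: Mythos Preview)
Your existence argument for $n\geq 10$ is correct and actually tidier than the paper's: reducing $F>0$ for the two-level ansatz to $k^{2}-(n-3)k+n<0$ with discriminant $(n-1)(n-9)$ is very clean. (One nit: the check ``$(n-3)^{2}-(n-1)(n-9)=4n$'' only shows the smaller root exceeds $0$; to get it past $1$ you want $(n-5)^{2}-(n-1)(n-9)=16>0$.) The paper does the equivalent two-level computation in the variable $s=\sqrt{m/k}$ and minimises $s+2/s$ instead.

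For non-existence the overall plan---reduce the max of $F$ on $K_{o}$ to a two-level configuration and then use the quadratic---matches the paper's, but your reduction has a real gap. In the KKT analysis, the sign conditions $\kappa_{j}\geq 0$ inside the last block do force $k_{m}=2$ (one finds $\kappa_{n-1}=(2-k_{m})/k_{m}$), but nothing forces $k_{1}=1$: inside the first block one gets $\kappa_{j}=(k_{1}-j)/k_{1}\geq 0$ for all $j\leq k_{1}$, with no restriction on $k_{1}$. So the three-level KKT points include all $(k_{1},\,n-k_{1}-2,\,2)$ with $1\leq k_{1}\leq n-3$, as well as a separate family with $k_{m}=1$; your explicit evaluation of $F$ treats only $(1,n-3,2)$. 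More seriously, the closing perturbation argument cannot work as stated: the maximiser \emph{is} a KKT point, and KKT is precisely the statement that no first-order feasible perturbation strictly increases $F$ there, so ``the first-order change in $F$ is strictly positive for an appropriate sign choice'' is exactly what is ruled out at a max. The paper avoids this by moving the three groups of equal values together along a one-parameter circle in the constraint set and observing that $F$ restricted to the arc is $a\sin t+b\cos t+\mathrm{const}$ with $a<0$ and $b\leq 0$, hence has \emph{no interior local maximum}---a second-order/convexity argument rather than a first-order one. Your KKT route is salvageable (for the $(k_{1},n-k_{1}-2,2)$ family one can compute $F=\tfrac{n-1}{n}(1-1/\alpha)$ with $\alpha^{2}=\tfrac{(n-1)k_{1}}{n+(2n-1)k_{1}}<1$, so $F<0$), but as written the enumeration of critical points is incomplete and the perturbation step is circular.
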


\subsection{Estimating the curvature in $C^{k}$}

Combining the conclusions of Lemma~\ref{lemma:muAmmuBmmuCrelquiescent} with the estimates derived in \cite{RinWave} yields estimates for the rescaled
spatial curvature quantities. In particular, we are interested in the family $\bmR$ of $(1,1)$-tensor fields on $\bM$ and the rescaled spatial scalar
curvature, defined by 
\[
\bmR^{i}_{\phantom{i}j}:=\theta^{-2}\bR^{i}_{\phantom{i}j},\ \ \
\bmS:=\theta^{-2}\bS
\]
respectively, where $\bR$ denotes the Ricci curvature of the metric $\bge$. In order to derive conclusions concerning these quantities, let us begin by
making $C^{k}$-assumptions. 

\begin{lemma}\label{lemma:CkRicciestimates}
  Let $0\leq\cweight\in\ro$ and $1\leq l\in\zo$ and assume that the standard assumptions; see Definition~\ref{def:standardassumptions}; as well as the
  $(\cweight,l)$-supremum assumptions are fulfilled. Assume, moreover, that there are constants $K_{q}$ and $0<\e_{\que}<1$ such that
  (\ref{eq:qexpconvergence}) holds and an $\e_{p}\in (0,1)$ such that (\ref{eq:quiescentregime}) holds. Then there is a constant $K_{\mR,l}$ such that
  \begin{equation}\label{eq:estofRicciqugeo}
    \textstyle{\sum}_{k\leq l-1}\ldr{\varrho}^{-(k+3)\cweight}|\bD^{k}\bmR|_{\bge_{\refer}} \leq K_{\mR,l}\ldr{\varrho}^{l\cweight+l+1}e^{2\e_{R}\varrho}
  \end{equation}
  on $M_{-}$, where $\e_{R}:=\min\{\e_{p},\e_{\Spe}\}$, $K_{\mR,l}=K_{\mR,l,0}\theta_{0,-}^{-2}$ and $K_{\mR,l,0}$ only depends on $c_{\cweight,l}$, $\e_{\que}$,
  $K_{q}$ and $(\bM,\bge_{\refer})$. 
\end{lemma}
\begin{remark}\label{remark:Ckscalarcurvatureestimate}
  The rescaled spatial curvature, $\bmS$, as well as $\mSb_{b}$ satisfy estimates analogous to (\ref{eq:estofRicciqugeo}). 
\end{remark}
\begin{remark}\label{remark:Ckscalarcurvatureestimatesymmetry}
  A similar conclusion holds in case we replace the requirement that (\ref{eq:qexpconvergence}) and (\ref{eq:quiescentregime}) hold with the requirement
  that $\g^{A}_{BC}=0$ for all $A<\min\{B,C\}$ (this condition can be fulfilled for symmetry classes of solutions to Einstein's equations; see, e.g.,
  Subsections~\ref{ssection:revisitsphom} and \ref{ssection: t3 Gowdy}). In fact, in that case,
  \begin{equation}\label{eq:estofRicciqugeosymmetry}
    \textstyle{\sum}_{k\leq l-1}\ldr{\varrho}^{-(k+3)\cweight}|\bD^{k}\bmR|_{\bge_{\refer}} \leq K_{\mR,l}\ldr{\varrho}^{l\cweight+l+1}e^{2\e_{\Spe}\varrho}
  \end{equation}
  on $M_{-}$, where $K_{\mR,l}=K_{\mR,l,0}\theta_{0,-}^{-2}$ and $K_{\mR,l,0}$ only depends on $c_{\cweight,l}$ and $(\bM,\bge_{\refer})$. The proof of this statement
  is similar to, but simpler than, the proof below (in that some of the terms we need to estimate below vanish in this case). Moreover, similar estimates
  hold for $\bmS$. 
\end{remark}
\begin{proof}
  The proof is based on the formulae derived in Corollary~\ref{cor:rescaledRiccicurvatureformwoderoflntheta}, in particular
  (\ref{eq:mSIABdef})--(\ref{eq:mSVABdef}). The terms appearing on the right hand sides of (\ref{eq:mSIABdef})--(\ref{eq:mSVABdef})
  are of two different types. One type takes the form
  \[
  e^{h}X_{A}(f),
  \]
  where $h$ either equals $-2\mu_{D}$ for some choice of $D$ or equals $2\mu_{B}-2\mu_{C}-2\mu_{D}$ for some choice of $B$, $C$, $D$ such that $C\neq D$;
  and $f$ either equals $X_{B}(\bmu_{C})$ for some choice of $B$, $C$, or equals $\g^{B}_{CD}$ for some choice of $B$, $C$, $D$. We refer to terms of this type
  as terms of type one. The second type takes the form
  \[
  e^{h}f_{1}f_{2},
  \]
  where $h$ either equals $-2\mu_{A}$ for some choice of $A$, or equals $2\mu_{B}-2\mu_{C}-2\mu_{D}$ for some choice of $B$, $C$, $D$ such that $C\neq D$;
  and the $f_{i}$ either equal $X_{A}(\bmu_{B})$ for some choice of $A$, $B$, or equal $\g^{A}_{BC}$ for some choice of $A$, $B$, $C$. We refer to terms of
  this type as terms of type two. Before estimating terms of these different types, we need to estimate the constituents. 

  \textbf{Estimating the constituents.} Note, to begin with, that (\ref{eq:bmuAmclmfwestEi}) and (\ref{eq:muAClestimatermk app}) yield
  \begin{equation}\label{eq:muAClestimatermk}
    \|\mu_{A}(\cdot,\tau)\|_{\mc^{\bfl_{1}}_{\bbE,\weight}(\bM)}+\|\bmu_{A}(\cdot,\tau)\|_{\mc^{\bfl_{1}}_{\bbE,\weight}(\bM)}\leq C_{\mu,l}\ldr{\tau}
  \end{equation}
  on $I_{-}$ for all $A$, where $\bfl_{1}:=(1,l+1)$, $C_{\mu,l}$ only depends on $c_{\cweight,l}$ and $(\bM,\bge_{\refer})$, and we use the notation
  introduced in (\ref{eq:mtmClbbEbS}). In particular, this estimate implies that
  \begin{equation}\label{eq:EbfIbmuAmuAest}
    \textstyle{\sum}_{j=1}^{l+1}\sum_{|\bfI|=j}\ldr{\varrho}^{-(j+1)\cweight}(|E_{\bfI}\bmu_{A}|+|E_{\bfI}\mu_{A}|)\leq C_{\mu,l}\ldr{\varrho}
  \end{equation}
  on $M_{-}$, where we used the fact that $\ldr{\tau}$ and $\ldr{\varrho}$ are equivalent; see (\ref{eq:ldrrho ldr tau equiv}).
  Here $E_{\bfI}:=E_{I_1}\cdots E_{I_k}$ and $|\bfI|=k$ if $\bfI=(I_1,\dots,I_k)$. However, we also need to estimate $E_{\bfI}X_{A}(\bmu_{B})$
  and $E_{\bfI}X_{A}X_{B}(\bmu_{C})$. Note, to this end, that
  \begin{align}
    X_{A}(\bmu_{B}) = & \omega^{i}(X_{A})E_{i}(\bmu_{B}),\label{eq:XAmuBexpansion}\\
    X_{A}X_{B}(\bmu_{C}) = & \omega^{j}(X_{A})(\bD_{E_{j}}\omega^{i})(X_{B})E_{i}(\bmu_{C})
    +\omega^{j}(X_{A})\omega^{i}(\bD_{E_{j}}X_{B})E_{i}(\bmu_{C})\label{eq:XAXBmuCexpansion}\\
    & +\omega^{i}(X_{B})\omega^{j}(X_{A})E_{j}E_{i}(\bmu_{C}).\nonumber
  \end{align}
  Combining (\ref{eq:XAmuBexpansion}) with (\ref{eq:bDbfAellAetcpteststmtEi}), it follows that $|E_{\bfI}X_{A}(\bmu_{B})|$ can (up to constants depending only
  on $c_{\robas}$, $(\bM,\bge_{\refer})$ and $|\bfI|$) be estimated by a linear combination of terms of the form
  \[
  |\bD^{m_{1}}\mK|_{\bge_{\refer}}\cdots |\bD^{m_{k}}\mK|_{\bge_{\refer}}|E_{\bfJ}E_{i}\bmu_{B}|,
  \]
  where $m_{1}+\dots+m_{k}+|\bfJ|\leq |\bfI|$ and $m_{i}\neq 0$. Similarly, $|E_{\bfI}X_{A}X_{B}(\bmu_{C})|$ can (up to constants depending only on $c_{\robas}$,
  $(\bM,\bge_{\refer})$ and $|\bfI|$) be estimated by a linear combination of terms of the form
  \begin{equation}\label{eq:EbfIXAXBbmuCdecompterms}
    |\bD^{m_{1}}\mK|_{\bge_{\refer}}\cdots |\bD^{m_{k}}\mK|_{\bge_{\refer}}|E_{\bfJ}E_{i}\bmu_{B}|,
  \end{equation}
  where $m_{1}+\dots+m_{k}+|\bfJ|\leq |\bfI|+1$ and $m_{i}\neq 0$. Combining these observations with (\ref{eq:EbfIbmuAmuAest}) and the assumptions
  yields the conclusion that for $|\bfI|\leq l$ and $|\bfJ|+1\leq l$, 
  \begin{equation}\label{eq:EbfIXAbmuBXAXBbmuC}
    \ldr{\varrho}^{-(|\bfI|+2)\cweight}|E_{\bfI}X_{A}(\bmu_{B})|\leq C_{a}\ldr{\varrho},\ \ \
    \ldr{\varrho}^{-(|\bfJ|+3)\cweight}|E_{\bfJ}X_{A}X_{B}(\bmu_{C})|\leq C_{a}\ldr{\varrho}
  \end{equation}
  on $M_{-}$, where $C_{a}$ only depends on $c_{\cweight,l}$ and $(\bM,\bge_{\refer})$. 

  Next, we need to estimate derivatives of the structure constants. However, combining (\ref{eq:gammaCABLeviCivita}) and
  (\ref{eq:bDbfAellAetcpteststmtEi}), it can be deduced that
  \[
  |E_{\bfI}(\g^{A}_{BC})|\leq C_{\g}\textstyle{\sum}_{m=1}^{j+1}\mfP_{\mK,m},
  \]
  where the notation $\mfP_{\mK,m}$ is introduced in Definition~\ref{def:mfPmKhN}; $j:=|\bfI|$; and $C_{\g}$ only depends on $c_{\robas}$, $j$ and
  $(\bM,\bge_{\refer})$. Due to reformulations such as the one made
  in connection with (\ref{eq:XDgammaCABreformulatedterm}), it can similarly be deduced that 
  \begin{equation}\label{eq:EbfIXDgABCitomK}
    |E_{\bfI}[X_{D}(\g^{A}_{BC})]|\leq C_{\g}\textstyle{\sum}_{m=1}^{j+2}\mfP_{\mK,m},
  \end{equation}
  where $j:=|\bfI|$; and $C_{\g}$ only depends on $c_{\robas}$, $j$ and $(\bM,\bge_{\refer})$. Combining these estimates with the assumptions yields
  \begin{equation}\label{eq:EbfIgABCEbfIXDgABC}
    \ldr{\varrho}^{-(|\bfI|+1)\cweight}|E_{\bfI}(\g^{A}_{BC})|\leq C_{a},\ \ \
    \ldr{\varrho}^{-(|\bfJ|+2)\cweight}|E_{\bfJ}[X_{D}(\g^{A}_{BC})]|\leq C_{b}
  \end{equation}
  on $M_{-}$ for $|\bfI|\leq l$ and $|\bfJ|+1\leq l$, where $C_{a}$ and $C_{b}$ only depend on $c_{\cweight,l}$ and $(\bM,\bge_{\refer})$. 

  \textbf{Estimating terms of type one.} Applying $E_{\bfI}$ to a term of type one yields a linear combination of terms of the form
  \[
  e^{h}E_{\bfI_{1}}(h)\cdots E_{\bfI_{m}}(h)E_{\bfJ}X_{A}(f),
  \]
  where $\bfI_{i}\neq 0$, $i=1,\dots,m$, and $|\bfI_{1}|+\dots+|\bfI_{m}|+|\bfJ|=|\bfI|$. On the other hand, a term of this form can be estimated by
  \begin{equation}\label{eq:termstobeestimatedtypeoneterm}
    \begin{split}
      & \ldr{\varrho}^{-(|\bfI|+3)\cweight}e^{h}|E_{\bfI_{1}}(h)\cdots E_{\bfI_{m}}(h) E_{\bfJ}X_{A}(f)|\\
      \leq & e^{h}\ldr{\varrho}^{-|\bfI_{1}|\cweight}|E_{\bfI_{1}}(h)|\cdots \ldr{\varrho}^{-|\bfI_{m}|\cweight}|E_{\bfI_{m}}(h)|
      \ldr{\varrho}^{-(|\bfJ|+3)\cweight}|E_{\bfJ}X_{A}(f)|.
    \end{split}
  \end{equation}
  In case $f=X_{B}(\bmu_{C})$ and $|\bfI|\leq l-1$, then the right hand side can be estimated by
  \[
  C_{a}e^{h}\ldr{\varrho}^{(l-1)\cweight+l}\leq C_{b}\theta_{0,-}^{-2}\ldr{\varrho}^{(l-1)\cweight+l}e^{2\e_{R}\varrho}
  \]
  on $M_{-}$, where $\e_{R}$ is introduced in the statement of the lemma; $C_{a}$ only depends on $c_{\cweight,l}$ and $(\bM,\bge_{\refer})$; and
  $C_{b}$ only depends on $c_{\cweight,l}$, $\e_{\que}$, $K_{q}$ and $(\bM,\bge_{\refer})$. In order to arrive at this conclusion, we appealed to
  (\ref{eq:muABCdiffest}), (\ref{eq:EbfIbmuAmuAest}), (\ref{eq:EbfIXAbmuBXAXBbmuC}) and (\ref{eq:muminmainlowerbound}). In case $f=\g^{B}_{CD}$ and
  $|\bfI|\leq l-1$, the right hand side of (\ref{eq:termstobeestimatedtypeoneterm}) satisfies a slightly better estimate due to
  (\ref{eq:EbfIgABCEbfIXDgABC}); the power of $\ldr{\varrho}$ can be decreased. 

  \textbf{Estimating terms of type two.} Applying $E_{\bfI}$ to a term of type two yields a linear combination of terms of the form
  \[
  e^{h}E_{\bfI_{1}}(h)\cdots E_{\bfI_{m}}(h) E_{\bfJ}(f_{1})E_{\bfK}(f_{2}),
  \]
  where $\bfI_{i}\neq 0$, $i=1,\dots,m$, and $|\bfI_{1}|+\dots+|\bfI_{m}|+|\bfJ|+|\bfK|=|\bfI|$. On the other hand,
  \begin{equation}\label{eq:termstobeestimatedtypetwoterm}
    \begin{split}
      & \ldr{\varrho}^{-(|\bfI|+3)\cweight}e^{h}|E_{\bfI_{1}}(h)\cdots E_{\bfI_{m}}(h) E_{\bfJ}(f_{1})E_{\bfK}(f_{2})|\\
      \leq & e^{h}\ldr{\varrho}^{-|\bfI_{1}|\cweight}|E_{\bfI_{1}}(h)|\cdots \ldr{\varrho}^{-|\bfI_{m}|\cweight}|E_{\bfI_{m}}(h)|
      \ldr{\varrho}^{-(|\bfJ|+1)\cweight}|E_{\bfJ}(f_{1})|\ldr{\varrho}^{-(|\bfK|+2)\cweight}|E_{\bfK}(f_{2})|.
    \end{split}
  \end{equation}
  Regardless of what the $f_{i}$ are, an argument similar to the above yields the conclusion that if $|\bfI|\leq l-1$, the right hand side can be
  estimated by
  \[
  C_{a}\theta_{0,-}^{-2}\ldr{\varrho}^{l\cweight+l+1}e^{2\e_{R}\varrho}
  \]
  on $M_{-}$, where $\e_{R}$ is introduced in the statement of the lemma and $C_{a}$ only depends on $c_{\cweight,l}$, $\e_{\que}$, $K_{q}$ and
  $(\bM,\bge_{\refer})$. 

  \textbf{Estimating the Ricci curvature.} Summarising the above,
  \begin{equation}\label{eq:estofRiccicoeff}
    \textstyle{\sum}_{|\bfI|\leq l-1}\ldr{\varrho}^{-(|\bfI|+3)\cweight}|E_{\bfI}(\theta^{-2}\bR^{B}_{\phantom{B}C})|
    \leq C_{a}\theta_{0,-}^{-2}\ldr{\varrho}^{l\cweight+l+1}e^{2\e_{R}\varrho}
  \end{equation}
  on $M_{-}$, where $C_{a}$ only depends on $c_{\cweight,l}$, $\e_{\que}$, $K_{q}$ and $(\bM,\bge_{\refer})$. On the other hand,
  \[
  \bmR=\textstyle{\sum}_{B,C}\theta^{-2}\bR^{B}_{\phantom{B}C}X_{B}\otimes Y^{C}.
  \]
  Thus $\bD_{\bfI}\bmR$ can be written as a sum of terms of the form
  \[
  E_{\bfJ}(\theta^{-2}\bR^{B}_{\phantom{B}C})(\bD_{\bfK}X_{B})\otimes (\bD_{\bfL}Y^{C}),
  \]
  where $|\bfJ|+|\bfK|+|\bfL|=|\bfI|$. Combining this estimate with (\ref{eq:bDbfAellAetcpteststmtEi}), (\ref{eq:estofRiccicoeff}) and the assumptions yields
  \begin{equation}\label{eq:estofRicciframe}
    \textstyle{\sum}_{|\bfI|\leq l-1}\ldr{\varrho}^{-(|\bfI|+3)\cweight}|D_{\bfI}\bmR|_{\bge_{\refer}}
    \leq C_{a}\theta_{0,-}^{-2}\ldr{\varrho}^{l\cweight+l+1}e^{2\e_{R}\varrho}
  \end{equation}
  on $M_{-}$, where $C_{a}$ only depends on $c_{\cweight,l}$, $\e_{\que}$, $K_{q}$ and $(\bM,\bge_{\refer})$. Combining this estimate with
  Lemma~\ref{lemma:bDbfAbDkequiv} yields the conclusion that (\ref{eq:estofRicciqugeo}) holds.
\end{proof}

\subsection{Revisiting the assumptions}

At this stage, we can revisit the assumptions of Lemma~\ref{lemma:muAmmuBmmuCrelquiescent}. In particular, we obtain improvements of some of the
assumptions, so that this subsection can be thought of as giving a partial improvement of the bootstrap assumptions. 

\begin{lemma}
  Let $0\leq\cweight\in\ro$ and assume that the standard assumptions; see Definition~\ref{def:standardassumptions}; as well as the
  $(\cweight,1)$-supremum assumptions are fulfilled. Assume that Einstein's equations with a cosmological constant $\Lambda$ (\ref{eq:EE}) are
  satisfied. Assume, moreover, that there are constants $K_{q}$, $C_{\rho,0}$, $\e_{\rho}>0$ and $0<\e_{\que}<1$ such that (\ref{eq:closetostiff}) and
  (\ref{eq:qexpconvergence}) hold, and an $\e_{p}\in (0,1)$ such that (\ref{eq:quiescentregime}) holds.  Then there is a constant $L_{q}$ such that
  \begin{equation}\label{eq:qexpconvimprove}
    |q-(n-1)|\leq L_{q}\ldr{\varrho}^{2(2\cweight+1)}e^{2\de_{q}\varrho}
  \end{equation}
  on $M_{-}$, where $L_{q}=L_{q,0}\theta_{0,-}^{-2}$ and $L_{q,0}$ only depends on $c_{\cweight,1}$, $C_{\rho,0}$, $K_{q}$, $\Lambda$, $\e_{\que}$
  and $(\bM,\bge_{\refer})$. Moreover, $\de_{q}:=\min\{\e_{\Spe},\e_{\rho},\e_{p}\}$. Finally, there is a constant $K_{H}$ such that
  \begin{equation}\label{eq:Hamconstrasympt}
    \left|2\Omega+\textstyle{\sum}_{A}\ell_{A}^{2}-1\right|\leq K_{H}\ldr{\varrho}^{2(2\cweight+1)}e^{2\e_{R}\varrho}
  \end{equation}
  on $M_{-}$, where $K_{H}=K_{H,0}\theta_{0,-}^{-2}$ and $K_{H,0}$ only depends on $c_{\cweight,1}$, $K_{q}$, $\Lambda$, $\e_{\que}$ and
  $(\bM,\bge_{\refer})$. Moreover, $\e_{R}=\min\{\e_{\Spe},\e_{p}\}$.
\end{lemma}
\begin{remark}
  In Lemmas~\ref{lemma:muAmmuBmmuCrelquiescent} and \ref{lemma:CkRicciestimates}, it is sufficient to assume that (\ref{eq:qexpconvergence}) holds
  for some $\e_{\que}>0$. In particular, $\e_{\que}$ could be strictly smaller than $\de_{q}$. In this sense, (\ref{eq:qexpconvimprove})
  represents an improvement relative to the original assumption. 
\end{remark}
\begin{remark}
  A similar conclusion holds in case we replace the requirement that (\ref{eq:qexpconvergence}) and (\ref{eq:quiescentregime}) hold with the requirement
  that $\g^{A}_{BC}=0$ for all $A<\min\{B,C\}$. This follows by appealing to Remark~\ref{remark:Ckscalarcurvatureestimatesymmetry} and arguments similar
  to the proof below, keeping in mind that $\mSb_{b}=0$ in this setting. However, in this case, $\de_{q}$ can be replaced by $\min\{\e_{\Spe},\e_{\rho}\}$
  in (\ref{eq:qexpconvimprove}); and $\e_{R}$ can be replaced by $\e_{\Spe}$ in (\ref{eq:Hamconstrasympt}). Moreover, the constants depend on neither
  $K_{q}$ nor $\e_{\que}$. 
\end{remark}
\begin{proof}
  Due to the assumptions, (\ref{eq:ninvoneplusqfirstestimate}) holds. Combining this estimate with Remark~\ref{remark:Ckscalarcurvatureestimate}
  yields (\ref{eq:qexpconvimprove}). Combining Remark~\ref{remark:Ckscalarcurvatureestimate} with (\ref{eq:reformulatedandrenormalisedHamcon}) and
  (\ref{eq:OmegaLambdaestimate}) yields (\ref{eq:Hamconstrasympt}). 
\end{proof}

\subsection{Estimating the contribution from the lapse function}

We wish to estimate the right hand side of (\ref{eq:mlUmKwithEinstein}). Due to Lemma~\ref{lemma:CkRicciestimates} and
Remark~\ref{remark:Ckscalarcurvatureestimate}, we can already estimate the terms arising from the spatial curvature. Nevertheless, we also need to
estimate the contribution from the lapse function. Define, to this end, the family $\bmN$ of $(1,1)$-tensor fields on $\bM$ by 
\begin{equation}\label{eq:bmNAA def}
  \bmN^{A}_{\phantom{A}B}:=\theta^{-2}N^{-1}\bnabla^{A}\bnabla_{B}N.
\end{equation}
\begin{lemma}\label{lemma:Cklapseestimates}
  Let $0\leq\cweight\in\ro$ and $1\leq l\in\zo$ and assume that the standard assumptions; see Definition~\ref{def:standardassumptions}; as well as the
  $(\cweight,l)$-supremum assumptions are fulfilled. Assume, moreover, that there are constants $K_{q}$ and $0<\e_{\que}<1$ such that
  (\ref{eq:qexpconvergence}) holds and an $\e_{p}\in (0,1)$ such that (\ref{eq:quiescentregime}) holds. Then there is a constant $K_{\mN,l}$ such that
  \begin{align}
    \textstyle{\sum}_{k\leq l-1}\ldr{\varrho}^{-(k+3)\cweight}|\bD^{k}\bmN|_{\bge_{\refer}} \leq &
    K_{\mN,l}\ldr{\varrho}^{l\cweight+l+1}e^{2\e_{R}\varrho}\label{eq:estoflapsequgeo}
  \end{align}
  on $M_{-}$, where $\e_{R}:=\min\{\e_{p},\e_{\Spe}\}$, $K_{\mN,l}=K_{\mN,l,0}\theta_{0,-}^{-2}$ and $K_{\mN,l,0}$ only depends on $c_{\cweight,l}$, $\e_{\que}$,
  $K_{q}$ and $(\bM,\bge_{\refer})$. 
\end{lemma}
\begin{remark}\label{remark:Cklapseestimatessymmetry}
  A similar conclusion holds in case we replace the requirement that (\ref{eq:qexpconvergence}) and (\ref{eq:quiescentregime}) hold with the requirement
  that $\g^{A}_{BC}=0$ for all $A<\min\{B,C\}$. Moreover, in that case we can replace $\e_{R}$ with $\e_{\Spe}$ in (\ref{eq:estoflapsequgeo}), and the
  constant depends on neither $K_{q}$ nor $\e_{\que}$. 
\end{remark}
\begin{proof}
  Since the proof of the statement is quite similar to the proof of Lemma~\ref{lemma:CkRicciestimates} (keeping
  (\ref{eq:bnablaABnormN})--(\ref{eq:normalisedDeltaNthroughN}) in mind), we leave the details to the reader. 
\end{proof}

\subsection{Estimating the Lie derivative of $\mK$}

Combining Lemma~\ref{lemma:CkRicciestimates} and Lemma~\ref{lemma:Cklapseestimates} with (\ref{eq:mlUmKwithEinstein}) and assumptions concerning
the matter yields an estimate for $\hml_{U}\mK$. In practice, we are interested in two different situations: vacuum and scalar field matter. 
Let us begin by considering the vacuum setting. Due to the comments made in Subsection~\ref{ssection:quiescentregimes}, we must then have $n\geq 10$. 

\begin{thm}\label{thm:CkestofhmlUmK}
  Let $0\leq\cweight\in\ro$, $n\geq 10$, $1\leq l\in\zo$ and assume that the standard assumptions; see Definition~\ref{def:standardassumptions}; as
  well as the $(\cweight,l)$-supremum assumptions are fulfilled. In particular, the spacetime is $n+1$-dimensional. Assume, moreover, that there are
  constants $K_{q}$ and $0<\e_{\que}<1$ such that (\ref{eq:qexpconvergence}) holds and an $\e_{p}\in (0,1)$ such that (\ref{eq:quiescentregime}) holds.
  Assume, finally, that Einstein's vacuum equations with a cosmological constant $\Lambda$ are satisfied; i.e., (\ref{eq:EE}) holds with $T=0$. Then there
  is a constant $K_{\mK,l}$ such that
  \begin{align}
    \textstyle{\sum}_{k\leq l-1}\ldr{\varrho}^{-(k+3)\cweight}|\bD^{k}\hml_{U}\mK|_{\bge_{\refer}} \leq &
    K_{\mK,l}\ldr{\varrho}^{l\cweight+l+1}e^{2\e_{R}\varrho}\label{eq:estofhmlUmKequgeo}
  \end{align}
  on $M_{-}$, where $\e_{R}:=\min\{\e_{p},\e_{\Spe}\}$, $K_{\mK,l}=K_{\mK,l,0}\theta_{0,-}^{-2}$ and $K_{\mK,l,0}$ only depends on $c_{\cweight,l}$, $\e_{\que}$,
  $K_{q}$, $\Lambda$ and $(\bM,\bge_{\refer})$. Moreover, there is a constant $K_{q,l}$ such that
  \begin{align}
    \textstyle{\sum}_{k\leq l-1}\ldr{\varrho}^{-(k+3)\cweight}|\bD^{k}[q-(n-1)]|_{\bge_{\refer}} \leq &
    K_{q,l}\ldr{\varrho}^{l\cweight+l+1}e^{2\e_{R}\varrho}\label{eq:estofqminusnminusoneequgeo}
  \end{align}
  on $M_{-}$, where $\e_{R}:=\min\{\e_{p},\e_{\Spe}\}$, $K_{q,l}=K_{q,l,0}\theta_{0,-}^{-2}$ and $K_{q,l,0}$ only depends on $c_{\cweight,l}$, $\e_{\que}$,
  $K_{q}$, $\Lambda$ and $(\bM,\bge_{\refer})$.
\end{thm}
\begin{remark}\label{remark:improvementhmlUmKest}
  Note that $\hml_{U}\mK$ and the derivatives of $q$ growing polynomially is consistent with the assumptions. However, due to (\ref{eq:estofhmlUmKequgeo})
  and (\ref{eq:estofqminusnminusoneequgeo}), we conclude that these objects decay exponentially. In that sense, the estimates (\ref{eq:estofhmlUmKequgeo})
  and (\ref{eq:estofqminusnminusoneequgeo}) represent an improvement of the assumptions. Needless to say, this observation is of interest in the context of
  a bootstrap argument. Unfortunately, there is a loss of two derivatives in the process. This necessitates separate arguments to close the bootstrap;
  see Subsection~\ref{ssection:summary discussion ex} for a discussion. 
\end{remark}
\begin{remark}\label{remark:CkestofhmlUmKsymmetry}
  A similar conclusion holds in case we replace the requirement that (\ref{eq:qexpconvergence}) and (\ref{eq:quiescentregime}) hold with the requirement
  that $\g^{A}_{BC}=0$ for all $A<\min\{B,C\}$. This follows by appealing to Remarks~\ref{remark:Ckscalarcurvatureestimatesymmetry} and
  \ref{remark:Cklapseestimatessymmetry} and an argument similar to the proof below. Moreover, in this case, $\e_{R}$ can be replaced by $\e_{\Spe}$ in
  (\ref{eq:estofhmlUmKequgeo}) and (\ref{eq:estofqminusnminusoneequgeo}). Finally, the constants depend on neither $K_{q}$ nor $\e_{\que}$. 
\end{remark}
\begin{proof}
  In the vacuum setting, (\ref{eq:mlUmKwithEinstein}) can be written
  \begin{equation}\label{eq:hmlUvacuumsetting}
    \begin{split}
      \hml_{U}\mK =  & -\left(\tfrac{\Delta_{\bge}N}{\theta^{2}N}+\tfrac{2n}{n-1}\tfrac{\Lambda}{\theta^{2}}
      -\tfrac{\bS}{\theta^{2}}\right)\mK+\tfrac{2}{n-1}\tfrac{\Lambda}{\theta^{2}}\mathrm{Id}+\bmN-\bmR.
    \end{split}
  \end{equation}
  The last two terms on the right hand side can be estimated by appealing to Lemmas~\ref{lemma:CkRicciestimates} and \ref{lemma:Cklapseestimates}.
  In order to estimate the third last term, note that $\bD_{\bfI}(\theta^{-2}\Lambda\mathrm{Id})$ can be written as a linear combination of terms of the form
  \[
  \theta^{-2}E_{\bfI_{1}}\ln\theta\cdots E_{\bfI_{k}}\ln\theta\cdot\Lambda\mathrm{Id},
  \]
  where $|\bfI_{1}|+\dots+|\bfI_{k}|=|\bfI|$ and $\bfI_{i}\neq 0$. Due to (\ref{eq:OmegaLambdaestimate}) and the assumptions, it follows that for
  $|\bfI|\leq l$, 
  \[
  \ldr{\varrho}^{-|\bfI|\cweight}|\bD_{\bfI}(\theta^{-2}\Lambda\mathrm{Id})|_{\bge_{\refer}}\leq C\theta_{0,-}^{-2}e^{2\e_{\Spe}\varrho}
  \]
  on $M_{-}$, where $C$ only depends on $c_{\cweight,l}$, $\Lambda$ and $(\bM,\bge_{\refer})$. The terms appearing in the first factor of the first term on the
  right hand side of (\ref{eq:hmlUvacuumsetting}) can be estimated similarly to the above; note that
  \begin{equation}\label{eq:DeltabgeN bmR}
    \tfrac{\Delta_{\bge}N}{\theta^{2}N}=\mathrm{tr}\bmN,\ \ \
  \theta^{-2}\bS=\mathrm{tr}\bmR.
  \end{equation}
  Combining the above observations with the assumptions and (\ref{eq:Clmclequiv}) yields (\ref{eq:estofhmlUmKequgeo}). Due to (\ref{eq:qmainformula})
  the proof of (\ref{eq:estofqminusnminusoneequgeo}) is similar but simpler. 
\end{proof}

\subsection{Convergence of $\mK$}

As noted in Remark~\ref{remark:improvementhmlUmKest}, the estimate (\ref{eq:estofhmlUmKequgeo}) in some respects represents an improvement of the 
assumptions. By integrating this estimate, it can also be demonstrated that $\mK$ converges, which represents an improvement of the assumption that 
the derivatives of $\mK$ grow at most polynomially. However, the improvement is, again, associated with a loss of derivatives. 

\begin{thm}\label{thm:mKconvCk}
  Let $0\leq\cweight\in\ro$, $n\geq 10$, $1\leq l\in\zo$ and assume that the standard assumptions; see Definition~\ref{def:standardassumptions}; as
  well as the $(\cweight,l)$-supremum assumptions are fulfilled. In particular, the spacetime is $n+1$-dimensional. Assume, moreover, that there are
  constants $K_{q}$ and $0<\e_{\que}<1$ such that (\ref{eq:qexpconvergence}) holds and an $\e_{p}\in (0,1)$ such that (\ref{eq:quiescentregime}) holds.
  Assume, finally, that Einstein's vacuum equations with a cosmological constant $\Lambda$ are satisfied; i.e., (\ref{eq:EE}) holds with $T=0$; and
  that $\varrho$ diverges uniformly to $-\infty$ in the direction of the singularity. Then there
  is a $(1,1)$-tensor field $\mK_{\infty}$ and functions $\ell_{A,\infty}$ on $\bM$ which are $C^{l-1}$, and constants $K_{\infty,i,l}$, $i=0,1$, such that
  \begin{equation}\label{eq:mKconvtomKinf}
    \begin{split}
      & \|\mK(\cdot,\tau)-\mK_{\infty}\|_{C^{l-1}(\bM)}+\textstyle{\sum}_{A}\|\ell_{A}(\cdot,\tau)-\ell_{A,\infty}\|_{C^{l-1}(\bM)}\\
      \leq & K_{\infty,0,l}\theta_{0,-}^{-2}\ldr{\tau}^{(l+1)(2\cweight+1)}e^{2\vare_{R}\tau}+K_{\infty,1,l}\ldr{\tau}^{l\cweight}e^{\vare_{\Spe}\tau}
    \end{split}
  \end{equation}
  for all $\tau\leq 0$, where $\tau$ is the time coordinate introduced in (\ref{eq:taudefinition}); $\vare_{R}:=\e_{R}/(3K_{\rovar})$;
  $\vare_{\Spe}:=\e_{\Spe}/(3K_{\rovar})$; $K_{\rovar}$ is introduced in the statement of Lemma~\ref{lemma:smallnessshiftconsequences};
  $K_{\infty,0,l}$ only depends on $c_{\cweight,l}$, $\e_{\que}$,
  $K_{q}$, $\e_{p}$, $\Lambda$ and $(\bM,\bge_{\refer})$; and $K_{\infty,1,l}$ only depends on $c_{\cweight,l}$ and $(\bM,\bge_{\refer})$.
\end{thm}
\begin{remark}
  A similar conclusion holds in case we replace the requirement that (\ref{eq:qexpconvergence}) and (\ref{eq:quiescentregime}) hold with the requirement
  that $\g^{A}_{BC}=0$ for all $A<\min\{B,C\}$. This follows by appealing to Remark~\ref{remark:CkestofhmlUmKsymmetry} and an argument similar to the proof
  below. However, in this case, $\vare_{R}$ can be replaced by $\vare_{\Spe}$ in (\ref{eq:mKconvtomKinf}).
  Moreover, the constants depend on neither $K_{q}$, $\e_{\que}$ nor $\e_{p}$. 
\end{remark}
\begin{proof}
  Note, to begin with, that with respect to frames $\{E_{i}\}$ and $\{\omega^{i}\}$ as in Remark~\ref{remark:globalframe},
  \[
  (\hml_{U}\mK)^{i}_{\phantom{i}j}=\hN^{-1}\d_{t}(\mK^{i}_{\phantom{i}j})-\hN^{-1}(\ml_{\chi}\mK)^{i}_{\phantom{i}j},
  \]
  where we appealed to (\ref{eq:hmlUmtinfixedspatialcoord}). Introducing a time coordinate according to (\ref{eq:taudefinition}), it follows that
  \begin{equation}\label{eq:dtaumKijform}
    \d_{\tau}(\mK^{i}_{\phantom{i}j})=\xi (\hml_{U}\mK)^{i}_{\phantom{i}j}+\xi\hN^{-1}(\ml_{\chi}\mK)^{i}_{\phantom{i}j},
  \end{equation}
  where $\xi:=(\d_{t}\tau)^{-1}\hN$. Next, we apply $E_{\bfI}$ to the constituents of the right hand side. Applying $E_{\bfI}$ to $\xi$
  yields a linear combination of terms of the form
  \begin{equation}\label{eq:EbfIphiterms}
    \xi\cdot E_{\bfI_{1}}(\ln\hN)\cdots E_{\bfI_{k}}(\ln\hN),
  \end{equation}
  where $|\bfI_{1}|+\dots+|\bfI_{k}|=|\bfI|$ and $\bfI_{j}\neq 0$. However, $\xi$ is bounded due to (\ref{eq:hNtaudotequivEi}). Combining this
  observation with the assumptions and (\ref{eq:Clmclequiv}) yields the conclusion that if $|\bfI|\leq l$, then
  \begin{equation}\label{eq:EbfIphiweightedCkestimate}
    \ldr{\varrho}^{-|\bfI|\cweight}|E_{\bfI}(\xi)|\leq K
  \end{equation}
  on $M_{-}$, where $K$ only depends on $c_{\cweight,l}$ and $(\bM,\bge_{\refer})$. Next, note that
  \[
  \ldr{\varrho}^{-(|\bfI|+3)\cweight}|E_{\bfI}[(\hml_{U}\mK)^{i}_{\phantom{i}j}]|
  \leq C\textstyle{\sum}_{k\leq |\bfI|}\ldr{\varrho}^{-(k+3)\cweight}|\bD^{k}\hml_{U}\mK|_{\bge_{\refer}}
  \]
  where $C$ only depends on $|\bfI|$ and $(\bM,\bge_{\refer})$. Combining this estimate with (\ref{eq:estofhmlUmKequgeo}) and the previously derived
  estimate for $\xi$ yields the conclusion that if $|\bfI|\leq l-1$, then
  \[
  \ldr{\varrho}^{-(|\bfI|+3)\cweight}|E_{\bfI}[\xi(\hml_{U}\mK)^{i}_{\phantom{i}j}]|
  \leq K_{l}\ldr{\varrho}^{l\cweight+l+1}e^{2\e_{R}\varrho}
  \]
  on $M_{-}$, where $K_{l}=K_{l,0}\theta_{0,-}^{-2}$, and $K_{l,0}$ only depends on $c_{\cweight,l}$, $\e_{\que}$, $K_{q}$, $\Lambda$ and $(\bM,\bge_{\refer})$.
  Next, note that
  \begin{equation}\label{eq:mlchimKformforest}
    (\ml_{\chi}\mK)^{i}_{\phantom{i}j}=(\bD_{\chi}\mK)^{i}_{\phantom{i}j}-\mK^{m}_{\phantom{m}j}\omega^{i}(\bD_{E_{m}}\chi)+\mK^{i}_{\phantom{i}m}\omega^{m}(\bD_{E_{j}}\chi).
  \end{equation}
  Thus
  \[
  |E_{\bfI}[(\ml_{\chi}\mK)^{i}_{\phantom{i}j}]|\leq C\textstyle{\sum}_{l_{a}+|\bfI_{b}|\leq |\bfI|+1}|\bD^{l_{a}}\mK|_{\bge_{\refer}}|\bD_{\bfI_{b}}\chi|_{\bge_{\refer}},
  \]
  where $C$ only depends on $|\bfI|$ and $(\bM,\bge_{\refer})$. Combining this estimate with Remark~\ref{remark:chiclvarrhodecay} and the assumed bounds on
  $\chi$ and $\mK$ yields the conclusion that if $|\bfI|\leq l$, then
  \[
  \hN^{-1}\ldr{\varrho}^{-(|\bfI|+1)\cweight}|E_{\bfI}[(\ml_{\chi}\mK)^{i}_{\phantom{i}j}]|\leq Ce^{\e_{\Spe}\varrho}
  \]
  on $M_{-}$, where $C$ only depends on $c_{\cweight,l}$ and $(\bM,\bge_{\refer})$. Since $\xi\hN^{-1}$ is independent of the spatial variable, combining
  this estimate with arguments similar to the above yields
  \[
  \ldr{\varrho}^{-(|\bfI|+1)\cweight}|E_{\bfI}[\xi\hN^{-1}(\ml_{\chi}\mK)^{i}_{\phantom{i}j}]|\leq Ce^{\e_{\Spe}\varrho}
  \]
  on $M_{-}$, where $C$ only depends on $c_{\cweight,l}$ and $(\bM,\bge_{\refer})$. Summing up the above yields the conclusion that if $|\bfI|\leq l-1$,
  \[
  |E_{\bfI}[\d_{\tau}(\mK^{i}_{\phantom{i}j})]|\leq K_{l}\ldr{\varrho}^{(l+1)(2\cweight+1)}e^{2\e_{R}\varrho}+C\ldr{\varrho}^{l\cweight}e^{\e_{\Spe}\varrho}
  \]
  on $M_{-}$. Since $E_{\bfI}$ and $\d_{\tau}$ commute, and since (\ref{eq:DeltavarrhorelvariationEiintro}) holds, this estimate can be integrated in order
  to conclude that
  $\mK$ converges to a limit $\mK_{\infty}$ in $C^{l-1}(\bM)$ and that $\mK-\mK_{\infty}$ can be estimated by the right hand side of (\ref{eq:mKconvtomKinf})
  in $C^{l-1}(\bM)$.

  Turning to the $\ell_{A}$, note that by arguments similar to the ones presented at the beginning of the proof,
  \begin{equation}\label{eq:dtauellA}
    \d_{\tau}\ell_{A}=\xi\hU(\ell_{A})+\xi\hN^{-1}\chi(\ell_{A})=\xi(\hml_{U}\mK)(Y^{A},X_{A})+\xi\hN^{-1}\chi[\mK(Y^{A},X_{A})],
  \end{equation}
  where there is no summation on the right hand side and we appealed to (\ref{eq:hU ellA}). Applying $E_{\bfI}$ to the first term on the far right
  hand side of (\ref{eq:dtauellA}) yields a linear combination of terms of the form
  \begin{equation}\label{eq:dtauellAfirstterm}
    E_{\bfI_{1}}(\xi)(\bD_{\bfI_{2}}\hml_{U}\mK)(\bD_{\bfI_{3}}Y^{A},\bD_{\bfI_{4}}X_{A}),
  \end{equation}
  where $|\bfI|=|\bfI_{1}|+\dots+|\bfI_{4}|$. However, combining Lemma~\ref{lemma:bDbfAbDkequiv} and (\ref{eq:bDbfAellAetcpteststmtEi}),
  \[
  |(\bD_{\bfI_{2}}\hml_{U}\mK)(\bD_{\bfI_{3}}Y^{A},\bD_{\bfI_{4}}X_{A})|
  \]
  can be estimated by a linear combination of terms of the form
  \[
  |\bD^{m_{1}}\mK|_{\bge_{\refer}}\cdots |\bD^{m_{j}}\mK|_{\bge_{\refer}}|\bD^{k}\hml_{U}\mK|_{\bge_{\refer}},
  \]
  where the implicit constants only depend on $c_{\robas}$, $|\bfI|$ and $(\bM,\bge_{\refer})$; $m_{p}\neq 0$; and
  $m_{1}+\dots+m_{j}+k\leq |\bfI_{2}|+|\bfI_{3}|+|\bfI_{4}|$. Combining this observation with the assumptions, (\ref{eq:estofhmlUmKequgeo}) and
  (\ref{eq:EbfIphiweightedCkestimate}) yields the conclusion that expressions of the form (\ref{eq:dtauellAfirstterm}) can be estimated by
  \[
  C_{a}\theta_{0,-}^{-2}\ldr{\varrho}^{(l+1)(2\cweight+1)}e^{2\e_{R}\varrho},
  \]
  assuming $|\bfI|\leq l-1$, where $C_{a}$ only depends on $c_{\cweight,l}$, $\e_{\que}$, $K_{q}$, $\Lambda$ and $(\bM,\bge_{\refer})$. Next, consider
  the second term on the right hand side of (\ref{eq:dtauellA}). Applying $E_{\bfI}$ to this expression and noting that $\xi\hN^{-1}$ is independent
  of the spatial coordinates, it is clear (by arguments similar to the above) that it is sufficient to estimate
  \[
  \xi\hN^{-1}|\bD_{\bfI_{1}}\chi|_{\bge_{\refer}}|\bD^{m_{1}}\mK|_{\bge_{\refer}}\cdots |\bD^{m_{j}}\mK|_{\bge_{\refer}}
  \]
  where $|\bfI_{1}|+m_{1}+\dots+m_{j}\leq |\bfI|+1$. Combining this observation with arguments similar to the above yields the conclusion that there
  are $\ell_{A,\infty}\in C^{l-1}(\bM)$ such that (\ref{eq:mKconvtomKinf}) holds. 
\end{proof}

\section{Sobolev estimates in the quiescent setting}\label{section:Sobestquiescentsetting}

In what follows, we wish to derive Sobolev estimates for $\hml_{U}\mK$. To this end, we need to estimate all the terms on the right hand side of
(\ref{eq:hmlUvacuumsetting}). However, due to (\ref{eq:hmlUvacuumsetting}), (\ref{eq:DeltabgeN bmR})  and assumptions as in
Theorems~\ref{thm:SobestimatesQuiescentVacuum} and \ref{thm:SobEstimatesQuiescentMatter}, it follows that the
main difficulty is to  estimate $\bmR$ and $\bmN$. We begin by estimating the spatial curvature. 

\begin{lemma}\label{lemma:SobRicciestimates}
  Let $0\leq\cweight\in\ro$, $1\leq l\in\zo$ and assume that the standard assumptions; see Definition~\ref{def:standardassumptions}; the
  $(\cweight,l)$-Sobolev assumptions; and the $(\cweight,1)$-supremum assumptions are fulfilled. Assume, moreover, that there are constants $K_{q}$
  and $0<\e_{\que}<1$ such that (\ref{eq:qexpconvergence}) holds and an $\e_{p}\in (0,1)$ such that (\ref{eq:quiescentregime}) holds.
  Then there is a constant $K_{\mR,l}$ such that
  \begin{align}
    \|\bmR(\cdot,\tau)\|_{H^{l-1}_{\weight_{1}}(\bM)} \leq & K_{\mR,l}\ldr{\tau}^{(l+1)(\cweight+1)}e^{2\vare_{R}\tau}\label{eq:SobestofRicciqugeo}
  \end{align}
  for all $\tau\leq 0$, where $\weight_{1}:=(2\cweight,\cweight)$, $K_{\mR,l}=K_{\mR,l,0}\theta_{0,-}^{-2}$ and $K_{\mR,l,0}$ only depends on $s_{\cweight,l}$,
  $c_{\cweight,1}$, $\e_{\que}$, $K_{q}$ and $(\bM,\bge_{\refer})$. Here $\e_{R}:=\min\{\e_{p},\e_{\Spe}\}$ and $\vare_{R}:=\e_{R}/(3K_{\rovar})$, where $K_{\rovar}$
  is introduced in the statement of Lemma~\ref{lemma:smallnessshiftconsequences}.
\end{lemma}
\begin{remark}\label{remark:Sobscalarcurvatureestimatesymmetry}
  A similar conclusion holds in case we replace the requirement that (\ref{eq:qexpconvergence}) and (\ref{eq:quiescentregime}) hold with the requirement
  that $\g^{A}_{BC}=0$ for all $A<\min\{B,C\}$. Moreover, in that case $\vare_{R}$ can be replaced by $\vare_{\Spe}$ and the constants depend neither on
  $\e_{\que}$ nor on $K_{q}$. The proof of this statement is similar to, but simpler than, the proof below (in that some of the terms we need to estimate
  below vanish in this case).
\end{remark}
\begin{proof}
  The terms that need to be estimated are the same as in the proof of Lemma~\ref{lemma:CkRicciestimates}. To begin with, we thus need to estimate
  terms of type one. In practice, this means that we need to estimate the right hand side of (\ref{eq:termstobeestimatedtypeoneterm}) in $L^{2}$.
  Note that there are two possibilities for $f$; it either equals $X_{B}(\bmu_{C})$ or $\g^{D}_{BC}$. In case $f=X_{B}(\bmu_{C})$, the last factor on
  the right hand side of (\ref{eq:termstobeestimatedtypeoneterm}) can be estimated by a linear combination of terms of the form
  \[
  |\bD^{m_{1}}\mK|_{\bge_{\refer}}\cdots |\bD^{m_{k}}\mK|_{\bge_{\refer}}|E_{\bfK}E_{i}\bmu_{C}|,
  \]
  where $m_{1}+\dots+m_{k}+|\bfK|\leq |\bfJ|+1$ and $m_{i}\neq 0$; see (\ref{eq:EbfIXAXBbmuCdecompterms}) and the adjacent text. Combining this observation
  with the fact that $\ldr{\varrho}$ and $\ldr{\tau}$ are equivalent; see (\ref{eq:ldrrho ldr tau equiv}); as well as the fact that
  $e^{2\e_{R}\varrho}\leq e^{2\varepsilon_{R}\tau}$; see (\ref{eq:eSpevarrhoeelowtaurelEi}); what we need to estimate in $L^{2}$ is
  \begin{equation*}
    \begin{split}
      & e^{2\varepsilon_{R}\tau}\ldr{\tau}^{-\cweight}\textstyle{\prod}_{r=1}^{m}\ldr{\tau}^{-|\bfJ_{r}|\cweight}|E_{\bfJ_{r}}(\ldr{\tau}^{-\cweight}E_{i_{r}}h)|\cdot
      \textstyle{\prod}_{s=1}^{k}\ldr{\tau}^{-m_{s}\cweight}|\bD^{m_{s}}\mK|_{\bge_{\refer}}\\
      & \cdot \ldr{\tau}^{-|\bfK|\cweight}|E_{\bfK}(\ldr{\tau}^{-\cweight}E_{i}\bmu_{C})|.
    \end{split}
  \end{equation*}  
  Here $\bfJ_{r}$ and $i_{r}$ are chosen so that $E_{\bfI_{r}}=E_{\bfJ_{r}}E_{i_{r}}$. In order to estimate this expression in $L^{2}$, we can appeal to
  Corollary~\ref{cor:mixedmoserestweight}. In the notation of Corollary~\ref{cor:mixedmoserestweight}, $\mt_{j}=\mK$, $u_{j}=1$, $g_{j}=\ldr{\tau}^{-\cweight}$;
  the $\mU_{m}$ are either $\ldr{\tau}^{-\cweight}E_{i_{r}}h$ or $\ldr{\tau}^{-\cweight}E_{i}\bmu_{C}$, $v_{m}=1$, $h_{m}=\ldr{\tau}^{-\cweight}$. Note also that
  \[
  |\bfJ_{1}|+\dots+|\bfJ_{m}|+m_{1}+\dots+m_{k}+|\bfK|\leq |\bfI|+1-m.
  \]
  If $|\bfI|\leq l-1$, then, due to Corollary~\ref{cor:mixedmoserestweight}, the type of terms that we need to estimate are thus
  \begin{align*}
    & e^{2\varepsilon_{R}\tau}\ldr{\tau}^{-\cweight}\|\mK\|_{H^{l}_{\weight_{0}}(\bM)}\|h\|_{\mc^{\bfl_{0}}_{\weight_{0}}(\bM)}^{m}\|\bmu_{C}\|_{\mc^{\bfl_{0}}_{\weight_{0}}(\bM)},\\
    & e^{2\varepsilon_{R}\tau}\ldr{\tau}^{-\cweight}\|h\|_{\mH^{\bfl_{1}}_{\bbE,\weight_{0}}(\bM)}\|h\|_{\mc^{\bfl_{0}}_{\weight_{0}}(\bM)}^{m-1}\|\bmu_{C}\|_{\mc^{\bfl_{0}}_{\weight_{0}}(\bM)},\\
    & e^{2\varepsilon_{R}\tau}\ldr{\tau}^{-\cweight}\|\bmu_{C}\|_{\mH^{\bfl_{1}}_{\bbE,\weight_{0}}(\bM)}\|h\|_{\mc^{\bfl_{0}}_{\weight_{0}}(\bM)}^{m}.
  \end{align*}
  Here $\bfl_{0}=(1,1)$ and $\bfl_{1}=(1,l+1)$. In deriving this estimate, we replaced $\ldr{\varrho}$ with $\ldr{\tau}$ whenever convenient (and
  vice versa), and we used the notation introduced in (\ref{eq:mtmHlbbEbS}). Combining this observation with (\ref{eq:bmuAmHlestimate}),
  (\ref{eq:muAmHlestimatermk}), (\ref{eq:bmuAmclmfwestEi}), (\ref{eq:muAClestimatermk app}) and the assumptions yields the conclusion that 
  \begin{equation}\label{eq:SobTypeoneterms}
    \ldr{\tau}^{-(|\bfI|+3)\cweight}\|E_{\bfI}[e^{h}X_{A}(f)]\|_{2}\leq C\theta_{0,-}^{-2}\ldr{\tau}^{m\cweight+m+1}e^{2\vare_{R}\tau}
  \end{equation}
  for all $\tau\leq 0$, where $C$ only depends on $c_{\cweight,1}$, $s_{\cweight,l}$, $\e_{\que}$, $K_{q}$ and $(\bM,\bge_{\refer})$. Since $m\leq l$, we can
  replace $m$ by $l$ in (\ref{eq:SobTypeoneterms}). The argument in case $f=\g^{D}_{BC}$ is similar and the result is slightly better; see
  (\ref{eq:EbfIXDgABCitomK}). 

  Next, consider terms of type two, where we use the terminology introduced in the proof of Lemma~\ref{lemma:CkRicciestimates}. In this case we need to
  estimate the right hand side of (\ref{eq:termstobeestimatedtypetwoterm}) in $L^{2}$. The argument is quite similar to the above and leads to the
  conclusion that
  \[
  \ldr{\tau}^{-(|\bfI|+3)\cweight}\|E_{\bfI}[e^{h}f_{1}f_{2}]\|_{2}\leq C\theta_{0,-}^{-2}\ldr{\tau}^{l\cweight+l+1}e^{2\vare_{R}\tau}
  \]
  for all $\tau\leq 0$, where $C$ only depends on $c_{\cweight,1}$, $s_{\cweight,l}$, $\e_{\que}$, $K_{q}$ and $(\bM,\bge_{\refer})$. This yields estimates
  for the components of $\bmR$. In order to prove (\ref{eq:SobestofRicciqugeo}), it is then sufficient to combine the argument at the end of the proof
  of Lemma~\ref{lemma:CkRicciestimates} with (\ref{eq:bDbfAellAetcpteststmtEi}), a $C^{0}$-estimate for $\bmR$ (that results when applying
  Lemma~\ref{lemma:CkRicciestimates} with $l=1$) and arguments similar to the above.
\end{proof}

Next, we need to estimate $\bmN$. 

\begin{lemma}\label{lemma:SobbmNestimates}
  Let $0\leq\cweight\in\ro$, $1\leq l\in\zo$ and assume that the standard assumptions; see Definition~\ref{def:standardassumptions}; the
  $(\cweight,l)$-Sobolev assumptions; and the $(\cweight,1)$-supremum assumptions are fulfilled. Assume, moreover, that there are constants $K_{q}$
  and $0<\e_{\que}<1$ such that (\ref{eq:qexpconvergence}) holds and an $\e_{p}\in (0,1)$ such that (\ref{eq:quiescentregime}) holds.
  Then there is a constant $K_{\mN,l}$ such that
  \begin{align}
    \|\bmN(\cdot,\tau)\|_{H^{l-1}_{\weight_{1}}(\bM)} \leq & K_{\mN,l}\ldr{\tau}^{(l+1)(\cweight+1)}e^{2\vare_{R}\tau}\label{eq:SobestofbmNqugeo}
  \end{align}
  for all $\tau\leq 0$, where $\weight_{1}:=(2\cweight,\cweight)$, $K_{\mN,l}=K_{\mN,l,0}\theta_{0,-}^{-2}$ and $K_{\mN,l,0}$ only depends on $s_{\cweight,l}$,
  $c_{\cweight,1}$, $\e_{\que}$, $K_{q}$ and $(\bM,\bge_{\refer})$. Here $\e_{R}:=\min\{\e_{p},\e_{\Spe}\}$ and $\vare_{R}:=\e_{R}/(3K_{\rovar})$, where $K_{\rovar}$
  is introduced in the statement of Lemma~\ref{lemma:smallnessshiftconsequences}.
\end{lemma}
\begin{remark}\label{remark:SobbmNestimatesymmetry}
  A similar conclusion holds in case we replace the requirement that (\ref{eq:qexpconvergence}) and (\ref{eq:quiescentregime}) hold with the requirement
  that $\g^{A}_{BC}=0$ for all $A<\min\{B,C\}$. Moreover, in that case $\vare_{R}$ can be replaced by $\vare_{\Spe}$ and the constants depend neither on
  $\e_{\que}$ nor on $K_{q}$. The proof of this statement is similar to, but simpler than, the proof below (in that some of the terms we need to estimate
  below vanish in this case).
\end{remark}
\begin{proof}
  The argument is similar to the proof of Lemma~\ref{lemma:SobRicciestimates}, keeping
  (\ref{eq:bnablaABnormN})--(\ref{eq:normalisedDeltaNthroughN}) in mind. The only difference is that the $f$ appearing in terms of type
  one (using the terminology introduced in the proof of Lemma~\ref{lemma:CkRicciestimates}) equals $X_{B}(\ln N)$ and the $h$ equals $-2\mu_{A}$.
  Concerning terms of type two, the $h$'s are the same and the $f_{i}$'s can, in addition to the previous possibilities, also equal $X_{C}(\ln N)$.
  However, due to the assumptions, $\ln N$ satisfies better bounds than $\bmu_{A}$ in spaces of the form $\mc^{\bfl_{a}}_{\bbE,\weight_{a}}(\bM)$
  and $\mH^{\bfl_{a}}_{\bbE,\weight_{a}}(\bM)$, assuming $\bfl_{a}$ is of the form $\bfl_{a}=(1,l_{a})$. For these reasons, the result is a consequence of
  an argument similar to the proof of Lemma~\ref{lemma:SobRicciestimates}. 
\end{proof}

Finally, let us turn to $\hml_{U}\mK$ in the vacuum setting. 

\begin{thm}\label{thm:SobhmlUmKestimates}
  Let $0\leq\cweight\in\ro$, $n\geq 3$, $1\leq l\in\zo$ and assume that the standard assumptions; see Definition~\ref{def:standardassumptions}; the
  $(\cweight,l)$-Sobolev assumptions; and the $(\cweight,1)$-supremum assumptions are fulfilled. In particular, the spacetime is $n+1$-dimensional.
  Assume, moreover, that there are constants $K_{q}$ and $0<\e_{\que}<1$ such that (\ref{eq:qexpconvergence}) holds and an $\e_{p}\in (0,1)$ such that
  (\ref{eq:quiescentregime}) holds. Assume, finally, that Einstein's vacuum equations with a cosmological constant $\Lambda$ are satisfied; i.e.,
  (\ref{eq:EE}) holds with $T=0$. Then there is a constant $K_{\mK,l}$ such that
  \begin{align}
    \|\hml_{U}\mK(\cdot,\tau)\|_{H^{l-1}_{\weight_{1}}(\bM)} \leq & K_{\mK,l}\ldr{\tau}^{(l+1)(\cweight+1)}e^{2\vare_{R}\tau}\label{eq:SobestofhmlUmKqugeo}
  \end{align}
  for all $\tau\leq 0$, where $\weight_{1}:=(2\cweight,\cweight)$, $K_{\mK,l}=K_{\mK,l,0}\theta_{0,-}^{-2}$ and $K_{\mK,l,0}$ only depends on $s_{\cweight,l}$,
  $c_{\cweight,1}$, $\e_{\que}$, $K_{q}$, $\Lambda$ and $(\bM,\bge_{\refer})$. Moreover, there is a constant $K_{q,l}$ such that
  \begin{align}
    \|[q-(n-1)](\cdot,\tau)\|_{H^{l-1}_{\weight_{1}}(\bM)} \leq & K_{q,l}\ldr{\tau}^{(l+1)(\cweight+1)}e^{2\vare_{R}\tau}\label{eq:Sobestofqminusnminusonequgeo}
  \end{align}
  for all $\tau\leq 0$, where $K_{q,l}=K_{q,l,0}\theta_{0,-}^{-2}$ and $K_{q,l,0}$ only depends on $s_{\cweight,l}$, $c_{\cweight,1}$, $\e_{\que}$, $K_{q}$,
  $\Lambda$ and $(\bM,\bge_{\refer})$. Here $\e_{R}:=\min\{\e_{p},\e_{\Spe}\}$ and $\vare_{R}:=\e_{R}/(3K_{\rovar})$, where $K_{\rovar}$
  is introduced in the statement of Lemma~\ref{lemma:smallnessshiftconsequences}.
\end{thm}
\begin{remark}\label{remark:SobestofhmlUmKsymmetry}
  A similar conclusion holds in case we replace the requirement that (\ref{eq:qexpconvergence}) and (\ref{eq:quiescentregime}) hold with the requirement
  that $\g^{A}_{BC}=0$ for all $A<\min\{B,C\}$. This follows by appealing to Remarks~\ref{remark:Sobscalarcurvatureestimatesymmetry} and
  \ref{remark:SobbmNestimatesymmetry} and an argument similar to the proof below. However, in this case, $\vare_{R}$ can be replaced by
  $\vare_{\Spe}:=\e_{\Spe}/(3K_{\rovar})$ in (\ref{eq:SobestofhmlUmKqugeo}) and (\ref{eq:Sobestofqminusnminusonequgeo}). Moreover, the constants depend on
  neither $K_{q}$ nor $\e_{\que}$. 
\end{remark}
\begin{proof}
  The expression we need to estimate is the right hand side of (\ref{eq:hmlUvacuumsetting}). In order to estimate the last two terms, it is sufficient
  to appeal to Lemmas~\ref{lemma:SobRicciestimates} and \ref{lemma:SobbmNestimates}. Turning to the third last term, note that
  (\ref{eq:OmegaLambdaestimate}) holds. What we need to estimate is thus
  \[
  \ldr{\varrho}^{-(l_{a}+2)\cweight}e^{2\e_{\Spe}\varrho}\textstyle{\prod}_{m=1}^{k}|E_{\bfI_{m}}E_{i_{m}}\ln\theta|
  \]
  in $L^{2}$, where $l_{a}=|\bfI|$ and $|\bfI_{1}|+\dots+|\bfI_{k}|=l_{a}-k$. In case $k=0$, the product should be interpreted as $1$. At this stage, we
  can use the equivalence between $\ldr{\tau}$ and $\ldr{\varrho}$, the fact that $e^{2\e_{\Spe}\varrho}\leq e^{2\vare_{\Spe}\tau}$ and
  Corollary~\ref{cor:mixedmoserestweight} to estimate this expression in $L^{2}$. In the end, it is thus sufficient to estimate
  \[
  \ldr{\tau}^{-2\cweight}e^{2\vare_{\Spe}\tau}\|\ln\theta\|_{\mc^{\bfl_{0}}_{\bbE,\weight_{0}}(\bM)}^{k-1}\|\ln\theta\|_{\mH^{\bfl}_{\bbE,\weight_{0}}(\bM)}.
  \]
  Here $\bfl_{0}=(1,1)$, $\bfl=(1,l)$ and the last two factors are bounded due to the assumptions and (\ref{eq:Clmclequiv}). The resulting estimate,
  combined with (\ref{eq:Clmclequiv}), yields
  \[
  \|\Lambda\theta^{-2}\mathrm{Id}\|_{H^{l-1}_{\weight_{1}}(\bM)}\leq C_{a}\theta_{0,-}^{-2}\ldr{\tau}^{-2\cweight}e^{2\vare_{\Spe}\tau},
  \]
  where $C_{a}$ only depends on $s_{\cweight,l}$, $\Lambda$ and $(\bM,\bge_{\refer})$. 

  The first term on the right hand side of (\ref{eq:hmlUvacuumsetting}) can be written $f\mK$. In order to estimate this expression in
  $\mH^{l-1}_{\weight_{1}}(\bM)$, we need to estimate expressions of the form 
  \[
  \ldr{\varrho}^{-(|\bfI|+2)\cweight}\bD_{\bfI_{1}}f\bD_{\bfI_{2}}\mK
  =\ldr{\varrho}^{-(|\bfI_{1}|+2)\cweight}\bD_{\bfI_{1}}f\cdot \ldr{\varrho}^{-|\bfI_{2}|\cweight}\bD_{\bfI_{2}}\mK
  \]
  in $L^{2}$, where $|\bfI_{1}|+|\bfI_{2}|\leq l-1$. However, appealing to Corollary~\ref{cor:mixedmoserestweight}, (\ref{eq:Clmclequiv}) and the equivalence
  between $\ldr{\tau}$ and $\ldr{\varrho}$, this expression can be estimated in $L^{2}$ by
  \[
  C\|f\|_{H^{l-1}_{\weight_{1}}(\bM)}+C\|f\|_{C^{0}_{\weight_{1}}(\bM)}\|\mK\|_{H^{l-1}_{\weight_{0}}(\bM)},
  \]
  where $C$ only depends on $c_{\robas}$, $|\bfI|$, $\weight_{1}$ and $(\bM,\bge_{\refer})$. The function $f$ can be estimated in the desired Sobolev
  space the same way that the last three terms on the right hand side of (\ref{eq:hmlUvacuumsetting}) were estimated. The function $f$ can also be
  estimated in $C^{0}_{\weight_{1}}(\bM)$ by appealing to (\ref{eq:OmegaLambdaestimate}), Remark~\ref{remark:Ckscalarcurvatureestimate} and
  Lemma~\ref{lemma:Cklapseestimates}. Finally, $\mK$ can be estimated in the desired Sobolev space due to the assumptions. Appealing to
  (\ref{eq:Clmclequiv}) again yields (\ref{eq:SobestofhmlUmKqugeo}). Keeping (\ref{eq:qmainformula appendix}), (\ref{eq:bSexpression}) and
  (\ref{eq:rescaledRiccicurvatureformwoderoflntheta}) in mind, the proof of (\ref{eq:Sobestofqminusnminusonequgeo}) is similar but simpler. 
\end{proof}

Next, we wish to prove that $\mK$ converges.

\begin{thm}\label{thm:SobconvofmK}
  Let $0\leq\cweight\in\ro$, $n\geq 3$, $1\leq l\in\zo$ and assume that the standard assumptions; see Definition~\ref{def:standardassumptions}; the
  $(\cweight,l)$-Sobolev assumptions; and the $(\cweight,1)$-supremum assumptions are fulfilled. In particular, the spacetime is $n+1$-dimensional.
  Assume, moreover, that there are constants $K_{q}$ and $0<\e_{\que}<1$ such that (\ref{eq:qexpconvergence}) holds and an $\e_{p}\in (0,1)$ such that
  (\ref{eq:quiescentregime}) holds. Assume, finally, that Einstein's vacuum equations with a cosmological constant $\Lambda$ are satisfied; i.e.,
  (\ref{eq:EE}) holds with $T=0$; and that $\varrho$ diverges uniformly to $-\infty$ in the
  direction of the singularity. Then there is a continuous $(1,1)$-tensor field $\mK_{\infty}$ and continuous functions
  $\ell_{A,\infty}$ on $\bM$, which also belong to $H^{l-1}(\bM)$, and constants $K_{\infty,i,l}$, $i=0,1$, such that
  \begin{equation}\label{eq:mKconvtomKinfHl}
    \begin{split}
      & \|\mK(\cdot,\tau)-\mK_{\infty}\|_{H^{l-1}(\bM)}+\textstyle{\sum}_{A}\|\ell_{A}(\cdot,\tau)-\ell_{A,\infty}\|_{H^{l-1}(\bM)}\\
      \leq & K_{\infty,0,l}\theta_{0,-}^{-2}\ldr{\tau}^{(l+1)(2\cweight+1)}e^{2\vare_{R}\tau}+K_{\infty,1,l}\ldr{\tau}^{l\cweight}e^{\vare_{\Spe}\tau}
    \end{split}
  \end{equation}
  for all $\tau\leq 0$, where $K_{\infty,0,l}$ only depends on $s_{\cweight,l}$, $c_{\cweight,1}$, $\e_{p}$, $\e_{\que}$, $K_{q}$, $\Lambda$ and
  $(\bM,\bge_{\refer})$; and $K_{\infty,1,l}$ only depends on $s_{\cweight,l}$, $c_{\cweight,1}$ and $(\bM,\bge_{\refer})$.
\end{thm}
\begin{remark}\label{remark:SobconvofmKsymmetry}
  A similar conclusion holds in case we replace the requirement that (\ref{eq:qexpconvergence}) and (\ref{eq:quiescentregime}) hold with the requirement
  that $\g^{A}_{BC}=0$ for all $A<\min\{B,C\}$. This follows by appealing to Remark~\ref{remark:SobestofhmlUmKsymmetry} and an argument similar to the proof
  below. However, in this case, $\vare_{R}$ can be replaced by $\vare_{\Spe}$ in (\ref{eq:mKconvtomKinfHl}).
  Moreover, the constants depend on neither $K_{q}$, $\e_{\que}$ nor $\e_{p}$. 
\end{remark}
\begin{proof}
  We need to estimate the right hand side of (\ref{eq:dtaumKijform}) in $\mH^{l-1}_{\weight_{1}}(\bM)$. Note, to this end, that
  \[
  \ldr{\varrho}^{-(|\bfI|+2)\cweight}|E_{\bfI}[\xi(\hml_{U}\mK)^{i}_{\phantom{i}j}]|
  \]
  can be estimated by a linear combination of terms of the form
  \[
  |\xi|\textstyle{\prod}_{j=1}^{k}\ldr{\tau}^{-(|\bfI_{j}|+1)\cweight}|E_{\bfI_{j}}E_{i_{j}}\ln\hN|
  \cdot \ldr{\tau}^{-(|\bfJ|+2)\cweight}|\bD_{\bfJ}\hml_{U}\mK|_{\bge_{\refer}},
  \]
  where the relevant constants only depend on $c_{\robas}$, $|\bfI|$ and $(\bM,\bge_{\refer})$; and $|\bfI|=|\bfI_{1}|+\dots+|\bfI_{k}|+k+|\bfJ|$. However,
  this expression can be estimated in $L^{2}$ by appealing to the equivalence of $\ldr{\varrho}$ and $\ldr{\tau}$, (\ref{eq:Clmclequiv})
  and Corollary~\ref{cor:mixedmoserestweight}. This results in expressions of the form
  \[
  C\|\hml_{U}\mK\|_{H^{l-1}_{\weight_{1}}(\bM)}+C\|\hml_{U}\mK\|_{C^{0}_{\weight_{1}}(\bM)}\|\ln\hN\|_{H^{\bfl_{a}}_{\weight_{0}}(\bM)}
  \]
  where $\bfl_{a}=(1,l-1)$ (in case $l=1$, the second term is absent) and $C$ only depends on $c_{\robas}$, $|\bfI|$ and $(\bM,\bge_{\refer})$. At this
  stage, it is sufficient to appeal to Theorems~\ref{thm:CkestofhmlUmK} and \ref{thm:SobhmlUmKestimates} as well as the assumptions.
  In fact,
  \begin{equation}\label{eq:phihmlUmKest}
    \|\xi (\hml_{U}\mK)^{i}_{\phantom{i}j}\|_{\mH^{l-1}_{\weight_{1}}(\bM)}\leq K_{\mK,l}\ldr{\tau}^{(l+1)(\cweight+1)}e^{2\vare_{R}\tau}
  \end{equation}
  for $\tau\leq 0$, where $K_{\mK,l}$ is of the same form as in Theorem~\ref{thm:SobhmlUmKestimates}. 

  In order to estimate the second term on the right hand side of (\ref{eq:dtaumKijform}), note that
  \[
  E_{\bfI}[\xi\hN^{-1}(\ml_{\chi}\mK)^{i}_{\phantom{i}j}]=(\d_{t}\tau)^{-1}E_{\bfI}[(\ml_{\chi}\mK)^{i}_{\phantom{i}j}].
  \]
  Combining this observation with (\ref{eq:mlchimKformforest}) yields the conclusion that it is sufficient to estimate terms of the form
  \[
  (\d_{t}\tau)^{-1}|\bD_{\bfI_{a}}\mK|_{\bge_{\refer}}|\bD_{\bfI_{b}}\chi|_{\bge_{\refer}},
  \]
  where $|\bfI_{a}|+|\bfI_{b}|\leq |\bfI|+1$. Multiplying this expression by $\ldr{\varrho}^{-(|\bfI|+2)\cweight}$ and estimating the result
  in $L^{2}$ yields the conclusion that it is bounded by 
  \[
  C(\d_{t}\tau)^{-1}\|\chi\|_{\mH^{l}_{\bbE,\weight}(\bM)}+C(\d_{t}\tau)^{-1}\|\chi\|_{C^{0}_{\weight}(\bM)}\|\mK\|_{H^{l}_{\weight_{0}}(\bM)},
  \]
  assuming that $|\bfI|\leq l-1$. Here $\weight=(\cweight,\cweight)$ and $C$ only depends on $c_{\robas}$, $l$ and $(\bM,\bge_{\refer})$ and we appealed
  to Corollary~\ref{cor:mixedmoserestweight} and the equivalence of $\ldr{\tau}$ and $\ldr{\varrho}$. On the other hand,
  \[
  (\d_{t}\tau)^{-1}\|\chi\|_{\mH^{l}_{\bbE,\weight}(\bM)}\leq C\|\chi\|_{\mH^{l,\weight}_{\bbE,\rocon}(\bM)}\leq C\ldr{\tau}^{-\cweight}e^{\vare_{\Spe}\tau}
  \]
  for all $\tau\leq 0$, where $C$ only depends on $s_{\cweight,l}$ and $(\bM,\bge_{\refer})$ and we appealed to (\ref{eq:hNtaudotequivEi}),
  (\ref{eq:mHlbbeWnorm}), (\ref{eq:chimHlHlest}) and the assumptions. Similarly,
  \[
  (\d_{t}\tau)^{-1}\|\chi\|_{C^{0}_{\weight}(\bM)}\leq C\|\chi\|_{C^{0,\weight}_{\bbE,\rocon}(\bM)}\leq C\ldr{\tau}^{-\cweight}e^{\vare_{\Spe}\tau}
  \]
  for all $\tau\leq 0$, where $C$ only depends on $c_{\cweight,1}$ and $(\bM,\bge_{\refer})$, and we appealed to (\ref{eq:hNtaudotequivEi}),
  (\ref{eq:mClbbeWnorm}), (\ref{eq:chimClClest}) and the assumptions. Combining the above estimates, we conclude that
  \[
  \|\xi\hN^{-1}(\ml_{\chi}\mK)^{i}_{\phantom{i}j}\|_{\mH^{l-1}_{\weight_{1}}(\bM)}\leq C\ldr{\tau}^{-\cweight}e^{\vare_{\Spe}\tau}
  \]
  for all $\tau\leq 0$, where $C$ only depends on $s_{\cweight,l}$, $c_{\cweight,1}$ and $(\bM,\bge_{\refer})$. Combining this estimate
  with (\ref{eq:phihmlUmKest}) yields the conclusion that
  \[
  \|\d_{\tau}\mK^{i}_{\phantom{i}j}\|_{\mH^{l-1}(\bM)}\leq K_{\mK,l}\ldr{\tau}^{(l+1)(2\cweight+1)}e^{2\vare_{R}\tau}
  +C_{l}\ldr{\tau}^{l\cweight}e^{\vare_{\Spe}\tau}
  \]
  for all $\tau\leq 0$, where $C_{l}$ only depends on $s_{\cweight,l}$, $c_{\cweight,1}$ and $(\bM,\bge_{\refer})$; and $K_{\mK,l}$ is of the same form as in
  Theorem~\ref{thm:SobhmlUmKestimates}. Integrating this estimate yields the desired conclusion concerning $\mK$ (that $\mK_{\infty}$ is continuous
  follows from the fact that Theorem~\ref{thm:mKconvCk} applies).

  Next, consider (\ref{eq:dtauellA}). Due to the arguments presented in connection with (\ref{eq:EbfIphiterms}) and (\ref{eq:dtauellA}), applying
  $E_{\bfI}$ to the first term on the right hand side of (\ref{eq:dtauellA}) leads to terms of the form
  \[
  \xi |E_{\bfI_{1}}\ln\hN|\cdots |E_{\bfI_{p}}\ln\hN|\cdot|\bD^{m_{1}}\mK|_{\bge_{\refer}}\cdots |\bD^{m_{j}}\mK|_{\bge_{\refer}}|\bD^{k}\hml_{U}\mK|_{\bge_{\refer}},
  \]
  where $\bfI_{r}\neq 0$, $m_{i}\neq 0$ and $|\bfI_{1}|+\dots+|\bfI_{p}|+m_{1}+\dots+m_{j}+k\leq |\bfI|$. The argument necessary to estimate this
  expression in weighted Sobolev spaces is quite similar to the estimate of the first term on the right hand side of (\ref{eq:dtaumKijform}).
  Keeping the arguments at the end of the proof of Theorem~\ref{thm:mKconvCk} in mind, the estimate for the second term on the right hand side of
  (\ref{eq:dtauellA}) is similar to the estimate of the second term on the right hand side of (\ref{eq:dtaumKijform}). Integrating the resulting
  estimate yields the desired conclusion concerning $\ell_{A}$ (that the $\ell_{A,\infty}$ are continuous again follows from the fact that
  Theorem~\ref{thm:mKconvCk} applies). 
\end{proof}

\section{Scalar field matter}\label{section:scalarfieldmatter}

So far, we have considered the quiescent vacuum setting. Due to the comments made in Subsection~\ref{ssection:quiescentregimes}, we then have to
have $n\geq 10$ or $\g^{A}_{BC}$ ($A<\min\{B,C\}$) either vanishing or tending to zero sufficiently quickly. However, it is also of interest to
consider situations with quiescent asymptotics for an open set of initial data and $n=3$. In order for this to occur, we need
to introduce matter.  In the present section, we discuss the case of a massless scalar field. The corresponding stress energy tensor is given by
\begin{equation}\label{eq:setnlsf}
  T_{\a\b}=\nabla_{\a}\phi\nabla_{\b}\phi-\textstyle{\frac{1}{2}}(\nabla_{\g}\phi\nabla^{\g}\phi)g_{\a\b}
\end{equation}
(even though we focus on the massless case here, similar results should hold in the presence of a potential $V$, assuming $V(s)$ does not
grow too quickly as $s\rightarrow\pm\infty$). Next, we wish to estimate the matter quantities appearing in (\ref{eq:qmainformula}) and
(\ref{eq:mlUmKwithEinstein}) in this particular setting; i.e., $\rho-\bp$ and $\mcP$. 

\begin{lemma}
  Let $(M,g)$ be a spacetime. Assume that it has an expanding partial pointed foliation. Assume, moreover, $\mK$ to be non-degenerate on $I$ and to
  have a global frame. Assume, finally, that Einstein's equations (\ref{eq:EE}) with a cosmological constant $\Lambda$ are satisfied where the
  stress energy tensor is of the form (\ref{eq:setnlsf}). Then, using the notation introduced in (\ref{eq:mfpmcPdef}), 
  \begin{align}
    \theta^{-2}\mfp^{A}_{\phantom{A}B}  = & e^{-2\mu_{A}}X_{A}(\phi)X_{B}(\phi)+\textstyle{\frac{1}{2}}(\hU\phi)^{2}\de^{A}_{B}
    -\textstyle{\frac{1}{2}\sum}_{C}e^{-2\mu_{C}}|X_{C}(\phi)|^{2}\de^{A}_{B},\label{eq:thetamtwomfpAB}\\
    \theta^{-2}\bp  = & \textstyle{\frac{1}{2}}(\hU\phi)^{2}-\textstyle{\frac{n-2}{2n}\sum}_{A}e^{-2\mu_{A}}|X_{A}(\phi)|^{2},\label{eq:bpformula}\\
    \theta^{-2}\rho  = & \textstyle{\frac{1}{2}}(\hU\phi)^{2}+\textstyle{\frac{1}{2}\sum}_{A}e^{-2\mu_{A}}|X_{A}(\phi)|^{2},\label{eq:rhoformula}
  \end{align}
  where there is no summation on $A$ in the first term on the right hand side of (\ref{eq:thetamtwomfpAB}). 
\end{lemma}
\begin{remark}\label{remark:formulaerhominusbpmcP}
  Note, in particular, that
  \begin{align*}
    \theta^{-2}(\rho-\bp) = & \textstyle{\frac{n-1}{n}\sum}_{A}e^{-2\mu_{A}}|X_{A}(\phi)|^{2},\\
    \theta^{-2}\mcP^{A}_{\phantom{A}B} = & e^{-2\mu_{A}}X_{A}(\phi)X_{B}(\phi)-\textstyle{\frac{1}{n}}\sum_{C}e^{-2\mu_{C}}|X_{C}(\phi)|^{2}\de^{A}_{B}.
  \end{align*}
\end{remark}
\begin{proof}
  Compute
  \begin{equation*}
    \begin{split}
      \mfp^{A}_{\phantom{A}B}  = & e^{-2\bmu_{A}}X_{A}(\phi)X_{B}(\phi)+\textstyle{\frac{1}{2}}(U\phi)^{2}\de^{A}_{B}
      -\textstyle{\frac{1}{2}\sum}_{C}e^{-2\bmu_{C}}|X_{C}(\phi)|^{2}\de^{A}_{B}.
    \end{split}
  \end{equation*}
  The lemma follows. 
\end{proof}

In order to estimate the scalar field, it is sufficient to appeal to the fact that it satisfies the wave equation $\Box_{g}\phi=0$ and to the results
of \cite{RinWave}. In fact, we have the following proposition. 

\begin{prop}\label{prop:asvelocityKGlikeeq}
  Let $0\leq\cweight\in\ro$, $n\geq 3$, $1\leq l\in\zo$ and assume the standard assumptions; see Definition~\ref{def:standardassumptions}; and the
  $(\cweight,l)$-Sobolev assumptions to be fulfilled; see Definition~\ref{def:sobklassumptions}. In particular, the spacetime is $n+1$-dimensional.
  Let $\kappa_{1}$ be the smallest integer strictly larger than $n/2+1$ and assume the $(\cweight,\kappa_{1})$-supremum assumptions to be satisfied;
  see Definition~\ref{def:supmfulassumptions}. Assume, moreover, that there is a constant $d_{q}$ such that
  \begin{equation}\label{eq:qminusnminusoneest}
    \|\ldr{\varrho(\cdot,t)}^{3/2}[q(\cdot,t)-(n-1)]\|_{C^{0}(\bM)}\leq d_{q}
  \end{equation}
  for all $t\leq t_{0}$. Finally, let $\phi$ be a solution to $\Box_{g}\phi=0$. Then, if $l\geq \kappa_{1}$,
  \begin{align}
    \hGe_{l}(\tau) \leq & C_{a}\ldr{\tau}^{2\a_{l,n}\cweight+2\b_{l,n}}\hGe_{l}(0),\label{eq:TheEnergyEstimateKG}\\
    \|\me_{1}(\cdot,\tau)\|_{C^{0}(\bM)} \leq & C_{b}\ldr{\tau}^{\a_{n}\cweight+\b_{n}}\hGe_{\kappa_{1}}(0)\label{eq:meksupestimateKG}
  \end{align}
  for all $\tau\leq 0$. Here $\a_{l,n}$ and $\b_{l,n}$ only depend on $n$ and $l$; and $C_{a}$ only depends on $s_{\cweight,l}$, 
  $c_{\cweight,\kappa_{1}}$, $d_{q}$, $(\bM,\bge_{\refer})$ and a lower bound on $\theta_{0,-}$. Moreover, $\a_{n}$ and
  $\b_{n}$ only depend on $n$; and $C_{b}$ only depends on $c_{\cweight,\kappa_{1}}$, $d_{q}$, $(\bM,\bge_{\refer})$
  and a lower bound on $\theta_{0,-}$. Finally,
  \begin{align}
    \me_{k} := & \tfrac{1}{2}\textstyle{\sum}_{|\bfI|\leq k}\left(|\hU(E_{\bfI}\phi)|^{2}+\textstyle{\sum}_{A}e^{-2\mu_{A}}|X_{A}(E_{\bfI}\phi)|^{2}
    +\ldr{\tau}^{-3}|E_{\bfI}\phi|^{2}\right),\label{eq:mekdef}\\
    \hGe_{l}(\tau) := & \textstyle{\int}_{\bM_{\tau}}\me_{l}\mu_{\bge_{\refer}}.\label{eq:hGeldef}
  \end{align}  
\end{prop}
\begin{proof}
  The statement is an immediate consequence of \cite[Proposition~14.24]{RinWave}.
\end{proof}

Given these estimates, we can derive results similar to Theorems~\ref{thm:SobhmlUmKestimates} and \ref{thm:SobconvofmK}.

\begin{thm}\label{thm:SobconvofmKdecofhmlUmKmatter}
  Let $0\leq\cweight\in\ro$, $n\geq 3$, $\kappa_{1}$ be the smallest integer strictly larger than $n/2+1$, $\kappa_{1}\leq l\in\zo$ and assume the
  standard assumptions; see Definition~\ref{def:standardassumptions}; the $(\cweight,l)$-Sobolev assumptions; and the $(\cweight,\kappa_{1})$-supremum
  assumptions to be fulfilled. In particular, the spacetime is $n+1$-dimensional. Assume, moreover, that there are constants $K_{q}$ and $0<\e_{\que}<1$
  such that (\ref{eq:qexpconvergence}) holds and an $\e_{p}\in (0,1)$ such that (\ref{eq:quiescentregime}) holds. Assume, finally, that the
  Einstein-scalar field equations with a cosmological constant $\Lambda$ are satisfied; i.e., (\ref{eq:EE}) and $\Box_g\phi=0$ hold, where $T$ is given
  by (\ref{eq:setnlsf}); and that $\varrho$ diverges uniformly to $-\infty$ in the direction of the singularity. Then there is a continuous
  $(1,1)$-tensor field $\mK_{\infty}$ and continuous functions $\ell_{A,\infty}$ on $\bM$, which also belong to $H^{l-1}(\bM)$, and constants $a_{l,n}$,
  $b_{l,n}$, $\sfK_{\mK,l}$ and $\sfK_{\infty,l}$ such that
  \begin{align}
    \|\hml_{U}\mK(\cdot,\tau)\|_{H^{l-1}_{\weight_{1}}(\bM)} \leq & \sfK_{\mK,l}\ldr{\tau}^{a_{l,n}\cweight+b_{l,n}}e^{2\vare_{R}\tau},\label{eq:SobestofhmlUmKqugeomatter}\\
    \|[q-(n-1)](\cdot,\tau)\|_{H^{l-1}_{\weight_{1}}(\bM)} \leq & \sfK_{q,l}\ldr{\tau}^{a_{l,n}\cweight+b_{l,n}}e^{2\vare_{R}\tau},\label{eq:Sobestofqminusnminonequgeomatter}\\
    \|\mK(\cdot,\tau)-\mK_{\infty}\|_{H^{l-1}(\bM)} \leq & \sfK_{\infty,l}\ldr{\tau}^{a_{l,n}\cweight+b_{l,n}}e^{\vare_{\phi}\tau},\label{eq:mKconvtomKinfHlmatter}\\
    \textstyle{\sum}_{A}\|\ell_{A}(\cdot,\tau)-\ell_{A,\infty}\|_{H^{l-1}(\bM)} \leq & \sfK_{\infty,l}\ldr{\tau}^{a_{l,n}\cweight+b_{l,n}}e^{\vare_{\phi}\tau}
    \label{eq:ellAconvtomellAinfHlmatter}
  \end{align}
  for all $\tau\leq 0$, where $\sfK_{\mK,l}$, $\sfK_{q,l}$ and $\sfK_{\infty,l}$ only depend on $s_{\cweight,l}$, $c_{\cweight,\kappa_{1}}$, $\e_{p}$, $\e_{\que}$, $K_{q}$,
  $\Lambda$, $\hGe_{l}(0)$, $(\bM,\bge_{\refer})$ and a lower bound on $\theta_{0,-}$. Moreover, $\vare_{\phi}:=\min\{2\vare_{R},\vare_{\Spe}\}$. Finally,
  $a_{l,n}$ and $b_{l,n}$ only depend on $n$ and $l$. 
\end{thm}
\begin{remark}
  In case $\chi=0$ (i.e., the shift vector field vanishes), the estimates (\ref{eq:mKconvtomKinfHlmatter}) and (\ref{eq:ellAconvtomellAinfHlmatter}) can
  be improved in that $\vare_{\phi}$ can be replaced by $2\vare_{R}$. 
\end{remark}
\begin{remark}\label{remark:SobconvofmKdecofhmlUmKmattersymmetry}
  A similar conclusion holds in case we replace the requirement that (\ref{eq:qexpconvergence}) and (\ref{eq:quiescentregime}) hold with the requirements
  that $\g^{A}_{BC}=0$ for all $A<\min\{B,C\}$ and that (\ref{eq:qminusnminusoneest}) hold for $t\leq t_{0}$. In this case, $\vare_{R}$ can be
  replaced by $\vare_{\Spe}$ in (\ref{eq:SobestofhmlUmKqugeomatter}) and (\ref{eq:Sobestofqminusnminonequgeomatter}); and $\vare_{\phi}$ can be replaced
  by $\vare_{\Spe}$ in (\ref{eq:mKconvtomKinfHlmatter}) and (\ref{eq:ellAconvtomellAinfHlmatter}). Moreover, the constants depend on $d_{q}$ but neither
  on $K_{q}$, $\e_{\que}$ nor $\e_{p}$. Note also that, in this case, (\ref{eq:Sobestofqminusnminonequgeomatter}) represents not only a quantitative, but a
  qualitative, improvement of the assumptions. 
\end{remark}
\begin{proof}
  The proof is analogous to the proofs of Theorems~\ref{thm:CkestofhmlUmK}, \ref{thm:mKconvCk}, \ref{thm:SobhmlUmKestimates} and
  \ref{thm:SobconvofmK} (even though we only wish to prove the analogues of the last two theorems, we still need $C^{0}$-estimates; see the
  proofs of Theorems~\ref{thm:SobhmlUmKestimates} and \ref{thm:SobconvofmK}). In the proofs of Theorems~\ref{thm:CkestofhmlUmK} and
  \ref{thm:SobhmlUmKestimates}, we have already estimated all the terms on the right hand side of (\ref{eq:hmlUvacuumsetting}). Considering
  (\ref{eq:mlUmKwithEinstein}), it is thus clear that what remains to be estimated is $\theta^{-2}(\rho-\bp)$ and
  $\theta^{-2}\mcP^{A}_{\phantom{A}B}$. Let us begin by deriving $C^{0}$-estimates. Note that (\ref{eq:qexpconvergence}) implies that
  (\ref{eq:qminusnminusoneest}) holds (with $d_{q}$ depending only on $K_{q}$ and $\e_{\que}$). This means that the conditions of
  Proposition~\ref{prop:asvelocityKGlikeeq} are satisfied, and we use the conclusions of this proposition without further comment in what follows
  (this means, in paticular, that we use the fact that $\Box_g\phi=0$). Since $|E_{i}(\phi)|$ does not grow faster than polynomially; see
  (\ref{eq:meksupestimateKG}); it follows from (\ref{eq:muminmainlowerbound}), (\ref{eq:eSpevarrhoeelowtaurelEi}) and
  Remark~\ref{remark:formulaerhominusbpmcP} that
  \begin{equation}\label{eq:rhominusbpmcPABestimatescfield}
    \theta^{-2}|\rho-\bp|+\theta^{-2}\textstyle{\sum}_{A,B}|\mcP^{A}_{\phantom{A}B}|\leq C_{a}\ldr{\tau}^{\a\cweight+\b}e^{2\vare_{\Spe}\tau}
  \end{equation}
  for all $\tau\leq 0$, where $C_{a}$ only depends on $c_{\cweight,\kappa_{1}}$, $K_{q}$, $\e_{\que}$, $(\bM,\bge_{\refer})$ and a lower bound on $\theta_{0,-}$.
  Moreover, $\a$ and $\b$ only depend on $n$. Combining this estimate with (\ref{eq:qmainformula appendix}), (\ref{eq:mlUmKwithEinstein}) and the proof of
  Theorem~\ref{thm:CkestofhmlUmK}, it follows that
  \begin{equation}\label{eq:hmlUmkqminusnminusoneCzestimatescalarfield}
    \|\hml_{U}\mK\|_{C^{0}(\bM)}+\|q-(n-1)\|_{C^{0}(\bM)}\leq C_{a}\ldr{\tau}^{\a\cweight+\b}e^{2\vare_{R}\tau}
  \end{equation}
  for all $\tau\leq 0$, where $C_{a}$ only depends on $c_{\cweight,\kappa_{1}}$, $K_{q}$, $\e_{\que}$, $\Lambda$, $(\bM,\bge_{\refer})$ and a lower bound on
  $\theta_{0,-}$. Combining this estimate with the arguments of Theorem~\ref{thm:mKconvCk} yields the conclusion that $\mK$ and the $\ell_{A}$ converge
  in $C^{0}$ and that the rate is the one indicated in (\ref{eq:mKconvtomKinfHlmatter}) and (\ref{eq:ellAconvtomellAinfHlmatter}).

  The next step is to derive Sobolev estimates. Due to Remark~\ref{remark:formulaerhominusbpmcP}, the estimates we need to derive are similar to
  the ones derived for terms referred to as ``type two'' in the proof of Lemma~\ref{lemma:CkRicciestimates}. Moreover, due to
  (\ref{eq:TheEnergyEstimateKG}), $\phi$ satisfies estimates similar to those satisfied by $\bmu_{A}$. For that reason, the derivation of the estimates
  of the matter contribution to the right hand side of (\ref{eq:mlUmKwithEinstein}) is quite similar to the arguments of
  Lemma~\ref{lemma:SobRicciestimates}. This leads to
  (\ref{eq:SobestofhmlUmKqugeomatter}). The argument to prove (\ref{eq:Sobestofqminusnminonequgeomatter}) is essentially the same. The estimates
  (\ref{eq:mKconvtomKinfHlmatter}) and (\ref{eq:ellAconvtomellAinfHlmatter}) follow by arguments similar to the proof of
  Theorem~\ref{thm:SobconvofmK}. 
\end{proof}

\subsection{Asymptotics of the scalar field}

Next, we derive asymptotics for the scalar field.

\begin{thm}\label{thm:Sobasymptoticsscalarfieldmatter}
  Let $0\leq\cweight\in\ro$, $n\geq 3$, $\kappa_{1}$ be the smallest integer strictly larger than $n/2+1$, $\kappa_{1}\leq l\in\zo$ and assume the
  standard assumptions; see Definition~\ref{def:standardassumptions}; the $(\cweight,l)$-Sobolev assumptions; and the $(\cweight,\kappa_{1})$-supremum
  assumptions to be fulfilled. In particular, the spacetime is $n+1$-dimensional. Assume, moreover, that there are constants $K_{q}$ and $0<\e_{\que}<1$
  such that (\ref{eq:qexpconvergence}) holds and an $\e_{p}\in (0,1)$ such that (\ref{eq:quiescentregime}) holds. Assume, finally, that the
  Einstein-scalar field equations with a cosmological constant $\Lambda$ are satisfied; i.e., (\ref{eq:EE}) and $\Box_g\phi=0$ hold, where $T$ is given
  by (\ref{eq:setnlsf}); and that $\varrho$ diverges uniformly to $-\infty$ in the direction of the singularity. Then there is a function
  $\Psi_{\infty}\in C^{0}(\bM)\cap H^{l-1}(\bM)$ and constants $\sfK_{\phi,l}$, $a_{l,n}$ and $b_{l,n}$ such that
  \begin{equation}\label{eq:asofhUphi}
    \|[\hU(\phi)](\cdot,\tau)-\Psi_{\infty}\|_{H^{l-1}(\bM)}\leq \sfK_{\phi,l}\ldr{\tau}^{\a_{l,n}\cweight+\b_{l,n}}e^{\vare_{\phi}\tau}
  \end{equation}  
  for all $\tau\leq 0$, where $\a_{l,n}$ and $\b_{l,n}$ only depend on $n$ and $l$ and $\sfK_{\phi,l}$ only depends on $s_{\cweight,l}$,
  $c_{\cweight,\kappa_{1}}$, $\hGe_{l}(0)$, $K_{q}$, $\e_{\que}$, $\e_{p}$, $\Lambda$, $(\bM,\bge_{\refer})$ and a lower bound on $\theta_{0,-}$. Moreover,
  $\vare_{\phi}:=\min\{2\vare_{R},\vare_{\Spe}\}$. If, in addition, $l\geq \kappa_{1}+1$, there is a function $\Phi_{\infty}\in C^{0}(\bM)\cap H^{l-2}(\bM)$
  and constants $\sfK_{\phi,2,l}$, $a_{l,n}$ and $b_{l,n}$ such that
  \begin{equation}\label{eq:asofhphi}
    \|\phi(\cdot,\tau)-\Psi_{\infty}\varrho(\cdot,\tau)-\Phi_{\infty}\|_{H^{l-2}(\bM)}\leq \sfK_{\phi,2,l}\ldr{\tau}^{\a_{l,n}\cweight+\b_{l,n}}e^{\vare_{\phi}\tau}
  \end{equation}  
  for all $\tau\leq 0$, where $\a_{l,n}$, $\b_{l,n}$ and $\sfK_{\phi,2,l}$ have the same dependence as the corresponding constants in (\ref{eq:asofhUphi}).
\end{thm}
\begin{remark}\label{remark:asHamConSF}
  Combining (\ref{eq:OmegaplussumellAsqbdpnegsc}) with Remark~\ref{remark:Ckscalarcurvatureestimate}; Theorem~\ref{thm:SobconvofmKdecofhmlUmKmatter};
  (\ref{eq:rhoformula}); the fact that the second term on the right hand side of (\ref{eq:rhoformula}) can be estimated by the right hand side
  of (\ref{eq:rhominusbpmcPABestimatescfield}); and (\ref{eq:asofhUphi}) yields the conclusion that
  \[
  \Psi_{\infty}^{2}+\textstyle{\sum}_{A}\ell_{A,\infty}^{2}=1.
  \]
\end{remark}
\begin{remark}\label{remark:Sobasymptoticsscalarfieldmattersymmetry}
  A similar conclusion holds in case we replace the requirement that (\ref{eq:qexpconvergence}) and (\ref{eq:quiescentregime}) hold with the requirements
  that $\g^{A}_{BC}=0$ for $A<\min\{B,C\}$ and that (\ref{eq:qminusnminusoneest}) hold for $t\leq t_{0}$. In this case, $\vare_{\phi}$ is replaced
  by $\vare_{\Spe}$. Moreover, the constants depend on $d_{q}$ but neither on $K_{q}$, $\e_{\que}$ nor $\e_{p}$. 
\end{remark}
\begin{proof}
  The equation $\Box_{g}\phi=0$ can be written
  \begin{equation}\label{eq:waveequforphi}
    -\hU^{2}(\phi)+\textstyle{\sum}_{A}e^{-2\mu_{A}}X_{A}^{2}(\phi)+\tfrac{1}{n}[q-(n-1)]\hU(\phi)+\hmcY^{A}X_{A}(\phi)=0,
  \end{equation}
  where
  \begin{equation*}
    \begin{split}
      \hmcY^{A} := & e^{-2\mu_{A}}X_{A}(\ln \hN)-2e^{-2\mu_{A}}X_{A}(\mu_{A})+e^{-2\mu_{A}}X_{A}(\mu_{\rotot})\\
      & -2e^{-2\mu_{A}}\alpha_{A}-(n-1)e^{-2\mu_{A}}X_{A}(\ln\theta),
    \end{split}
  \end{equation*}
  where $\alpha_A:=\g_{AB}^B/2$; see (\ref{eq:waveequforphiapp}) and (\ref{eq:hmcYA def}).

  The first goal is to prove that $\hU^{2}(\phi)$ decays exponentially in $H^{l-1}(\bM)$. We do so by proving that all the terms but the first one on
  the left hand side of (\ref{eq:waveequforphi}) decay to zero exponentially. The last term on the left hand side of (\ref{eq:waveequforphi}) consists
  of expressions similar to terms of ``type two'' in the proof of Lemma~\ref{lemma:CkRicciestimates}; see also
  the proofs of Lemma~\ref{lemma:SobbmNestimates} and Theorem~\ref{thm:SobconvofmKdecofhmlUmKmatter}. Similar arguments thus yield
  \[
  \|\hmcY^{A}X_{A}(\phi)\|_{H^{l-1}(\bM)} \leq  C_{a}\ldr{\tau}^{a_{l,n}\cweight+b_{l,n}}e^{2\vare_{\Spe}\tau},
  \]
  where $C_{a}$ only depends on $s_{\cweight,l}$, $c_{\cweight,\kappa_{1}}$, $\e_{\que}$, $K_{q}$, $\hGe_{l}(0)$, $(\bM,\bge_{\refer})$ and a lower bound on
  $\theta_{0,-}$. In order to estimate the second and third terms on the left hand side of (\ref{eq:waveequforphi}), it is necessary to commute
  $E_{\bfI}$ with $\hU$ and $e^{-2\mu_A}X_A^2$. Such arguments are carried out in \cite{RinWave}. Before stating the conclusions, let us begin by recalling the
  terminology of \cite{RinWave}; see also Appendix~\ref{section:summary of RinWave}. To begin with,
  \[
  \mu_{\tg;c}:=\tvarphi_{c}^{-1}\theta^{-(n-1)}\mu_{\chg}=\exp[\ln\tvarphi-\ln\tvarphi_{c}]\mu_{\bge_{\refer}};
  \]
  see (\ref{eq:mutgdef}) and (\ref{eq:mutgreformulation}). Here $\tvarphi=\theta\varphi$, where $\varphi$ is the volume density;
  $\tvarphi_{c}(\bx,t)=\tvarphi(\bx,t_{c})$ for a fixed choice of $t_{c}\leq t_{0}$; $\chg:=\theta^{2}\bge$; and $\mu_{\chg}$ is the volume form
  associated with $\chg$. However, we here assume (\ref{eq:qexpconvergence}) to hold. For this reason, we can appeal to
  Lemma~\ref{lemma:thetavarrhorelqconvtonmotwo}. In fact, chosing $t_{c}=t_{0}$, (\ref{eq:lntvarphimlntvarphicimp}) yields the conclusion that
  \begin{equation}\label{eq:tvarphiminustvarphiclnest}
    |\ln\tvarphi-\ln\tvarphi_{c}|\leq C_{a}
  \end{equation}
  on $M_{-}$, where $C_{a}$ only depends on $c_{\cweight,1}$, $K_{q}$, $\e_{\que}$ and $(\bM,\bge_{\refer})$. For the purposes
  of the present discussion, $\mu_{\tg;c}$ and $\mu_{\bge_{\refer}}$ are thus equivalent. The main energies used in \cite{RinWave} are
  \[
  \hE_{k}[u](\tau;\tau_{c}):=\textstyle{\int}_{\bM_{\tau}}\me_{k}[u]\mu_{\tg;c},
  \]
  where $\me_{k}$ is introduced in (\ref{eq:mekdef}); see \cite[(13.2)]{RinWave}. However, considering (\ref{eq:hGeldef}) and keeping the above
  observations in mind, it is clear that this energy is equivalent to $\hGe_{k}$.

  Given the above observations, we are in a position to estimate the third term on the left hand side of (\ref{eq:waveequforphi}). Applying
  $E_{\bfI}$ to this term results in a linear combination of terms of the form
  \[
  E_{\bfI_{1}}[q-(n-1)]E_{\bfI_{2}}[\hU(\phi)],
  \]
  where $|\bfI_{1}|+|\bfI_{2}|=|\bfI|$. Estimating this expression in $L^{2}$ by appealing to Moser-type estimates; see
  Corollary~\ref{cor:mixedmoserestweight};  yields the conclusion that if $|\bfI|\leq l-1$, it is sufficient to estimate
  \[
  \|q-(n-1)\|_{C^{0}(\bM)}\|\hU(\phi)\|_{H^{l-1}(\bM)}+\|q-(n-1)\|_{H^{l-1}(\bM)}\|\hU(\phi)\|_{C^{0}(\bM)}.
  \]
  Combining this observation with (\ref{eq:meksupestimateKG}), (\ref{eq:Sobestofqminusnminonequgeomatter}) and
  (\ref{eq:hmlUmkqminusnminusoneCzestimatescalarfield}) yields the conclusion that it is sufficient to estimate
  \[
  C_{a}\ldr{\tau}^{\a_{l,n}\cweight+\b_{l,n}}e^{2\vare_{R}\tau}[1+\|\hU(\phi)\|_{H^{l-1}(\bM)}],
  \]
  where $\a_{l,n}$ and $\b_{l,n}$ only depend on $n$ and $l$, and $C_{a}$ only depends on $s_{\cweight,l}$, $c_{\cweight,\kappa_{1}}$, $K_{q}$, $\e_{\que}$, $\e_{p}$,
  $\Lambda$, $\hGe_{l}(0)$,  $(\bM,\bge_{\refer})$ and a lower bound on $\theta_{0,-}$. On the other hand, combining the above observations concerning the
  equivalence of the measures with (\ref{eq:HlmHlequiv}), (\ref{eq:EbfIhUubfIorderone}), (\ref{eq:EbfIhUuestbfIgeneralorder}),
  (\ref{eq:TheEnergyEstimateKG}) and
  (\ref{eq:meksupestimateKG}), it follows that
  \begin{equation}\label{eq:hUphiHlestimatescalarfield}
    \|\hU(\phi)\|_{H^{l}(\bM)}\leq C_{a}\ldr{\tau}^{\a_{l,n}\cweight+\b_{l,n}}
  \end{equation}
  for all $\tau\leq 0$, where $\a_{l,n}$ and $\b_{l,n}$ only depend on $n$ and $l$, and $C_{a}$ only depends on $s_{\cweight,l}$, $c_{\cweight,\kappa_{1}}$, $K_{q}$,
  $\e_{\que}$, $\e_{p}$, $\Lambda$, $\hGe_{l}(0)$, $(\bM,\bge_{\refer})$ and a lower bound on $\theta_{0,-}$. Note, when appealing to (\ref{eq:EbfIhUubfIorderone})
  and (\ref{eq:EbfIhUuestbfIgeneralorder}), that the notation $\|\cdot\|_{2,w}$ is introduced in (\ref{eq:Lpwweightest}) and that $w$ is defined by
  (\ref{eq:wdef}); i.e., $w^{2}=\tvarphi_{c}^{-1}\tvarphi$. Due to (\ref{eq:tvarphiminustvarphiclnest}), $e^{-C_a/2}\leq w\leq e^{C_a/2}$, so that $w$,
  as far as estimates are concerned, can be replaced by $1$. This means that $\|\cdot\|_{2,w}$ is equivalent to the standard $L^{2}$-norm with respect
  to the measure induced by $\bge_{\refer}$. Summarising,
  \[
  \|[q-(n-1)]\hU(\phi)\|_{H^{l-1}(\bM)}\leq C_{a}\ldr{\tau}^{\a_{l,n}\cweight+\b_{l,n}}e^{2\vare_{R}\tau}
  \]
  for all $\tau\leq 0$, where $\a_{l,n}$, $\b_{l,n}$ and $C_{a}$ have the same dependence as in (\ref{eq:hUphiHlestimatescalarfield}). 

  Next, we wish to estimate the second term on the left hand side of (\ref{eq:waveequforphi}). In practice, it is sufficient to estimate
  \begin{equation}\label{eq:highestorderderiphieq}
    \|E_{\bfI}[e^{-2\mu_{A}}X_{A}^{2}(\phi)]\|_{2}\leq \|[E_{\bfI},e^{-2\mu_{A}}X_{A}^{2}](\phi)\|_{2}+\|e^{-2\mu_{A}}X_{A}^{2}[E_{\bfI}(\phi)]\|_{2}
  \end{equation}
  for all $|\bfI|\leq l-1$. Consider the second term on the right hand side of (\ref{eq:highestorderderiphieq}). Note that
  \begin{equation*}
    \begin{split}
      e^{-2\mu_{A}}X_{A}^{2}[E_{\bfI}(\phi)] = & e^{-2\mu_{A}}[(\bD_{X_{A}}\omega^{i})(X_{A})+\omega^{i}(\bD_{X_{A}}X_{A})]E_{i}E_{\bfI}(\phi)\\
      & +e^{-\mu_{A}}\omega^{i}(X_{A})\cdot e^{-\mu_{A}}X_{A}E_{i}E_{\bfI}(\phi).
    \end{split}
  \end{equation*}
  Due to this observation, the fact that $|X_{A}|_{\bge_{\refer}}=1$, (\ref{eq:bDbfAellAetcpteststmtEi}), (\ref{eq:muminmainlowerbound}),
  (\ref{eq:eSpevarrhoeelowtaurelEi}) and
  (\ref{eq:TheEnergyEstimateKG}),
  \[
  \|e^{-2\mu_{A}}X_{A}^{2}[E_{\bfI}(\phi)]\|_{2}\leq C_{a}\ldr{\tau}^{\a_{l,n}\cweight+\b_{l,n}}e^{\vare_{\Spe}\tau}
  \]
  for all $\tau\leq 0$, where $\a_{l,n}$, $\b_{l,n}$ and $C_{a}$ have the same dependence as in (\ref{eq:hUphiHlestimatescalarfield}). What remains to
  be estimated is the first term on the right hand side of (\ref{eq:highestorderderiphieq}). However, this term can be estimated by appealing to
  Lemma~\ref{lemma: 14.8}. Keeping in mind that $\|\cdot\|_{2,w}$ is equivalent to the standard $L^{2}$-norm with respect to the measure induced by
  $\bge_{\refer}$, Lemma~\ref{lemma: 14.8} and arguments similar to the above yield the conclusion that
  \[
  \|[E_{\bfI},e^{-2\mu_{A}}X_{A}^{2}](\phi)\|_{2}\leq C_{a}\ldr{\tau}^{\a_{l,n}\cweight+\b_{l,n}}e^{\vare_{\Spe}\tau}
  \]
  for $|\bfI|\leq l-1$ and $\tau\leq 0$, where $\a_{l,n}$, $\b_{l,n}$ and $C_{a}$ have the same dependence as in (\ref{eq:hUphiHlestimatescalarfield}).
  Combining the above estimates yields the conclusion that
  \begin{equation}\label{eq:hUsqphiHlminusonestimate}
    \|\hU^{2}(\phi)\|_{H^{l-1}(\bM)}\leq C_{a}\ldr{\tau}^{\a_{l,n}\cweight+\b_{l,n}}e^{\vare_{\phi}\tau}
  \end{equation}
  for all $\tau\leq 0$, where $\a_{l,n}$, $\b_{l,n}$ and $C_{a}$ have the same dependence as in (\ref{eq:hUphiHlestimatescalarfield}) and $\vare_{\phi}$
  is defined in the statement of the theorem. 

  Next, in analogy with (\ref{eq:dtauellA}) (recall that $\xi=(\d_{t}\tau)^{-1}\hN$),
  \[
  \d_{\tau}\hU(\phi)=\xi\hU^{2}(\phi)+\xi\hN^{-1}\chi[\hU(\phi)].
  \]
  Combining this identity with (\ref{eq:hUphiHlestimatescalarfield}) and (\ref{eq:hUsqphiHlminusonestimate}); the fact that
  (\ref{eq:hUphiHlestimatescalarfield}) and (\ref{eq:hUsqphiHlminusonestimate}) hold with $H^{l}(\bM)$ and $H^{l-1}(\bM)$, respectively, replaced with
  $C^{0}(\bM)$ (this is an immediate consequence of the fact that $l-1\geq \kappa_{1}-1>n/2$ and Sobolev embedding); the observation concerning $\xi$ made
  in connection with (\ref{eq:EbfIphiterms}); (\ref{eq:EbfIphiterms}); and arguments similar to the ones presented in the proof of
  Theorem~\ref{thm:SobconvofmK} yields
  \begin{equation}\label{eq:dtauhUphi}
    \|\d_{\tau}\hU(\phi)\|_{H^{l-1}(\bM)}\leq C_{a}\ldr{\tau}^{\a_{l,n}\cweight+\b_{l,n}}e^{\vare_{\phi}\tau}
  \end{equation}
  for $\tau\leq 0$, where $\a_{l,n}$, $\b_{l,n}$ and $C_{a}$ have the same dependence as in (\ref{eq:hUphiHlestimatescalarfield}). One particular consequence
  of this estimate is that there is a $\Psi_{\infty}\in H^{l-1}(\bM)\cap C^{0}(\bM)$ (where we appealed to the fact that $l-1>n/2$ and Sobolev embedding to
  conclude that $\Psi_{\infty}\in C^{0}(\bM)$) such that (\ref{eq:asofhUphi}) holds. Note also that (\ref{eq:asofhUphi}) yields
  \begin{equation}\label{eq:PsiinfinfHlminusoneest}
    \|\Psi_{\infty}\|_{\infty}+\|\Psi_{\infty}\|_{H^{l-1}(\bM)}\leq C_{a},
  \end{equation}
  where $C_{a}$ has the same dependence as in (\ref{eq:hUphiHlestimatescalarfield}). 

  Next, note that
  \begin{equation*}
    \begin{split}
      \hU(\phi-\Psi_{\infty}\varrho) = & \hU(\phi)-\Psi_{\infty}\hU(\varrho)-\hU(\Psi_{\infty})\varrho\\
      = & \hU(\phi)-\Psi_{\infty}(1+\hN^{-1}\rodiv_{\bge_{\refer}}\chi)-\hU(\Psi_{\infty})\varrho,
    \end{split}
  \end{equation*}
  where we appealed to (\ref{eq:hUvarrhoident}). Combining this identity with (\ref{eq:asofhUphi}) and the fact that $\Psi_{\infty}$ does
  not depend on the time variable yields
  \begin{equation}\label{eq:hUphiminusPsiinfrho}
    \begin{split}
      \|\hU(\phi-\Psi_{\infty}\varrho)\|_{H^{l-2}(\bM)} \leq & C_{a}\ldr{\tau}^{\a_{l,n}\cweight+\b_{l,n}}e^{\vare_{\phi}\tau}
      +\|\Psi_{\infty}\hN^{-1}\rodiv_{\bge_{\refer}}\chi\|_{H^{l-2}(\bM)}\\
      & +\|\hN^{-1}\chi(\Psi_{\infty})\varrho\|_{H^{l-2}(\bM)}
    \end{split}
  \end{equation}
  for $\tau\leq 0$, where $\a_{l,n}$, $\b_{l,n}$ and $C_{a}$ have the same dependence as in (\ref{eq:hUphiHlestimatescalarfield}). Due to
  Corollary~\ref{cor:mixedmoserestweight}, the second term on the right hand side can be estimated by
  \[
  C_{b}\|\Psi_{\infty}\|_{\infty}\|\hN^{-1}\rodiv_{\bge_{\refer}}\chi\|_{H^{l-2}(\bM)}+
  C_{b}\|\hN^{-1}\rodiv_{\bge_{\refer}}\chi\|_{\infty}\|\Psi_{\infty}\|_{H^{l-2}(\bM)},
  \]
  where $C_{b}$ only depends on $l$ and $(\bM,\bge_{\refer})$. Due to (\ref{eq:PsiinfinfHlminusoneest}), (\ref{eq:rodivchiestimpr}) and 
  (\ref{eq:eSpevarrhoeelowtaurelEi}), this expression can be estimated by
  \[
  C_{a}e^{\vare_{\Spe}\tau}+C_{a}\|\hN^{-1}\rodiv_{\bge_{\refer}}\chi\|_{H^{l-2}(\bM)},
  \]
  where $C_{a}$ has the same dependence as in (\ref{eq:hUphiHlestimatescalarfield}). In order to estimate the last term, recall the notation
  $A^{i}_{k}$ introduced in (\ref{eq:Aialphadef}):
  \[
  A^{k}_{i}:=-\hN^{-1}\omega^{k}(\ml_{\chi}E_{i})=-\hN^{-1}\omega^{k}(\bD_{\chi}E_{i}-\bD_{E_{i}}\chi).
  \]
  However, since (no summation) $\omega^{i}(\bD_{\chi}E_{i})=\bge_{\refer}(\bD_{\chi}E_{i},E_{i})=0$, where we used the fact that $\{E_{i}\}$ is orthonormal
  with respect to $\bge_{\refer}$, we conclude that
  \[
  \textstyle{\sum}_{i}A^{i}_{i}=\hN^{-1}\textstyle{\sum}_{i}\omega^{i}(\bD_{E_{i}}\chi)=\hN^{-1}\rodiv_{\bge_{\refer}}\chi.
  \]
  Combining the above arguments with the fact that $A^{i}_{k}$ can be estimated in $H^{l-2}(\bM)$ by appealing to (\ref{eq:AikmHlestsobassumpEi}),  
  we conclude that
  \begin{equation}\label{eq:PsiinfhNinvrodivbgereferchiest}
    \|\Psi_{\infty}\hN^{-1}\rodiv_{\bge_{\refer}}\chi\|_{H^{l-2}(\bM)}\leq C_{a}\ldr{\tau}^{(l-1)\cweight}e^{\vare_{\Spe}\tau}
  \end{equation}
  for $\tau\leq 0$, where $C_{a}$ has the same dependence as in (\ref{eq:hUphiHlestimatescalarfield}). What remains is to estimate the third term on the
  right hand side of (\ref{eq:hUphiminusPsiinfrho}). However, by appealing to Corollary~\ref{cor:mixedmoserestweight}, it can be estimated by
  \begin{equation*}
    \begin{split}
      & C_{b}\|\hN^{-1}\omega^{i}(\chi)\|_{H^{l-1}(\bM)}\|\Psi_{\infty}\|_{\infty}\|\varrho\|_{\infty}
      +C_{b}\|\hN^{-1}\omega^{i}(\chi)\|_{\infty}\|\Psi_{\infty}\|_{H^{l-1}(\bM)}\|\varrho\|_{\infty}\\
      & +C_{b}\|\hN^{-1}\omega^{i}(\chi)\|_{\infty}\|\Psi_{\infty}\|_{\infty}\|\varrho\|_{H^{l-1}(\bM)},
    \end{split}
  \end{equation*}
  where $C_{b}$ only depends on $l$ and $(\bM,\bge_{\refer})$. Appealing to (\ref{eq:PsiinfinfHlminusoneest}), Remark~\ref{remark:chiclvarrhodecay}, the
  assumptions and the fact that $\ldr{\varrho}$ and $\ldr{\tau}$ are equivalent leads to the conclusion that this expression can be estimated
  by
  \begin{equation}\label{eq:last term phi est}
    \begin{split}
      & C_{a}\ldr{\tau}\|\hN^{-1}\omega^{i}(\chi)\|_{H^{l-1}(\bM)}
      +C_{a}\ldr{\tau}e^{\vare_{\Spe}\tau}+C_{a}e^{\vare_{\Spe}\tau}\|\varrho\|_{H^{l-1}(\bM)},
    \end{split}
  \end{equation}
  where $C_{a}$ has the same dependence as in (\ref{eq:hUphiHlestimatescalarfield}). In order to estimate the last term in (\ref{eq:last term phi est}),
  note that, due to (\ref{eq:varphidefXAver}), 
  \begin{equation}\label{eq:varrho alt form}
    \varrho=\textstyle{\sum}_{A}\bmu_{A}-\tfrac{1}{2}\ln \det\bGe_\refer,
  \end{equation}
  where the components of the matrix $\bGe_\refer$ are defined by (\ref{eq:bGerefer def}). Due to (\ref{eq:bGerefer def}) and
  Lemma~\ref{lemma:frameinvest}, there is a constant $c>1$, depending only on $c_\robas$, such that $c^{-1}\leq \det\bGe_\refer\leq c$.
  Applying $E_{\bfI}$ to the second term on the right hand side of (\ref{eq:varrho alt form}), we thus need to estimate expressions of the
  form
  \[
  |\bD_{\bfI_{1}}X_{A_1}|_{\bge_{\refer}}\cdots |\bD_{\bfI_{j}}X_{A_j}|_{\bge_{\refer}},
  \]
  where $|\bfI_1|+\dots+|\bfI_j|\leq |\bfI|$. However, expressions of this type can be estimated by appealing to (\ref{eq:bDbfAellAetcpteststmtEi})
  and Corollary~\ref{cor:mixedmoserestweight}. Combining this observation with (\ref{eq:varrho alt form}) and (\ref{eq:bmuAmHlestimate}) yields
  \[
  \|\hN^{-1}\chi(\Psi_{\infty})\varrho\|_{H^{l-2}(\bM)}\leq C_{a}\ldr{\tau}^{l\cweight+1}e^{\vare_{\Spe}\tau}
  +C_{a}\ldr{\tau}\|\hN^{-1}\omega^{i}(\chi)\|_{H^{l-1}(\bM)},
  \]
  where $C_{a}$ has the same dependence as in (\ref{eq:hUphiHlestimatescalarfield}). In order to estimate the last term on the right hand side, note that
  $|E_{\bfI}[\hN^{-1}\omega^{i}(\chi)]|$ can be estimated by a linear combination of terms of the form
  \[
  \hN^{-1}|E_{\bfI_{1}}\ln\hN|\cdots |E_{\bfI_{k}}\ln\hN|\cdot |\bD_{\bfJ}\chi|_{\bge_{\refer}},
  \]
  where $|\bfI_{1}|+\dots+|\bfI_{k}|+|\bfJ|\leq |\bfI|$ and $|\bfI_{j}|\neq 0$. Since $\hN$ is equivalent to $\d_{t}\tau$, we can replace
  $\hN^{-1}$ with $(\d_{t}\tau)^{-1}$ in this expression. Doing so and appealing to Corollary~\ref{cor:mixedmoserestweight} and the assumptions, in particular
  (\ref{eq:mKbDlnNchicombest}), yields the conclusion that it is sufficient to estimate terms of the form
  \[
  \|\hN^{-1}\bD_{\bfK}\chi\|_{2}+\|\hN^{-1}\chi\|_{\infty}\|\bD\ln\hN\|_{H^{l-2}(\bM)},
  \]
  where $|\bfK|\leq l-1$. Combining this observation with Lemma~\ref{lemma:chimclbbeWClhy} and the assumptions yields
  \[
  \|\hN^{-1}\chi(\Psi_{\infty})\varrho\|_{H^{l-2}(\bM)}\leq C_{a}\ldr{\tau}^{l\cweight+1}e^{\vare_{\Spe}\tau},
  \]
  where $C_{a}$ has the same dependence as in (\ref{eq:hUphiHlestimatescalarfield}). Combining this estimate with (\ref{eq:hUphiminusPsiinfrho}) and
  (\ref{eq:PsiinfhNinvrodivbgereferchiest}) yields
  \begin{equation}\label{eq:hUphiminusPsiinfrhofinal}
    \begin{split}
      \|\hU(\phi-\Psi_{\infty}\varrho)\|_{H^{l-2}(\bM)} \leq & C_{a}\ldr{\tau}^{\a_{l,n}\cweight+\b_{l,n}}e^{\vare_{\phi}\tau},
    \end{split}
  \end{equation}
  where $\a_{l,n}$, $\b_{l,n}$ and $C_{a}$ have the same dependence as in (\ref{eq:hUphiHlestimatescalarfield}). On the other hand,
  \begin{equation}\label{eq:dtauphiminusPsiinfvarrhoid}
    \d_{\tau}(\phi-\Psi_{\infty}\varrho)=\xi\hU(\phi-\Psi_{\infty}\varrho)+\xi\hN^{-1}\chi(\phi-\Psi_{\infty}\varrho).
  \end{equation}
  Moreover, assuming $l\geq \kappa_{1}+1$, the estimate (\ref{eq:hUphiminusPsiinfrhofinal}) holds with $H^{l-2}(\bM)$ replaced by
  $C^{0}(\bM)$. In that setting, we can estimate the right hand side of (\ref{eq:dtauphiminusPsiinfvarrhoid}) in $H^{l-2}(\bM)$
  by arguments similar to the above. This yields
  \[
  \|\d_{\tau}(\phi-\Psi_{\infty}\varrho)\|_{H^{l-2}(\bM)} \leq C_{a}\ldr{\tau}^{\a_{l,n}\cweight+\b_{l,n}}e^{\vare_{\phi}\tau},
  \]
  where $\a_{l,n}$, $\b_{l,n}$ and $C_{a}$ have the same dependence as in (\ref{eq:hUphiHlestimatescalarfield}). This estimates implies the existence of a
  $\Phi_{\infty}\in H^{l-2}(\bM)\cap C^{0}(\bM)$ such that (\ref{eq:asofhphi}) holds. 
\end{proof}

\appendix

\section{Summary of results from \cite{RinWave}}\label{section:summary of RinWave}

In the present section, we, for the benefit of the reader, recall the terminology and conclusions of \cite{RinWave} we use in this article.

\subsection{Terminology and basic estimates concerning frames}

Let $\{E_i\}$ be the global orthonormal frame on $(\bM,\bge_{\refer})$ introduced in Remark~\ref{remark:globalframe}. Using this frame, we recall
\cite[Definition~4.7]{RinWave}, fixing our conventions concerning multiindices. 
\begin{definition}[Definition~4.7, \cite{RinWave}]\label{def:multiindexnotation}
  Let $(M,g)$ be a time oriented Lorentz manifold. Assume that it has an expanding partial pointed foliation. Assume, moreover, $\mK$ to be
  non-degenerate on $I$ and to have a global frame. Then a \textit{vector field multiindex}
  is a vector, say $\bfI=(I_{1},\dots,I_{l})$,
  where $I_{j}\in \{1,\dots,n\}$. The number $l$ is said to be the \textit{order}
  of the vector field multiindex, and it is denoted by $|\bfI|$. The
  vector field multiindex corresponding to the empty set is denoted by $\bfz$. Moreover, $|\bfz|=0$. Given that the letter used for the vector
  field multiindex is $\bfI$, $\bfJ$ etc.,
  \begin{align*}
    \bfE_{\bfI} := & (E_{I_{1}},\dots,E_{I_{l}}),\ \ \
    \bD_{\bfI}:=\bD_{E_{I_{1}}}\cdots \bD_{E_{I_{l}}},\ \ \
    E_{\bfI} := E_{I_{1}}\cdots E_{I_{l}}
  \end{align*}
  etc. where $\bfI=(I_{1},\dots,I_{l})$, with the special convention that $\bD_{\bfz}$ and $E_{\bfz}$ are the identity operators,
  and $\bfE_{\bfz}$ is the empty argument.
\end{definition}
Next, we recall \cite[Lemma~5.5, Lemma~5.7 and Corollary~5.9]{RinWave}:
\begin{lemma}[Lemma~5.5, \cite{RinWave}]\label{lemma:frameinvest}
  Let $(M,g)$ be a time oriented Lorentz manifold. Assume that it has an expanding partial pointed foliation. Assume, moreover, $\mK$ to be
  non-degenerate on $I$, to have a global frame and to be $C^{0}$-uniformly bounded on $I_{-}$; i.e., (\ref{eq:mKsupbasest}) to hold. Then there
  is a constant $C_{Y}$, depending only on $n$, $\mKsup$ and $\e_{\rond}$, such that $|Y^{A}|_{\bge_{\refer}}\leq C_{Y}$ on $\bM_{t}$ for all $A$ and
  $t\in I_{-}$.
\end{lemma}
\begin{lemma}[Lemma~5.7, \cite{RinWave}]\label{lemma:bDbfAbDkequiv}
  Let $(M,g)$ be a time oriented Lorentz manifold. Assume that it has an expanding partial pointed foliation. Assume, moreover, $\mK$ to be
  non-degenerate on $I$ and to have a global frame. Let $\mt$ be a family of tensor fields on $\bM$ for $t\in I$. Then $\bD_{\bfI}\mt$ can be
  written as a linear combination of terms of the form
  \[
  (\bD^{k}\mt)(\bfE_{\bfI_{1}})\omega^{J_{1}}(\bD_{\bfJ_{1}}E_{K_{1}})\cdots \omega^{J_{l}}(\bD_{\bfJ_{l}}E_{K_{l}}),
  \]
  where $|\bfI|=k+|\bfJ_{1}|+\dots+|\bfJ_{l}|$ and $k\geq 1$ if $|\bfI|\geq 1$. Similarly, if $k=|\bfI|$, then $(\bD^{k}\mt)(\bfE_{\bfI})$ can be
  written as a linear combination of terms of the form
  \[
  (\bD_{\bfJ}\mt)\omega^{I_{1}}(\bD_{\bfJ_{1}}E_{K_{1}})\cdots \omega^{I_{l}}(\bD_{\bfJ_{l}}E_{K_{l}}),
  \]
  where $k=|\bfJ|+|\bfJ_{1}|+\dots+|\bfJ_{l}|$ and $|\bfJ|\geq 1$ if $k\geq 1$.
\end{lemma}
\begin{cor}[Corollary~5.9, \cite{RinWave}]\label{cor:covderofframe}
  Let $(M,g)$ be a time oriented Lorentz manifold. Assume that it has an expanding partial pointed foliation. Assume, moreover, $\mK$ to be
  non-degenerate on $I$, to have a global frame and to be $C^{0}$-uniformly bounded on $I_{-}$; i.e. (\ref{eq:mKsupbasest}) to hold. Let $\xi$
  be a vector field on $\bM\times I$ which is tangent to the constant-$t$ hypersurfaces. Then there is a constant $C_{1}$, depending only on $n$, $\mKsup$
  and $\e_{\rond}$ such that
  \begin{equation}\label{eq:covderofframe}
    |\bD_{\xi}\ell_{A}|+|\bD_{\xi} Y^{A}|_{\bge_{\refer}}+|\bD_{\xi} X_{A}|_{\bge_{\refer}}\leq C_{1} |\xi|_{\bge_{\refer}}|\bD\mK|_{\bge_{\refer}}
  \end{equation}
  on $\bM_{t}$ for all $A\in \{1,\dots,n\}$ and $t\in I_{-}$. Defining the structure constants, say $\g_{AB}^{C}$, of the $X_{A}$ by
  $[X_{A},X_{B}]=\g_{AB}^{C}X_{C}$, the estimate
  \begin{equation}\label{eq:gaCABbasest}
    |\g^{C}_{AB}|\leq C_{1}|\bD\mK|_{\bge_{\refer}}
  \end{equation}
  also holds on $\bM_{t}$ for all $A,B,C\in \{1,\dots,n\}$ and $t\in I_{-}$.
\end{cor}
It will be convenient to use the following notation:
\begin{definition}[Definition~5.10, \cite{RinWave}]\label{def:mfPmKhN}
  Let $(M,g)$ be a time oriented Lorentz manifold. Assume it to have an expanding partial pointed foliation. Given $0\leq m\in\zo$, let
  \begin{align*}
    \mfP_{\mK,m} := & \textstyle{\sum}_{m_{1}+\dots+m_{j}=m,m_{i}\geq 1}|\bD^{m_{1}}\mK|_{\bge_{\refer}}\cdots |\bD^{m_{j}}\mK|_{\bge_{\refer}},\\
    \mfP_{N,m} := & \textstyle{\sum}_{m_{1}+\dots+m_{j}=m,m_{i}\geq 1}|\bD^{m_{1}}\ln\hN|_{\bge_{\refer}}\cdots |\bD^{m_{j}}\ln\hN|_{\bge_{\refer}},\\
    \mfP_{\mK,N,m} := & \textstyle{\sum}_{m_{1}+m_{2}=m}\mfP_{\mK,m_{1}}\mfP_{N,m_{2}},
  \end{align*}
  with the convention that $\mfP_{\mK,0}=1$ and $\mfP_{N,0}=1$.
\end{definition}
Next, we need estimates for the higher order derivatives of $\ell_{A}$, $X_{A}$ and $Y^{A}$. 
\begin{lemma}[Lemma~5.11, \cite{RinWave}]\label{lemma:bDbfabDlmjchKest}
  Let $(M,g)$ be a time oriented Lorentz manifold. Assume it to have an expanding partial pointed foliation. Assume, moreover, $\mK$ to be
  non-degenerate on $I$, to have a global frame and to be $C^{0}$-uniformly bounded on $I_{-}$; i.e. (\ref{eq:mKsupbasest}) to hold. Then, for
  every pair of integers $j$ and $l$ satisfying $1\leq j\leq l$, and every multiindex $\bfI$ with $|\bfI|=j$, there is a constant $D_{\mK,l}$,
  depending only on $l$, $n$ and $(\bM,\bge_{\refer})$, such that
  \begin{equation}\label{eq:bDbfAbDlmjmKpteststmtEi}
    |\bD_{\bfI}\bD^{l-j}\mK|_{\bge_{\refer}}\leq D_{\mK,l}\textstyle{\sum}_{m=l-j+1}^{l}|\bD^{m}\mK|_{\bge_{\refer}}
  \end{equation}
  on $\bM\times I_{-}$. Similarly, there is a constant $\md_{\mK,j}$ depending only on $\mKsup$, $n$, $l$, $\e_{\rond}$ and $(\bM,\bge_{\refer})$ such that
  \begin{equation}\label{eq:bDbfAellAetcpteststmtEi}
    |\bD_{\bfI}\ell_{A}|+|\bD_{\bfI}X_{A}|_{\bge_{\refer}}+|\bD_{\bfI}Y^{A}|_{\bge_{\refer}}\leq \md_{\mK,j}\textstyle{\sum}_{m=1}^{j}\mfP_{\mK,m}
  \end{equation}
  on $\bM\times I_{-}$.
\end{lemma}

\subsection{Basic formulae}
In this article, we need the following three basic formulae derived in \cite{RinWave}:
\begin{align}
  \chth = & 1+\hU(n\ln\theta)=-q,\label{eq:chthexpressionhUlntheta}\\
  \hU(\ell_{A}) = & (\hml_{U}\mK)(Y^{A},X_{A}),\label{eq:hU ellA}\\
  \hU(\varrho) = & 1+\hN^{-1}\rodiv_{\bge_{\refer}}\chi,\label{eq:hUvarrhoident}
\end{align}
see \cite[(3.5)]{RinWave}, \cite[(6.6)]{RinWave} and \cite[(7.9)]{RinWave} respectively, where there is no summation
in the second equality. Moreover, defining $A_i^0$ and $A_i^k$ by
\begin{align}  
  A_{i}^{0} := & E_{i}(\ln\hN),\ \ \
  A^{k}_{i} := -\hN^{-1}\omega^{k}(\ml_{\chi}E_{i}),\label{eq:Aialphadef}
\end{align}
see \cite[(6.22)]{RinWave}, we have $[\hU,E_{i}]=A_{i}^{0}\hU+A_{i}^{k}E_{k}$. Next, 
\begin{align}  
  \exp\left(\textstyle{\sum}_{A}\bmu_{A}\right) = & \varphi\cdot (\det\bGe_{\refer})^{1/2},\label{eq:varphidefXAver}
\end{align}
see \cite[(7.12)]{RinWave}. Here $\bGe_{\refer}$ and $\bGe_\refer^{-1}$ are matrices with components
\begin{equation}\label{eq:bGerefer def}
  \bGe_{\refer,AB}=\bge_{\refer}(X_{A},X_{B})=\textstyle{\sum}_{i}X_{A}^{i}X_{B}^{i},\ \ \
  (\bGe_\refer^{-1})^{AB}=\textstyle{\sum}_j Y^{A}_jY^B_j
\end{equation}
and $X_{A}^{i}$ and $Y^{A}_{i}$ are the components of $X_{A}$ and $Y^{A}$ respectively with respect to $\{E_i\}$.

\subsection{Bounds on the lengths of the eigenvector fields}

An important result, giving a lower bound on the $\mu_A$ is \cite[Lemma~7.5]{RinWave}:
\begin{lemma}[Lemma~7.5, \cite{RinWave}]\label{lemma:lowerbdonmumin}
  Let $(M,g)$ be a time oriented Lorentz manifold with an expanding partial pointed foliation. Assume $\mK$ to be non-degenerate on $I$, to have a
  global frame and to be $C^{0}$-uniformly bounded on $I_{-}$; i.e. (\ref{eq:mKsupbasest}) to hold. Assume $\chK$ to have a silent upper bound on $I$.
  Assume, moreover, that $\hml_{U}\mK$ satisfies a weak off-diagonal exponential bound; see Definition~\ref{def:offdiagonalexpdec}. Let $\e_{\chi}$ be
  defined by
  \begin{equation}\label{eq:echibasicassumptionmod}
    \e_{\chi}:=\tfrac{1}{4}e^{-M_{\mu}}\min\{1,\e_{\Spe}\},
  \end{equation}
  where $M_{\mu}$ is defined by
  \begin{equation}\label{eq:Mmudef}
    M_{\mu}:=(n+1)M_{0}+C_{\det,\rond}+\tfrac{1}{2};
  \end{equation}
  $C_{\det,\rond}$ is a constant depending only on $n$, $\mKsup$ and $\e_\rond$; $M_{0}$ is defined by
  \begin{equation}\label{eq:Mzdef}
    M_{0}:=\tfrac{3(n-1)}{\e_{\rond}\e_{\mK}}(C_{\mK,\mrod}+3M_{\mK,\mrod})+\tfrac{1}{2};
  \end{equation}
  and $\e_{\rond}$ is the constant appearing in Definition~\ref{def:silenceandnondegeneracy}. Assume, finally, that
  \begin{align}
    n^{1/2}\theta_{0,-}^{-1}|\bD\chi|_{\rohy} \leq & \e_{\chi},\label{eq:chiestfirststepestnoff}
  \end{align}
  for all $t\in I_{-}$, where $\theta_{0,-}$ is defined by (\ref{eq:thetazdef}). Then
  \begin{align}
    \hN^{-1}|\rodiv_{\bge_{\refer}}\chi| \leq & \min\{1,\e_{\Spe}\}e^{\e_{\Spe}\varrho},\label{eq:rodivchiestimpr}\\
    (2\hN)^{-1}|(\ml_{\chi}\bge_{\refer})(X_{A},X_{A})| \leq & \min\{1,\e_{\Spe}\}e^{\e_{\Spe}\varrho},\label{eq:WAAchicontrestimpr}\\
    \mu_{\min} \geq & -\e_{\Spe}\varrho+\ln\theta_{0,-}-M_{\min}\label{eq:muminmainlowerbound}
  \end{align}
  (no summation on $A$ in the second estimate) on $M_{-}$, where $M_{\min}:=M_{\mu}+1$. Moreover, if $\g$ is an integral curve of
  $\hU$ with $\g(0)\in\bM\times \{t_{0}\}$, then
  \begin{align}
    [\hN^{-1}|\rodiv_{\bge_{\refer}}\chi|]\circ\g(s) \leq & \tfrac{1}{4}\min\{1,\e_{\Spe}\}e^{\e_{\Spe}s},\label{eq:rodivchiroughestimpr}\\
    [(2\hN)^{-1}|(\ml_{\chi}\bge_{\refer})(X_{A},X_{A})|]\circ\g(s)
    \leq & \tfrac{1}{4}\min\{1,\e_{\Spe}\}e^{\e_{\Spe}s},\label{eq:WAAchicontrroughestimpr}\\
    \mu_{\min}\circ\g(s) \geq & -\e_{\Spe}s+\ln\theta_{0,-}-M_{\mu}\label{eq:mumincircgammalowerbound}
  \end{align}
  for all $s\leq 0$ such that $\g(s)\in M_{-}$. Moreover,
  \begin{equation}\label{eq:varrhosequivalencestmt}
    s-1/2\leq \varrho\circ\g(s)\leq s+1/2
  \end{equation}
  for all $s\leq 0$ such that $\g(s)\in M_{-}$.
\end{lemma}
In the course of the proof of this statement, the following three estimates occur. First,
\begin{equation}\label{eq:int mWAA}
  \textstyle{\int}_{s}^{0}|\mW_{A}^{A}\circ\g(u)|du\leq M_0
\end{equation}
(no summation on $A$) for all $A>1$ and $s\in J_-$ (see below for the definition of $J_-$); see \cite[(7.39)]{RinWave}. Here $\mW_{A}^{B}$
is defined by the equality
\begin{equation}\label{eq:mWABdef}
  \hml_{U}X_A=\mW_A^B X_B+\overline{W}_A^0U;
\end{equation}
see \cite[(6.5)]{RinWave}. Second, 
\begin{equation}\label{eq:bmuAeqintellA}
\big|\bmu_{A}\circ\g(s)+\textstyle{\int}_{s}^{0}\ell_{A}\circ\g(u)du\big|\leq M_{0}
\end{equation}
for all $A>1$ and $s\in J_{-}$; see \cite[(7.41)]{RinWave}. Third, 
\begin{equation}\label{eq:bmuoneeqintellone}
\big|\textstyle{\int}_{s}^{0}\ell_{1}\circ\g(u)du+\bmu_{1}\circ\g(s)\big|\leq (n-1)M_{0}+C_{\det,\rond}+\tfrac{1}{2}
\end{equation}
for all $s\in J_{-}$, see \cite[(7.42)]{RinWave}. In the last three estimates, we have replaced $J_1$ (appearing in \cite{RinWave}) with $J_-$, since
it is demonstrated that $J_1=J_-$ at the end of the proof of \cite[Lemma~7.5]{RinWave}. Moreover, if $J$ is the domain of definition of $\g$, then
$J_-$ is the set of $s\in J$ such that $s\leq 0$ and $\g(s)\in M_-$. Recall also that $M_-$ is introduced in Lemma~\ref{lemma:smallnessshiftconsequences}.

Combining \cite[(7.32)]{RinWave} and \cite[(7.33)]{RinWave} yields the conclusion that, if $s_a,s_b\in J_-$ with $s_a\leq s_b$, 
\begin{equation}\label{eq:muA diff A gt one}
\big|\mu_A\circ\g(s_b)-\mu_A\circ\g(s_a)-\textstyle{\int}_{s_{a}}^{s_b}[\ell_A\circ\g(s)-n^{-1}(q\circ\g(s)+1)]ds\big|\leq
\textstyle{\int}_{s_a}^{s_b}|\mW_{A}^{A}\circ\g(s)|ds
\end{equation}
for $A>1$. Next, due to \cite[(7.11)]{RinWave}, 
\[
\big|\textstyle{\sum}_A\mu_A-\varrho-n\ln\theta\big|\leq C_{\mathrm{det},\rond}
\]
on $M_-$, where we used the fact that $\mu_A=\bmu_A+\ln\theta$. Combining this estimate with (\ref{eq:varrhosequivalencestmt}) yields
\begin{equation}\label{eq:sum muA est}
\big|\textstyle{\sum}_A\mu_A\circ\g(s_b)-\textstyle{\sum}_A\mu_A\circ\g(s_a)-(s_b-s_a)-n\ln\theta\circ\g(s_b)+n\ln\theta\circ\g(s_a)\big|
\leq 2C_{\mathrm{det},\rond}+1.
\end{equation}
Combining (\ref{eq:int mWAA}), (\ref{eq:muA diff A gt one}) and (\ref{eq:sum muA est}) with the fact that $\sum_A\ell_A=1$ yields
\begin{equation}\label{eq:mu one est}
  \begin{split}
    & \big|\mu_1\circ\g(s_b)-\mu_1\circ\g(s_a)-\textstyle{\int}_{s_{a}}^{s_b}[\ell_1\circ\g(s)-n^{-1}(q\circ\g(s)+1)]ds\big|\\
    \leq & (n-1)M_0+2C_{\mathrm{det},\rond}+1.
  \end{split}
\end{equation}
Combining (\ref{eq:int mWAA}), (\ref{eq:muA diff A gt one}) and (\ref{eq:mu one est}) yields
\begin{equation}\label{eq:mu123 est}
  \begin{split}
    & \big|\mu_{\rowo}\circ\g(s_b)-\mu_{\rowo}\circ\g(s_a)-\textstyle{\int}_{s_{a}}^{s_b}[\ell_{\rowo}\circ\g(s)+n^{-1}(q\circ\g(s)+1)]ds\big|\\
    \leq & (n+1)M_0+2C_{\mathrm{det},\rond}+1,
  \end{split}
\end{equation}
where we use the notation $\ell_{\rowo}:=\ell_{1}-\ell_{n-1}-\ell_n$ and $\mu_{\rowo}:=\mu_1-\mu_{n-1}-\mu_n$.

Next, under the assumptions of Lemma~\ref{lemma:lowerbdonmumin}, there is a constant $M_{\rodiff}$, depending only on $c_\robas$, such that, assuming $A>B$, 
\begin{align}
\bmu_{A}-\bmu_{B} \leq & (A-B)\e_{\rond}\varrho+M_{\rodiff},\label{eq:bmuAmbmuBlowbd}\\
\ln\theta \geq & -(n^{-1}+\e_{\Spe})\varrho+\ln\theta_{0,-}-2\label{eq:lnthetalowbd}
\end{align}
on $M_{-}$; see \cite[(7.50) and (7.51)]{RinWave}. Next, we recall \cite[Lemma~3.33]{RinWave}.

\begin{lemma}[Lemma~3.33, \cite{RinWave}]\label{lemma:smallnessshiftconsequences}
  Assume the conditions of Definition~\ref{def:basicassumptions} to be fulfilled; i.e., the basic assumptions to hold. Assume, moreover, that
  there is a constant $c_{\chi,2}$ such that
  \[
  \theta_{0,-}^{-1}\|\chi\|_{C^{2,\weight_{0}}_{\rohy}(\bM)}\leq c_{\chi,2}
  \]
  holds for all $t\in I_{-}$, where $\weight_{0}$ is the same as in Definition~\ref{def:basicassumptions}. Then there is an $\e_{\chi}>0$, depending only
  on $c_{\robas}$, and a $\de_{\chi}$, depending only on $c_{\robas}$, $c_{\chi,2}$ and $(\bM,\bge_{\refer})$, such that if
  \begin{align}
    n^{1/2}\theta_{0,-}^{-1}|\chi|_{\rohy} \leq & \de_{\chi},\label{eq:dechicontrol}\\
    n^{1/2}\theta_{0,-}^{-1}|\bD\chi|_{\rohy} \leq & \e_{\chi}\label{eq:echicontrol}
  \end{align}
  hold on $M_{-}:=\bM\times I_{-}$,
  then
  \begin{equation}\label{eq:muminmainlowerboundintro}
    \mu_{\min} \geq  -\e_{\Spe}\varrho+\ln\theta_{0,-}-M_{\min}
  \end{equation}
  on $M_{-}$, where $M_{\min}$ only depends on $c_{\robas}$. Here $\mu_{\min}:=\min_{A}\mu_{A}$.
  Moreover, there is a constant $C_{\varrho}$, depending only
  on $c_{\robas}$, $c_{\chi,2}$ and $(\bM,\bge_{\refer})$, such that $|\bD\varrho|_{\bge_{\refer}}\leq C_{\varrho}\ldr{\varrho}$. Next, there is a constant
  $K_{\rovar}$, depending only on $\bDlnhNsup$ and $(\bM,\bge_{\refer})$, such that if $\bx_{1},\bx_{2}\in\bM$ and $t_{1},t_{2}\in I_{-}$ are such that
  $t_{1}<t_{2}$, then 
  \begin{equation}\label{eq:DeltavarrhorelvariationEiintro}
    \tfrac{1}{3K_{\rovar}}\leq \tfrac{\varrho(\bx_{2},t_{2})-\varrho(\bx_{2},t_{1})}{\varrho(\bx_{1},t_{2})-\varrho(\bx_{1},t_{1})}\leq 3K_{\rovar}.
  \end{equation}
  Finally  
  \begin{equation}\label{eq:hNinvdtvarrhoestEiintro}
    1/2\leq \hN^{-1}\d_{t}\varrho\leq 3/2
  \end{equation}
  holds on $M_{-}$. 
\end{lemma}

Next, let us quote \cite[Lemma~7.17]{RinWave}.
\begin{lemma}[Lemma~7.17, \cite{RinWave}]\label{lemma:epsilonlowdefEi}
  With assumptions and notation as in Lemma~\ref{lemma:smallnessshiftconsequences}, let $\tau$ be defined by (\ref{eq:taudefinition}).
  Then
  \begin{equation}\label{eq:eSpevarrhoeelowtaurelEi}
    e^{\e_{\Spe}\varrho(\bx,t)}\leq e^{\eSpe\tau(t)}
  \end{equation}
  for all $(\bx,t)\in M_{-}$, where $\eSpe:=\e_{\Spe}/(3K_{\rovar})$. Similarly, if $t_{1}\leq t_{2}\leq t_{0}$ and $\bx\in\bM$,
  \begin{equation}\label{eq:emKbxtottEi}
    e^{\e_{\mK}[\varrho(\bx,t_{1})-\varrho(\bx,t_{2})]}\leq e^{\emK[\tau(t_{1})-\tau(t_{2})]}
  \end{equation}
  where $\emK:=\e_{\mK}/(3K_{\rovar})$.
  Finally,
  \begin{equation}\label{eq:hNtaudotequivEi}
    (2K_{\rovar})^{-1}\leq [\hN(\bx,t)]^{-1}\d_{t}\tau(t)\leq 2K_{\rovar}
  \end{equation}
  for all $t\in I_{-}$ and $\bx\in \bM$.
\end{lemma}
\begin{remark}\label{remark:assumptions 7.17}
  Note that in the statement of \cite[Lemma~7.17]{RinWave}, it is assumed that the conditions of \cite[Lemma~7.13]{RinWave} are satisfied. However,
  Lemma~\ref{lemma:smallnessshiftconsequences}, which is the same as \cite[Lemma~3.33]{RinWave}, is based on a combination of
  \cite[Lemmas~7.5, 7.12 and 7.13]{RinWave}; see the proof of \cite[Lemma~3.33]{RinWave}. In other words, if the assumptions of
  Lemma~\ref{lemma:smallnessshiftconsequences} are satisfied, the assumptions of \cite[Lemma~7.13]{RinWave} are satisfied.
\end{remark}
\begin{remark}
  The starting point for proving (\ref{eq:hNtaudotequivEi}) is the identity (\ref{eq:hUvarrhoident}), which can be written
  \[
  \hN^{-1}\d_t\varrho = 1+\hN^{-1}\chi(\varrho)+\hN^{-1}\rodiv_{\bge_{\refer}}\chi.
  \]
  Due to the smallness assumptions we make concerning the shift vector field, this identity implies that (\ref{eq:hNinvdtvarrhoestEiintro})
  holds. Combining (\ref{eq:hNinvdtvarrhoestEiintro}) with (\ref{eq:taudefinition}) and
  (\ref{eq:mKbDlnNchicombest}) yields (\ref{eq:hNtaudotequivEi}). We refer the reader interested in the details to the proofs of
  \cite[Lemmas~7.13 and 7.17]{RinWave} for details. 
\end{remark}

We also need \cite[Lemma~7.19]{RinWave}.
\begin{lemma}[Lemma~7.19, \cite{RinWave}]\label{lemma:thetavarrhorelqconvtonmotwo}
  Assume the conditions of Definition~\ref{def:basicassumptions} and of Lemma~\ref{lemma:smallnessshiftconsequences} to be satisfied. Assume, moreover,
  \begin{equation}\label{eq:Kthetaoneestimate}
    \|\ln\theta\|_{C^{\bfl_{0}}_{\weight_{0}}(\bM)}\leq c_{\theta,1}
  \end{equation}
  to be satisfied for all $t\in I_{-}$, where $\bfl_{0}:=(1,1)$. Let $t_{c}\in I_{-}$ and $\tvarphi:=\theta\varphi$, where $\varphi$ is defined by
  (\ref{eq:varphidefitobmubgedp}). Define $\tvarphi_{c}$ by $\tvarphi_{c}(\bx,t):=\tvarphi(\bx,t_{c})$. Finally, let
  \begin{equation}\label{eq:tetaonedefprel}
    \teta_{1} := \textstyle{\frac{1}{n}}|q-(n-1)|.
  \end{equation}
  Then
  \begin{equation}\label{eq:lntvarphimlntvarphic}
    |\ln\tvarphi-\ln\tvarphi_{c}|\leq C_{a}\ldr{\tau_{c}}^{\bcweight}e^{\eSpe\tau_{c}}+2K_{\rovar}\textstyle{\int}_{\tau}^{\tau_{c}}\teta_{1}(\cdot,s)ds
  \end{equation}
  on $M_{c}:=\{(\bx,t)\in\bM\times I:t\leq t_{c}\}$, where $\tau_{c}:=\tau(t_{c})$, $\bcweight:=\max\{1,\cweight\}$ and $C_{a}$ only depends on
  $c_{\robas}$, $c_{\chi,2}$, $c_{\theta,1}$ and $(\bM,\bge_{\refer})$. Assuming, in addition to the above, that there is a constant $d_{q}$ such that 
  \begin{equation}\label{eq:qconvergence}
    \|\ldr{\varrho(\cdot,t)}^{3/2}[q(\cdot,t)-(n-1)]\|_{C^{0}(\bM)} \leq d_{q}
  \end{equation}
  for all $t\in I_{-}$,
  \begin{equation}\label{eq:lntvarphimlntvarphicimp}
    |\ln\tvarphi-\ln\tvarphi_{c}|\leq C_{a}\ldr{\tau_{c}}^{\bcweight}e^{\eSpe\tau_{c}}+C_{b}\ldr{\tau_{c}}^{-1/2}
  \end{equation}
  on $M_{c}$, where $C_{a}$ has the same dependence as in the case of (\ref{eq:lntvarphimlntvarphic}) and $C_{b}$ only depends on $K_{\rovar}$
  and $d_{q}$. 
\end{lemma}
Note that Remark~\ref{remark:assumptions 7.17} is equally relevant in this setting. In the context of Lemma~\ref{lemma:thetavarrhorelqconvtonmotwo},
it is convenient to recall \cite[(11.40)]{RinWave}:
\begin{equation}\label{eq:mutgreformulation}
  \begin{split}
    \tvarphi_{c}^{-1}\theta^{-(n-1)}\mu_{\chg} = & \tvarphi_{c}^{-1}\theta^{-(n-1)}\theta^{n}\mu_{\bge}=\tvarphi_{c}^{-1}\theta \varphi\mu_{\bge_{\refer}}\\
    = & \tvarphi_{c}^{-1}\tvarphi\mu_{\bge_{\refer}}=\exp[\ln\tvarphi-\ln\tvarphi_{c}]\mu_{\bge_{\refer}}.
  \end{split}    
\end{equation}
In what follows, we also use $w$, defined by
\begin{equation}\label{eq:wdef}
  w : = \tvarphi_{c}^{-1/2}\tvarphi^{1/2},
\end{equation}
as a weight in our estimates; see \cite[(14.2)]{RinWave}. More specifically, we use the notation
\begin{align}
  \|\mt(\cdot,t)\|_{p,w} := & \left(\textstyle{\int}_{\bM}|\mt(\cdot,t)|_{\bge_{\refer}}^{p}w^{p}(\cdot,t)\mu_{\bge_{\refer}}\right)^{1/p},\label{eq:Lpwweightest}\\
  \|\mt(\cdot,t)\|_{\infty,w} := & \sup_{\bx\in\bM}|\mt(\bx,t)|_{\bge_{\refer}}w(\bx,t),\label{eq:Linftyweightest}
\end{align}
see \cite[(14.5) and (14.6)]{RinWave}. Moreover, we use the notation 
\[
\|\bD^{1}_{\bbE}u\|_{\infty,w}:=\textstyle{\sum}_{i=1}^{n}\|E_{i}u\|_{\infty,w}.
\]

\subsection{Function spaces}

Next, we recall some function spaces from \cite[Chapter~8]{RinWave}. We begin by recalling \cite[(8.1) and (8.2)]{RinWave}.
Let $(\weight_{a},\weight_{b})=\weight\in\Weight$ and $(l_{0},l_{1})=\bfl\in\Index$. Define, using the notation introduced in
Definition~\ref{def:multiindexnotation},
\begin{subequations}\label{seq:bbE norms}
  \begin{align}
    \|\mt(\cdot,t)\|_{\mc^{\bfl}_{\bbE,\weight}(\bM)}
    := & \textstyle{\sup}_{\bx\in\bM}
         \big(\textstyle{\sum}_{j=l_{0}}^{l_{1}}\sum_{|\bfI|=j}
         \ldr{\varrho(\bx,t)}^{-2\weight_{a}-2j\weight_{b}}|\bD_{\bfI}\mt(\bx,t)|_{\bge_{\refer}}^{2}\big)^{1/2},\label{eq:mtmClbbEbS}\\
    \|\mt(\cdot,t)\|_{\mH^{\bfl}_{\bbE,\weight}(\bM)}
    := & \big(\textstyle{\int}_{\bM}\textstyle{\sum}_{j=l_{0}}^{l_{1}}\sum_{|\bfI|=j}
         \ldr{\varrho(\cdot,t)}^{-2\weight_{a}-2j\weight_{b}}|\bD_{\bfI}\mt(\cdot,t)|_{\bge_{\refer}}^{2}\mu_{\bge_{\refer}}\big)^{1/2}.\label{eq:mtmHlbbEbS}
  \end{align}
\end{subequations}
If $l_{0}=0$, then we replace $\bfl$ in (\ref{seq:bbE norms}) with $l:=l_{1}$. Next, assuming $l_{0}\leq 1$, there are constants
$C_{\sup,\bfl},C_{\rosob,\bfl}\geq 1$, depending only on $\bfl$, $n$, $(\bM,\bge_{\refer})$ and the type of the tensor field, such that
\begin{subequations}
  \begin{align}
    C_{\sup,\bfl}^{-1}\|\mt(\cdot,t)\|_{C^{\bfl}_{\weight}(\bM)} \leq & \|\mt(\cdot,t)\|_{\mc^{\bfl}_{\bbE,\weight}(\bM)}
                                                                        \leq C_{\sup,\bfl}\|\mt(\cdot,t)\|_{C^{\bfl}_{\weight}(\bM)},\label{eq:Clmclequiv}\\
    C_{\rosob,\bfl}^{-1}\|\mt(\cdot,t)\|_{H^{\bfl}_{\weight}(\bM)} \leq & \|\mt(\cdot,t)\|_{\mH^{\bfl}_{\bbE,\weight}(\bM)}
                                                                          \leq C_{\rosob,\bfl}\|\mt(\cdot,t)\|_{H^{\bfl}_{\weight}(\bM)},\label{eq:HlmHlequiv}
  \end{align}
\end{subequations}
see \cite[(8.5) and (8.6)]{RinWave}. Next, let $(l_{0},l_{1})=\bfl\in\Index$, $(\weight_{a},\weight_{b})=\weight\in\Weight$ and recall
\cite[(8.9) and (8.10)]{RinWave}:
\begin{subequations}
  \begin{align}
    \|\mt(\cdot,t)\|_{\mH^{\bfl,\weight}_{\bbE,\rocon}(\bM)}
    := & \big(\textstyle{\int}_{\bM}\textstyle{\sum}_{l_{0}\leq|\bfI|\leq l_{1}}\hN^{-2}(\cdot,t)
         \ldr{\varrho(\cdot,t)}^{-2\weight_{a}-2|\bfI|\weight_{b}}|\bD_{\bfI}\mt(\cdot,t)|_{\bge_{\refer}}^{2}\bmu_{\bge_{\refer}}\big)^{1/2},\label{eq:mHlbbeWnorm}\\
    \|\mt(\cdot,t)\|_{\mc^{\bfl,\weight}_{\bbE,\rocon}(\bM)}
    := & \textstyle{\sup}_{\bx\in\bM}\big(\textstyle{\sum}_{l_{0}\leq |\bfI|\leq l_{1}}\hN^{-2}(\bx,t)
         \ldr{\varrho(\bx,t)}^{-2\weight_{a}-2|\bfI|\weight_{b}}|\bD_{\bfI}\mt(\bx,t)|_{\bge_{\refer}}^{2}\big)^{1/2}.\label{eq:mClbbeWnorm}
  \end{align}
\end{subequations}
Next, we quote the first half of \cite[Lemma~8.4]{RinWave}.
\begin{lemma}[Lemma~8.4, \cite{RinWave}]\label{lemma:chimclbbeWClhy}
  Given that the assumptions of Lemma~\ref{lemma:smallnessshiftconsequences} hold, let $\tau$ be defined by (\ref{eq:taudefinition}). Let $\xi$ be
  a vector field on $\bM$, $(l_{0},l_{1})=\bfl\in\Index$ and $(\weight_{a},\weight_{b})=\weight\in\Weight$. Then, assuming $l_{0}\leq 1$,
  \begin{subequations}
  \begin{align}
    \|\xi(\cdot,t)\|_{\mH^{\bfl,\weight}_{\bbE,\rocon}(\bM)} \leq & Ce^{M_{\min}}e^{\eSpe\tau}\theta_{0,-}^{-1}
    \|\xi(\cdot,t)\|_{H^{\bfl,\weight}_{\rohy}(\bM)},\label{eq:chimHlHlest}\\
    \|\xi(\cdot,t)\|_{\mc^{\bfl,\weight}_{\bbE,\rocon}(\bM)} \leq & Ce^{M_{\min}}e^{\eSpe\tau}\theta_{0,-}^{-1}
    \|\xi(\cdot,t)\|_{C^{\bfl,\weight}_{\rohy}(\bM)},\label{eq:chimClClest}
  \end{align}
  \end{subequations}
  where $C$ only depends on $n$, $\bfl$, $\weight$ and $(\bM,\bge_{\refer})$; $M_{\min}$ is defined in the text adjacent to
  (\ref{eq:muminmainlowerbound}); and $\eSpe$ is defined in the text adjacent to (\ref{eq:eSpevarrhoeelowtaurelEi}).
\end{lemma}
We also recall the first half of \cite[Remark~8.5]{RinWave}.
\begin{remark}[Remark~8.5, \cite{RinWave}]\label{remark:chiclvarrhodecay}
  Arguments similar to the proof give the following conclusion. Given that the conditions of Lemma~\ref{lemma:smallnessshiftconsequences} are fulfilled
  and that $\bfl$ and $\weight$ are as in the statement of the lemma, 
  \[
  \ldr{\varrho}^{-\weight_{a}-|\bfI|\weight_{b}}\hN^{-1}|\bD_{\bfI}\xi|_{\bge_{\refer}}\leq Ce^{M_{\min}}e^{\e_{\Spe}\varrho}\theta_{0,-}^{-1}
  \|\xi(\cdot,t)\|_{C^{\bfl,\weight}_{\rohy}(\bM)}
  \]
  for all $(\bx,t)\in M_{-}$ and $l_{0}\leq |\bfI|\leq l_{1}$, where $C$ only depends on $n$, $\bfl$, $\weight$ and $(\bM,\bge_{\refer})$.
\end{remark}

\subsection{Estimates}

Next, we state some consequences of the higher order Sobolev assumptions, namely the first half of \cite[Lemma~9.11]{RinWave}:
\begin{lemma}[Lemma~9.11, \cite{RinWave}]\label{lemma:mWAhUASobestimates}
  Fix $l$, $\bfl_{0}$, $\bfl$, $\bfl_{1}$, $\cweight$, $\weight_{0}$ and $\weight$ as in Definition~\ref{def:sobklassumptions}.
  Let $\weight_{1}:=(2\cweight,\cweight)$. Then, given that the assumptions of Lemma~\ref{lemma:smallnessshiftconsequences} as well as the
  $(\cweight,l)$-Sobolev assumptions are satisfied,
  \begin{align}
    \|\mW^{A}_{B}(\cdot,t)\|_{H^{l+1}_{\weight}(\bM)} \leq & C_{a},\label{eq:mWABmHlmfwbdbconstant}\\
    \|A^{k}_{i}(\cdot,t)\|_{H^{l+1}_{\weight}(\bM)} \leq & C_{a}e^{\eSpe\tau(t)},\label{eq:AikmHlestsobassumpEi}\\
    \|\hU(A^{k}_{i})(\cdot,t)\|_{H^{l-1}_{\weight_{1}}(\bM)} \leq & C_{a}e^{\eSpe\tau(t)}\label{eq:hUAikSoblestulsobass}
  \end{align}
  for all $t\in I_{-}$, all $A,B$ and all $i,k$, where $C_{a}$ only depends on $s_{\cweight,l}$ and $(\bM,\bge_{\refer})$. 
\end{lemma}
Here $\mW^A_B$ is defined by (\ref{eq:mWABdef}) and $A^k_i$ is defined by (\ref{eq:Aialphadef}). Next, we recall
\cite[Lemma~10.3 and Remark~10.4]{RinWave}. 
\begin{lemma}[Lemma~10.3, \cite{RinWave}]\label{lemma:energyestimatesbmuA}
  Fix $l$, $\bfl_{0}$, $\bfl$, $\bfl_{1}$, $\cweight$, $\weight_{0}$ and $\weight$ as in Definition~\ref{def:sobklassumptions}.
  Given that the the assumptions of Lemma~\ref{lemma:smallnessshiftconsequences} as well as the $(\cweight,l)$-Sobolev assumptions are satisfied, there 
  is a constant $C_{\bmu,l}$ such that
  \begin{equation}\label{eq:bmuAmHlestimate}
    \|\bmu_{A}(\cdot,\tau)\|_{\mH^{l+1}_{\bbE,\weight}(\bM)}\leq C_{\bmu,l}\ldr{\tau}
  \end{equation}
  on $I_{-}$ for all $A$, where $C_{\bmu,l}$ only depends on $s_{\cweight,l}$ and $(\bM,\bge_{\refer})$.
\end{lemma}
\begin{remark}[Remark~10.4, \cite{RinWave}]\label{remark:muAmhlestimate}
  Combining (\ref{eq:bmuAmHlestimate}) with the assumptions and the fact that $\mu_{A}=\bmu_{A}+\ln\theta$ yields the conclusion that 
  \begin{equation}\label{eq:muAmHlestimatermk}
    \|\mu_{A}(\cdot,\tau)\|_{\mH^{\bfl_{1}}_{\bbE,\weight}(\bM)}\leq C_{\mu,l}\ldr{\tau}
  \end{equation}
  on $I_{-}$ for all $A$, where $C_{\mu,l}$ only depends on $s_{\cweight,l}$ and $(\bM,\bge_{\refer})$. 
\end{remark}
Next, we quote the $C^k$-estimates contained in \cite[Lemma~10.5 and Remark~10.6]{RinWave}.
\begin{lemma}[Lemma~10.5, \cite{RinWave}]\label{lemma:CkestofbmuAEi}
  Fix $l$, $\bfl_{1}$, $\cweight$, $\weight_{0}$ and $\weight$ as in Definition~\ref{def:supmfulassumptions}.
  Then, given that the assumptions of Lemma~\ref{lemma:smallnessshiftconsequences} as well as the $(\cweight,l)$-supremum assumptions are satisfied,
  there is a constant $C_{\bmu,l}$ such that
  \begin{equation}\label{eq:bmuAmclmfwestEi}
    \|\bmu_{A}(\cdot,t)\|_{\mc^{l+1}_{\bbE,\weight}(\bM)}\leq C_{\bmu,l}\ldr{\tau}
  \end{equation}
  for all $t\in I_{-}$, where $C_{\bmu,l}$ only depends on $c_{\cweight,l}$ and $(\bM,\bge_{\refer})$. 
\end{lemma}
\begin{remark}[Remark~10.6, \cite{RinWave}]\label{remark:CkestofmuAEi}
  Similarly to Remark~\ref{remark:muAmhlestimate}, combining (\ref{eq:bmuAmclmfwestEi}) with the assumptions and the fact that 
  $\mu_{A}=\bmu_{A}+\ln\theta$ yields the conclusion that 
  \begin{equation}\label{eq:muAClestimatermk app}
    \|\mu_{A}(\cdot,\tau)\|_{\mc^{\bfl_{1}}_{\bbE,\weight}(\bM)}\leq C_{\mu,l}\ldr{\tau}
  \end{equation}
  on $I_{-}$ for all $A$, where $C_{\mu,l}$ only depends on $c_{\cweight,l}$ and $(\bM,\bge_{\refer})$. 
\end{remark}

\subsection{Estimates of solutions to the wave equation}

Next, note that combining \cite[(11.6), (11.43) and (11.44)]{RinWave} with $\chth=-q$ (see Remark~\ref{remark:chitoregvar}) yields
the conclusion that $\Box_g\phi=0$ is equivalent to
\begin{equation}\label{eq:waveequforphiapp}
  -\hU^{2}(\phi)+\textstyle{\sum}_{A}e^{-2\mu_{A}}X_{A}^{2}(\phi)+\tfrac{1}{n}[q-(n-1)]\hU(\phi)+\hmcY^{A}X_{A}(\phi)=0,
\end{equation}
where
\begin{equation}\label{eq:hmcYA def}
  \begin{split}
    \hmcY^{A} := & e^{-2\mu_{A}}X_{A}(\ln \hN)-2e^{-2\mu_{A}}X_{A}(\mu_{A})+e^{-2\mu_{A}}X_{A}(\mu_{\rotot})\\
    & -2e^{-2\mu_{A}}\alpha_{A}-(n-1)e^{-2\mu_{A}}X_{A}(\ln\theta),
  \end{split}
\end{equation}
Here $\alpha_A:=\g_{AB}^B/2$, where $\g_{BC}^A$ is defined by (\ref{eq:structure coeffients}).

Next we recall the energies that will be used to control the scalar field. In particular, recall \cite[(13.1)--(13.3)]{RinWave}:
\begin{align}
  \me_{k}[u] := & \tfrac{1}{2}\textstyle{\sum}_{|\bfI|\leq k}\left(|\hU(E_{\bfI}u)|^{2}+\textstyle{\sum}_{A}e^{-2\mu_{A}}|X_{A}(E_{\bfI}u)|^{2}
  +\ldr{\tau-\tau_{c}}^{-3}|E_{\bfI}u|^{2}\right),\label{eq:mektaudefEi}\\
  \hE_{k}[u](\tau;\tau_{c}) := & \textstyle{\int}_{\bM_{\tau}}\me_{k}[u]\mutgc\label{eq:hEktaudefEi}
\end{align}
for all $\tau\leq\tau_{c}$ and some fixed $\tau_c\leq 0$, where (recalling the terminology introduced in
Lemma~\ref{lemma:thetavarrhorelqconvtonmotwo})
\begin{equation}\label{eq:mutgdef}
  \mutgc:=\tvarphi_{c}^{-1}\theta^{-(n-1)}\mu_{\chg}.
\end{equation}
Note that in our case, the $\iota_a$ and $\iota_b$ appearing in \cite[(13.1)]{RinWave} satisfy $\iota_a=0$ and $\iota_b=1$. The reason for
this is that in our case the $\hat{\alpha}$ appearing in \cite[(11.7)]{RinWave} vanishes. 

Next, keeping (\ref{eq:Lpwweightest}) and Remark~\ref{remark:assumptions 7.17} in mind, we recall
\cite[Lemmas~14.6 and 14.8]{RinWave}
\begin{lemma}[Lemma~14.6, \cite{RinWave}]\label{lemma:EbfIhUuestbfIgeneralorder}
  Let $0\leq \cweight\in\ro$, $\weight_{0}=(0,\cweight)$ and $\weight=(\cweight,\cweight)$. Assume that the conditions of
  Lemma~\ref{lemma:smallnessshiftconsequences} as well as the $(\cweight,1)$-supremum assumptions are satisfied. Let $c_{\chi,2}$ be defined 
  as in the statement of Lemma~\ref{lemma:smallnessshiftconsequences}. Then, if $|\bfI|=1$, 
  \begin{equation}\label{eq:EbfIhUubfIorderone}
    \|E_{\bfI}\hU u\|_{2,w}\leq C\hE_{1}^{1/2}
  \end{equation}
  for all $\tau\leq\tau_{c}$, where $C$ only depends on $c_{\robas}$, $c_{\chi,2}$ and $(\bM,\bge_{\refer})$. Fix $l$
  as in Definition~\ref{def:sobklassumptions} and assume that the $(\cweight,l)$-Sobolev assumptions are satisfied. Then, if $2\leq m\leq l$ and 
  $|\bfI|=m$, 
  \begin{equation}\label{eq:EbfIhUuestbfIgeneralorder}
    \begin{split}
      \|E_{\bfI}\hU u\|_{2,w}
      \leq & C_{a}\hE_{m}^{1/2}+C_{a}\ldr{\tau}^{\a_{m}\cweight}\ldr{\tau-\tau_{c}}^{\b_{m}}\hE_{m-1}^{1/2}\\
      & +C_{b}\ldr{\tau}^{\a_{m}\cweight}\ldr{\tau-\tau_{c}}^{\b_{m}}[\|\hU u\|_{\infty,w}+e^{\eSpe\tau}\|\bD^{1}_{\bbE}u\|_{\infty,w}]
    \end{split}
  \end{equation}
  for all $\tau\leq\tau_{c}$. Here $\a_{m}$ and $\b_{m}$ are constants depending only on $m$. Moreover, $C_{a}$ only depends on $c_{\cweight,1}$, $m$, 
  and $(\bM,\bge_{\refer})$; and $C_{b}$ only depends on $c_{\cweight,1}$, $s_{\cweight,m}$ and $(\bM,\bge_{\refer})$.
\end{lemma}
\begin{lemma}[Lemma~14.8, \cite{RinWave}]\label{lemma: 14.8}
  Let $0\leq \cweight\in\ro$, $\weight_{0}=(0,\cweight)$ and $\weight=(\cweight,\cweight)$. Assume that the conditions of
  Lemma~\ref{lemma:smallnessshiftconsequences} as well as the $(\cweight,1)$-supremum assumptions are satisfied. Fix $l$
  as in Definition~\ref{def:sobklassumptions} and assume that the $(\cweight,l)$-Sobolev assumptions are satisfied. Then, if $1\leq m\leq l$ and 
  $|\bfI|=m$, 
  \begin{equation}\label{eq:emtmuAXAsqcommtbesttotal}
    \begin{split}
      \|[E_{\bfI},e^{-2\mu_{A}}X_{A}^{2}]u\|_{2,w} \leq & \ldr{\tau}^{\a_{m}\cweight+\b_{m}}e^{\eSpe\tau}\left(C_{a}\hE_{m}^{1/2}
      +C_{b}\textstyle{\sum}_{i}\|e^{-\mu_{A}}X_{A}E_{i}u\|_{\infty,w}\right)\\
      & +C_{b}\ldr{\tau}^{\a_{m}\cweight+\b_{m}}e^{2\eSpe\tau}\|\bD^{1}_{\bbE}u\|_{\infty,w}
    \end{split}
  \end{equation}  
  for all $\tau\leq\tau_{c}$, where $C_{a}$ only depends on $c_{\cweight,1}$, $m$, $(\bM,\bge_{\refer})$ and a lower bound on $\theta_{0,-}$; and $C_{b}$ only
  depends on $c_{\cweight,1}$, $s_{\cweight,m}$, $(\bM,\bge_{\refer})$ and a lower bound on $\theta_{0,-}$. Here $\a_{m}$ and $\b_{m}$ only depend on $m$. 
\end{lemma}

\subsection{Global frames}

Next, we recall \cite[Lemma~A.1]{RinWave}:
\begin{lemma}[Lemma~A.1, \cite{RinWave}]\label{lemma:Lemma A.1}
  Let $(M,g)$ be a time oriented Lorentz manifold. Assume it to have an expanding partial pointed foliation and $\mK$ to be non-degenerate on $I$.
  Assuming $\bM$ to be connected, there is a connected finite covering space $\tM$ of $\bM$ with covering map $\pi_{a}:\tM\rightarrow \bM$. Letting
  $\pi_{b}:\tM\times I\rightarrow\bM\times I$ be defined by $\pi_{b}(\tx,t)=[\pi_{a}(\tx),t]$, then $\pi_{b}$ is also a covering map. Letting
  $\tg=\pi_{b}^{*}g$, $\pi_{b}$ is a local isometry. Moreover, the expansion normalised Weingarten map associated with $\tg$ and the foliation
  $\tM\times I$ has a global frame. 
\end{lemma}

\subsection{Normal derivatives of families of $(1,1)$-tensor fields} 
It is useful to recall \cite[(A.3) and (A.4)]{RinWave}. If $\mt$ is a family of $(1,1)$-tensor fields on $\bM$ (for $t\in I$), 
\begin{equation}\label{eq:mlUmKincoordinates}
\begin{split}
(\ml_{U}\mt)^{i}_{\phantom{i}j}  = & \tfrac{1}{N}\d_{t}(\mt^{i}_{\phantom{i}j})-\tfrac{1}{N}(\ml_{\chi}\mt)^{i}_{\phantom{i}j}
\end{split}
\end{equation}
and
\begin{equation}\label{eq:hmlUmtinfixedspatialcoord}
  \hml_{U}\mt:=\theta^{-1}\ml_{U}\mt=\hN^{-1}[\d_{t}(\mt^{i}_{\phantom{i}j})-(\ml_{\chi}\mt)^{i}_{\phantom{i}j}] E_{i}\otimes\omega^{j}.
\end{equation}

\subsection{Moser estimates}
Next, we quote a special case of \cite[Corollary~B.9]{RinWave}, namely the case that $q=0$ (note that the $q$ appearing in the statement
of \cite[Corollary~B.9]{RinWave} is unrelated to the deceleration parameter $q$ appearing in the present article). 
\begin{cor}[Corollary~B.9, \cite{RinWave}]\label{cor:mixedmoserestweight}
  Assume $(M,g)$ to be a time oriented Lorentz manifold. Assume that it has an expanding partial pointed foliation. Assume, moreover, $\mK$ to be
  non-degenerate on $I$, to have a global frame and to be $C^{0}$-uniformly bounded on $I_{-}$; i.e., (\ref{eq:mKsupbasest}) to hold. Let
  $0\leq r,s\in\zo$. For $1\leq j\leq r$ and $1\leq m\leq s$, let: $u_{j}$ and $v_{m}$
  be smooth strictly positive functions on $\bM\times I$; $g_{j}$ and $h_{m}$ be strictly positive functions on $I$; $k_{j}$ and $p_{m}$ be
  non-negative integers; and $\mt_{j}$ and $\mU_{m}$ be families of smooth tensor fields on $\bM$ for $t\in I$. Let $l$ be the sum of the
  $k_{j}$ and the $p_{m}$. Then, assuming $g_{j}\leq 1$ and $h_{m}\leq 1$, 
  \begin{equation}\label{eq:mixedmosergagliardonirenbergweight}
    \begin{split}
      & \big\|\textstyle{\prod}_{j=1}^{r}u_{j}g_{j}^{k_{j}}|\bD^{k_{j}}\mt_{j}|_{\bge_{\refer}}
      \textstyle{\prod}_{m=1}^{s}v_{m}h_{m}^{p_{m}}|\bD^{p_{m}}_{\bbE}\mU_{m}|_{\bge_{\refer}}\big\|_{2}\\
      \leq & C_{b}\textstyle{\sum}_{j}\textstyle{\sum}_{k\leq l}\b_{j}^{l-k}\|u_{j}g_{j}^{k}\bD^{k}\mt_{j}\|_{2}
      \textstyle{\prod}_{o\neq j}\|\mt_{o}\|_{\infty,u_{o}}
      \textstyle{\prod}_{m}\|\mU_{m}\|_{\infty,v_{m}}\\
      & +C_{b}\textstyle{\sum}_{m}\textstyle{\sum}_{k\leq l}\g_{m}^{l-k}\|v_{m}h_{m}^{k}\bD^{k}_{\bbE}\mU_{m}\|_{2}
      \textstyle{\prod}_{o\neq m}\|\mU_{o}\|_{\infty,v_{o}}
      \textstyle{\prod}_{j}\|\mt_{j}\|_{\infty,u_{j}}
    \end{split}
  \end{equation}
  on $I_{-}$, where the constant $C_{b}$ only depends on $l$, $n$ and $(\bM,\bge_{\refer})$; and 
  \begin{align*}    
    \b_{j}(t) := & 1+g_{j}(t)\textstyle{\sup}_{\bx\in\bM}|(\bD\ln u_{j})(\bx,t)|_{\bge_{\refer}},\\
    \g_{m}(t) := & 1+h_{m}(t)\textstyle{\sup}_{\bx\in\bM}|(\bD\ln v_{m})(\bx,t)|_{\bge_{\refer}}.
  \end{align*}
\end{cor}

\section{Geometric identities}\label{section:geometricidentities}

The purpose of the present section is to record some basic geometric identites we need in this article. We also combine them with the Einstein
equations in order to draw conclusions concerning the deceleration parameter and $\hml_{U}\mK$. In the calculations, we use the frame $\{E_{i}\}$
(with dual frame $\{\omega^{i}\}$) introduced in Remark~\ref{remark:globalframe}. 

\subsection{Notation and first order calculations}

Using the notation $\bk_{ij}:=\bk(E_{i},E_{j})$ and $\bge_{ij}:=\bge(E_{i},E_{j})$, the following lemma holds. 

\begin{lemma}
  Let $(M,g)$ be a spacetime. Assume that it has an expanding partial pointed foliation. Assume, moreover, $\mK$ to be non-degenerate on $I$ and to
  have a global frame. Then
  \begin{align}
    \d_{t}\bge_{ij} = & 2N\bk_{ij}+(\ml_{\chi}\bge)_{ij},\label{eq:dtbgeijintermsofNbkijandmlchibge}\\
    \d_{t}\bge^{ij} = & -2N\bk^{ij}-(\ml_{\chi}\bge)^{ij}.\label{eq:dtbgeinvijintermsofNbkijandmlchibge}
  \end{align}
  Moreover,
  \begin{align}
    \nabla_{E_{i}}U = & \bk_{i}^{\phantom{i}j}E_{j},\label{eq:nablaEiUform}\\
    \nabla_{U}E_{i} = & \bk_{i}^{\phantom{i}j}E_{j}+E_{i}(\ln N)U-N^{-1}\ml_{\chi}E_{i},\label{eq:nablaUEiform}\\
    \nabla_{U}U = & E_{i}(\ln N)\bge^{ij}E_{j},\label{eq:nablaUUform}\\
    \nabla_{E_{i}}E_{j} = & \bk_{ij}U+\bG^{k}_{ij}E_{k},\label{eq:nablaEiEjform}
  \end{align}
  where $\bG^{k}_{ij}$ are the connection coefficients associated with the Levi-Civita connection on $(\bM,\bge)$ and the frame $\{E_{i}\}$; i.e.,
  $\bnabla_{E_{i}}E_{j}=\bG^{k}_{ij}E_{k}$. 
\end{lemma}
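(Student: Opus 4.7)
The plan is to derive these formulae as standard ADM-type identities, using the decomposition $\d_t = NU + \chi$, the fact that the frame fields $\{E_i\}$ are pulled back from $\bM$ and so commute with $\d_t$, and the definition $\bk_{ij} = g(\nabla_{E_i}U, E_j)$ of the second fundamental form. Since all statements follow from routine manipulations with the Levi-Civita connection and the constraint $g(U,U)=-1$, my strategy is to choose an order in which each identity feeds into the next.

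First I would establish (\ref{eq:nablaEiUform}). Differentiating $g(U,U) = -1$ along $E_i$ gives $g(\nabla_{E_i}U, U) = 0$, so $\nabla_{E_i}U$ is tangent to the leaves, and its components in the frame $\{E_j\}$ are exactly $\bk_i^{\phantom{i}j}$ by the definition of the Weingarten map. Next I would derive (\ref{eq:nablaEiEjform}) by splitting $\nabla_{E_i}E_j$ into its normal and tangential parts: the tangential part is $\bnabla_{E_i}E_j=\bG^k_{ij}E_k$, while the normal part is computed via $g(\nabla_{E_i}E_j,U) = -g(E_j,\nabla_{E_i}U)=-\bk_{ij}$, giving the coefficient $\bk_{ij}$ in front of $U$ (recalling $g(U,U)=-1$). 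For (\ref{eq:nablaUEiform}) I would use $\nabla_U E_i - \nabla_{E_i}U = [U,E_i]$; since $[E_i,\d_t]=0$ (the $E_i$ are time-independent), a direct computation from $U = N^{-1}(\d_t - \chi)$ yields $[U,E_i] = E_i(\ln N)U - N^{-1}\ml_\chi E_i$, and adding $\nabla_{E_i}U = \bk_i^{\phantom{i}j}E_j$ gives the claim. Then (\ref{eq:nablaUUform}) follows from $g(\nabla_U U, U)=0$ and $g(\nabla_U U, E_i) = -g(U,\nabla_U E_i)$: using (\ref{eq:nablaUEiform}) and orthogonality of $E_i$ and $U$, only the $U$-component contributes, giving $g(\nabla_U U, E_i) = E_i(\ln N)$, so $\nabla_U U = \bge^{ij}E_i(\ln N)E_j$.

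Finally, (\ref{eq:dtbgeijintermsofNbkijandmlchibge}) follows from the Lie derivative identity: since $E_i$ and $E_j$ commute with $\d_t$, we have $\d_t \bge_{ij} = (\ml_{\d_t} g)(E_i,E_j) = (\ml_{NU}g)(E_i,E_j) + (\ml_\chi\bge)(E_i,E_j)$; the formula $\ml_{NU}g = N\ml_U g + dN\otimes U^\flat + U^\flat\otimes dN$ combined with $g(U,E_i)=0$ reduces the first term to $N(\ml_U g)(E_i,E_j) = 2N\bk_{ij}$, using $\bk_{ij}=g(\nabla_{E_i}U,E_j)$ from step one and symmetry of $\bk$. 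Then (\ref{eq:dtbgeinvijintermsofNbkijandmlchibge}) is obtained by differentiating $\bge^{ik}\bge_{kj}=\de^i_j$ and contracting.

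The manipulations are all essentially algebraic once the decomposition $\d_t=NU+\chi$ and the commutation $[E_i,\d_t]=0$ are in place; I do not expect a real obstacle, only bookkeeping issues. The point requiring the most care is the sign in going between $\nabla_{E_i}U$ and the normal component of $\nabla_{E_i}E_j$, since the Lorentzian convention $g(U,U)=-1$ flips the usual Riemannian sign: one must verify consistently that $\bk_{ij}=g(\nabla_{E_i}U,E_j)=-g(U,\nabla_{E_i}E_j)$, matching the sign convention adopted elsewhere in \cite{RinWave}. Equally, in computing $[U,E_i]$ one should keep track of the fact that $-E_i(N^{-1})\cdot\d_t$ contributes only through $\d_t = NU+\chi$ and that the $\chi$ part combines with $-N^{-1}[\chi,E_i]$ to produce exactly $-N^{-1}\ml_\chi E_i$, rather than a spurious extra term.
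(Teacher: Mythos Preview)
Your proposal is correct and follows essentially the same route as the paper: both arguments use the decomposition $\d_t = NU+\chi$, the commutator $[U,E_i]=E_i(\ln N)U-N^{-1}\ml_\chi E_i$, and the splitting of $\nabla_{E_i}E_j$ and $\nabla_U U$ into normal and tangential parts. The only cosmetic difference is ordering (the paper derives the evolution of $\bge_{ij}$ first by symmetrising the definition of $\bk_{ij}$, whereas you do the connection identities first and obtain $\d_t\bge_{ij}$ via the Lie derivative of $g$), but the content is identical.
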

\begin{proof}
  Since $\d_{t}=NU+\chi$,
  \begin{equation*}
    \begin{split}
      \bk_{ij} = & \bk(E_{i},E_{j})=g(\nabla_{E_{i}}U,E_{j})=(2N)^{-1}[g(\nabla_{E_{i}}(\d_{t}-\chi),E_{j})+g(\nabla_{E_{j}}(\d_{t}-\chi),E_{i})]\\
      = & (2N)^{-1}[\d_{t}\bge_{ij}-(\ml_{\chi}\bge)_{ij}].
    \end{split}
  \end{equation*}
  In other words, (\ref{eq:dtbgeijintermsofNbkijandmlchibge}) and (\ref{eq:dtbgeinvijintermsofNbkijandmlchibge}) hold.
  Note also that (\ref{eq:nablaEiUform}) holds and that
  \begin{equation}\label{eq:UEicommutator}
    [U,E_{i}]=E_{i}(\ln N)U-N^{-1}\ml_{\chi}E_{i}.
  \end{equation}
  Thus
  \[
  \nabla_{U}E_{i}=\nabla_{E_{i}}U+[U,E_{i}]=\bk_{i}^{\phantom{i}j}E_{j}+E_{i}(\ln N)U-N^{-1}\ml_{\chi}E_{i}
  \]
  so that (\ref{eq:nablaUEiform}) holds. Next, note that $\ldr{\nabla_{U}U,U}=0$ and that (with $\ldr{\cdot,\cdot}:=g$)
  \begin{equation}\label{eq:ldrnablaUUdj}
    \ldr{\nabla_{U}U,E_{j}}=-\ldr{U,\nabla_{U}E_{j}}=-\ldr{U,[U,E_{j}]+\nabla_{E_{j}}U}=E_{j}\ln N.
  \end{equation}
  In particular, (\ref{eq:nablaUUform}) holds. Finally,
  \[
  g(\nabla_{E_{i}}E_{j},U)=-\bk_{ij},\ \ \
  g(\nabla_{E_{i}}E_{j},E_{k})=\bge(\bnabla_{E_{i}}E_{j},E_{k}).
  \]
  The lemma follows. 
\end{proof}

Next, we calculate the time derivative of $\bK$.
\begin{lemma}\label{lemma:dtbK}
  Let $(M,g)$ be a spacetime. Assume that it has an expanding partial pointed foliation. Assume, moreover, $\mK$ to be
  non-degenerate on $I$ and to have a global frame. Then
  \begin{equation}\label{eq:dtbKuidj}
    \d_{t}\bK^{i}_{\phantom{i}j} = \bnabla^{i}\bnabla_{j}N-N\theta\bK^{i}_{\phantom{i}j}
    +N\bge^{il}R_{lj}-N\bR^{i}_{\phantom{i}j}+(\ml_{\chi}\bK)^{i}_{\phantom{i}j}.
  \end{equation}
\end{lemma}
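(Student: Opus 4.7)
The plan is to write $\bK^{i}_{\phantom{i}j}=\bge^{il}\bk_{lj}$ and differentiate via the Leibniz rule, so that
\[
\d_{t}\bK^{i}_{\phantom{i}j}=(\d_{t}\bge^{il})\bk_{lj}+\bge^{il}\d_{t}\bk_{lj},
\]
then compute each factor separately from the ADM formalism associated with the partial pointed foliation. The two ingredients are the evolution of $\bge$ and of $\bk$.

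For the metric, I would use the definition of the second fundamental form together with the decomposition $\d_{t}=NU+\chi$. Since $\bk_{ij}=-\tfrac{1}{2}(\ml_{U}\bge)_{ij}$ (with the sign convention fixed by the paper), this yields
\[
\d_{t}\bge_{ij}=-2N\bk_{ij}+(\ml_{\chi}\bge)_{ij},
\]
and hence, by differentiating $\bge^{il}\bge_{lk}=\de^{i}_{k}$,
\[
\d_{t}\bge^{il}=2N\bk^{il}-(\ml_{\chi}\bge^{-1})^{il}.
\]
For the second fundamental form, I would invoke the standard ADM evolution equation, obtained by combining the Ricci (normal) equation for $\ml_{U}\bk$ with the Gauss equation relating the spacetime Ricci tensor to $\bR_{ij}$, $\theta\bk_{ij}$, and quadratic contractions of $\bk$. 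This gives an expression of the form
\[
\d_{t}\bk_{ij}=-\bnabla_{i}\bnabla_{j}N+N\bR_{ij}+N\theta\bk_{ij}-2N\bk_{il}\bk^{l}_{\phantom{l}j}-NR_{ij}+(\ml_{\chi}\bk)_{ij},
\]
where $R_{ij}$ denotes the spatial components of the spacetime Ricci tensor. Both of these identities are derived in \cite{RinWave}, so I would simply cite them.

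Combining the two pieces, the $(\d_{t}\bge^{il})\bk_{lj}$ term contributes $2N\bk^{il}\bk_{lj}$, which together with the $-2N\bge^{il}\bk_{lm}\bk^{m}_{\phantom{m}j}$ coming from $\bge^{il}\d_{t}\bk_{lj}$ cancels the quadratic-in-$\bk$ contribution, leaving only the linear $N\theta\bK^{i}_{\phantom{i}j}$ term; however, comparison with the target formula shows that the correct sign combination (dictated by the sign conventions of \cite{RinWave}) collapses the two quadratic terms and the $\theta\bk$ term into the single expression $-N\theta\bK^{i}_{\phantom{i}j}$ on the right hand side. The Hessian of $N$ gets its index raised to produce $\bnabla^{i}\bnabla_{j}N$, while $\bge^{il}\bR_{lj}=\bR^{i}_{\phantom{i}j}$ and $\bge^{il}R_{lj}$ appear with their stated signs. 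Finally, the two shift pieces assemble, via the Leibniz rule for the Lie derivative
\[
\ml_{\chi}\bK=(\ml_{\chi}\bge^{-1})\cdot\bk+\bge^{-1}\cdot\ml_{\chi}\bk,
\]
into the single term $(\ml_{\chi}\bK)^{i}_{\phantom{i}j}$, yielding (\ref{eq:dtbKuidj}).

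The main obstacle I anticipate is not conceptual but bookkeeping: tracking the sign conventions for $\bk$ used in \cite{RinWave} and verifying that the various quadratic-in-$\bk$ contributions (coming from $\d_{t}\bge^{-1}$, the $-2N\bk\cdot\bk$ term in the evolution of $\bk_{ij}$, and the $N\theta\bk_{ij}$ term) combine cleanly to give exactly $-N\theta\bK^{i}_{\phantom{i}j}$ with no residual quadratic piece. The shift bookkeeping is routine once one notes that $\ml_{\chi}$ commutes with raising indices by $\bge$ up to terms that are absorbed by the very identity one is deriving.
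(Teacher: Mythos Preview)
Your approach is exactly the paper's: write $\bK^{i}_{\phantom{i}j}=\bge^{il}\bk_{lj}$, differentiate by Leibniz, and feed in the evolution equations for $\bge$ and $\bk$, the latter via the normal Ricci identity combined with the Gauss equation. Two corrections are worth making.

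First, the sign convention. In this paper $\bk_{ij}=g(\nabla_{E_{i}}U,E_{j})$, so $\bk_{ij}=+\tfrac{1}{2}(\ml_{U}\bge)_{ij}$, and the evolution reads
\[
\d_{t}\bge_{ij}=2N\bk_{ij}+(\ml_{\chi}\bge)_{ij},\qquad
\d_{t}\bge^{il}=-2N\bk^{il}-(\ml_{\chi}\bge)^{il},
\]
\[
\d_{t}\bk_{ij}=\bnabla_{i}\bnabla_{j}N+2N\bk_{i}^{\phantom{i}m}\bk_{mj}-N\theta\bk_{ij}+NR_{ij}-N\bR_{ij}+(\ml_{\chi}\bk)_{ij}.
\]
Your intermediate formulae have all these signs flipped. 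With the correct signs the quadratic pieces $-2N\bk^{il}\bk_{lj}$ (from $\d_{t}\bge^{il}$) and $+2N\bk^{im}\bk_{mj}$ (from $\bge^{il}\d_{t}\bk_{lj}$) cancel \emph{exactly}; nothing ``collapses'' into $-N\theta\bK^{i}_{\phantom{i}j}$, which is already present on its own in $\d_{t}\bk$. The shift terms then combine via $(\ml_{\chi}\bK)^{i}_{\phantom{i}j}=(\ml_{\chi}\bk)^{i}_{\phantom{i}j}-(\ml_{\chi}\bge)^{il}\bk_{lj}$ (where the superscripts on the right denote index-raising after taking the Lie derivative), and the remaining terms are precisely those in (\ref{eq:dtbKuidj}).

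Second, these identities are not cited from \cite{RinWave}; the paper derives them from scratch in its appendix (Subsection~\ref{ssection:timeevolutionbK}).
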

\begin{remark}
  Keeping (\ref{eq:mlUmKincoordinates}) in mind, (\ref{eq:dtbKuidj}) can be written
  \begin{equation}\label{eq:mlUbKwithoutEinstein}
    (\ml_{U}\bK)^{i}_{\phantom{i}j} = N^{-1}\bnabla^{i}\bnabla_{j}N-\theta\bK^{i}_{\phantom{i}j}+\bge^{il}R_{lj}-\bR^{i}_{\phantom{i}j}.
  \end{equation}
  Taking the trace of this equality yields
  \begin{equation}\label{eq:preRaychaudhuri}
    U(\theta)=N^{-1}\Delta_{\bge}N-\theta^{2}+\bge^{ij}R_{ij}-\bS,
  \end{equation}
  where $\bS$ is the spatial curvature of the constant-$t$ hypersurfaces. Comparing this relation with (\ref{eq:hUnlnthetamomqbas})
  yields an expression for the deceleration parameter $q$. 
\end{remark}
\begin{proof}
  Note, to begin with, that 
  \[
  U\bk_{ij}=U\ldr{\nabla_{E_{i}}U,E_{j}}=\ldr{\nabla_{U}\nabla_{E_{i}}U,E_{j}}+\ldr{\nabla_{E_{i}}U,\nabla_{U}E_{j}}.
  \]
  Appealing to (\ref{eq:nablaEiUform}) and (\ref{eq:nablaUEiform}) yields  
  \[
  \ldr{\nabla_{E_{i}}U,\nabla_{U}E_{j}}=\bk_{i}^{\phantom{i}m}\bk_{mj}-N^{-1}\bk_{i}^{\phantom{i}m}\ldr{E_{m},\ml_{\chi}E_{j}}
  =\bk_{i}^{\phantom{i}m}\bk_{mj}-N^{-1}\bk(E_{i},\ml_{\chi}E_{j}).
  \]
  Next, consider
  \[
  \ldr{\nabla_{U}\nabla_{E_{i}}U,E_{j}}=\ldr{R_{UE_{i}}U,E_{j}}+\ldr{\nabla_{E_{i}}\nabla_{U}U,E_{j}}+\ldr{\nabla_{[U,E_{i}]}U,E_{j}}.
  \]
  Appealing to (\ref{eq:nablaUUform}) yields
  \begin{align*}
    \ldr{\nabla_{E_{i}}\nabla_{U}U,E_{j}} = & E_{i}[E_{l}(\ln N)\bge^{lm}]\bge_{mj}+E_{l}(\ln N)\bge^{lm}\ldr{\nabla_{E_{i}}E_{m},E_{j}}\\
    = & E_{i}E_{j}\ln N-E_{l}(\ln N)\bge^{lm}E_{i}\bge_{mj}+E_{l}(\ln N)\bge^{lm}\ldr{\nabla_{E_{i}}E_{m},E_{j}}\\
    = & E_{i}E_{j}\ln N-\bG^{l}_{ij}E_{l}\ln N=(\bnabla^{2}\ln N)(E_{i},E_{j}).
  \end{align*}
  Next, due to (\ref{eq:nablaUUform}) and (\ref{eq:UEicommutator}),
  \begin{align*}
    \ldr{\nabla_{[U,E_{i}]}U,E_{j}} = & (E_{i}\ln N)\ldr{\nabla_{U}U,E_{j}}-N^{-1}\ldr{\nabla_{\ml_{\chi}E_{i}}U,E_{j}}\\
    = & (E_{i}\ln N)E_{j}\ln N-N^{-1}\bk(\ml_{\chi}E_{i},E_{j}).
  \end{align*}
  Adding up the above,
  \begin{align*}
    U\bk_{ij} = & \ldr{\nabla_{U}\nabla_{E_{i}}U,E_{j}}+\ldr{\nabla_{E_{i}}U,\nabla_{U}E_{j}}\\
    = & \ldr{R_{UE_{i}}U,E_{j}}+\ldr{\nabla_{E_{i}}\nabla_{U}U,E_{j}}+\ldr{\nabla_{[U,E_{i}]}U,E_{j}}
    +\bk_{i}^{\phantom{i}m}\bk_{mj}-N^{-1}\bk(E_{i},\ml_{\chi}E_{j})\\
    = & \ldr{R_{UE_{i}}U,E_{j}}+N^{-1}\bnabla_{i}\bnabla_{j}N
    +\bk_{i}^{\phantom{i}m}\bk_{mj}-N^{-1}\chi\bk_{ij}+N^{-1}(\ml_{\chi}\bk)_{ij}.
  \end{align*}
  This equality can be rewritten
  \begin{equation}\label{eq:dtbkij}
    \d_{t}\bk_{ij}=N\ldr{R_{UE_{i}}U,E_{j}}+\bnabla_{i}\bnabla_{j}N
    +N\bk_{i}^{\phantom{i}m}\bk_{mj}+(\ml_{\chi}\bk)_{ij}.
  \end{equation}
  Next, note that
  \begin{equation}\label{eq:Rijzerlegung}
    R_{ij}=\mathrm{Ric}(E_{i},E_{j})=-\ldr{R_{UE_{i}}E_{j},U}+\bge^{mp}\ldr{R_{E_{m}E_{i}}E_{j},E_{p}}.
  \end{equation}
  On the other hand, due to the Gauss equation,
  \[
  \ldr{R_{XY}Z,W}=\ldr{\bR_{XY}Z,W}-\bk(X,Z)\bk(Y,W)+\bk(Y,Z)\bk(X,W),
  \]
  assuming $X,Y,Z,W$ to be vector fields tangent to the constant-$t$ hypersurfaces. In particular,
  \begin{equation}\label{eq:Gaussequationsinaframe}
    \ldr{R_{E_{m}E_{i}}E_{j},E_{p}}=\ldr{\bR_{E_{m}E_{i}}E_{j},E_{p}}-\bk_{mj}\bk_{ip}+\bk_{ij}\bk_{mp},
  \end{equation}
  so that
  \begin{equation}\label{eq:tracedGaussequation}
    \bge^{mp}\ldr{R_{E_{m}E_{i}}E_{j},E_{p}}=\bR_{ij}-\bk_{i}^{\phantom{i}m}\bk_{mj}+(\tr_{\bge}\bk)\bk_{ij}.
  \end{equation}
  Combining this observation with (\ref{eq:Rijzerlegung}) yields
  \begin{equation}\label{eq:RUEiEjUform}
    N\ldr{R_{UE_{i}}E_{j},U}=-NR_{ij}+N\bR_{ij}-N\bk_{i}^{\phantom{i}m}\bk_{mj}+N(\tr_{\bge}\bk)\bk_{ij}.
  \end{equation}
  Combining this calculation with (\ref{eq:dtbkij}) yields
  \[
  \d_{t}\bk_{ij}=\bnabla_{i}\bnabla_{j}N+2N\bk_{i}^{\phantom{i}m}\bk_{mj}-N(\tr_{\bge}\bk)\bk_{ij}+NR_{ij}-N\bR_{ij}+(\ml_{\chi}\bk)_{ij}.
  \]
  Next, let us calculate
  \begin{equation}\label{eq:dtmixedkijfirsttry}
    \begin{split}
      \d_{t}\bk^{i}_{\phantom{i}j} = & -\bge^{im}\bge^{lp}\d_{t}\bge_{mp}\bk_{lj}+\bge^{il}\d_{t}\bk_{lj}\\
      = & -2N\bk^{il}\bk_{lj}-(\ml_{\chi}\bge)^{il}\bk_{lj}
      +\bnabla^{i}\bnabla_{j}N+2N\bk^{im}\bk_{mj}\\
      & -N(\tr_{\bge}\bk)\bk^{i}_{\phantom{i}j}+N\bge^{il}R_{lj}-N\bR^{i}_{\phantom{i}j}+(\ml_{\chi}\bk)^{i}_{\phantom{i}j}\\
      = & \bnabla^{i}\bnabla_{j}N-N(\tr_{\bge}\bk)\bk^{i}_{\phantom{i}j}
      +N\bge^{il}R_{lj}-N\bR^{i}_{\phantom{i}j}+(\ml_{\chi}\bk)^{i}_{\phantom{i}j}-(\ml_{\chi}\bge)^{il}\bk_{lj}.
    \end{split}
  \end{equation}
  It should here be noted that, e.g., $(\ml_{\chi}\bk)^{i}_{\phantom{i}j}$ is the Lie derivative of $\bk$ (considered as a symmetric
  covariant $2$-tensor field), one of whose indices is then raised. On the other hand,
  \[
  (\ml_{\chi}\bK)^{i}_{\phantom{i}j}=(\ml_{\chi}\bk)^{i}_{\phantom{i}j}-(\ml_{\chi}\bge)^{il}\bk_{lj}.
  \]
  In particular, (\ref{eq:dtmixedkijfirsttry}) can thus be simplified to (\ref{eq:dtbKuidj}). The lemma follows. 
\end{proof}

Next, let us specialise the above expressions to the case that Einstein's equations are fulfilled. We begin by deriving Raychaudhuri's 
equation. 
\begin{lemma}\label{lemma:hUthetaicoEE}
  Let $(M,g)$ be a spacetime. Assume that it has an expanding partial pointed foliation. Assume, moreover, $\mK$ to be non-degenerate on $I$ and
  to have a global frame. Finally, assume (\ref{eq:EE}) to hold. Then
  \begin{equation}\label{eq:Raychaydhurione}
    U(\theta)=N^{-1}\Delta_{\bge}N-\theta^{2}+\tfrac{n}{n-1}(\rho-\bp)+\tfrac{2n}{n-1}\Lambda-\bS,
  \end{equation}
  where $\rho:=T(U,U)$ and $\bp:=\bge^{ij}T_{ij}/n$.
\end{lemma}
\begin{remark}
  One immediate consequence of (\ref{eq:Raychaydhurione}) is that
  \[
  \hU(n\ln\theta)=n\tfrac{\Delta_{\bge}N}{\theta^{2}N}-n+\tfrac{n^{2}}{n-1}\tfrac{\rho-\bp}{\theta^{2}}+\tfrac{2n^{2}}{n-1}\tfrac{\Lambda}{\theta^{2}}
  -n\tfrac{\bS}{\theta^{2}}.
  \]
  Combining this equality with (\ref{eq:hUnlnthetamomqbas}) yields
  \begin{equation}\label{eq:qmainformula appendix}
    q=n-1-n\tfrac{\Delta_{\bge}N}{\theta^{2}N}-\tfrac{n^{2}}{n-1}\tfrac{\rho-\bp}{\theta^{2}}-\tfrac{2n^{2}}{n-1}\tfrac{\Lambda}{\theta^{2}}
    +n\tfrac{\bS}{\theta^{2}}.
  \end{equation}
\end{remark}
\begin{remark}
  The Hamiltonian constraint associated with (\ref{eq:EE}) reads (\ref{eq:Hamiltonianconstraintbasicversion}). 
  Introducing $\sigma_{ij}:=\bk_{ij}-\theta\bge_{ij}/n$, (\ref{eq:Hamiltonianconstraintbasicversion}) can be written
  \begin{equation}\label{eq:bSviaHamconstraint}
    \bS=\sigma^{ij}\sigma_{ij}-\tfrac{n-1}{n}\theta^{2}+2\rho+2\Lambda.
  \end{equation}
  Combining this equality with (\ref{eq:Raychaydhurione}) yields
  \begin{equation}\label{eq:Raychaydhuritwo}
    U(\theta)=N^{-1}\Delta_{\bge}N-\tfrac{1}{n}\theta^{2}-\sigma^{ij}\sigma_{ij}+\tfrac{n}{n-1}(\rho-\bp)-2\rho+\tfrac{2}{n-1}\Lambda.
  \end{equation}
\end{remark}
\begin{remark}\label{remark:CMCtransportedspacoord}
  In the case of CMC transported spatial coordinates, see \cite[Section~3]{rasq} (with a change of sign convention concerning the second fundamental
  form), $\chi=0$ and $\theta=1/t$. In that setting, (\ref{eq:Raychaydhuritwo}) yields an elliptic equation for $N$. 
\end{remark}
\begin{proof}
  Taking the trace of (\ref{eq:EE}) yields
  \[
  -\tfrac{n-1}{2}S+(n+1)\Lambda=\tr_{g}T=-T(U,U)+\bge^{ij}T_{ij}=-\rho+n\bp.
  \]
  In particular,
  \[
  S=\tfrac{2}{n-1}\rho-\tfrac{2n}{n-1}\bp+\tfrac{2(n+1)}{n-1}\Lambda,
  \]
  where $S$ is the scalar curvature of $g$. On the other hand, $\bge^{ij}R_{ij}=R_{UU}+S$. Moreover, taking the $UU$ component of (\ref{eq:EE}) yields
  $R_{UU}+S/2-\Lambda=\rho$. Thus
  \[
  \bge^{ij}R_{ij}=R_{UU}+S=\rho+\Lambda+\tfrac{1}{2}S=\tfrac{n}{n-1}(\rho-\bp)+\tfrac{2n}{n-1}\Lambda.
  \]
  Combining this calculation with (\ref{eq:preRaychaudhuri}) yields (\ref{eq:Raychaydhurione}). The lemma follows. 
\end{proof}

Next, let us turn to the evolution of $\bK$ and $\mK$. 
\begin{lemma}\label{lemma:mlUWeingarten}
  Let $(M,g)$ be a spacetime. Assume that it has an expanding partial pointed foliation. Assume, moreover, $\mK$ to be non-degenerate on $I$ and to
  have a global frame. Finally, assume (\ref{eq:EE}) to hold. Then
  \begin{equation}\label{eq:mlUbKwithEinstein}
    (\ml_{U}\bK)^{i}_{\phantom{i}j} = N^{-1}\bnabla^{i}\bnabla_{j}N-\theta\bK^{i}_{\phantom{i}j}
    +\mcP^{i}_{\phantom{i}j}+\tfrac{1}{n-1}(\rho-\bp)\de^{i}_{j}+\tfrac{2}{n-1}\Lambda\de^{i}_{j}-\bR^{i}_{\phantom{i}j},
  \end{equation}
  using the notation introduced in (\ref{eq:mfpmcPdef}). Moreover,
  \begin{equation}\label{eq:mlUmKwithEinstein appendix}
    \begin{split}
      (\hml_{U}\mK)^{i}_{\phantom{i}j} =  & -\left(\tfrac{\Delta_{\bge}N}{\theta^{2}N}+\tfrac{n}{n-1}\tfrac{\rho-\bp}{\theta^{2}}
      +\tfrac{2n}{n-1}\tfrac{\Lambda}{\theta^{2}}-\tfrac{\bS}{\theta^{2}}\right)\mK^{i}_{\phantom{i}j}+\tfrac{1}{n-1}\tfrac{\rho-\bp}{\theta^{2}}\de^{i}_{j}\\
      & +\tfrac{2}{n-1}\tfrac{\Lambda}{\theta^{2}}\de^{i}_{j}
      +\tfrac{1}{N\theta^{2}}\bnabla^{i}\bnabla_{j}N+\tfrac{\mcP^{i}_{\phantom{i}j}}{\theta^{2}}-\tfrac{\bR^{i}_{\phantom{i}j}}{\theta^{2}}.
    \end{split}
  \end{equation}
\end{lemma}
\begin{remark}
  In the special case of CMC transported spatial coordinates, see Remark~\ref{remark:CMCtransportedspacoord}, (\ref{eq:mlUmKwithEinstein appendix}) can be
  written (keeping (\ref{eq:Raychaydhurione}) in mind)
  \begin{equation}\label{eq:mlUmKwithEinsteinandCMC}
    \begin{split}
      (\hml_{U}\mK)^{i}_{\phantom{i}j} = & -\tfrac{1}{N}(N-1)\mK^{i}_{\phantom{i}j}+\tfrac{1}{N\theta^{2}}\bnabla^{i}\bnabla_{j}N
      +\theta^{-2}\mcP^{i}_{\phantom{i}j}+\tfrac{1}{n-1}\tfrac{\rho-\bp}{\theta^{2}}\de^{i}_{j}+\tfrac{2}{n-1}\tfrac{\Lambda}{\theta^{2}}\de^{i}_{j}
      -\tfrac{\bR^{i}_{\phantom{i}j}}{\theta^{2}}.
    \end{split}
  \end{equation}
\end{remark}
\begin{remark}
  It is also of interest to note that (keeping (\ref{eq:qmainformula appendix}) in mind)
  \begin{equation}\label{eq:mlUmKwithEinsteinandq}
    \begin{split}
      (\hml_{U}\mK)^{i}_{\phantom{i}j} =  & \tfrac{q-(n-1)}{n}\mK^{i}_{\phantom{i}j}-\theta^{-2}\bR^{i}_{\phantom{i}j}
      +\theta^{-1}\hN^{-1}\bnabla^{i}\bnabla_{j}N+\theta^{-2}\mcP^{i}_{\phantom{i}j}\\
      & +\tfrac{1}{n-1}\tfrac{\rho-\bp}{\theta^{2}}\de^{i}_{j}+\tfrac{2}{n-1}\tfrac{\Lambda}{\theta^{2}}\de^{i}_{j}.
    \end{split}
  \end{equation}
\end{remark}
\begin{proof}
  Considering (\ref{eq:mlUbKwithoutEinstein}), it is of interest to calculate $\bge^{il}R_{lj}$. Note, to this end, that the Einstein
  equations can be reformulated to
  \[
  \mathrm{Ric}=T-\tfrac{1}{n-1}(\tr_{g}T)g+\tfrac{2}{n-1}\Lambda g.
  \]
  Moreover,
  $\tr_{g}T=-T(U,U)+\bge^{ij}T_{ij}=-\rho+n\bp$. Combining the last two observations yields
  \[
  \bge^{il}R_{lj}=\bge^{il}T_{lj}+\tfrac{\rho}{n-1}\de^{i}_{j}-\tfrac{n}{n-1}\bp\de^{i}_{j}+\tfrac{2}{n-1}\Lambda\de^{i}_{j}.
  \]
  Introducing the notation (\ref{eq:mfpmcPdef}), it is clear that $\mcP^{i}_{\phantom{i}i}=0$ and that
  \[
  \bge^{il}R_{lj}=\mcP^{i}_{\phantom{i}j}+\tfrac{1}{n-1}(\rho-\bp)\de^{i}_{j}+\tfrac{2}{n-1}\Lambda\de^{i}_{j}.
  \]
  Combining this equality with (\ref{eq:mlUbKwithoutEinstein}) yields (\ref{eq:mlUbKwithEinstein}). Next, note that
  \[
  (\hml_{U}\mK)^{i}_{\phantom{i}j} =\theta^{-2}(\ml_{U}\bK)^{i}_{\phantom{i}j}-\theta^{-2}(\ml_{U}\theta)\mK^{i}_{\phantom{i}j}.
  \]
  Combining this equality with (\ref{eq:Raychaydhurione}) and (\ref{eq:mlUbKwithEinstein}) yields
  \begin{equation*}
    \begin{split}
      (\hml_{U}\mK)^{i}_{\phantom{i}j} = & \tfrac{1}{N\theta^{2}}\bnabla^{i}\bnabla_{j}N-\mK^{i}_{\phantom{i}j}
      +\theta^{-2}\mcP^{i}_{\phantom{i}j}+\tfrac{1}{n-1}\tfrac{\rho-\bp}{\theta^{2}}\de^{i}_{j}+\tfrac{2}{n-1}\tfrac{\Lambda}{\theta^{2}}\de^{i}_{j}
      -\tfrac{\bR^{i}_{\phantom{i}j}}{\theta^{2}}\\
      & +\left(1-\tfrac{\Delta_{\bge}N}{\theta^{2}N}-\tfrac{n}{n-1}\tfrac{\rho-\bp}{\theta^{2}}-\tfrac{2n}{n-1}\tfrac{\Lambda}{\theta^{2}}
      +\tfrac{\bS}{\theta^{2}}\right)\mK^{i}_{\phantom{i}j}.
    \end{split}
  \end{equation*}
  The lemma follows. 
\end{proof}

\section{Conditions ensuring quiescence}\label{section:condquiescentregimes}

The purpose of the present section is to prove Lemma~\ref{lemma:vacuumquiescent}. 

\begin{proof}[Lemma~\ref{lemma:vacuumquiescent}]
Let $3\leq n\in\zo$ and let $\ell\in K_{o}$ be such that $F|_{K_{o}}$ is maximised. To begin with, we prove that there is no loss of generality in 
assuming that the set $E:=\{\ell_{1},\dots,\ell_{n}\}$ consists of exactly two elements. In order to prove this, note first that the assumption that
$E$ consists of exactly one element is not consistent with (\ref{eq:bothKasnerrelations}). Assume, therefore, that $E$ contains three distinct 
numbers. We can then choose three distinct elements of $E$, say $\ell_{a}$, $\ell_{b}$ and $\ell_{c}$, such that $\ell_{a}$ equals $\ell_{1}$, $\ell_{c}$ 
equals $\ell_{n}$ and $\ell_{b}$ equals the second largest element in $E$. Then $\ell_{a}<\ell_{b}<\ell_{c}$. Assume that there are $k$ elements $\ell_{A}$
equalling $\ell_{a}$, $m$ elements $\ell_{A}$ equalling $\ell_{b}$ and $p$ elements $\ell_{A}$ equalling $\ell_{c}$. Define $\a$ and $\b$ by the relations
\[
k\ell_{a}+m\ell_{b}+p\ell_{c}=\a,\ \ \
k\ell_{a}^{2}+m\ell_{b}^{2}+p\ell_{c}^{2}=\b.
\]
In what follows, we are interested in the set $S$ of elements $(l_{a},l_{b},l_{c})\in\rn{3}$ such that 
\[
kl_{a}+ml_{b}+pl_{c}=\a,\ \ \
kl_{a}^{2}+ml_{b}^{2}+pl_{c}^{2}=\b
\]
and such that $l_{a}<l_{b}<l_{c}$. By assumption, $S$ is non-empty. Let $q:=k+m+p$, $(l_{a},l_{b},l_{c})\in S$ and define $v_{0}$ and $\msL$ by 
\[
v_{0}:=(\sqrt{k},\sqrt{m},\sqrt{p}),\ \ \
\msL:=(\sqrt{k}l_{a},\sqrt{m}l_{b},\sqrt{p}l_{c}).
\]
Then
\[
|\a|=|v_{0}\cdot \msL|\leq \sqrt{q}\sqrt{\b}.
\]
In fact, we have strict inequality, since equality would imply that $v_{0}$ is parallel with $\msL$, so that $l_{a}=l_{b}=l_{c}$, contradicting 
the assumptions. Thus $\a^{2}<q\b$. Next, let 
\[
v_{+}:=\tfrac{1}{\sqrt{(m+p)q}}(m+p,-\sqrt{mk},-\sqrt{pk}),\ \ \
v_{-}:=\tfrac{1}{\sqrt{p+m}}(0,\sqrt{p},-\sqrt{m}).
\]
Then $\{v_{0},v_{+},v_{-}\}$ is an orthogonal frame. Let 
\[
L:=[\sqrt{k}(l_{a}-\a/q),\sqrt{m}(l_{b}-\a/q),\sqrt{p}(l_{c}-\a/q)].
\]
Then $L\cdot v_{0}=0$, so that there are real numbers $L_{\pm}$ such that 
\[
L_{+}v_{+}+L_{-}v_{-}=L. 
\]
Since $v_{+}$ and $v_{-}$ are orthonormal,
\begin{align*}
L_{+} = & \left(\tfrac{kq}{m+p}\right)^{1/2}\left(l_{a}-\tfrac{\a}{q}\right)
=\left(\tfrac{q(m+p)}{k}\right)^{1/2}\left(\tfrac{\a}{q}-\tfrac{1}{m+p}(ml_{b}+pl_{c})\right),\\
L_{-} = & \left(\tfrac{pm}{p+m}\right)^{1/2}(l_{b}-l_{c}). 
\end{align*}
Moreover, since $v_{+}$ and $v_{-}$ are orthonormal,
\[
L_{+}^{2}+L_{-}^{2}=\b-\tfrac{\a^{2}}{q}=r^{2},
\]
where the last equality defines $r>0$; recall that $q\b-\a^{2}>0$. Let $t\in (-\pi,\pi]$ be such that 
\[
L_{+}=-r\sin(t),\ \ \
L_{-}=-r\cos(t).
\]
Since $l_{b}<l_{c}$, we know that $L_{-}<0$, so that $t\in (-\pi/2,\pi/2)$. Moreover, 
\begin{align*}
l_{a} = & -\left(\tfrac{m+p}{kq}\right)^{1/2}r\sin(t)+\tfrac{\a}{q},\\
l_{b} = & \left(\tfrac{k}{q(m+p)}\right)^{1/2}r\sin(t)-\left(\tfrac{p}{m(m+p)}\right)^{1/2}r\cos(t)+\tfrac{\a}{q},\\
l_{c} = & \left(\tfrac{k}{q(m+p)}\right)^{1/2}r\sin(t)+\left(\tfrac{m}{p(m+p)}\right)^{1/2}r\cos(t)+\tfrac{\a}{q}.
\end{align*}
From now on, we consider $l_{a}$, $l_{b}$ and $l_{c}$ to be functions of $t$, and we write $l_{a}(t)$ etc. The condition that $l_{a}<l_{b}$ translates to 
\[
\tan(t)>\left(\tfrac{pk}{mq}\right)^{1/2}.
\]
This condition can be written $t>\theta_{0}$ for some $\theta_{0}\in (0,\pi/2)$. Summarising, $t\in (\theta_{0},\pi/2)$. In particular, 
$(\ell_{a},\ell_{b},\ell_{c})$ corresponds to some $t_{0}\in (\theta_{0},\pi/2)$. Let $l(t)$ be the function obtained from $\ell$ by replacing
all the $\ell_{A}$ equalling $\ell_{a}$ by $l_{a}(t)$; replacing all the $\ell_{A}$ equalling $\ell_{b}$ by $l_{b}(t)$; and replacing
all the $\ell_{A}$ equalling $\ell_{c}$ by $l_{c}(t)$. Then $\ell=l(t_{0})$. By construction, (\ref{eq:bothKasnerrelations}) is satisfied for 
all $t\in (-\pi,\pi]$. Moreover, (\ref{eq:ellorder}) is satisfied for $t$ close to $t_{0}$. In particular, $l(t)\in K_{o}$ for $t$ close to 
$t_{0}$. If $t_{0}$ is not a local maximum of $F\circ l$, we get a contradiction to the assumption that $\ell$ is a global maximum of $F$ on $K_{o}$. 
In case $p\geq 2$, maximising $F\circ l$ is the same as maximising the function $h$ defined by $h(t):=l_{a}(t)-2l_{c}(t)$. However, $h(t)$ can, up to
a constant, be written as a sum $a\sin(t)+b\cos(t)$ where $a,b<0$. Thus $h$ does not have a local maximum in $(\theta_{0},\pi/2)$ (since the second
derivative is strictly positive). This yields a 
contradiction to the fact that $t_{0}$ is a maximum for $F\circ l$. Next, assume that $p=1$. In that setting we wish to maximize $f$
defined by $f(t):=l_{a}(t)-l_{b}(t)-l_{c}(t)$. Again, $f(t)$ can be written as a sum $a\sin(t)+b\cos(t)$ where $a<0$ and $b\leq 0$ (if 
$m=1$, $b=0$ and if $m\geq 2$, $b<0$). Thus $f$ does not have a local maximum in $(\theta_{0},\pi/2)$ and we obtain a contradiction. 
To conclude, the assumption that $S$ contains three distinct elements leads to a contradiction. 

Since, as already noted, $\{\ell_{1},\dots,\ell_{n}\}$ cannot consist of only one element, we conclude that this set consists of two elements, say 
$\ell_{a}$ and $\ell_{b}$. Assume that there are $k$ elements of the form $\ell_{a}$ and $m$ elements of the form $\ell_{b}$. Then $\ell_{a}$ and $\ell_{b}$
have to satisfy
\begin{equation}\label{eq:Kasnertwoelements}
k\ell_{a}+m\ell_{b}=1,\ \ \
k\ell_{a}^{2}+m\ell_{b}^{2}=1,\ \ \
\ell_{a}\leq \ell_{b}.
\end{equation}
Note also that in order for $F(\ell)$ to be strictly positive, we have to have $\ell_{a}<0$ and $m\geq 3$. The reason why $\ell_{a}<0$ is that if 
$\ell_{a}\geq 0$, then $\ell_{n}=1$ and $\ell_{i}=0$ if $i\leq n-1$, so that $F(\ell)=0$; note that $\ell_{a}\geq 0$ and $\ell_{b}<1$ yields
\[
1=k\ell_{a}^{2}+m\ell_{b}^{2}<k\ell_{a}+m\ell_{b}=1,
\]
a contradiction. Moreover, (\ref{eq:bothKasnerrelations}) can be used to deduce that 
\[
F(\ell)=2\ell_{1}+\textstyle{\sum}_{j=2}^{n-2}\ell_{j}.
\]
Thus $m\geq 3$, since we would otherwise have $F(\ell)\leq 0$. In the present setting, (\ref{eq:Kasnertwoelements}) combined with the condition
that $\ell_{a}<0$ yields the conclusion that 
\[
\ell_{a} = \tfrac{1}{n}\left[1-\left(\tfrac{m}{k}\right)^{1/2}(n-1)^{1/2}\right],\ \ \
\ell_{b} = \tfrac{1}{n}\left[1+\left(\tfrac{k}{m}\right)^{1/2}(n-1)^{1/2}\right].
\]
The corresponding $\ell$ (assuming $m\geq 3$) is such that 
\[
F(\ell)=\tfrac{n-1}{n}\left(1-\left(\tfrac{m}{k(n-1)}\right)^{1/2}-2\left(\tfrac{k}{m(n-1)}\right)^{1/2}\right). 
\]
We want to maximise this expression. This means that we want to minimise
\[
v(s):=s+2s^{-1}, 
\]
where $s$ should be thought of as equalling $(m/k)^{1/2}$. This function has a global minimum when $s=\sqrt{2}$; i.e., when $m=2k$. However, this 
only happens when $n=3k$ for some $1\leq k\in\zo$. If $n$ cannot be written in this way, we need to make a division such that $m/k$ is as close
to $2$ as possible. Note now that in case $n=9$, the optimal choice is $k=3$ and $m=6$. This yields $F(\ell)=0$ (recall that we are
assuming $m\geq 3$; choosing $\ell=(0,\dots,0,1)$ yields $F(\ell)=0$). Since there is no $\ell\in K_{o}$ such that $F(\ell)>0$ in case $n=9$, there is,
by the above, also no $\ell\in K_{o}$ such that $F(\ell)>0$ in case $3\leq n\leq 9$. However, in the case of $n=10$, letting $m=7$ and $k=3$ yields an
$\ell$ such that $F(\ell)>0$. Since the solution in the case of $n=10$ is also a solution for $n\geq 10$, the lemma follows. 
\end{proof}

\section{The spatial scalar curvature}\label{section:sp sc curv}

In most of this article, the arguments concerning the asymptotics are based on assumptions concerning $\hml_{U}^{l}\mK$ for $l=0,1$. However, 
assuming Einstein's equations to be satisfied, $\hml_{U}\mK$ can be calculated in terms of the spatial curvatures, the lapse function, 
the mean curvature, the cosmological constant and components of the stress energy tensor; see (\ref{eq:mlUmKwithEinstein appendix}). For this reason, 
it is of interest to calculate the curvatures of the hypersurfaces $\bM_{t}$ in terms of the structure coefficients $\g^{A}_{BC}$, the $\mu_{A}$, the
$\bmu_A$ and derivatives of these quantities with respect to the frame $\{X_A\}$. 
Moreover, some of the bounds on $\mK$ arise from the constraint equations. For that reason, it is of interest to express the constraint 
equations with respect to the same quantities. 

Due to (\ref{eq:muAbmuAdefinition}), $\{\ONSF_{A}\}$, defined by 
\[
\ONSF_{A}:=\theta e^{-\mu_{A}}X_{A}=e^{-\bmu_{A}}X_{A}
\]
(no summation on $A$), is an orthonormal frame with respect to $\bge$. We begin by calculating the Ricci and scalar curvature. 

\subsection{Structure coefficients and connection coefficients}
In what follows, we use the notation
\begin{equation}\label{eq:bgABCdef}
  \bnabla_{\ONSF_{A}}\ONSF_{B}=\ONCS_{AB}^{C}\ONSF_{C},\ \ \
  \ONCS^{C}:=\de^{AB}\ONCS_{AB}^{C}. 
\end{equation}
Defining $\bga^{A}_{BC}$ by $[\ONSF_{A},\ONSF_{B}]=\bga_{AB}^{C}\ONSF_{C}$, the Koszul formula yields 
\begin{equation}\label{eq:bGCABdef}
\ONCS^{C}_{AB}=\tfrac{1}{2}(-\bga^{A}_{BC}+\bga^{B}_{CA}+\bga^{C}_{AB}). 
\end{equation}
It is of interest to express the connection coefficients with respect to $\g^{A}_{BC}$, $\mu_{A}$ and $\theta$. Note, to that end, that
\[
[\ONSF_{A},\ONSF_{B}]=-\theta e^{-\mu_{A}}X_{A}(\bmu_{B})\ONSF_{B}+\theta e^{-\mu_{B}}X_{B}(\bmu_{A})\ONSF_{A}+\theta^{2}e^{-\mu_{A}-\mu_{B}}[X_{A},X_{B}]
\]
(no summation on $A$ or $B$). Thus
\begin{equation}\label{eq:bgaABCitobasqu}
\bga_{AB}^{C}=-\theta e^{-\mu_{A}}X_{A}(\bmu_{C})\de_{BC}+\theta e^{-\mu_{B}}X_{B}(\bmu_{C})\de_{AC}+\theta e^{\mu_{C}-\mu_{A}-\mu_{B}}\g^{C}_{AB}
\end{equation}
(no summation on any index). Next, note that 
\[
\ONCS^{C}=\tfrac{1}{2}\textstyle{\sum}_{A}(-\bga^{A}_{AC}+\bga^{A}_{CA}+\bga^{C}_{AA})=\bga^{A}_{CA}=:\ba_{C}.
\]
On the other hand, (\ref{eq:bgaABCitobasqu}) yields
\begin{equation}\label{eq:aAformula}
\ba_{A}=\bga_{AB}^{B}
=\theta e^{-\mu_{A}}\left(X_{A}\left(\bmu_{A}-\textstyle{\sum}_{B}\bmu_{B}\right)+\g^{B}_{AB}\right)
\end{equation}
(no summation on $A$). 

\subsection{Spatial scalar curvature}

Next, let us calculate the spatial Ricci and scalar curvature. To begin with
\begin{equation}\label{eq:bRicbasicformula}
\begin{split}
\bRic(\ONSF_{A},\ONSF_{B}) = & \textstyle{\sum}_{C}\ldr{\bR_{\ONSF_{A}\ONSF_{C}}\ONSF_{C},\ONSF_{B}}\\
= & \textstyle{\sum}_{C}\ldr{\bnabla_{\ONSF_{A}}\bnabla_{\ONSF_{C}}\ONSF_{C}-\bnabla_{\ONSF_{C}}\bnabla_{\ONSF_{A}}\ONSF_{C}
  -\bnabla_{[\ONSF_{A},\ONSF_{C}]}\ONSF_{C},\ONSF_{B}}\\
= & \textstyle{\sum}_{C}\ldr{\bnabla_{\ONSF_{A}}(\ONCS^{D}_{CC}\ONSF_{D})-\bnabla_{\ONSF_{C}}(\ONCS_{AC}^{D}\ONSF_{D})
  -\bga^{D}_{AC}\bnabla_{\ONSF_{D}}\ONSF_{C},\ONSF_{B}}\\
= & \textstyle{\sum}_{C}\left[\ONSF_{A}(\ONCS^{B}_{CC})+\ONCS^{D}_{CC}\ONCS_{AD}^{B}-\ONSF_{C}(\ONCS_{AC}^{B})
  -\ONCS_{AC}^{D}\ONCS_{CD}^{B}-\bga^{D}_{AC}\ONCS_{DC}^{B}\right].
\end{split}
\end{equation}
Let us focus on the scalar curvature
\begin{equation*}
\begin{split}
  \bS = & \textstyle{\sum}_{C}\left[\ONSF_{A}(\ONCS^{A}_{CC})+\ONCS^{B}_{CC}\ONCS_{AB}^{A}-\ONSF_{C}(\ONCS_{AC}^{A})
    -\ONCS_{AC}^{B}\ONCS_{CB}^{A}-\bga^{B}_{AC}\ONCS_{BC}^{A}\right].
\end{split}
\end{equation*}

\begin{lemma}\label{lemma:scalarcurvature}
Let $(M,g)$ be a spacetime. Assume that it has an expanding partial pointed foliation. Assume, moreover, $\mK$ to be 
non-degenerate on $I$ and to have a global frame. Then 
\begin{equation}\label{eq:bSexpression}
\begin{split}
\theta^{-2}\bS = &  -\textstyle{\frac{1}{4}\sum}_{A,B,C}e^{2\mu_{B}-2\mu_{A}-2\mu_{C}}(\g^{B}_{AC})^{2}
-\tfrac{1}{2}\sum_{A,B,C}e^{-2\mu_{A}}\g^{B}_{AC}\g^{C}_{AB}\\
 & -\textstyle{\sum}_{A,B}e^{-2\mu_{A}}|X_{A}(\bmu_{B})|^{2}+2\textstyle{\sum}_{A,B}e^{-2\mu_{A}}X_{A}(\bmu_{B})\g^{B}_{AB}-\sum_{A}e^{-2\mu_{A}}\cha_{A}^{2}\\
 & +\textstyle{\sum}_{A}e^{-2\mu_{A}}|X_{A}(\bmu_{A})|^{2}
 +2\textstyle{\sum}_{A}e^{-2\mu_{A}}[X_{A}(\cha_{A})-X_{A}(\bmu_{A})\cha_{A}],
\end{split}
\end{equation}
where 
\begin{align}
\cha_{A} := & -X_{A}(\hmu_{A})+a_{A},\label{eq:chAdef}\\
\hmu_{A} := & \textstyle{\sum}_{B}\bmu_{B}-\bmu_{A},\label{eq:hmuAdef}\\
a_{A} := & \g^{B}_{AB}.\label{eq:aAdefaux}
\end{align}
\end{lemma}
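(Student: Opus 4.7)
The proof is a direct, if intricate, computation starting from the trace of (\ref{eq:bRicbasicformula}):
\begin{equation*}
\bS = \textstyle{\sum}_{A,C}\left[\ONSF_{A}(\ONCS^{A}_{CC})+\ONCS^{B}_{CC}\ONCS_{AB}^{A}-\ONSF_{C}(\ONCS_{AC}^{A})
-\ONCS_{AC}^{B}\ONCS_{CB}^{A}-\bga^{B}_{AC}\ONCS_{BC}^{A}\right],
\end{equation*}
where summation over $B$ is implicit. The plan is to use the Koszul formula (\ref{eq:bGCABdef}) to replace every connection coefficient $\ONCS^{C}_{AB}$ by a combination of commutator coefficients $\bga^{C}_{AB}$, then substitute (\ref{eq:bgaABCitobasqu}) and expand. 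A key preliminary observation is that comparing (\ref{eq:aAformula}) with the definitions (\ref{eq:chAdef})--(\ref{eq:aAdefaux}) yields the clean identity $\ba_{A}=\theta e^{-\mu_{A}}\cha_{A}$ (no summation), and since $\ONCS^{A}_{CA}=\bga^{A}_{CA}=\ba_{C}$, traces over a contracted index are immediately expressible via $\cha_{C}$.

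First I would handle the two derivative terms. Writing $\ONSF_{A}=\theta e^{-\mu_{A}}X_{A}$ and using the Leibniz rule, $\ONSF_{A}(\ONCS^{A}_{CC})-\ONSF_{C}(\ONCS^{A}_{AC})$ produces two kinds of contributions: one from $X_{A}$ falling on the prefactor $\theta e^{-\mu_{*}}$, which manifests as $-X_{A}(\bmu_{A})\cha_{A}$ and several $X_{A}(\bmu_{B})$-type terms that will combine with the quadratic terms below, and one genuine second-derivative piece $X_{A}(\cha_{A})$. These together yield, after multiplying by $\theta^{-2}$, exactly the last line of (\ref{eq:bSexpression}).

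For the three quadratic terms $\ONCS^{B}_{CC}\ONCS^{A}_{AB}$, $\ONCS^{B}_{AC}\ONCS^{A}_{CB}$, and $\bga^{B}_{AC}\ONCS^{A}_{BC}$, I would substitute the three-term expansion (\ref{eq:bgaABCitobasqu}) into each factor and group the resulting nine kinds of cross-products by type. Products of two ``off-diagonal'' pieces $\theta e^{\mu_{C}-\mu_{A}-\mu_{B}}\g^{C}_{AB}$ produce the two $\g$-quadratic sums in the first line of (\ref{eq:bSexpression}); the precise coefficients $-\tfrac{1}{4}$ and $-\tfrac{1}{2}$ come from the Koszul factor of $\tfrac{1}{2}$ together with the index-symmetrization in the Ricci trace. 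Products of one off-diagonal $\g$-piece and one ``diagonal'' $X_{A}(\bmu_{*})$ piece yield the mixed term $+2\sum e^{-2\mu_{A}}X_{A}(\bmu_{B})\g^{B}_{AB}$. Products of two diagonal pieces yield the $|X_{A}(\bmu_{B})|^{2}$ and $|X_{A}(\bmu_{A})|^{2}$ sums, whose opposite signs in (\ref{eq:bSexpression}) encode the fact that certain Kronecker-delta combinations cancel the $B=A$ piece twice. Finally, contributions that mix the trace term $\bga^{A}_{CA}=\ba_{C}$ with itself, together with the piece inherited from the derivative term, assemble into $-\sum e^{-2\mu_{A}}\cha_{A}^{2}$.

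The main obstacle is purely combinatorial: each of the three quadratic contractions unfolds into nine families of terms, and one must track signs and Kronecker-delta collapses carefully to avoid double-counting. A useful organizational device is to first carry out the computation in the ``holonomic'' case $\g^{C}_{AB}\equiv 0$, verifying that all $\mu$-derivative terms assemble correctly, and then add in the $\g$-contributions separately; since (\ref{eq:bgaABCitobasqu}) is linear in the $\g$ piece, this cleanly separates the purely $\g$-quadratic sums from the mixed sum $\sum e^{-2\mu_{A}}X_{A}(\bmu_{B})\g^{B}_{AB}$, and the remaining diagonal cross-products collapse to exactly the two $|X_{A}(\bmu_{*})|^{2}$ sums via repeated use of $\sum_{A}\de_{AB}=1$. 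Reading off the overall $\theta^{2}$ prefactor (each $\bga$ is proportional to $\theta$, so each quadratic term carries $\theta^{2}$, matching the two $X$-derivatives in the last line) produces (\ref{eq:bSexpression}).
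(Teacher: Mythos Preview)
Your approach is correct and essentially the same as the paper's: a direct computation from the trace of (\ref{eq:bRicbasicformula}) via (\ref{eq:bGCABdef}) and (\ref{eq:bgaABCitobasqu}), with the key identity $\ba_{A}=\theta e^{-\mu_{A}}\cha_{A}$ handling the derivative and trace terms. The only organizational difference is that the paper first simplifies purely at the $\bga$-level---obtaining the intermediate form
\[
\theta^{-2}\bS = 2\textstyle{\sum}_{A}e^{-2\mu_{A}}[X_{A}(\cha_{A})-X_{A}(\bmu_{A})\cha_{A}]-\textstyle{\sum}_{A}e^{-2\mu_{A}}\cha_{A}^{2}
-\textstyle{\frac{1}{4}\sum}\theta^{-2}(\bga^{B}_{AC})^{2}-\textstyle{\frac{1}{2}\sum}\theta^{-2}\bga^{B}_{AC}\bga^{C}_{AB}
\]
---and only then substitutes (\ref{eq:bgaABCitobasqu}) into the last two sums, whereas you propose substituting immediately and sorting by term type (with the holonomic $\g\equiv 0$ check as a bookkeeping device). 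Both routes are valid; the paper's intermediate step makes the combinatorics somewhat lighter since only two $\bga$-quadratic forms need expanding. One small slip: your claim ``$\ONCS^{A}_{CA}=\bga^{A}_{CA}=\ba_{C}$'' has the lower indices swapped---the relevant traces are $\sum_{C}\ONCS^{A}_{CC}=\ba_{A}$ and $\sum_{A}\ONCS^{A}_{AC}=-\ba_{C}$---but this does not affect your argument.
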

\begin{proof}
Note that
\begin{equation}\label{eq:bGACCsumCaA}
\textstyle{\sum}_{C}\ONCS^{A}_{CC}=\bga^{C}_{AC}=\ba_{A},\ \ \
\ONCS_{AC}^{A}=-\ba_{C}.
\end{equation}
Thus
\[
\textstyle{\sum}_{C}[\ONSF_{A}(\ONCS^{A}_{CC})-\ONSF_{C}(\ONCS_{AC}^{A})]=\textstyle{\sum}_{A}2\ONSF_{A}(\ba_{A}).
\]
Considering (\ref{eq:aAformula}), it is clear that 
\begin{equation}\label{eq:chAdefiningrelation}
\ba_{A}=\theta e^{-\mu_{A}}\cha_{A}=e^{-\bmu_{A}}\cha_{A},
\end{equation}
where $\cha_{A}$ is given by (\ref{eq:chAdef}). Thus
\[
\ONSF_{A}(\ba_{A})=\theta^{2} e^{-2\mu_{A}}[X_{A}(\cha_{A})-X_{A}(\bmu_{A})\cha_{A}].
\]
Next, note that 
\[
\textstyle{\sum}_{C}\ONCS^{B}_{CC}\ONCS_{AB}^{A}=-\textstyle{\sum}_{B}\ba_{B}^{2}
\]
and that 
\begin{equation*}
\begin{split}
\textstyle{\sum}_{A,B,C}\ONCS_{AC}^{B}\ONCS_{CB}^{A}
= & \textstyle{\frac{1}{4}\sum}_{A,B,C}(-\bga^{A}_{CB}+\bga^{C}_{BA}+\bga^{B}_{AC})(-\bga^{C}_{BA}+\bga^{B}_{AC}+\bga^{A}_{CB})\\
= & -\textstyle{\frac{1}{4}\sum}_{A,B,C}(\bga^{C}_{BA})^{2}
+\tfrac{1}{2}\textstyle{\sum}_{A,B,C}\bga^{C}_{BA}\bga^{B}_{AC}.
\end{split}
\end{equation*}
In addition, 
\begin{equation*}
\begin{split}
\textstyle{\sum}_{A,B,C}\bga^{B}_{AC}\ONCS_{BC}^{A} = & \textstyle{\frac{1}{2}\sum}_{A,B,C}\bga^{B}_{AC}(-\bga^{B}_{CA}+\bga^{C}_{AB}+\bga^{A}_{BC})\\
 = & \textstyle{\frac{1}{2}\sum}_{A,B,C}(\bga^{B}_{AC})^{2}+\sum_{A,B,C}\bga^{B}_{AC}\bga^{C}_{AB}. 
\end{split}
\end{equation*}
Summing up yields 
\begin{equation}\label{eq:bSpreliminaryexpression}
\begin{split}
\theta^{-2}\bS = & 2\textstyle{\sum}_{A}e^{-2\mu_{A}}[X_{A}(\cha_{A})-X_{A}(\bmu_{A})\cha_{A}]-\sum_{A}e^{-2\mu_{A}}\cha_{A}^{2}\\
& -\textstyle{\frac{1}{4}\sum}_{A,B,C}\theta^{-2}(\bga^{B}_{AC})^{2}-\tfrac{1}{2}\sum_{A,B,C}\theta^{-2}\bga^{B}_{AC}\bga^{C}_{AB},
\end{split}
\end{equation}
where $\cha_{A}$ is given by (\ref{eq:chAdef}). In order to proceed, it is of interest to expand the last two terms on the right 
hand side of (\ref{eq:bSpreliminaryexpression}). Due to (\ref{eq:bgaABCitobasqu}),
\[
\theta^{-1}\bga_{AC}^{B}=-e^{-\mu_{A}}X_{A}(\bmu_{B})\de_{BC}+e^{-\mu_{C}}X_{C}(\bmu_{B})\de_{AB}+e^{\mu_{B}-\mu_{A}-\mu_{C}}\g^{B}_{AC}.
\]
Thus
\begin{equation*}
\begin{split}
-\textstyle{\frac{1}{4}\sum}_{A,B,C}\theta^{-2}(\bga^{B}_{AC})^{2} = & -\textstyle{\frac{1}{4}\sum}_{A,B,C}e^{2\mu_{B}-2\mu_{A}-2\mu_{C}}(\g^{B}_{AC})^{2}\\
 & -\textstyle{\frac{1}{2}\sum}_{A,B}e^{-2\mu_{A}}|X_{A}(\bmu_{B})|^{2}+\tfrac{1}{2}\textstyle{\sum}_{A}e^{-2\mu_{A}}|X_{A}(\bmu_{A})|^{2}\\
 & +\textstyle{\sum}_{A,B}e^{-2\mu_{A}}X_{A}(\bmu_{B})\g^{B}_{AB}.
\end{split}
\end{equation*}
It can also be calculated that 
\begin{equation*}
\begin{split}
-\textstyle{\frac{1}{2}\sum}_{A,B,C}\theta^{-2}\bga^{B}_{AC}\bga^{C}_{AB} = & 
-\textstyle{\frac{1}{2}\sum}_{A,B}e^{-2\mu_{A}}|X_{A}(\bmu_{B})|^{2}+\tfrac{1}{2}\textstyle{\sum}_{A}e^{-2\mu_{A}}|X_{A}(\bmu_{A})|^{2}\\
 & +\textstyle{\sum}_{A,B}e^{-2\mu_{A}}X_{A}(\bmu_{B})\g^{B}_{AB}-\tfrac{1}{2}\sum_{A,B,C}e^{-2\mu_{A}}\g^{B}_{AC}\g^{C}_{AB}.
\end{split}
\end{equation*}
Summing up yields (\ref{eq:bSexpression}). 
\end{proof}

\section{The spatial Ricci curvature}\label{section:sp ricci cu}

Above, we calculated the scalar curvature of $\bge$. However, it is also of interest to calculate the Ricci curvature. Note, to this end, that 
\[
\bRic(X_{A},X_{B})=\theta^{-2}e^{\mu_{A}+\mu_{B}}\bRic(\ONSF_{A},\ONSF_{B}).
\]
Thus
\[
\bR^{A}_{\phantom{A}B}:=\bge^{AC}\bRic(X_{C},X_{B})=e^{\mu_{B}-\mu_{A}}\bRic(\ONSF_{A},\ONSF_{B}).
\]
In what follows, we wish to calculate $\theta^{-2}\bR^{A}_{\phantom{A}B}$. It will be convenient to use the notation
\begin{equation}\label{eq:Upsilondef}
\Upsilon_{AD}^{B}:=\theta^{-1}e^{\mu_{B}-\mu_{A}-\mu_{D}}\ONCS_{AD}^{B}.
\end{equation}

\begin{lemma}\label{lemma:therenormalisedspatialriccicurvature}
  Let $(M,g)$ be a spacetime. Assume that it has an expanding partial pointed foliation. Assume, moreover, $\mK$ to be non-degenerate on $I$ and to
  have a global frame. Then
\begin{equation*}
\begin{split}
\theta^{-2}\bR^{A}_{\phantom{A}B} = & \mR_{\mrI,B}^{A}+\mR_{\mrII,B}^{A}+\mR_{\mrIII,B}^{A}+\mR_{\mrIV,B}^{A}
\end{split}
\end{equation*}
where
\begin{align}
\mR_{\mrI,B}^{A} := & e^{-2\mu_{A}}X_{A}(\cha_{B})-\textstyle{\sum}_{C}X_{C}(\Upsilon^{B}_{AC}),\label{eq:mRIABdef}\\
\mR_{\mrII,B}^{A} := & \textstyle{\frac{1}{2}\sum}_{C}e^{2\mu_{B}-2\mu_{A}-2\mu_{C}}\left[a_{C}-X_{C}\left(2\ln\theta+\bmu_{\rotot}\right)\right]\g^{B}_{AC}
\label{eq:mRIIABdef}\\
 & -\textstyle{\frac{1}{2}\sum}_{C,D}e^{2\mu_{D}-2\mu_{A}-2\mu_{C}}\g^{D}_{AC}\g^{D}_{BC}+\textstyle{\frac{1}{4}\sum}_{C,D}e^{2\mu_{B}-2\mu_{C}-2\mu_{D}}\g^{A}_{CD}\g^{B}_{CD},\nonumber\\
\mR_{\mrIII,B}^{A}:= & \textstyle{\frac{1}{2}}e^{-2\mu_{A}}X_{A}(\bmu_{B})X_{B}(2\bmu_{\rotot}-\bmu_{A})\label{eq:mRIIIABdef}\\
 & +\textstyle{\frac{1}{2}}e^{-2\mu_{A}}X_{A}(2\bmu_{\rotot}+4\ln\theta+4\bmu_{A}-3\bmu_{B})X_{B}(\bmu_{A})\nonumber\\
 & -\textstyle{\sum}_{C}e^{-2\mu_{C}}X_{C}(\bmu_{\rotot}+2\ln\theta)X_{C}(\bmu_{A})\de_{AB}-\textstyle{\sum}_{C}e^{-2\mu_{A}}X_{A}(\bmu_{C})X_{B}(\bmu_{C}),\nonumber\\
\mR_{\mrIV,B}^{A} := & -e^{-2\mu_{A}}\left(X_{A}(\bmu_{B})a_{B}+X_{B}(\bmu_{A})a_{A}\right)+\textstyle{\sum}_{C}e^{-2\mu_{C}}X_{C}(\bmu_{A})a_{C}\de_{AB}\label{eq:mRIVABdef}\\
 & +\textstyle{\sum}_{C}e^{-2\mu_{A}}[X_{B}(\bmu_{C})\g^{C}_{AC}+X_{A}(\bmu_{C})\g^{C}_{BC}]\nonumber\\
 & -\textstyle{\frac{1}{2}}\textstyle{\sum}_{C}e^{-2\mu_{C}}X_{C}(\bmu_{\rotot}+2\ln\theta+2\bmu_{A}-2\bmu_{B})\g^{A}_{BC}\nonumber\\
 & +\textstyle{\frac{1}{2}}\textstyle{\sum}_{C}e^{-2\mu_{A}}X_{C}(\bmu_{\rotot}+2\ln\theta+2\bmu_{A}-2\bmu_{B})\g^{C}_{AB},\nonumber\\
\mR_{\mrV,B}^{A} := & -\textstyle{\frac{1}{2}}\textstyle{\sum}_{C}e^{-2\mu_{C}}a_{C}\g^{A}_{CB}+\textstyle{\frac{1}{2}}\textstyle{\sum}_{C}e^{-2\mu_{A}}a_{C}\g^{C}_{BA} 
-\textstyle{\frac{1}{2}}\textstyle{\sum}_{C,D}e^{-2\mu_{A}}\g^{D}_{AC}\g^{C}_{BD}\label{eq:mRVABdef}
\end{align}
where $\bmu_{\rotot}:=\textstyle{\sum}_{D}\bmu_{D}$ and 
\begin{equation}\label{eq:renormalizedChristoffelsymbol}
\begin{split}
\Upsilon_{AD}^{B} = & e^{-2\mu_{D}}X_{D}(\bmu_{A})\de_{AB}-e^{-2\mu_{A}}X_{B}(\bmu_{A})\de_{AD}-\textstyle{\frac{1}{2}}e^{-2\mu_{D}}\g^{A}_{DB}\\
 & +\textstyle{\frac{1}{2}}e^{-2\mu_{A}}\g^{D}_{BA}+\tfrac{1}{2}e^{2\mu_{B}-2\mu_{A}-2\mu_{D}}\g^{B}_{AD}. 
\end{split}
\end{equation}
\end{lemma}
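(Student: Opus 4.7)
\textbf{Proof plan for Lemma~\ref{lemma:therenormalisedspatialriccicurvature}.}

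The plan is to start from the orthonormal-frame Ricci formula already derived in (\ref{eq:bRicbasicformula}) and substitute the explicit expressions (\ref{eq:bgaABCitobasqu}) and (\ref{eq:bGCABdef}) for the structure constants $\bga^A_{BC}$ and Christoffel symbols $\ONCS^A_{BC}$ of $\{\ONSF_A\}$ in terms of the $X_A$-quantities $\g^C_{AB}$, $\bmu_A$, $\theta$. The conversion to the non-orthonormal frame is via the identity $\bR^A_{\phantom{A}B} = e^{\mu_B-\mu_A}\bRic(\ONSF_A,\ONSF_B)$, which, after multiplication by $\theta^{-2}$, introduces the uniform exponential weight $\theta^{-2}e^{\mu_B-\mu_A}$ that must be tracked through every term. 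As a preparatory step I will record (\ref{eq:renormalizedChristoffelsymbol}) by inserting (\ref{eq:bgaABCitobasqu}) into the Koszul expression (\ref{eq:bGCABdef}) and peeling off the factor $\theta e^{\mu_B-\mu_A-\mu_D}$ defining $\Upsilon^B_{AD}$ in (\ref{eq:Upsilondef}); this packages the Christoffel data into an object whose leading exponential weights are transparent.

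Next, I would treat the five terms of (\ref{eq:bRicbasicformula}) separately. The two derivative terms $\ONSF_A(\ONCS^B_{CC}) - \ONSF_C(\ONCS^B_{AC})$ produce $\mR^A_{\mrI,B}$: the first collapses via $\sum_C \ONCS^B_{CC} = \ba_B = \theta e^{-\mu_B}\cha_B$ together with $\ONSF_A = \theta e^{-\mu_A}X_A$, giving the term $e^{-2\mu_A}X_A(\cha_B)$; the second directly gives the $X_C(\Upsilon^B_{AC})$ piece once I absorb the $\theta e^{-\mu_A-\mu_C-\mu_B}$ factors into $\Upsilon$. The three algebraic terms $\ONCS^D_{CC}\ONCS^B_{AD} - \ONCS^D_{AC}\ONCS^B_{CD} - \bga^D_{AC}\ONCS^B_{DC}$ expand, upon substitution of (\ref{eq:bgaABCitobasqu}) and (\ref{eq:bGCABdef}), into a finite number of monomials of three types: (a) quadratic in $\g$ weighted by $e^{2(\mu_?-\mu_?-\mu_?)}$, (b) mixed $X_A(\bmu_B)\g$, and (c) quadratic in the $X_A(\bmu_B)$ and $a_A$. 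Grouping monomials of type~(a) with no $X$-derivative produces $\mR^A_{\mrII,B}$; the purely dilatonic quadratic monomials of type~(c) produce $\mR^A_{\mrIII,B}$; and the cross terms mixing one $X_A(\bmu_B)$ or $a_A$ with one $\g$ produce $\mR^A_{\mrIV,B}$ and $\mR^A_{\mrV,B}$.

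The key computational tool throughout is the identity $\textstyle{\sum}_C \ONCS^A_{CC} = \ba_A$ together with $\ONCS^A_{AC} = -\ba_C$ from (\ref{eq:bGACCsumCaA}), and the factorisation $\ba_A = \theta e^{-\mu_A}\cha_A$ from (\ref{eq:chAdefiningrelation})--(\ref{eq:chAdef}), which converts $\bmu_\rotot$-dependent terms arising from $\sum_B \bmu_B$ into the compact form involving $\hmu_A$. Consistency with the trace can be verified as a check: summing $\mR^A_{\mrI,A}+\cdots+\mR^A_{\mrV,A}$ should reproduce (\ref{eq:bSexpression}) of Lemma~\ref{lemma:scalarcurvature}.

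The hard part will be the bookkeeping in the second step: after substitution of (\ref{eq:renormalizedChristoffelsymbol}) into the three algebraic terms of (\ref{eq:bRicbasicformula}), one obtains on the order of several dozen monomials in $X_A(\bmu_B)$, $\g^C_{AB}$ and $a_A$, each carrying an exponential weight $e^{2(\mu_?-\mu_?-\mu_?)}$, and the challenge is to collect them cleanly into the five listed groups without sign or index errors. I expect that the right organisational principle is to sort by the exponential weight first (splitting terms into those weighted by $e^{-2\mu_A}$, $e^{-2\mu_C}$, or $e^{2\mu_B-2\mu_C-2\mu_D}$, etc.) and only then by the type of factor, since the weight is what ultimately controls the asymptotic role of each term in Proposition~\ref{prop:normalisedbSbd} and its Ricci analogue.
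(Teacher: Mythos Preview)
Your approach is essentially the paper's: start from (\ref{eq:bRicbasicformula}), multiply by $\theta^{-2}e^{\mu_B-\mu_A}$, derive (\ref{eq:renormalizedChristoffelsymbol}), expand, and group. Two minor corrections to the plan. First, the two derivative terms do \emph{not} produce $\mR^A_{\mrI,B}$ cleanly: when $\ONSF_A$ hits $\ba_B=e^{-\bmu_B}\cha_B$ the product rule leaves a residual $-e^{-2\mu_A}X_A(\bmu_B)\cha_B$, and likewise $\theta^{-2}e^{\mu_B-\mu_A}\ONSF_C(\ONCS^B_{AC})$ gives $X_C(\Upsilon^B_{AC})$ \emph{plus} the first-order piece $X_C(\mu_A+\mu_C+\ln\theta-\mu_B)\Upsilon^B_{AC}$ (the paper records this as (\ref{eq:thirdricciterm})); these residuals feed into $\mR^A_{\mrII,B}$--$\mR^A_{\mrV,B}$ along with the algebraic terms. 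Second, your type-(a)/(b)/(c) split does not match the actual groups---$\mR^A_{\mrII,B}$ carries an $X_C(\cdots)\g^B_{AC}$ term, and $\mR^A_{\mrV,B}$ carries a $\g\g$ term with tame weight $e^{-2\mu_A}$---so your final weight-first principle is the one to use from the start, exactly as the paper does.
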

\begin{remark}
The purpose of the division into $\mR_{\mrI,B}^{A}$ etc. is that the first term contains all the second derivatives, the second term contains all the 
lower order terms that could potentially grow, and the third to fifth terms correspond to a division into different types of decaying lower order terms. 
\end{remark}
\begin{proof}
Due to (\ref{eq:bRicbasicformula}), we need to calculate
\begin{equation*}
\begin{split}
\textstyle{\sum}_{C}\theta^{-2}e^{\mu_{B}-\mu_{A}}\ONSF_{A}(\ONCS^{B}_{CC}) = & \theta^{-1}e^{\mu_{B}-2\mu_{A}}X_{A}(\ba_{B})=\theta^{-1}e^{\mu_{B}-2\mu_{A}}X_{A}(e^{-\bmu_{B}}\cha_{B})\\
 = & e^{-2\mu_{A}}[X_{A}(\cha_{B})-X_{A}(\bmu_{B})\cha_{B}]\\
 = & e^{-2\mu_{A}}[-X_{A}X_{B}(\hmu_{B})+X_{A}(a_{B})+X_{A}(\bmu_{B})X_{B}(\hmu_{B})-X_{A}(\bmu_{B})a_{B}],
\end{split}
\end{equation*}
where we used (\ref{eq:bGACCsumCaA}). Next, consider
\begin{equation}\label{eq:secondricciterm}
\theta^{-2}e^{\mu_{B}-\mu_{A}}\textstyle{\sum}_{C}\ONCS^{D}_{CC}\ONCS_{AD}^{B}=
\textstyle{\sum}_{D}\theta^{-1}e^{\mu_{B}-\mu_{A}-\mu_{D}}\cha_{D}\ONCS_{AD}^{B}=
\textstyle{\sum}_{D}\cha_{D}\Upsilon_{AD}^{B},
\end{equation}
where we used (\ref{eq:bGACCsumCaA}), (\ref{eq:chAdefiningrelation}) and (\ref{eq:Upsilondef}). In what follows, it is of interest to 
calculate $\Upsilon_{AD}^{B}$. Note, to this end, that 
\begin{equation}\label{eq:renormalizedchristoffelsymbols}
\begin{split}
\Upsilon_{AD}^{B} = & -\textstyle{\frac{1}{2}}\theta^{-1}e^{\mu_{B}-\mu_{A}-\mu_{D}}\bga^{A}_{DB}
+\tfrac{1}{2}\theta^{-1}e^{\mu_{B}-\mu_{A}-\mu_{D}}\bga^{D}_{BA}+\textstyle{\frac{1}{2}}\theta^{-1}e^{\mu_{B}-\mu_{A}-\mu_{D}}\bga^{B}_{AD}.
\end{split}
\end{equation}
Keeping (\ref{eq:bgaABCitobasqu}) in mind,
\begin{align*}
\theta^{-1}e^{\mu_{B}-\mu_{A}-\mu_{D}}\bga^{A}_{DB} = & -e^{-2\mu_{D}}X_{D}(\bmu_{A})\de_{BA}+e^{-2\mu_{D}}X_{B}(\bmu_{A})\de_{DA}+e^{-2\mu_{D}}\g^{A}_{DB},\\
\theta^{-1}e^{\mu_{B}-\mu_{A}-\mu_{D}}\bga^{D}_{BA} = & -e^{-2\mu_{A}}X_{B}(\bmu_{D})\de_{AD}+e^{-2\mu_{A}}X_{A}(\bmu_{D})\de_{BD}+e^{-2\mu_{A}}\g^{D}_{BA},\\
\theta^{-1}e^{\mu_{B}-\mu_{A}-\mu_{D}}\bga^{B}_{AD} = & -e^{-2\mu_{A}}X_{A}(\bmu_{B})\de_{BD}+e^{-2\mu_{D}}X_{D}(\bmu_{B})\de_{AB}
+e^{2\mu_{B}-2\mu_{A}-2\mu_{D}}\g^{B}_{AD}.
\end{align*}
Thus (\ref{eq:renormalizedChristoffelsymbol}) holds. Summing up yields
\begin{equation*}
\begin{split}
\theta^{-2}e^{\mu_{B}-\mu_{A}}\textstyle{\sum}_{C}\ONCS^{D}_{CC}\ONCS_{AD}^{B} = & \textstyle{\sum}_{C}e^{-2\mu_{C}}\cha_{C}X_{C}(\bmu_{A})\de_{AB}
-e^{-2\mu_{A}}\cha_{A}X_{B}(\bmu_{A})\\
 & -\textstyle{\frac{1}{2}\sum}_{C}e^{-2\mu_{C}}\cha_{C}\g^{A}_{CB}+\textstyle{\frac{1}{2}\sum}_{C}e^{-2\mu_{A}}\cha_{C}\g^{C}_{BA}\\
 & +\textstyle{\frac{1}{2}\sum}_{C}e^{2\mu_{B}-2\mu_{A}-2\mu_{C}}\cha_{C}\g^{B}_{AC}.
\end{split}
\end{equation*}
Using (\ref{eq:chAdef}), this expression can be rewritten
\begin{equation*}
\begin{split}
 & \theta^{-2}e^{\mu_{B}-\mu_{A}}\textstyle{\sum}_{C}\ONCS^{D}_{CC}\ONCS_{AD}^{B}\\
 = & -\textstyle{\sum}_{C}e^{-2\mu_{C}}X_{C}(\hmu_{C})X_{C}(\bmu_{A})\de_{AB}+\textstyle{\sum}_{C}e^{-2\mu_{C}}X_{C}(\bmu_{A})a_{C}\de_{AB}\\
 & +e^{-2\mu_{A}}X_{A}(\hmu_{A})X_{B}(\bmu_{A})-e^{-2\mu_{A}}X_{B}(\bmu_{A})a_{A}+\textstyle{\frac{1}{2}\sum}_{C}e^{-2\mu_{C}}X_{C}(\hmu_{C})\g^{A}_{CB}\\
 & -\textstyle{\frac{1}{2}\sum}_{C}e^{-2\mu_{C}}a_{C}\g^{A}_{CB}-\textstyle{\frac{1}{2}\sum}_{C}e^{-2\mu_{A}}X_{C}(\hmu_{C})\g^{C}_{BA}
+\textstyle{\frac{1}{2}\sum}_{C}e^{-2\mu_{A}}a_{C}\g^{C}_{BA}\\
 & -\textstyle{\frac{1}{2}\sum}_{C}e^{2\mu_{B}-2\mu_{A}-2\mu_{C}}X_{C}(\hmu_{C})\g^{B}_{AC}+\textstyle{\frac{1}{2}\sum}_{C}e^{2\mu_{B}-2\mu_{A}-2\mu_{C}}a_{C}\g^{B}_{AC}.
\end{split}
\end{equation*}
Returning to (\ref{eq:bRicbasicformula}), note that 
\begin{equation}\label{eq:thirdricciterm}
\begin{split}
\theta^{-2}e^{\mu_{B}-\mu_{A}}\textstyle{\sum}_{C}\ONSF_{C}(\ONCS^{B}_{AC}) = & \theta^{-1}\textstyle{\sum}_{C}e^{\mu_{B}-\mu_{A}-\mu_{C}}X_{C}(\ONCS^{B}_{AC})\\
 = & \textstyle{\sum}_{C}X_{C}(\Upsilon^{B}_{AC})+\textstyle{\sum}_{C}X_{C}(\mu_{A}+\mu_{C}+\ln\theta-\mu_{B})\Upsilon^{B}_{AC}. 
\end{split}
\end{equation}
It is of interest to expand the second term on the right hand side using (\ref{eq:renormalizedChristoffelsymbol}). This yields
\begin{equation*}
\begin{split}
 & \textstyle{\sum}_{C}X_{C}(\bmu_{A}+\bmu_{C}+2\ln\theta-\bmu_{B})\Upsilon^{B}_{AC}\\
 = & \textstyle{\sum}_{C}e^{-2\mu_{C}}X_{C}(\bmu_{C}+2\ln\theta)X_{C}(\bmu_{A})\de_{AB}\\
 & -e^{-2\mu_{A}}X_{A}(2\bmu_{A}+2\ln\theta-\bmu_{B})X_{B}(\bmu_{A})\\
 & -\textstyle{\frac{1}{2}}\textstyle{\sum}_{C}e^{-2\mu_{C}}X_{C}(\bmu_{A}+\bmu_{C}+2\ln\theta-\bmu_{B})\g^{A}_{CB}\\
 & +\textstyle{\frac{1}{2}}\textstyle{\sum}_{C}e^{-2\mu_{A}}X_{C}(\bmu_{A}+\bmu_{C}+2\ln\theta-\bmu_{B})\g^{C}_{BA}\\
 & +\textstyle{\frac{1}{2}\sum}_{C}e^{2\mu_{B}-2\mu_{A}-2\mu_{C}}X_{C}(\bmu_{A}+\bmu_{C}+2\ln\theta-\bmu_{B})\g^{B}_{AC}.
\end{split}
\end{equation*}
At this stage we can collect all the higher order terms, and this yields (\ref{eq:mRIABdef}). We also collect the first order terms that 
have arisen so far into
\begin{align}
\mQ_{\mrII,B}^{A} := & \textstyle{\frac{1}{2}\sum}_{C}e^{2\mu_{B}-2\mu_{A}-2\mu_{C}}\left[\cha_{C}-X_{C}(\bmu_{A}+\bmu_{C}+2\ln\theta-\bmu_{B})\right]\g^{B}_{AC},\label{eq:mQIIAB}\\
\mQ_{\mrIII,B}^{A} := & e^{-2\mu_{A}}X_{A}(\bmu_{B})X_{B}(\hmu_{B})+e^{-2\mu_{A}}X_{A}(\bmu_{\rotot}+2\ln\theta+\bmu_{A}-\bmu_{B})X_{B}(\bmu_{A})\label{eq:mQIIIAB}\\
 & -\textstyle{\sum}_{C}e^{-2\mu_{C}}X_{C}(\bmu_{\rotot}+2\ln\theta)X_{C}(\bmu_{A})\de_{AB},\nonumber\\
\mQ_{\mrIV,B}^{A} := & -e^{-2\mu_{A}}\left(X_{A}(\bmu_{B})a_{B}+X_{B}(\bmu_{A})a_{A}\right)+\textstyle{\sum}_{C}e^{-2\mu_{C}}X_{C}(\bmu_{A})a_{C}\de_{AB}\label{mQIVAB}\\
 & +\textstyle{\frac{1}{2}}\textstyle{\sum}_{C}e^{-2\mu_{C}}X_{C}(\bmu_{\rotot}+2\ln\theta+\bmu_{A}-\bmu_{B})\g^{A}_{CB}\nonumber\\
 & -\textstyle{\frac{1}{2}}\textstyle{\sum}_{C}e^{-2\mu_{A}}X_{C}(\bmu_{\rotot}+2\ln\theta+\bmu_{A}-\bmu_{B})\g^{C}_{BA},\nonumber\\
\mQ_{\mrV,B}^{A} := & -\textstyle{\frac{1}{2}}\textstyle{\sum}_{C}e^{-2\mu_{C}}a_{C}\g^{A}_{CB}+\textstyle{\frac{1}{2}}\textstyle{\sum}_{C}e^{-2\mu_{A}}a_{C}\g^{C}_{BA}.
\end{align}
Next, consider
\begin{equation*}
\begin{split}
\textstyle{\sum}_{C,D}\ONCS_{AC}^{D}\ONCS_{CD}^{B} = & 
\textstyle{\frac{1}{4}\sum}_{C,D}(-\bga^{A}_{CD}+\bga^{C}_{DA}-\bga^{D}_{CA})(-\bga^{C}_{DB}-\bga^{D}_{CB}+\bga^{B}_{CD})\\
 = & \textstyle{\frac{1}{4}\sum}_{C,D}(-\bga^{A}_{CD}\bga^{B}_{CD}+2\bga^{C}_{DA}\bga^{B}_{CD}).
\end{split}
\end{equation*}
Moreover, 
\begin{equation*}
\begin{split}
\textstyle{\sum}_{C,D}\bga_{AC}^{D}\ONCS_{DC}^{B} 
 = & \textstyle{\frac{1}{2}\sum}_{C,D}(-\bga_{AC}^{D}\bga^{D}_{CB}+\bga_{AC}^{D}\bga^{C}_{BD}+\bga_{AC}^{D}\bga^{B}_{DC})\\
 = & \textstyle{\frac{1}{2}\sum}_{C,D}(-\bga_{AC}^{D}\bga^{D}_{CB}+\bga_{AC}^{D}\bga^{C}_{BD}-\bga_{DA}^{C}\bga^{B}_{CD}).
\end{split}
\end{equation*}
Adding yields
\begin{equation*}
\begin{split}
\textstyle{\sum}_{C,D}(\ONCS_{AC}^{D}\ONCS_{CD}^{B}+\bga_{AC}^{D}\ONCS_{DC}^{B}) = & 
\textstyle{\frac{1}{4}\sum}_{C,D}(-\bga^{A}_{CD}\bga^{B}_{CD}+2\bga_{AC}^{D}\bga^{D}_{BC}+2\bga_{AC}^{D}\bga^{C}_{BD}). 
\end{split}
\end{equation*}
Due to this equality, it is of interest to calculate
\begin{equation}\label{eq:firstofthreenontrivial}
\begin{split}
-\theta^{-2}e^{\mu_{B}-\mu_{A}}\textstyle{\sum}_{C,D}\bga^{A}_{CD}\bga^{B}_{CD} = & -2\textstyle{\sum}_{C}e^{-2\mu_{C}}|X_{C}(\bmu_{A})|^{2}\de_{AB}
+2e^{-2\mu_{A}}X_{A}(\bmu_{B})X_{B}(\bmu_{A})\\
 & -2\textstyle{\sum}_{C}e^{2\mu_{B}-2\mu_{A}-2\mu_{C}}X_{C}(\bmu_{A})\g^{B}_{AC}\\
 & -2\textstyle{\sum}_{C}e^{-2\mu_{C}}X_{C}(\bmu_{B})\g^{A}_{BC}\\
 & -\textstyle{\sum}_{C,D}e^{2\mu_{B}-2\mu_{C}-2\mu_{D}}\g^{A}_{CD}\g^{B}_{CD}.
\end{split}
\end{equation}
Moreover, 
\begin{equation*}
\begin{split}
2\theta^{-2}e^{\mu_{B}-\mu_{A}}\textstyle{\sum}_{C,D}\bga_{AC}^{D}\bga^{D}_{BC} = & 2\textstyle{\sum}_{C}e^{-2\mu_{A}}X_{A}(\bmu_{C})X_{B}(\bmu_{C})
-2e^{-2\mu_{A}}X_{A}(\bmu_{B})X_{B}(\bmu_{B})\\
 & -2\textstyle{\sum}_{C}e^{-2\mu_{A}}X_{A}(\bmu_{C})\g^{C}_{BC}-2e^{-2\mu_{A}}X_{A}(\bmu_{A})X_{B}(\bmu_{A})\\
 & +2\textstyle{\sum}_{C}e^{-2\mu_{C}}|X_{C}(\bmu_{A})|^{2}\de_{AB}+2\textstyle{\sum}_{C}e^{-2\mu_{C}}X_{C}(\bmu_{A})\g^{A}_{BC}\\
 & -2\textstyle{\sum}_{C}e^{-2\mu_{A}}X_{B}(\bmu_{C})\g^{C}_{AC}+2\textstyle{\sum}_{C}e^{2\mu_{B}-2\mu_{A}-2\mu_{C}}X_{C}(\bmu_{B})\g^{B}_{AC}\\
 & +2\textstyle{\sum}_{C,D}e^{2\mu_{D}-2\mu_{A}-2\mu_{C}}\g^{D}_{AC}\g^{D}_{BC}.
\end{split}
\end{equation*}
Adding these expressions yields
\begin{equation*}
\begin{split}
 & \theta^{-2}e^{\mu_{B}-\mu_{A}}\textstyle{\sum}_{C,D}(-\bga^{A}_{CD}\bga^{B}_{CD}+2\bga_{AC}^{D}\bga^{D}_{BC})\\
= & 2e^{-2\mu_{A}}X_{A}(\bmu_{B})X_{B}(\bmu_{A}-\bmu_{B})-2e^{-2\mu_{A}}X_{A}(\bmu_{A})X_{B}(\bmu_{A})\\
 & +2\textstyle{\sum}_{C}e^{-2\mu_{A}}X_{A}(\bmu_{C})X_{B}(\bmu_{C})-2\textstyle{\sum}_{C}e^{2\mu_{B}-2\mu_{A}-2\mu_{C}}X_{C}(\bmu_{A})\g^{B}_{AC}\\
 & -2\textstyle{\sum}_{C}e^{-2\mu_{A}}X_{A}(\bmu_{C})\g^{C}_{BC}+2\textstyle{\sum}_{C}e^{-2\mu_{C}}X_{C}(\bmu_{A}-\bmu_{B})\g^{A}_{BC}\\
 & -2\textstyle{\sum}_{C}e^{-2\mu_{A}}X_{B}(\bmu_{C})\g^{C}_{AC}+2\textstyle{\sum}_{C}e^{2\mu_{B}-2\mu_{A}-2\mu_{C}}X_{C}(\bmu_{B})\g^{B}_{AC}\\
 & +2\textstyle{\sum}_{C,D}e^{2\mu_{D}-2\mu_{A}-2\mu_{C}}\g^{D}_{AC}\g^{D}_{BC}-\textstyle{\sum}_{C,D}e^{2\mu_{B}-2\mu_{C}-2\mu_{D}}\g^{A}_{CD}\g^{B}_{CD}.
\end{split}
\end{equation*}
Next, compute
\begin{equation*}
\begin{split}
 & \theta^{-2}e^{\mu_{B}-\mu_{A}}\textstyle{\sum}_{C,D}2\bga_{AC}^{D}\bga^{C}_{BD}\\ 
 = & 2\textstyle{\sum}_{C}e^{-2\mu_{A}}X_{A}(\bmu_{C})X_{B}(\bmu_{C})-2e^{-2\mu_{A}}X_{A}(\bmu_{B})X_{B}(\bmu_{B})\\
 & +2e^{-2\mu_{A}}X_{A}(\bmu_{B}-\bmu_{A})X_{B}(\bmu_{A})-2\textstyle{\sum}_{C}e^{-2\mu_{A}}X_{A}(\bmu_{C})\g^{C}_{BC}\\
 & -2\textstyle{\sum}_{C}e^{-2\mu_{A}}X_{B}(\bmu_{C})\g^{C}_{AC}+2\textstyle{\sum}_{C}e^{-2\mu_{A}}X_{C}(\bmu_{B}-\bmu_{A})\g^{C}_{AB}
 +2\textstyle{\sum}_{C,D}e^{-2\mu_{A}}\g^{D}_{AC}\g^{C}_{BD}.
\end{split}
\end{equation*}
Summing up, 
\begin{equation*}
\begin{split}
 & \theta^{-2}e^{\mu_{B}-\mu_{A}}\textstyle{\sum}_{C,D}(-\bga^{A}_{CD}\bga^{B}_{CD}+2\bga_{AC}^{D}\bga^{D}_{BC}+2\bga_{AC}^{D}\bga^{C}_{BD})\\
= & 2e^{-2\mu_{A}}X_{A}(\bmu_{B})X_{B}(\bmu_{A}-2\bmu_{B})+2e^{-2\mu_{A}}X_{A}(\bmu_{B}-2\bmu_{A})X_{B}(\bmu_{A})\\
 & +4\textstyle{\sum}_{C}e^{-2\mu_{A}}X_{A}(\bmu_{C})X_{B}(\bmu_{C})-4\textstyle{\sum}_{C}e^{-2\mu_{A}}X_{B}(\bmu_{C})\g^{C}_{AC}\\
 & -4\textstyle{\sum}_{C}e^{-2\mu_{A}}X_{A}(\bmu_{C})\g^{C}_{BC}-2\textstyle{\sum}_{C}e^{2\mu_{B}-2\mu_{A}-2\mu_{C}}X_{C}(\bmu_{A}-\bmu_{B})\g^{B}_{AC}\\
 & +2\textstyle{\sum}_{C}e^{-2\mu_{C}}X_{C}(\bmu_{A}-\bmu_{B})\g^{A}_{BC}+2\textstyle{\sum}_{C}e^{-2\mu_{A}}X_{C}(\bmu_{B}-\bmu_{A})\g^{C}_{AB}\\
 & +2\textstyle{\sum}_{C,D}e^{-2\mu_{A}}\g^{D}_{AC}\g^{C}_{BD}+2\textstyle{\sum}_{C,D}e^{2\mu_{D}-2\mu_{A}-2\mu_{C}}\g^{D}_{AC}\g^{D}_{BC}\\
 & -\textstyle{\sum}_{C,D}e^{2\mu_{B}-2\mu_{C}-2\mu_{D}}\g^{A}_{CD}\g^{B}_{CD}.
\end{split}
\end{equation*}
Collecting the terms that are potentially large, keeping (\ref{eq:mQIIAB}) in mind, yields
\begin{equation*}
\begin{split}
\mR_{\mrII,B}^{A}:= & \textstyle{\frac{1}{2}\sum}_{C}e^{2\mu_{B}-2\mu_{A}-2\mu_{C}}\left[\cha_{C}-X_{C}(\bmu_{A}+\bmu_{C}+2\ln\theta-\bmu_{B})\right]\g^{B}_{AC}\\
 & +\textstyle{\frac{1}{2}\sum}_{C}e^{2\mu_{B}-2\mu_{A}-2\mu_{C}}X_{C}(\bmu_{A}-\bmu_{B})\g^{B}_{AC}\\
 & -\textstyle{\frac{1}{2}\sum}_{C,D}e^{2\mu_{D}-2\mu_{A}-2\mu_{C}}\g^{D}_{AC}\g^{D}_{BC}+\textstyle{\frac{1}{4}\sum}_{C,D}e^{2\mu_{B}-2\mu_{C}-2\mu_{D}}\g^{A}_{CD}\g^{B}_{CD}\\
= & \textstyle{\frac{1}{2}\sum}_{C}e^{2\mu_{B}-2\mu_{A}-2\mu_{C}}\left[a_{C}-X_{C}(\bmu_{\rotot}+2\ln\theta)\right]\g^{B}_{AC}\\
 & -\textstyle{\frac{1}{2}\sum}_{C,D}e^{2\mu_{D}-2\mu_{A}-2\mu_{C}}\g^{D}_{AC}\g^{D}_{BC}+\textstyle{\frac{1}{4}\sum}_{C,D}e^{2\mu_{B}-2\mu_{C}-2\mu_{D}}\g^{A}_{CD}\g^{B}_{CD}.
\end{split}
\end{equation*}
This leads to the definition (\ref{eq:mRIIABdef}). Next, we can collect part of the remaining terms into $\mR_{\mrIII,B}^{A}$, $\mR_{\mrIV,B}^{A}$ and 
$\mR_{\mrV,B}^{A}$ according to (\ref{eq:mRIIIABdef}), (\ref{eq:mRIVABdef}) and (\ref{eq:mRVABdef}). The lemma follows.
\end{proof}

It is sometimes of interest to divide the terms in $\bmR$ somewhat differently. 

\begin{cor}\label{cor:rescaledRiccicurvatureformwoderoflntheta}
  Let $(M,g)$ be a spacetime. Assume that it has an expanding partial pointed foliation. Assume, moreover, $\mK$ to be non-degenerate on $I$ and to have
  a global frame. Then
  \begin{equation}\label{eq:rescaledRiccicurvatureformwoderoflntheta}
    \theta^{-2}\bR^{A}_{\phantom{A}B} = \mS_{\mrI,B}^{A}+\mS_{\mrII,B}^{A}+\mS_{\mrIII,B}^{A}+\mS_{\mrIV,B}^{A}+\mS_{\mrV,B}^{A}
  \end{equation}
  where
  \begin{align}
    \mS_{\mrI,B}^{A} := & e^{-2\mu_{A}}X_{A}(\cha_{B})-\textstyle{\sum}_{C}e^{-2\mu_{C}}X_{C}^{2}(\bmu_{A})\de_{AB}
    +e^{-2\mu_{A}}X_{A}X_{B}(\bmu_{A})\label{eq:mSIABdef}\\
    & +\textstyle{\frac{1}{2}}\textstyle{\sum}_{C}e^{-2\mu_{C}}X_{C}(\g^{A}_{CB})-\textstyle{\frac{1}{2}}\textstyle{\sum}_{C}e^{-2\mu_{A}}X_{C}(\g^{C}_{BA})
    -\tfrac{1}{2}\textstyle{\sum}_{C}e^{2\mu_{B}-2\mu_{A}-2\mu_{C}}X_{C}(\g^{B}_{AC}),\nonumber\\
    \mS_{\mrII,B}^{A} := & \textstyle{\frac{1}{2}\sum}_{C}e^{2\mu_{B}-2\mu_{A}-2\mu_{C}}\left[a_{C}
      -X_{C}\left(2\bmu_{B}-2\bmu_{A}-2\bmu_{C}+\bmu_{\rotot}\right)\right]\g^{B}_{AC}\label{eq:mSIIABdef}\\
    & -\textstyle{\frac{1}{2}\sum}_{C,D}e^{2\mu_{D}-2\mu_{A}-2\mu_{C}}\g^{D}_{AC}\g^{D}_{BC}
    +\textstyle{\frac{1}{4}\sum}_{C,D}e^{2\mu_{B}-2\mu_{C}-2\mu_{D}}\g^{A}_{CD}\g^{B}_{CD},\nonumber\\
    \mS_{\mrIII,B}^{A}:= & \textstyle{\frac{1}{2}}e^{-2\mu_{A}}X_{A}(\bmu_{B})X_{B}(2\bmu_{\rotot}-\bmu_{A})
    +\textstyle{\frac{1}{2}}e^{-2\mu_{A}}X_{A}(2\bmu_{\rotot}-3\bmu_{B})X_{B}(\bmu_{A})\label{eq:mSIIIABdef}\\
    & +\textstyle{\sum}_{C}e^{-2\mu_{C}}X_{C}(2\bmu_{C}-\bmu_{\rotot})X_{C}(\bmu_{A})\de_{AB}-\textstyle{\sum}_{C}e^{-2\mu_{A}}X_{A}(\bmu_{C})X_{B}(\bmu_{C}),\nonumber\\
    \mS_{\mrIV,B}^{A} := & -e^{-2\mu_{A}}\left(X_{A}(\bmu_{B})a_{B}+X_{B}(\bmu_{A})a_{A}\right)
    +\textstyle{\sum}_{C}e^{-2\mu_{C}}X_{C}(\bmu_{A})a_{C}\de_{AB}\label{eq:mSIVABdef}\\
    & +\textstyle{\sum}_{C}e^{-2\mu_{A}}[X_{B}(\bmu_{C})\g^{C}_{AC}+X_{A}(\bmu_{C})\g^{C}_{BC}]
    +\textstyle{\frac{1}{2}}\textstyle{\sum}_{C}e^{-2\mu_{A}}X_{C}(\bmu_{\rotot}-2\bmu_{B})\g^{C}_{AB}\nonumber\\
    & -\textstyle{\frac{1}{2}}\textstyle{\sum}_{C}e^{-2\mu_{C}}X_{C}(\bmu_{\rotot}+2\bmu_{A}-2\bmu_{B}-2\bmu_{C})\g^{A}_{BC},\nonumber\\
    \mS_{\mrV,B}^{A} := & -\textstyle{\frac{1}{2}}\textstyle{\sum}_{C}e^{-2\mu_{C}}a_{C}\g^{A}_{CB}+\textstyle{\frac{1}{2}}\textstyle{\sum}_{C}e^{-2\mu_{A}}a_{C}\g^{C}_{BA}
    -\textstyle{\frac{1}{2}}\textstyle{\sum}_{C,D}e^{-2\mu_{A}}\g^{D}_{AC}\g^{C}_{BD}\label{eq:mSVABdef}
  \end{align}
  and $\bmu_{\rotot}:=\textstyle{\sum}_{D}\bmu_{D}$.
\end{cor}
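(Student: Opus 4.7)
The plan is to derive Corollary~\ref{cor:rescaledRiccicurvatureformwoderoflntheta} directly from Lemma~\ref{lemma:therenormalisedspatialriccicurvature} by expanding the second-order term $\sum_{C}X_{C}(\Upsilon^{B}_{AC})$ appearing in $\mR_{\mrI,B}^{A}$ and then redistributing. Starting from the identity
\[
\theta^{-2}\bR^{A}_{\phantom{A}B} = \mR_{\mrI,B}^{A}+\mR_{\mrII,B}^{A}+\mR_{\mrIII,B}^{A}+\mR_{\mrIV,B}^{A}+\mR_{\mrV,B}^{A}
\]
(the $\mR_{\mrV,B}^{A}$ contribution is listed in the statement of the lemma even if inadvertently omitted from its displayed sum), the goal is to replace $-\sum_{C}X_{C}(\Upsilon^{B}_{AC})$ by the $X_{A}X_{B}(\bmu_{A})$, $X_{C}^{2}(\bmu_{A})$, and $X_{C}(\g^{\cdot}_{\cdot\cdot})$ contributions displayed in $\mS_{\mrI,B}^{A}$, and absorb the remaining first-order contributions into $\mS_{\mrII,B}^{A}$--$\mS_{\mrV,B}^{A}$. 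The point of the reformulation is that $\mS_{\mrI,B}^{A}$ contains no derivatives of $\ln\theta$; all $X_{C}(\ln\theta)$ contributions produced by differentiating the exponentials must be shown to cancel precisely the $X_{C}(\ln\theta)$ terms already present in $\mR_{\mrII,B}^{A}$, $\mR_{\mrIII,B}^{A}$, and $\mR_{\mrIV,B}^{A}$.

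First I would substitute the explicit formula (\ref{eq:renormalizedChristoffelsymbol}) for $\Upsilon^{B}_{AC}$ and differentiate termwise, using Leibniz on each of the five summands. Each differentiation of an exponential $e^{-2\mu_{C}}$, $e^{-2\mu_{A}}$, or $e^{2\mu_{B}-2\mu_{A}-2\mu_{C}}$ produces a factor $-2X_{C}(\mu_{C})$, $-2X_{C}(\mu_{A})$, or $2X_{C}(\mu_{B}-\mu_{A}-\mu_{C})$, respectively; each derivative of the second factor produces either a second $\bmu$-derivative or a first derivative of a structure constant $\g^{\cdot}_{\cdot\cdot}$. I would split $\mu_{D}=\bmu_{D}+\ln\theta$ throughout, sorting the output into three groups: (a) the pure second-$\bmu$ and $X_{C}(\g^{\cdot}_{\cdot\cdot})$ pieces, which will form the new tail of $\mS_{\mrI,B}^{A}$; (b) the products of first $\bmu$-derivatives and of $\bmu$-derivatives against structure constants, destined for $\mS_{\mrIII}$, $\mS_{\mrIV}$, $\mS_{\mrV}$; and (c) the $X_{C}(\ln\theta)$ factors.

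Next I would carry out the bookkeeping for group (c). The key claim to verify is that the coefficient of $X_{C}(\ln\theta)$ produced by each $\Upsilon^{B}_{AC}$-summand exactly matches the $\ln\theta$-content already sitting in the combinations $X_{C}(2\ln\theta+\bmu_{\rotot})$ of $\mR_{\mrII}$, $X_{A}(2\bmu_{\rotot}+4\ln\theta+\dots)$ and $X_{C}(\bmu_{\rotot}+2\ln\theta)$ of $\mR_{\mrIII}$, and $X_{C}(\bmu_{\rotot}+2\ln\theta+\dots)$ of $\mR_{\mrIV}$. This can be checked term by term: the $\delta_{AB}$-piece of the first and third summands of $\Upsilon^{B}_{AC}$ contributes $\ln\theta$-derivatives with coefficient $-2\sum_{C}e^{-2\mu_{C}}X_{C}(\bmu_{A})\delta_{AB}$ and $\sum_{C}e^{-2\mu_{C}}\g^{A}_{CB}$, which are precisely what the trace parts of $\mS_{\mrIII}$ and $\mS_{\mrIV}$ require once the $\ln\theta$-contributions from $\mR_{\mrIII}$, $\mR_{\mrIV}$ are subtracted; similar matching occurs for the $\delta_{AC}$-term and for the $\g^{B}_{AC}$-term in $\Upsilon^{B}_{AC}$. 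Once each $X_{C}(\ln\theta)$ balance is verified, the $\ln\theta$-free remainder regroups tautologically into the stated $\mS_{\mrII},\mS_{\mrIII},\mS_{\mrIV},\mS_{\mrV}$: for instance, the identity $2\mu_{B}-2\mu_{A}-2\mu_{C}=2\bmu_{B}-2\bmu_{A}-2\bmu_{C}-2\ln\theta$ converts the $\mR_{\mrII}$-bracket to that of $\mS_{\mrII}$ after the $-2\ln\theta$ contribution has been absorbed.

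The main obstacle is purely combinatorial: there are five constituents of $\Upsilon^{B}_{AC}$, each producing a product-rule expansion of several terms after differentiation, and the resulting twenty-or-so pieces must be matched to the correct places inside $\mS_{\mrI},\dots,\mS_{\mrV}$ without error. To control this, I would introduce an intermediate ``ledger'' grouping all summands by their index-pattern ($\delta_{AB}$-type, $\g^{A}_{BC}$-type, $\g^{C}_{AB}$-type, $\g^{B}_{AC}$-type, and diagonal $X_{A}X_{B}(\bmu_{A})$-type) and verify the coefficient-by-coefficient matching within each class. No analytic difficulty arises; the entire argument is an algebraic identity, and the corollary should follow once the ledger is complete.
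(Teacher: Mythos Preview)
Your proposal is correct and follows essentially the same approach as the paper: expand $-\sum_{C}X_{C}(\Upsilon^{B}_{AC})$ via the product rule applied to (\ref{eq:renormalizedChristoffelsymbol}), separate the second-order pieces (which join $e^{-2\mu_{A}}X_{A}(\cha_{B})$ to form $\mS_{\mrI,B}^{A}$), and redistribute the resulting first-order terms among $\mR_{\mrII}$--$\mR_{\mrV}$ to produce $\mS_{\mrII}$--$\mS_{\mrV}$. Your observation that the displayed sum in Lemma~\ref{lemma:therenormalisedspatialriccicurvature} omits $\mR_{\mrV,B}^{A}$ is also correct; the paper's own proof treats it as present.
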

\begin{remark}
  One advantage of (\ref{eq:mSIABdef})--(\ref{eq:mSVABdef}) in comparison with (\ref{eq:mRIABdef})--(\ref{eq:mRVABdef}) is that it is easier to identify
  the terms that are potentially growing exponentially. 
\end{remark}
\begin{proof}
Note, to begin with, that 
\begin{equation}\label{eq:divUpsilonexpand}
\begin{split}
-\textstyle{\sum}_{C}X_{C}(\Upsilon^{B}_{AC}) = & -\textstyle{\sum}_{C}e^{-2\mu_{C}}X_{C}^{2}(\bmu_{A})\de_{AB}
+e^{-2\mu_{A}}X_{A}X_{B}(\bmu_{A})+\textstyle{\frac{1}{2}}\textstyle{\sum}_{C}e^{-2\mu_{C}}X_{C}(\g^{A}_{CB})\\
 & -\textstyle{\frac{1}{2}}\textstyle{\sum}_{C}e^{-2\mu_{A}}X_{C}(\g^{C}_{BA})-\tfrac{1}{2}\textstyle{\sum}_{C}e^{2\mu_{B}-2\mu_{A}-2\mu_{C}}X_{C}(\g^{B}_{AC})\\
 & +2\textstyle{\sum}_{C}e^{-2\mu_{C}}X_{C}(\mu_{C})X_{C}(\bmu_{A})\de_{AB}-2e^{-2\mu_{A}}X_{A}(\mu_{A})X_{B}(\bmu_{A})\\
 & -\textstyle{\sum}_{C}e^{-2\mu_{C}}X_{C}(\mu_{C})\g^{A}_{CB}+\textstyle{\sum}_{C}e^{-2\mu_{A}}X_{C}(\mu_{A})\g^{C}_{BA}\\
 & -\textstyle{\sum}_{C}e^{2\mu_{B}-2\mu_{A}-2\mu_{C}}X_{C}(\mu_{B}-\mu_{A}-\mu_{C})\g^{B}_{AC}.
\end{split}
\end{equation}
Combining the first five terms on the right hand side with the first term on the right hand side of (\ref{eq:mRIABdef}) yields (\ref{eq:mSIABdef}). 
The last five terms on the right hand side of (\ref{eq:divUpsilonexpand}) can then be combined with (\ref{eq:mRIIABdef})--(\ref{eq:mRVABdef})
in order to yield (\ref{eq:mSIIABdef})--(\ref{eq:mSVABdef}). 
\end{proof}

\section{Contribution from the lapse function}

Next, we consider the contribution of the lapse function to the right hand side of (\ref{eq:mlUbKwithEinstein}).

\begin{lemma}\label{lemma:contributionsfromlapse}
  Let $(M,g)$ be a spacetime. Assume that it has an expanding partial pointed foliation. Assume, moreover, $\mK$ to be non-degenerate on $I$ and to
  have a global frame. Then
\begin{equation}\label{eq:bnablaABnormN}
\theta^{-2}N^{-1}\bnabla^{A}\bnabla_{B}N = \mN_{\mrI,B}^{A}+\mN_{\mrII,B}^{A}+\mN_{\mrIII,B}^{A},
\end{equation}
where
\begin{align}
\mN_{\mrI,B}^{A} = & e^{-2\mu_{A}}X_{A}X_{B}(\ln N),\label{eq:bnablaABnormNrI}\\
\mN_{\mrII,B}^{A} = & -\textstyle{\frac{1}{2}\sum}_{C}e^{2\mu_{B}-2\mu_{C}-2\mu_{A}}\g^{B}_{CA}X_{C}(\ln N),\label{eq:bnablaABnormNrII}\\
\mN_{\mrIII,B}^{A} = & e^{-2\mu_{A}}X_{A}(\ln N)X_{B}(\ln N)-e^{-2\mu_{A}}X_{A}(\bmu_{B})X_{B}(\ln N)\label{eq:bnablaABnormNrIII}\\
 & -e^{-2\mu_{A}}X_{B}(\bmu_{A})X_{A}(\ln N)+\textstyle{\sum}_{C}e^{-2\mu_{C}}X_{C}(\bmu_{B})\de_{BA}X_{C}(\ln N)\nonumber\\
 & +\textstyle{\frac{1}{2}\sum}_{C}e^{-2\mu_{C}}\g^{A}_{BC}X_{C}(\ln N)-\textstyle{\frac{1}{2}\sum}_{C}e^{-2\mu_{A}}\g^{C}_{AB}X_{C}(\ln N).\nonumber
\end{align}
In particular,
\begin{equation}\label{eq:normalisedDeltaNthroughN}
\begin{split}
\theta^{-2}N^{-1}\Delta_{\bge}N = & \textstyle{\sum}_{A}e^{-2\mu_{A}}X_{A}^{2}(\ln N)-\textstyle{\sum}_{C}e^{-2\mu_{C}}a_{C}X_{C}(\ln N)\\
 & +\textstyle{\sum}_{A}e^{-2\mu_{A}}|X_{A}(\ln N)|^{2}-2\textstyle{\sum}_{A}e^{-2\mu_{A}}X_{A}(\bmu_{A})X_{A}(\ln N)\\
 & +\textstyle{\sum}_{A,C}e^{-2\mu_{C}}X_{C}(\bmu_{A})X_{C}(\ln N). 
\end{split}
\end{equation}
\end{lemma}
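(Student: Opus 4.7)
The key is that $\bnabla^{A}\bnabla_{B}N$ is essentially the Hessian of a scalar, so writing $f := \ln N$ (equivalently $N = e^{f}$) reduces the problem to computing the Hessian of $f$ in the frame $\{X_{A}\}$ and then tacking on the gradient-squared correction coming from the factor of $N$. Concretely, since $\bnabla_{B}N = N X_{B}(f)$, one has
\[
  (\bnabla^{2}N)(X_{A},X_{B}) = N\bigl[X_{A}X_{B}(f) + X_{A}(f)X_{B}(f) - (\bnabla_{X_{A}}X_{B})(f)\bigr],
\]
and because $\bge^{AC} = e^{-2\bmu_{A}}\de^{AC}$ (no sum) one obtains
\[
  \theta^{-2}N^{-1}\bnabla^{A}\bnabla_{B}N \;=\; e^{-2\mu_{A}}\bigl[X_{A}X_{B}(f) + X_{A}(f)X_{B}(f) - (\bnabla_{X_{A}}X_{B})(f)\bigr],
\]
using $\theta^{-2}e^{-2\bmu_{A}} = e^{-2\mu_{A}}$. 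The first bracketed term contributes exactly $\mN_{\mrI,B}^{A}$ and the product $X_{A}(f)X_{B}(f)$ supplies the first term of $\mN_{\mrIII,B}^{A}$, so everything reduces to expanding $e^{-2\mu_{A}}(\bnabla_{X_{A}}X_{B})(f)$.

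Next I would compute $\bnabla_{X_{A}}X_{B}$ by passing to the $\bge$-orthonormal frame $\ONSF_{A} = e^{-\bmu_{A}}X_{A}$. A direct calculation gives
\[
  \bnabla_{X_{A}}X_{B} = X_{A}(\bmu_{B})\,X_{B} + \textstyle{\sum}_{C} e^{\bmu_{A}+\bmu_{B}-\bmu_{C}}\ONCS^{C}_{AB}\,X_{C}
\]
(no summation on $B$ in the first term). The $X_{A}(\bmu_{B})X_{B}(f)$ piece contributes the second term of $\mN_{\mrIII,B}^{A}$ after multiplying by $-e^{-2\mu_{A}}$. For the remainder, I would use $\mu = \bmu + \ln\theta$ to rewrite
\[
  e^{-2\mu_{A}}\cdot e^{\bmu_{A}+\bmu_{B}-\bmu_{C}}\ONCS^{C}_{AB} \;=\; e^{2\mu_{B}-2\mu_{C}}\,\Upsilon^{C}_{AB},
\]
where $\Upsilon^{C}_{AB} = \theta^{-1}e^{\mu_{C}-\mu_{A}-\mu_{B}}\ONCS^{C}_{AB}$ is the object for which a fully explicit formula is given in \eqref{eq:renormalizedChristoffelsymbol}.

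With that substitution, the entire contribution $-\sum_{C} e^{2\mu_{B}-2\mu_{C}}\Upsilon^{C}_{AB}X_{C}(f)$ breaks into five explicit pieces (one per summand of $\Upsilon^{C}_{AB}$). Evaluating the two Kronecker-delta terms collapses the sums: the $\de_{AC}$ term yields $-e^{-2\mu_{A}}X_{B}(\bmu_{A})X_{A}(f)$, and the $\de_{AD}$ term (with the index relabelling forcing $A = B$ in the exponentials) yields $+\de_{AB}\sum_{C}e^{-2\mu_{C}}X_{C}(\bmu_{B})X_{C}(f)$; these complete $\mN_{\mrIII,B}^{A}$. The three $\g$-terms produce, respectively, $\frac{1}{2}\sum_{C}e^{-2\mu_{C}}\g^{A}_{BC}X_{C}(f)$ (into $\mN_{\mrIII,B}^{A}$), the term $-\frac{1}{2}\sum_{C}e^{2\mu_{B}-2\mu_{A}-2\mu_{C}}\g^{B}_{CA}X_{C}(f)$ (which is precisely $\mN_{\mrII,B}^{A}$), and $-\frac{1}{2}\sum_{C}e^{-2\mu_{A}}\g^{C}_{AB}X_{C}(f)$ (into $\mN_{\mrIII,B}^{A}$). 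This establishes \eqref{eq:bnablaABnormN}.

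Finally, I would obtain \eqref{eq:normalisedDeltaNthroughN} by tracing: setting $B = A$ and summing. The non-$\g$ terms add up immediately (doubling $-\sum_{A}e^{-2\mu_{A}}X_{A}(\bmu_{A})X_{A}(f)$ and freeing the $\de_{AB}$). For the $\g$-terms, the $-\frac{1}{2}\sum_{C}e^{-2\mu_{A}}\g^{C}_{AA}X_{C}(f)$ piece vanishes by antisymmetry $\g^{C}_{AA} = 0$, while the two remaining contributions combine through the identity $\sum_{A}\g^{A}_{AC} = -\sum_{A}\g^{A}_{CA} = -a_{C}$ (using \eqref{eq:aAdefaux}) to yield precisely $-\sum_{C}e^{-2\mu_{C}}a_{C}X_{C}(\ln N)$. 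The main obstacle is purely bookkeeping: tracking the exponential weights through the conversion $\bmu \leftrightarrow \mu$ and ensuring that the five pieces of $\Upsilon^{C}_{AB}$ reassemble correctly into $\mN_{\mrII,B}^{A}$ and $\mN_{\mrIII,B}^{A}$; there is no analytic difficulty beyond that.
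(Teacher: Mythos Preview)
Your proof is correct and follows essentially the same route as the paper: compute the Hessian of $N$ in the frame $\{X_{A}\}$, reduce to the connection coefficients via the orthonormal frame $\ONSF_{A}$, and then expand using the explicit formula \eqref{eq:renormalizedChristoffelsymbol} for $\Upsilon^{C}_{AB}$. Your choice to introduce $f=\ln N$ at the outset is a minor but pleasant simplification over the paper, which works with $N$ throughout and only divides by $N$ at the end; otherwise the two computations are line-by-line equivalent.
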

\begin{proof}
Compute
\[
(\bnabla^{2}N)(\ONSF_{A},\ONSF_{B})=(\bnabla_{\ONSF_{A}}\bnabla N)(\ONSF_{B})=\ONSF_{A}[\ONSF_{B}(N)]-(\bnabla_{\ONSF_{A}}\ONSF_{B})N.
\]
Thus
\begin{equation*}
\begin{split}
(\bnabla^{2}N)(X_{A},X_{B}) = & e^{\bmu_{B}}X_{A}[e^{-\bmu_{B}}X_{B}(N)]-\textstyle{\sum}_{C}e^{\bmu_{A}+\bmu_{B}-\bmu_{C}}\ONCS^{C}_{AB}X_{C}(N)\\
 = & X_{A}X_{B}(N)-X_{A}(\bmu_{B})X_{B}(N)-\textstyle{\sum}_{C}\theta^{-1}e^{\mu_{A}+\mu_{B}-\mu_{C}}\ONCS^{C}_{AB}X_{C}(N).
\end{split}
\end{equation*}
In particular,
\begin{equation}\label{eq:bnablaupdownrenormform}
\begin{split}
\theta^{-2}\bnabla^{D}\bnabla_{B}N = & \textstyle{\sum}_{A}e^{-2\mu_{D}}\de^{DA}(\bnabla^{2}N)(X_{A},X_{B}) \\
 = & e^{-2\mu_{D}}X_{D}X_{B}(N)-e^{-2\mu_{D}}X_{D}(\bmu_{B})X_{B}(N)\\
 & -\textstyle{\sum}_{C}\theta^{-1}e^{\mu_{B}-\mu_{C}-\mu_{D}}\ONCS^{C}_{DB}X_{C}(N).
\end{split}
\end{equation}
It is therefore of interest to calculate, using (\ref{eq:Upsilondef}) and (\ref{eq:renormalizedChristoffelsymbol}), that 
\begin{equation*}
\begin{split}
\theta^{-1}e^{\mu_{B}-\mu_{C}-\mu_{A}}\ONCS^{C}_{AB} = & e^{2\mu_{B}-2\mu_{C}}\Upsilon^{C}_{AB}\\
 = & e^{-2\mu_{C}}X_{B}(\bmu_{C})\de_{CA}-e^{-2\mu_{C}}X_{C}(\bmu_{B})\de_{BA}-\textstyle{\frac{1}{2}}e^{-2\mu_{C}}\g^{A}_{BC}\\
& +\textstyle{\frac{1}{2}}e^{-2\mu_{A}}\g^{C}_{AB}+\textstyle{\frac{1}{2}}e^{2\mu_{B}-2\mu_{C}-2\mu_{A}}\g^{B}_{CA}.
\end{split}
\end{equation*}
Summing up, 
\begin{equation*}
\begin{split}
\theta^{-2}N^{-1}\bnabla^{A}\bnabla_{B}N = & e^{-2\mu_{A}}X_{A}X_{B}(\ln N)+e^{-2\mu_{A}}X_{A}(\ln N)X_{B}(\ln N)\\
 & -e^{-2\mu_{A}}X_{A}(\bmu_{B})X_{B}(\ln N)-e^{-2\mu_{A}}X_{B}(\bmu_{A})X_{A}(\ln N)\\
 & +\textstyle{\sum}_{C}e^{-2\mu_{C}}X_{C}(\bmu_{B})\de_{BA}X_{C}(\ln N)+\textstyle{\frac{1}{2}\sum}_{C}e^{-2\mu_{C}}\g^{A}_{BC}X_{C}(\ln N)\\
& -\textstyle{\frac{1}{2}\sum}_{C}e^{-2\mu_{A}}\g^{C}_{AB}X_{C}(\ln N)-\textstyle{\frac{1}{2}\sum}_{C}e^{2\mu_{B}-2\mu_{C}-2\mu_{A}}\g^{B}_{CA}X_{C}(\ln N).
\end{split}
\end{equation*}
The lemma follows. 
\end{proof}

\section{Bianchi class B}\label{section:Bianchi class B}

As is clear from the discussion in Subsection~\ref{ssection:revisitsphom}, the unimodular setting is very special in that it allows variables, due to
Wainwright
and Hsu, that summarise, at the same time and in a very natural and concise way, the algebraic structure associated with the Lie algebra and the geometric
structure associated with the expansion normalised Weingarten map. The connection is much less natural in the case of Bianchi class B. It is therefore of
interest to see if the perspective developed in this article is useful in that setting as well. 

Let $G$ be a $3$-dimensional non-unimodular Lie group with corresponding Lie algebra $\mfg$. Define the one-form
$\xi_{G}:\mfg\rightarrow\mathbb{R}$ by $\xi_{G}(x):=\tr(\mathrm{ad}_{x})/2$, where $\mathrm{ad}_{x}(y)=[x,y]$. Note that $\xi_{G}\neq 0$ due to the fact
that $G$ is non-unimodular. Due to \cite[Lemma~11.2, p.~754]{RadermacherNonStiff}, we know that $\mfg_{2}:=\mathrm{ker}\xi_{G}$ is a $2$-dimensional Abelian
subalgebra. Let $v_{2}\in\mfg$ be such that $\xi_{G}(v_{2})=1$. Then $v_{2}\notin\mfg_{2}$. Define $A_{2}:\mfg_{2}\rightarrow\mfg_{2}$ by
$A_{2}(x):=\mathrm{ad}_{v_{2}}(x)$; that $A_{2}$ takes its values in $\mfg_{2}$ is justified by the proof of \cite[Lemma~11.5, p.~755]{RadermacherNonStiff}.
It can be verified that $A_{2}$ does not depend on the choice of $v_{2}$. 

The non-unimodular (or Bianchi class B) Lie groups can now be classified as follows. If $A_{2}=\mathrm{Id}$, the Lie group is said to be of Bianchi type V;
if $A_{2}\neq \mathrm{Id}$ and $4\mathrm{det}A_{2}=(\tr A_{2})^{2}$, the Lie group is said to be of Bianchi type IV; if $4\mathrm{det}A_{2}<(\tr A_{2})^{2}$,
the Lie group is said to be of Bianchi type VI${}_{\eta}$; and if $4\mathrm{det}A_{2}>(\tr A_{2})^{2}$, the Lie group is said to be of Bianchi type
VII${}_{\eta}$. Here $\eta$ is determined by the following relation
\begin{equation}\label{eq:etadef}
  \eta:=\tfrac{(\tr A_{2})^{2}}{4\mathrm{det}A_{2}-(\tr A_{2})^{2}}.
\end{equation}
It is convenient to evaluate the above with respect to an appropriately chosen basis. Let $\{ e_{i}\}$, $i=1,2,3$, be a basis of $\mfg$. Define the structure
constants associated with this basis by $[e_{i},e_{j}]=\g_{ij}^{k}e_{k}$. Due to, e.g., \cite[Subsection~E.1.1, pp.~695--696]{stab}, there is a uniquely determined
symmetric matrix $\nu$ and a uniquely determined vector $a$ such that $\g^{i}_{jk}=\e_{jkl}\nu^{li}+a_{j}\de^{i}_{k}-a_{k}\de^{i}_{j}$. Here
\begin{equation}\label{eq:nuaformulae}
  \nu^{ij}=\tfrac{1}{2}\g^{(i}_{kl}\e^{j)kl},\ \ \
  a_{k}=\tfrac{1}{2}\g^{i}_{ki},
\end{equation}
where the parenthesis denotes symmetrisation over $i$ and $j$ in the first equality. 
Note, moreover, that $a_{i}=\xi_{G}(e_{i})$, so that $a\neq 0$. In addition, due to the Jacobi identity, $\nu a=0$; see, e.g.,
\cite[Subsection~E.1.2, p.~696]{stab}. Due to \cite[(E.5), p.~697]{stab}, a basis $\{e_{i}\}$, $i=1,2,3$, of $\mfg$ can be chosen such that $a_{1}\neq 0$;
$a_{2}=a_{3}=0$; $\nu^{ij}=0$, $i\neq j$; $\nu^{11}=0$. Then $\{e_{2},e_{3}\}$ span $\mfg_{2}$ and $\xi_{G}(v_{2})=1$, if we let $v_{2}:=a_{1}^{-1}e_{1}$.
Compute
\begin{align}
  A_{2}(e_{2}) = & a_{1}^{-1}[e_{1},e_{2}]=a_{1}^{-1}\g_{12}^{i}e_{i}=a_{1}^{-1}\nu^{33}e_{3}+e_{2},\label{eq:Atwoetwo}\\
  A_{2}(e_{3}) = & a_{1}^{-1}[e_{1},e_{3}]=-a_{1}^{-1}\nu^{22}e_{2}+e_{3}.\label{eq:Atwoethree}
\end{align}
With this information in mind, the above classification of Bianchi class B Lie groups is summarised by \cite[Table~5, p.~753]{RadermacherNonStiff}. 
It can also be computed that $a_{1}^{-2}\nu^{22}\nu^{33}=\eta^{-1}$. 

Next, assume that a left invariant metric $\bge$ and a left invariant symmetric covariant $2$-tensor field $\bk$ have been specified on $G$ (they should
be thought of as parts of initial data for Einstein's equations; the remaining part consisting of matter fields). Then the above frame can be assumed to
be such that it is orthonormal with respect to
$\bge$; see \cite[Section~E.1.4, p.~697]{stab}. Let $\bK$ be the left invariant $(1,1)$-tensor field obtained from $\bk$ by raising one of the indices of
$\bk$ using $\bge$. Let $\theta=\tr\bK$, assume $\theta>0$ and let $\mK:=\bK/\theta$. Then we use the notation $\mK e_{i}=\mK_{i}^{\phantom{i}j}e_{j}$. Compute
\begin{equation*}
  \begin{split}
    \xi_{G}\circ\mK\circ A_{2}(e_{2}) = & \xi_{G}\circ\mK(a_{1}^{-1}\nu^{33}e_{3}+e_{2})=\xi_{G}(a_{1}^{-1}\nu^{33}\mK_{3}^{\phantom{3}1}e_{1}+\mK_{2}^{\phantom{2}1}e_{1})\\
    = & \nu^{33}\mK_{3}^{\phantom{3}1}+a_{1}\mK_{2}^{\phantom{2}1}.
  \end{split}
\end{equation*}
Similarly, $\xi_{G}\circ\mK\circ A_{2}(e_{3})=-\nu^{22}\mK_{2}^{\phantom{2}1}+a_{1}\mK_{3}^{\phantom{3}1}$. Comparing these equalities with
\cite[(74), p.~764]{RadermacherNonStiff} yields the conclusion that two of the components of the momentum constraint (in the orthogonal perfect fluid
setting) read $\xi_{G}\circ\mK\circ B_{2}=0$, where $B_{2}:=A_{2}+2\mathrm{Id}$. Note that this is equivalent to saying that $\mK\circ B_{2}$ maps $\mfg_{2}$
into itself. Next, 
\[
\mK\circ B_{2}(e_{2}) =  \mK(a_{1}^{-1}\nu^{33}e_{3}+3e_{2}),\ \ \
\mK\circ B_{2}(e_{3}) =  \mK(-a_{1}^{-1}\nu^{22}e_{2}+3e_{3}).
\]
In particular,
\[
\tr (\mK\circ B_{2})=a_{1}^{-1}\nu^{33}\mK_{3}^{\phantom{3}2}+3\mK_{2}^{\phantom{2}2}
-a_{1}^{-1}\nu^{22}\mK_{2}^{\phantom{2}3}+3\mK_{3}^{\phantom{3}3}.
\]
Since $\mK_{3}^{\phantom{3}2}=\mK_{2}^{\phantom{2}3}$, we conclude that
\[
\tr (\mK\circ B_{2})-2\tr\mK=-2\mK_{1}^{\phantom{1}1}+a_{1}^{-1}(\nu^{33}-\nu^{22})\mK_{3}^{\phantom{3}2}+\mK_{2}^{\phantom{2}2}
+\mK_{3}^{\phantom{3}3}.
\]
Comparing this equality with \cite[(73), p.~764]{RadermacherNonStiff} yields the conclusion that $\tr (\mK\circ B_{2})=2$. Thus the momentum constraint
can be summarised as saying that $\mK\circ B_{2}$ maps $\mfg_{2}$ into itself and that $\tr (\mK\circ B_{2})=2$. Next, note that the operator $B_{2}$ can be
represented by the matrix
\[
\left(\begin{array}{cc} 3 & -a_{1}^{-1}\nu^{22}\\ a_{1}^{-1}\nu^{33} & 3\end{array}\right).
\]
In particular $\mathrm{det}B_{2}=9+a_{1}^{-2}\nu^{22}\nu^{33}=9+\eta^{-1}$. One consequence of this observation is that $\eta=-1/9$ is very special, since this
is the only $\eta$-value for which $B_{2}$ is not a bijective linear operator. If $\eta\neq -1/9$, the momentum constraint implies that $\mK$ maps $\mfg_{2}$
to itself. The general Bianchi VI${}_{-1/9}$ case can be expected to be quite complicated. In the stiff fluid setting, there are recent results
due to Hans Oude Groeniger, see \cite{GBVIII}. However, in the vacuum setting there are, to the best of our knowledge, no mathematical
results; see, however, \cite{hhw}. For that reason, we here mainly restrict our attention to the case $\eta\neq -1/9$, for which there are more complete
results; see, e.g., \cite{haw,RadermacherNonStiff,RadermacherStiff}. It is referred to as the non-exceptional Bianchi class B subcase.

Assuming $\eta\neq -1/9$, $\mK$ maps $\mfg_{2}$ into itself. This means that $\mfg_{2}^{\perp}$, spanned by $e_{1}$, has to be an eigenspace of $\mK$.
In particular, $e_{1}$ is an eigenvector of $\mK$. If $E_{2}$ and $E_{3}$ are orthogonal left invariant eigenvector fields of $\mK$ which are orthogonal to
$e_{1}$, then $E_{A}\in\mfg_{2}$, $A=2,3$. Thus $[E_{2},E_{3}]=0$, so that if $\bga^{i}_{jk}$ are the structure constants associated with the basis
$\{e_{1},E_{2},E_{3}\}$, then
$\bga^{1}_{23}=0$. In other words, the situation that $e_{1}$ is the eigenvector corresponding to the smallest eigenvalue of $\mK$ is favourable; see
the assumptions of Theorem~\ref{thm:SobestimatesQuiescentVacuumNG}. Note also that, for the frame $\{e_{1},E_{2},E_{3}\}$, one would in general expect
the corresponding $\nu^{22}$ and $\nu^{33}$ to be non-zero, which would mean that $\bga^{2}_{13}\neq 0$ and $\bga^{3}_{12}\neq 0$. 

\begin{figure}
  \begin{center}
    \includegraphics{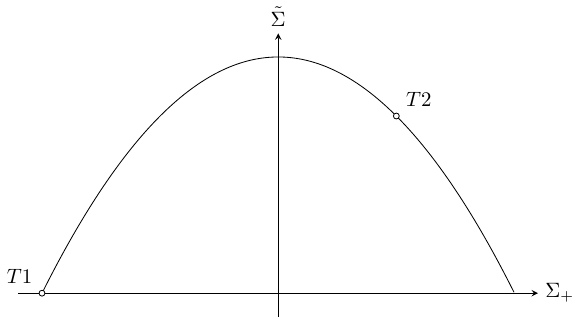}
  \end{center}
  \caption{The Kasner parabola (representing the Kasner solutions) with the two special points $T1$ and $T2$ indicated (these points correspond to flat
    Kasner solutions). In the vacuum
    setting, generic Bianchi type VI${}_{\eta}$ and VII${}_{\eta}$ solutions converge to a point on the Kasner parabola to the right
    of $T2$.}\label{fig:KasnerParabola}
\end{figure}

When formulating Einstein's equations in the non-exceptional Bianchi class B setting, it is convenient to use expansion normalised variables; see
\cite{haw}. The relation between these variables, initial data and maximal globally hyperbolic developments is discussed at length in
\cite[Section~11]{RadermacherNonStiff}. Let, to begin with, $\theta_{ij}$ be the components of the second fundamental form with respect to an
appropriate orthonormal frame. Due to arguments given in \cite[Section~11.6]{RadermacherNonStiff}, we can assume the frame to be such that
$\theta_{1A}=0$ for $A\in\{2,3\}$; see also the above discussion. For this reason, the components of the second fundamental form can be summarised
by $\theta$ (the mean curvature) and $\sigma_{AB}=\theta_{AB}-\theta\de_{AB}/3$, $A,B\in \{2,3\}$. Moreover, $\sigma_{AB}$ can be represented by
$\sigma_{+}=3\sigma^{A}_{\phantom{A}A}/2$ and $\tsigma_{AB}=\sigma_{AB}-\sigma_{+}\de_{AB}/3$. The expansion normalised variables introduced in \cite{haw}
include $\Sigma_{+}:=\sigma_{+}/\theta$ and $\tSigma:=\tsigma/\theta^{2}$, where $\tsigma:=3\tsigma^{AB}\tsigma_{AB}/2$. It is of interest to relate these
quantities with the eigenvalues of $\mK$. By the above, $e_{1}$ is an eigenvector of $\mK$ and the corresponding eigenvalue is given by $(1-2\Sigma_{+})/3$.
It can also be calculated that the eigenvalues of $\mK$ corresponding to eigenvectors in $\mfg_{2}$ are given by $(\Sigma_{+}+1)/3\pm \tSigma^{1/2}/\sqrt{3}$.
Note that the sum of these eigenvalues is $1$. In the vacuum quiescent setting, the limits of the eigenvalues should be such that the sum of their
squares equals $1$; see Remark~\ref{remark:eigenvaluesQuiescentVacuumNG}. On the other hand, demanding that the sum of the squares equals $1$ yields the
relation $\tSigma+\Sigma_{+}^{2}=1$, a set referred to as the \textit{Kasner parabola}; see Figure~\ref{fig:KasnerParabola}. As mentioned above, we expect
the situation that the eigenvalue
corresponding to $e_{1}$ is the smallest to be favourable for quiescent behaviour. In order for $(1-2\Sigma_{+})/2$ to be the strictly smallest eigenvalue
on the Kasner parabola, it has to be strictly negative. In other words, we have to have $\Sigma_{+}>1/2$. To conclude, in the case of non-exceptional
Bianchi class B vacuum solutions, we expect the generic behaviour to be convergence to a point on the Kasner parabola with $\Sigma_{+}>1/2$. In fact,
generic Bianchi type VI${}_{\eta}$ and VII${}_{\eta}$ solutions have the property that they converge to a point on the Kasner parabola with $\Sigma_{+}>1/2$;
see \cite[Theorem~1.18, p.~700]{RadermacherNonStiff}. 

In the stiff fluid setting, it is of interest to calculate the union of the region in which the eigenvalue corresponding to $e_{1}$ is smallest with
the region in which $\mK$ is positive definite. The region where $\mK$ is positive definite is defined by $\tSigma\geq 0$, $\Sigma_{+}<1/2$ and
$\tSigma<(\Sigma_{+}+1)^{2}/3$. Adding the region $\Sigma_{+}\geq 1/2$ to this (in case there is matter present) results in the shaded region
$\mathcal{J}_{+}$ in Figure~\ref{fig:Jacobsset}. Moreover, all Bianchi type VII${}_{\eta}$ solutions converge to a point in $\mathcal{J}_{+}$ and generic
Bianchi type VI${}_{\eta}$ solutions converge to a point in the closure of $\mathcal{J}_{+}$; see \cite[Theorem~1.6, p.~6]{RadermacherStiff}. In
the case of Bianchi type VI${}_{-1/9}$ orthogonal stiff fluids, we generically obtain convergence to to a point at which all the eigenvalues of the
expansion normalised Weingarten map are strictly positive, see \cite{GBVIII}. 

\begin{figure}
  \begin{center}
    \includegraphics{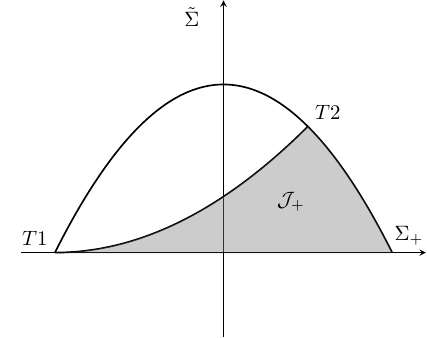}
  \end{center}
  \caption{The shaded region, denoted $\mathcal{J}_{+}$, indicates the set to which generic Bianchi class B stiff fluid solutions are
    expected to converge.}\label{fig:Jacobsset}
\end{figure}

\textbf{Kantowski-Sachs.} The Kantowski-Sachs metrics take the form
\[
g=-dt\otimes dt+a^{2}(t)dx\otimes dx+b^{2}(t)\bge_{\sn{2}}
\]
on $\ro\times\sn{2}\times I$ (or $\so\times\sn{2}\times I$), where $I$ is an open interval and $a,b$ are smooth positive functions on $I$. In this
case, one eigenvector of $\mK$, say $e_{1}$, is parallel to $\d_{x}$ and two, say $e_{2}$, $e_{3}$, are parallel to $\sn{2}$. Moreover, $e_{1}$ commutes
with $e_{A}$, $A=2,3$, and if the eigenvalue associated with $e_{i}$ is denoted $\ell_{i}$, then $\ell_{2}=\ell_{3}$. In other words, the situation
is, due to the symmetry class, automatically degenerate, and $\g^{i}_{jk}=0$ if $\{i,j,k\}=\{1,2,3\}$. In particular, oscillations are suppressed in
the vacuum setting.

\end{document}